\renewcommand{\Bbb}{\mathbb} 
\newcommand{\nc}{\newcommand}
\nc{\cI}{{\mathcal I}}
\nc{\cM}{{\mathcal M}}
\nc{\cO}{{\mathcal O}}
\nc{\cR}{{\mathcal R}} 
\nc{\cS}{{\mathcal S}}
\nc{\nat}{\Bbb{N}}
\nc{\lab}{\text{lab}}
\nc{\true}{``true''\xspace}
\nc{\false}{``false''\xspace}
\nc{\nfa}{\text{NFA}\xspace}
\nc{\dfa}{\text{DFA}\xspace}
\nc{\product}{\text{product}\xspace}
\nc{\productautomaton}{\product}
\nc{\re}{\text{RE}\xspace}
\nc{\np}{NP\xspace} 
\nc{\nphard}{NP-hard\xspace} 
\nc{\nl}{NL\xspace} 
\nc{\ptime}{PTIME\xspace} 
\nc{\fpt}{FPT\xspace} 
\nc{\sharpp}{\#P\xspace}
\nc{\init}{0}
\nc{\enumpaths}{\textsf{EnumPaths}\xspace}
\nc{\enumshortpaths}{\textsf{EnumShortPaths}\xspace}
\nc{\enumsimpaths}{\textsf{EnumSimPaths}\xspace}
\nc{\enumspaths}{\enumsimpaths}
\nc{\enumeratesimpaths}{\textsf{EnumSimPaths}}
\nc{\kclique}{\textsf{$k$-Clique}\xspace}
\nc{\pth}{\textsf{Path}\xspace}
\nc{\nodespath}{\textsf{SimPath}\xspace}
\nc{\nodesimpath}{\nodespath} 
\nc{\edgespath}{\textsf{Trail}\xspace}
\nc{\trail}{\textsf{Trail}\xspace}
\nc{\edgesimpath}{\edgespath} 
\nc{\enumnodespaths}{\textsf{EnumSimPaths}\xspace}
\nc{\enumedgespaths}{\textsf{EnumTrails}\xspace}
\nc{\enumtrails}{\textsf{EnumTrails}\xspace}
\nc{\longdirpath}{$\nodesimpath_{\geq k}$\xspace}
\nc{\disjointpaths}{\textsf{DisjointPaths}\xspace}
\nc{\twodisjointpaths}{\textsf{TwoDisjointPaths}\xspace}
\nc{\twonodedisjointpaths}{\textsf{TwoNodeDisjointPaths}\xspace}
\nc{\twoedgedisjointpaths}{\textsf{TwoEdgeDisjointTrails}\xspace}
\nc{\colordisjointpaths}{\textsf{Two\-Color\-Disjoint\-Paths}\xspace}
\nc{\kdisjointpaths}{\textsf{Two\-Disjoint\-Paths}$_k$\xspace}
\nc{\kcolordisjointpaths}{\textsf{Two\-Color\-Disjoint\-Paths}$_k$\xspace}
\nc{\knodedisjointpaths}{\textsf{Two\-Node\-Disjoint\-Paths}$_k$\xspace}
\nc{\knodecolordisjointpaths}{\textsf{Two\-Color\-Node\-Disjoint\-Paths}$_k$\xspace}
\nc{\undirdisjointpaths}{\textsf{Undirected\-Two\-Disjoint\-Paths}\xspace}
\nc{\kedgedisjointpaths}{\textsf{Two\-Edge\-Disjoint\-Trails}$_k$\xspace}
\nc{\stpath}{\text{$s$-$t$-path}\xspace}
\nc{\stpaths}{\text{$s$-$t$-paths}\xspace}
\newcommand{\ctract}{\ensuremath{\textsf{C}_\text{tract}}\xspace}
\nc{\Split}{\text{split}\xspace}
\nc{\Line}{\text{line}\xspace}
\nc{\Head}{\text{head}\xspace}
\nc{\Tail}{\text{tail}\xspace}
\theoremstyle{plain}
\newtheorem{observation}[theorem]{Observation}
\newtheorem{claim}[theorem]{Claim}
\newcommand{\oldnew}[3]{{\textcolor{blue}{\setlength{\fboxsep}{1pt}\fbox{\small #1}}} \st{#2} \textcolor{blue}{#3}}
    \nc{\won}[2]{\oldnew{W}{#1}{#2}}
\nc{\kon}[2]{\oldnew{K}{#1}{#2}}
\newcommand{\tina}[1]{#1}
\begin{document}

\title{Enumeration Problems for Regular Path Queries}

\author[1]{Wim Martens}
\author[1]{Tina Trautner}

\affil[1]{University of Bayreuth}

\maketitle
\begin{abstract}
  Evaluation of regular path queries (RPQs) is a central problem in graph
  databases. We investigate the corresponding enumeration problem, 
  that is, given a graph and an RPQ, enumerate all paths in the graph
  that match the RPQ. We consider several versions of this problem,
  corresponding to different semantics of RPQs that have recently been
  considered: \emph{arbitrary paths}, \emph{shortest paths},
  \emph{simple paths}, and \emph{trails}. 

  Whereas arbitrary and shortest paths can be enumerated in polynomial
  delay, the situation is much more intricate for simple paths and
  trails. For instance, already the question if a given graph contains
  a simple path or trail of a certain length has cases
  with highly non-trivial solutions and cases that are long-standing
  open problems. In this setting, we study RPQ evaluation from a
  parameterized complexity perspective. We define a class of
  \emph{simple transitive expressions} that is prominent in practice
  and for which we can prove two dichotomy-like results: one for
  simple paths and one for trails paths. We observe
  that, even though simple path semantics and trail semantics are
  intractable for RPQs in general, they are feasible for the vast
  majority of the kinds of RPQs that users use in practice.  At the
  heart of this study is a result of independent interest on the
  parameterized complexity of finding disjoint 
  paths in graphs:
  the two disjoint paths problem is W[1]-hard if parameterized by the
  length of one of the two paths.
\end{abstract}

\makeatletter{}\section{Introduction}

Regular path queries (RPQs) are a crucial feature of graph database
query languages, since they allow us to pose queries about arbitrarily
long paths in the graph. Essentially, RPQs are regular expressions
that are matched against labeled directed paths in the graph
database. Currently, the World Wide Web Consortium \cite{w3c} and the
openCypher project \cite{opencypher} are considering how RPQ
evaluation can be formally defined for the development of SPARQL 1.1
\cite{w3c-sparql11} and Neo4J Cypher \cite{cypher,CIP17},
respectively. Several popular candidates that have been considered are
\emph{arbitrary paths}, \emph{shortest paths}, \emph{simple paths},
and \emph{trails}
(cfr.~\cite[Section 4.5]{pablo}, \cite{CIP17}).\footnote{Simple paths
  and trails 
  are called \emph{no-repeated-node paths} and \emph{no-repeated-edge
    paths} in \cite[Section
   4.5]{pablo}, respectively.}

We briefly explain these semantics. Given a graph, an RPQ $r$
considers directed paths for which the labels on the edges form a
word in the language of $r$. We call such paths \emph{candidate matches}. The different semantics restrict the kind of paths that
\emph{match} the RPQ, i.e., can be returned as answers. \emph{Arbitrary paths} imposes no restriction and
returns every candidate match. \emph{Shortest paths} on the other hand, only returns the
shortest candidate matches. \emph{Simple paths}, resp., \emph{trails}, only
return candidate matches that do not have duplicate nodes, resp., edges.

Under \emph{arbitrary paths}, the number of matches may be infinite if
the graph is cyclic. This may pose a challenge for designing the query
language, even if one does not choose to return all matching
paths. Indeed, a popular alternative semantics of RPQs is to return
\emph{node pairs} $(x,y)$ such that there exists a matching path
from $x$ to $y$. If one wants to consider a bag semantics version for
node pairs,
where each $(x,y)$ is returned as often as the number of matches from $x$ to $y$, one needs to deal with the case where this
number is infinite.

Under \emph{shortest paths}, \emph{simple paths}, and \emph{trails}, the number of
matching paths is always finite, which simplifies the aforementioned
design challenge. However, these versions face other challenges. \emph{Simple paths} may present complexity
issues. Two fundamental problems  are that
\begin{itemize}
\item counting the number of simple paths between two nodes is
  \sharpp-complete \cite{valiant} and
\item deciding if there exists a simple path of even length between
  two given nodes is \np-complete \cite{lapaugh-papadimitriou}.
\end{itemize}
Indeed, the first problem implies that evaluating the RPQ $a^*$ under
bag semantics is \sharpp-complete and the second one implies that one
needs to solve an \np-complete problem to evaluate the RPQ
$(aa)^*$.\footnote{It is also known that answering the RPQ $a^*ba^*$
  under simple path semantics is at least as difficult as the
  \textsf{Two Disjoint Paths} problem \cite{mendelzon}.}
\emph{Trails} faces similar challenges as simple paths, due to the
similar no-repetition constraint. \emph{Shortest paths} does not have
these complexity issues, but it is unclear if its semantics is very
natural. For instance, under shortest paths semantics, if we ask how
many paths there are from $x$ to $y$, then this number may decrease if
a new, shorter, path is added.\footnote{Notice that each semantics
  only returns or counts the number of matching paths.} This may seem
counter-intuitive to users.

Since it seems that there is no one-size-fits-all solution, the
openCypher project team recently proposed to support several kinds of
semantics for Neo4J Cypher
\cite{CIP17}. This situation motivated us to shed more light on RPQ evaluation, focusing on the
following aspects:
\begin{compactitem}
\item We focus on returning \emph{paths} as answers and on
  \emph{enumeration} versions of evaluation. That is, we study
  problems where the task is to enumerate all matching paths, without
  duplicates. We are interested in which situations it is possible to
  answer queries in \emph{polynomial delay}, i.e., such that the time
  between consecutive answers is at most polynomial.
\item We take into account a recent study 
  that investigated the structure of about 250K RPQs in a wide range
  of SPARQL query logs \cite{BonifatiMT-corr17}. It turns out that these RPQs have a
  relatively simple structure, which is remarkable because their syntax
  is not restricted by the SPARQL recommendation.
\end{compactitem}
Our contributions are the following. 
\begin{compactenum}
\item After observing that enumeration of arbitrary or shortest
  paths that match an RPQ can be done in polynomial delay
  (Section~\ref{sec:enumerate}), we turn to enumeration for simple
  paths and trails. For
  \emph{downward-closed} languages (i.e., languages that are closed
  under taking subsequences), this is an easy consequence of
  Yen's algorithm~\cite{yen} (Section~\ref{sec:downwardclosed}).
\item We show that Bagan et al.'s dichotomy for deciding the existence
  of a simple path that matches an RPQ \cite{bagan} carries over to
  enumeration problems (Section~\ref{sec:BeyondDWC}). Furthermore, we
  show that Bagan.'s dichotomy carries over from simple paths to
  trails.  Since Bagan et al.'s dichotomy is about the \emph{data
    complexity} of RPQ evaluation, this gives us some understanding
  about the data complexity of enumeration.
\item However, our goal is to get a better understanding of the
  \emph{combined complexity} of enumerating simple paths or trails. This is a
  challenging task because it contains subproblems that are highly
  non-trivial. One such subproblem is testing if there exists a
  directed simple path of length $\log n$ between two given nodes in graph
  $G$ with $n$ nodes. This problem was shown to be in PTIME by Alon et al., using
  their color coding technique \cite{AlonYZ-jacm95}.  It is open for
  over two decades if it can be decided in
  PTIME if there is a simple path of length $\log^2 n$ \cite{AlonYZ-jacm95}.
  Notice that these two problems are special cases of RPQ evaluation under
  simple path semantics (i.e., evaluate the RPQs $a^{\log n}$ and
  $a^{\log^2 n}$ in a graph where every edge has label $a$).

  We therefore investigate RPQ evaluation from the angle of
  parameterized complexity. We introduce the class of \emph{simple
    transitive expressions (STEs)} that capture over 99\% of the RPQs
  that were found in SPARQL query logs in a recent study
  \cite{BonifatiMT-corr17}. We identify a property of STEs that we
  call \emph{cuttability} and prove that the combined parameterized
  complexity for evaluating STEs $\cR$ is in FPT if $\cR$ is cuttable
  and W[1]-hard otherwise. Examples of cuttable classes of expressions
  are $a^k a^*$ and $(a+b)^k a^*$ (for $k \in \nat$).  Examples of
  non-cuttable classes are $a^k b^*$, $a^k b a^*$, and
  $a^k(a+b)^*$. For trails, we also show a dichotomy, but here the FPT
  fragment is larger. That is, if the class $\cR$ is not cuttable,
  evaluation is still FPT if $\cR$ is \emph{almost conflict-free}.  We
  show that these dichotomies carry over to enumeration problems
  (Section~\ref{sec:param-srp}).

\item At the core of these results are two results of independent
  interest (Section~\ref{sec:ParamCompl}). The first shows that the
  \textsf{Two Disjoint Paths} problem is W[1]-hard when parameterized
  by the length of one of the two paths
  (Theorem~\ref{theorem:two-disjoint-is-hard}). The second
  is by the authors of \cite{fomin}, who showed that it can be decided
  in FPT if there is a simple path of length \emph{at least $k$}
  between two nodes in a graph
  (Theorem~\ref{theorem:longdirpathInFPT}).

\end{compactenum}

Putting everything together, we see that, although simple path and trail
semantics lead to high complexity in general, their complexity for
RPQs that have been found in SPARQL query logs is reasonable. We
discuss this in the conclusions.

\paragraph*{Related Work}
RPQs on graph databases have been
  researched in the literature since the end of the 80's
  \cite{ConsensM-pods90,CruzMW-sigmod87,Yannakakis-pods90}
and many problems have been investigated, such as optimization \cite{AbiteboulV-jcss99},
 query rewriting and query answering using views
 \cite{CalvaneseGLV-jcss02,CalvaneseGLV-pods00}, 
and containment
\cite{CalvaneseGLV-kr00,DeutschT-dbpl01,FlorescuLS-pods98}. RPQ
evaluation is therefore a fundamental problem in the field. We refer
to \cite{Barcelo-pods13}
for an excellent overview on RPQs and queries for graph databases in general.

Mendelzon and Wood \cite{mendelzon} were the first to consider simple
paths for answering regular path queries. They proved that testing if
there exist simple paths matching $a^* b a^*$ or $(aa)^*$ is
\np-complete and studied classes of graphs for which evaluation
becomes tractable. Arenas et al.~\cite{ArenasCP-www12} and Losemann
and Martens~\cite{LosemannM-tods13} studied counting problems related
to RPQs in SPARQL 1.1 (which are called \emph{property paths} in the
specification). They showed that, under the definition 
of SPARQL at
that time, query evaluation was highly complex. They made proposals on
how to amend this, which were largely taken into account by the W3C.
Extensions of RPQ-like queries with data value comparisons and
branching navigation were studied by Libkin et al.\ \cite{LibkinMV-jacm16}.

Bagan et al.~\cite{bagan} studied the \emph{data complexity} of RPQ
evaluation under simple path semantics (i.e., the regular path query
is considered to be constant). They proved that there is a trichotomy
for the evaluation problem: the data complexity of RPQ evaluation 
 is
\np-complete for languages outside a class they call \ctract,
it is NL-complete for infinite
languages in \ctract, and in AC$^0$ for finite
languages. (Since the results are on data complexity, the
representation of the languages does not matter.)

We also consider problems where the task is to enumerate paths in
graphs.
  In this
context we will use Yen's algorithm \cite{yen} which is a procedure
for enumerating simple paths in graphs. Yen's algorithm was
generalized by Lawler~\cite{lawler} and Murty~\cite{murty} to a tool
for designing general algorithms for enumeration
problems. Lawler-Murty's procedure has been used for solving
enumeration problems in databases in various contexts
\cite{golenberg2011,KalinskyEK-edbt17,KimelfeldS-icdt13}.

\makeatletter{}\section{Preliminaries}\label{sec:preliminaries}\label{sec:definitions}

By $\Sigma$ we always denote an \emph{alphabet}, that is, a finite
set. A \emph{$(\Sigma$-$)$symbol} is an element of $\Sigma$. A
\emph{word} (over $\Sigma$) is a finite sequence $w = a_1\cdots a_n$
of $\Sigma$-symbols. The \emph{length} of $w$, denoted by $|w|$, is
its number of symbols $n$. We denote the empty word by $\varepsilon$.
We denote the concatenation of words $w_1$ and $w_2$ as $w_1 \cdot
w_2$ or simply as $w_1w_2$. 
We assume familiarity with regular expressions and finite
automata. The regular expressions (\re) we use in this paper are
defined as follows: $\emptyset$, $\varepsilon$ and every $\Sigma$-symbol is a
regular expression; and when $r$ and $s$ are regular expressions, then
$(rs)$, $(r + s)$, $(r?)$, $(r^*)$, and $(r^+)$ are also regular
expressions. From now on, we use the usual precedence rules to omit
braces.  The \emph{size} $|r|$ of a regular expression is the number
of occurrences of $\Sigma$-symbols in $r$. For example, $|aba^*| =
3$. We define the \emph{language $L(r)$} of $r$ as usual. Since it is
easy to test if $L(r) = \emptyset$ for a given expression $r$, we
assume in this paper that $L(r) \neq \emptyset$ for all expressions,
unless mentioned otherwise.
For $n \in
\nat$, we use $r^n$ 
to abbreviate the $n$-fold concatenation $r\cdots r$ of $r$. We
abbreviate $(r?)^n$ by $r^{\leq n}$.

A \emph{non-deterministic finite automaton (\nfa)} $N$
over $\Sigma$ is a tuple $(Q,\Sigma, \Delta, Q_I ,Q_F)$, where $Q$ is a
finite set of states, $\Sigma$ is a finite alphabet, $\Delta: Q \times
\Sigma \times Q$ is the transition relation, $Q_I \subseteq Q$ is the
set of initial
states, and $Q_F$ is the set of final states. By $\delta^*(w)$ we
denote the set of states reachable by $N$ after reading $w$, that is,
$\delta^*(\varepsilon) = Q_I$ and, for every word $w$ and symbol $a$,
we define $\delta^*(wa) = \{\delta(q,a) \mid q
\in \delta^*(w)\}$.
The \emph{size} of an
\nfa is $|Q|$, i.e., its number of states. 
We define the \emph{language} $L(N)$ of $N$ as usual.

\subsection{Graph Databases, Paths, and Trails}
We use edge-labeled directed graphs as abstractions for graph
databases. A graph $G$ (with labels in $\Sigma$) will be denoted as $G
= (V,E)$, where $V$ is the finite set of \emph{nodes} of $G$ and $E \subseteq
V \times \Sigma \times V$ is the set of \emph{edges}. We say that edge
$e = (u,a,v)$ goes \emph{from $u$ to node $v$} and \emph{has the label
  $a$}. Sometimes we write an edge as $(u,v) \in V\times V$ if the
label does not matter. In this paper, we assume that graphs are
directed, unless mentioned otherwise. The \emph{size} of a graph $G$,
denoted by $|G|$ is $|V|+|E|$.

We assume familiarity with basic terminology on graphs. A \emph{path}
from node $u$ to node $v$ in $G$ is a sequence $p =
(v_0,a_1,v_1)(v_1,a_2,v_2) \cdots (v_{n-1},a_n,v_n)$ of edges in $G$
such that $u = v_0$ and $v = v_n$. For $0 \leq i \leq n$, we denote by
$p[i,i]$ (or $p[i]$) the node $v_i$ and, for $0 \leq i < j \leq n$, we denote by
$p[i,j]$ the subpath $(v_i,a_{i+1},v_{i+1}) \ldots
(v_{j-1},a_{j},v_{j})$. A path $p$ is a \emph{simple path} if it has
no repeated nodes, that is, all nodes $v_0,
\ldots, v_n$ are pairwise different. It is a \emph{trail} if it has no
repeated edges, that is, all triples $(v_i,a_{i+1},v_{i+1})$ are
pairwise different.
    The \emph{length} of
$p$, denoted $|p|$, is the number $n$ of edges in $p$.  By definition
of paths, we consider two paths to be different if they are different
sequences of edges. In particular, two paths going through the same
nodes in the same order, but using different edge labels are
different.

The set of \emph{nodes of path $p$} is $V(p) = \{v_0,\ldots,v_n\}$.
The \emph{word of $p$} is $a_1\cdots a_n$ and is denoted by
$\lab^G(p)$. We omit $G$ if it is clear from the context. Path $p$
\emph{matches} a regular expression $r$ (resp., \nfa $N$) if $\lab(p)
\in L(r)$ (resp., $\lab(p)\in L(N)$).  The \emph{concatenation} of
paths $p_1 = (v_0,a_1,v_1) \cdots (v_{n-1}, a_n, v_n)$ and $p_2 =
(v_n, a_{n+1}, v_{n+1}) \cdots (v_{n+m-1}, a_{n+m}, v_{n+m})$ is
simply the concatenation $p_1 p_2$ of the two sequences. 

We will often consider a graph $G = (V,E)$ together with a
\emph{source node} $s$ and a \emph{target node} $t$, for example, when
considering paths from $s$ to $t$. We denote such a graph with source
$s$ and target $t$ as $(G,s,t)$ and define their size $|(G,s,t)|$ as $|G|$.

The \emph{\product} of graph $(G,s,t)$ and \nfa $N =
(Q,\Sigma,\Delta,Q_I,Q_F)$ is a graph $(G,s,t)\times N = (V',E')$ with $V' = (V \times Q)$ and $E' = \{((u_1,q_1), a, (u_2,q_2)) \mid (u_1,a,u_2) \in E$ and $(q_1,a,q_2)
\in \Delta\}$.

\subsection{Decision and Enumeration Problems}

An \emph{enumeration problem} \textsf{P} is a set of pairs $(i,O)$ where $i$
is an \emph{input} and $O$ is a finite or countably infinite set of
\emph{outputs for $i$}, denoted by $\textsf{P}(i)$. Terminologically, we say
that the task is to \emph{enumerate $O$, given $i$.}

We consider the following problems, where $G$ is always a graph, $s$
and $t$ are nodes in $G$, and $r$ is a regular expression (also called
\emph{regular path query (RPQ)}).
\begin{itemize}
\item \pth: Given $(G,s,t)$ and $r$, is there a path from $s$ to
  $t$ that matches $r$?
\item \nodespath:  Given $(G,s,t)$ and $r$,  is
   there a simple path from $s$ to $t$ that matches $r$?
 \item \edgespath: Given $(G,s,t)$ and $r$,  is
 there a trail from $s$ to $t$ that matches $r$?
 \item \enumpaths: Given $(G,s,t)$ and $r$,  enumerate the paths in $G$ from $s$ to $t$
   that match $r$.
 \item \enumshortpaths: Given $(G,s,t)$ and $r$,  enumerate the shortest paths in $G$ from $s$ to $t$
   that match $r$.
 \item \enumnodespaths: Given $(G,s,t)$ and $r$,  enumerate the simple paths in $G$ from $s$ to $t$
   that match $r$.
   \item\enumedgespaths: Given $(G,s,t)$ and $r$, enumerate the trails in $G$ from $s$ to $t$
   that match $r$.
\end{itemize}
An \emph{enumeration algorithm} for \textsf{P} is an algorithm that, given
input $i$, writes a sequence of answers to the output such that every
answer in $\textsf{P}(i)$ is written precisely once. If $A$ is an enumeration
algorithm for enumeration problem \textsf{P}, we say that $A$ runs in
\emph{polynomial delay}, if the time before writing the
first answer and the time between writing every two consecutive
answers is polynomial in $|i|$.

For a class $\cR$ of regular expressions, we denote by $\pth(\cR)$ the
problem \pth where we always assume that $r \in \cR$. If $\cR$ consists of a single expression $r$,
we simplify the notation to $\pth(r)$. We use the same convention for
all other decision- and enumeration problems.
We assume familiarity with the notions \emph{combined}- and
\emph{data} complexity. In our decision problems, $(G,s,t)$ is the
data and $r$ is the query.

\subsection{Reducing Between Trails and Simple
  Paths}\label{sec:trails-simplepaths}
Lapaugh and Rivest~\cite{LapaughR-jcss80} showed that there is a
strong correspondence between trail and simple path problems that we
will use extensively and therefore revisit here. Unfortunately,
Lapaugh and Rivest's Lemmas 1 and 2 do not precisely capture what we
need, so we have to be a bit more precise.

The following construction is from \cite[Proof of Lemma 1]{LapaughR-jcss80}.
Let $(G,s_1,t_1,\ldots,s_k,t_k)$ be a graph $G$ together with nodes
$s_1,t_1,\ldots,s_k,t_k$. We denote by $\Split(G,s_1,t_1,\ldots,s_k,t_k)$ the
tuple $(G',s'_1,t'_1,\ldots,s'_k,t'_k)$ obtained as follows. The graph
$G'$ is obtained from $G$ by replacing each node $v$ by two nodes
$\Head(v)$ and $\Tail(v)$. A directed edge is added from $\Head(v)$ to
$\Tail(v)$. All incoming edges of $v$ become incoming edges of
$\Head(v)$ and all outgoing edges of $v$ become outgoing edges of
$\Tail(v)$. For every $s_i$ and $t_i$, we define $s'_i = \Head(s_i)$
and $t'_i = \Tail(t_i)$. 

\begin{restatable}{lemma}{splitgraphlemma}\label{lem:splitgraph}
  Let $(G',s'_1,t'_1,\ldots,s'_k,t'_k) =
  \Split(G,s_1,t_1,\ldots,s_k,t_k)$. Then there
  exists pairwise node disjoint simple paths of length $k_i$ from
  $s_i$ to $t_i$ in $G$ iff there exist pairwise edge disjoint trails
  of length $2k_i + 1$ from $s'_i$ to $t'_i$ in $G'$.
\end{restatable}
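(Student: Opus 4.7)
The plan is to argue both directions by a direct, constructive correspondence that projects paths back and forth through the operation $\Split$. The central structural fact I will use throughout is that, in $G'$, the split node $\Head(v)$ has exactly one outgoing edge (the split edge to $\Tail(v)$) and $\Tail(v)$ has exactly one incoming edge (the same split edge). Every other edge incident to $\{\Head(v),\Tail(v)\}$ comes from an edge of $G$ incident to $v$.

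For the forward direction, I start with pairwise node-disjoint simple paths $p_i = v^i_0 \cdots v^i_{k_i}$ in $G$ with $v^i_0 = s_i$ and $v^i_{k_i} = t_i$, and build a trail $q_i$ in $G'$ by alternating split and original edges:
\[
q_i \;=\; \Head(v^i_0) \to \Tail(v^i_0) \to \Head(v^i_1) \to \Tail(v^i_1) \to \cdots \to \Head(v^i_{k_i}) \to \Tail(v^i_{k_i}).
\]
This path uses $k_i+1$ split edges and $k_i$ original edges, so its length is $2k_i+1$, and since $p_i$ is simple the $q_i$ is even a simple path in $G'$. Pairwise node-disjointness of the $p_i$ implies pairwise node-disjointness of the $q_i$ (the only split and original edges used by $q_i$ are determined by the nodes and edges of $p_i$), hence in particular pairwise edge-disjointness.

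For the backward direction, assume edge-disjoint trails $q_i$ of length $2k_i+1$ from $\Head(s_i)$ to $\Tail(t_i)$. The main step is to show that each $q_i$ is forced to alternate split and original edges and, in fact, to be a simple path of the form $\Head(v^i_0) \to \Tail(v^i_0) \to \Head(v^i_1) \to \cdots \to \Tail(v^i_{k_i})$. This follows from the degree observation above: $q_i$ starts at $\Head(s_i)$ whose only outgoing edge is the split edge, and from $\Tail(s_i)$ no split edge is available, so the next edge must be an original one $\Tail(s_i) \to \Head(v^i_1)$, and this alternation continues. Moreover, $q_i$ cannot revisit any $\Head(v)$ (that would require traversing the unique outgoing split edge twice, violating the trail property), and similarly cannot revisit any $\Tail(v)$ as an internal node. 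I then define $p_i = v^i_0 v^i_1 \cdots v^i_{k_i}$ in $G$, which is simple by the uniqueness above and has length $k_i$.

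The last thing to verify is pairwise node-disjointness of the $p_i$. If some $v$ appeared in both $p_i$ and $p_j$ with $i\neq j$, then by the alternation structure both $q_i$ and $q_j$ would traverse the split edge $\Head(v) \to \Tail(v)$, contradicting edge-disjointness of $q_i$ and $q_j$. The main obstacle in the whole argument is the backward direction, specifically the rigidity claim that every trail from $s'_i$ to $t'_i$ in $G'$ must have the strict alternating simple-path form; once that is established, the rest of the projection and the node-disjointness transfer are immediate from the degree constraints on the split nodes.
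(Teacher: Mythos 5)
Your proof is correct and follows exactly the argument the paper has in mind: the paper's own proof of Lemma~\ref{lem:splitgraph} is just the one-line remark that it is ``an easy consequence of the construction,'' and your write-up supplies precisely the routine details (forced alternation via the in/out-degree-one property of split nodes, and the transfer of disjointness through the split edges) that make that remark true. No gaps.
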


We denote by $\Line(G,s_1,t_1,\ldots,s_k,t_k)$ a variation on the
\emph{line graph of $G$}. More precisely,
$\Line(G,s_1,t_1,\ldots,s_k,t_k)$ is the tuple
$(G',s_1,t_1,\ldots,s_k,t_k)$ obtained as follows. Let $G =
(V,E)$. The nodes of
$G'$ are $E \cup \{s_1,t_1,\ldots,s_k,t_k\}$. The edges of $G'$ are
the disjoint union of
\begin{itemize}
\item $\{((u,v),(v,w)) \mid (u,v)$ and $(v,w) \in E\}$,
\item $\{(s_i,(s_i,v)) \mid i = 1,\ldots,k$ and $(s_i,v) \in E\}$, and
\item $\{((v,t_i)) \mid i = 1,\ldots,k$ and $(v,t_i) \in E\}$.
\end{itemize}
It is well known that the line graph of $G$ is useful for reducing
trail problems to simple path problems \cite[Proof of Lemma 2]{LapaughR-jcss80}.
\begin{restatable}{lemma}{linegraphlemma}\label{lem:linegraph}
  Let $(G',s'_1,t'_1,\ldots,s'_k,t'_k) =
  \Line(G,s_1,t_1,\ldots,s_k,t_k)$. Then there exist pairwise edge
  disjoint trails of length $k_i$ from $s_i$ to $t_i$ in $G$ iff there
  exist pairwise node disjoint simple paths of length $k_i+1$ from
  $s'_i$ to $t'_i$ in $G'$.
\end{restatable}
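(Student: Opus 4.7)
The plan is to set up a length-preserving bijection $\phi$ between trails in $G$ from $s_i$ to $t_i$ and simple paths in $G'$ from $s'_i = s_i$ to $t'_i = t_i$, and then observe that this bijection translates edge-disjointness in $G$ into node-disjointness in $G'$. Since edges of $G$ are precisely the interior nodes of $G'$ (the three edge classes of $G'$ are composition, source-insertion, and target-insertion), this correspondence should work out of the box.

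For the forward direction, I take a trail $p = e_1 \cdots e_m$ in $G$ from $s_i$ to $t_i$ and define $\phi(p)$ to be the sequence $s_i, e_1, e_2, \ldots, e_m, t_i$ viewed as a path in $G'$. By the three types of edges of $G'$, $s_i$ is connected to $e_1$ (since $e_1$ leaves $s_i$), consecutive $e_j, e_{j+1}$ are connected (since they are composable at a shared node in $G$), and $e_m$ is connected to $t_i$ (since $e_m$ enters $t_i$). Hence $\phi(p)$ is a path of length exactly $m+1$ in $G'$. It is simple because the $e_j$ are pairwise distinct (as $p$ is a trail) and $s_i, t_i$ lie outside $E$, so they cannot coincide with any $e_j$. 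Taking $m = k_i$ yields the required simple path of length $k_i + 1$.

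For the backward direction, I take any simple path $s_i = n_0, n_1, \ldots, n_{m+1} = t_i$ of length $m+1$ in $G'$. The only edges leaving $s_i$ in $G'$ go to nodes of the form $(s_i,v) \in E$, so $n_1 \in E$; symmetrically, $n_m \in E$; and the middle nodes $n_j$ must be in $E$ because the only edges between non-endpoint nodes of $G'$ are composition edges. The composition-edge structure forces $n_j$ and $n_{j+1}$ to be composable in $G$, so $n_1 \cdots n_m$ is a walk from $s_i$ to $t_i$, and since the $n_j$ are pairwise distinct (the path in $G'$ is simple), it is in fact a trail. Setting $m = k_i$ gives a trail of length $k_i$, and this construction is inverse to $\phi$.

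Finally, the disjointness claim is immediate from the bijection: an edge $e \in E$ appears in two trails $p, p'$ precisely when the node $e$ of $G'$ appears in both $\phi(p)$ and $\phi(p')$; interior nodes of $\phi$-images are exactly the edges of the underlying trails, and the appended endpoints $s_i, t_i$ are pairwise distinct from each other and from the interior nodes. The main conceptual point, and the only place where one has to be careful, is precisely this accounting of the interior-vs.-endpoint nodes of $G'$, since the length bookkeeping and the disjointness translation both hinge on it; everything else is a direct verification from the construction of $\Line$.
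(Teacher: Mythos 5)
Your proof is correct and is precisely the direct verification that the paper itself compresses into the single sentence ``This is an easy consequence of the construction'': the map $p \mapsto s_i,e_1,\ldots,e_m,t_i$ is the intended length-shifting bijection, and your observation that edge-disjointness of trails becomes node-disjointness of their images because interior nodes of $G'$ are exactly edges of $G$ is the heart of the matter. The only spot worth a word more of justification is your claim that the middle nodes of a simple path in $G'$ lie in $E$; the clean reason is that each $s_j$ has no incoming edges and each $t_j$ has no outgoing edges in $G'$, so (for pairwise distinct terminals) no terminal can occur as an interior node.
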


\makeatletter{}\section{Enumerating All Regular Paths and Shortest Regular Paths}\label{sec:enumerate}
The following result is due to Ackerman and Shallit.
\begin{theorem}[Theorem 3 in \cite{Ackerman-TCS09}] \label{theo:ackermanAndShallit}
  Given an NFA $N$, enumerating the words in $L(N)$ 
    can be done in polynomial delay.
\end{theorem}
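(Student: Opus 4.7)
My plan is to enumerate $L(N)$ in length-lexicographic (``shortlex'') order, using modest precomputation on $N$ to move from one word to the next in polynomial time. Let $n = |Q|$. First, I would build a table $R$ that stores, for each state $q$ and each integer $k$, the truth value of ``there is a word of length exactly $k$ from $q$ to some state in $Q_F$''. Straightforward dynamic programming fills $R[q][\cdot]$ up to a polynomial threshold, and beyond the threshold a pumping argument shows that $\{k : R[q][k]\text{ is true}\}$ is ultimately periodic with both preperiod and period bounded polynomially in $n$ (the periods arise from cycles in $N$, each of length at most $n$). Hence $R[q][k]$ for arbitrary $k$ reduces to a table lookup. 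Extendability from a \emph{set} $S \subseteq Q$ in $k$ steps is simply the disjunction of $R[q][k]$ over $q \in S$, so also polynomial.

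The enumerator maintains the current word $w$ together with $\delta^*(w)$. To produce the next word, it first looks for a shortlex-successor of $w$ of the same length. It scans positions $i = |w|-1, |w|-2, \ldots, 0$ and at each position tries to replace $w[i]$ by the next larger $\Sigma$-symbol $a$; using $\delta^*$ of the length-$i$ prefix of $w$ followed by $a$, together with $R$, it tests whether this new prefix admits an extension of length $|w|-i-1$ ending in $Q_F$. If yes, it fixes the new prefix and greedily extends it by always picking the lex-smallest symbol whose resulting state set still admits a completion of the remaining length. If no same-length successor exists, the algorithm scans successive lengths $\ell > |w|$ until one is feasible, then constructs the lex-smallest word of length $\ell$ by the same greedy extension starting from $Q_I$. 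When $L(N)$ is finite and exhausted, the enumeration halts.

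The main obstacle is keeping the delay polynomial. Each ``next word'' call performs $O((|w|+n)\cdot|\Sigma|)$ symbol trials, each invoking a polynomial-time query to $R$ plus a subset update on $\delta^*$; this yields delay polynomial in $|N|$ and $|w|$, matching the unavoidable $\Theta(|w|)$ cost of producing a length-$|w|$ output. The subtle point is the forward length scan: between two feasible lengths of $L(N)$ the gap can exceed $1$ (consider $L(a^{10}(a^{7})^*)$), but ultimate periodicity of the length set of an NFA, with period bounded by a cycle length in $N$ and hence by $n$, guarantees that the gap is polynomial in $n$. Naive scanning of $O(n)$ candidate lengths therefore suffices, and the overall algorithm enumerates $L(N)$ with polynomial delay as claimed.
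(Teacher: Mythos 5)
The paper does not actually prove this statement; it imports it from Ackerman and Shallit, whose algorithm (as the paper summarizes it) likewise enumerates $L(N)$ in radix order by repeatedly producing the lexicographically smallest word of each feasible length, so your overall architecture is faithful to theirs. Most of your argument is also sound: the greedy lex-smallest completion is correct, and the claim that consecutive feasible lengths of an infinite $L(N)$ are eventually at most $|Q|$ apart follows from pumping (any accepting run of length $>|Q|$ contains a cycle of length at most $|Q|$ that can be removed), so scanning $O(|Q|)$ candidate lengths forward is indeed enough.

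The genuine gap is in how you answer the query $R[q][k]$ for large $k$. You assert that $\{k : R[q][k]\}$ is ultimately periodic with \emph{period} bounded polynomially in $n=|Q|$, ``since the periods arise from cycles of length at most $n$,'' and you rest the whole table-lookup mechanism on this. That is false: the length set of an $n$-state unary NFA is a finite union of arithmetic progressions whose common differences are each at most $n$, but the minimal period of their \emph{union} is the least common multiple of those differences. Taking $q$ to feed into disjoint cycles of distinct prime lengths $p_1,\dots,p_m$ with $\sum_i p_i \le n$, each carrying one accepting state, yields a length set whose minimal eventual period is $\mathrm{lcm}(p_1,\dots,p_m) = e^{\Theta(\sqrt{n\log n})}$, so no polynomial-size periodic table can encode it. The step ``hence $R[q][k]$ for arbitrary $k$ reduces to a table lookup'' therefore does not go through as written. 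The repair is standard but must be supplied: either compute $R[q][k]$ directly as an entry of the $k$-th Boolean power of the transition matrix via repeated squaring, in time $O(n^3\log k)$ per query (which keeps the delay polynomial in $|N|$ and the length of the next output), or first bring the unary projection into Chrobak normal form and test $k$ against each arithmetic progression separately by a congruence check, rather than against a single global period. With either fix the rest of your construction yields the claimed polynomial delay.
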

This result generalizes a result of M\"akinen
\cite{Makinen-actaCyber97}, who proved that the words in $L(N)$ can be
enumerated in polynomial delay if $N$ is deterministic. Ackermann and
Shallit genereralized his algorithm and proved that, for a given length
$n$ (which they call \emph{cross-section}), the lexicographically
smallest word in $L(N)$ can be found in time $O(|Q|^2n^2)$
(\cite{Ackerman-TCS09}, Theorem 1). They then prove that the set of
all words of length $n$ can be computed in time $O(|Q|^2n^2 + |\Sigma|
|Q|^2 x)$, where $x$ is the sum of lengths of outputted words
(\cite{Ackerman-TCS09}, Theorem 2). A closer inspection of their
algorithm actually shows that it has delay $O(|\Sigma| |Q|^2 |w|)$
where $|w|$ is the size of the next output.
In fact, Ackermann and Shallit prove that the words in $L(N)$ can be
enumerated in \emph{radix order}.\footnote{That is, $w_1 < w_2$ in
  radix order if $|w_1| < |w_2|$ or $|w_1| = |w_2|$ and $w_1$ is
  lexicographically smaller than $w_2$.}

It is easy to extend the algorithm of Ackerman and Shallit to solve
\enumpaths in polynomial delay as follows. We construct an NFA $N_r$
for $r$ and take the product with $(V,E,s,t)$. The product automaton
therefore has states $(q,u)$ where $q$ is a state from $N_r$ and $u$ a
node from $G$. In the resulting automaton, replace every transition
$[(q_1,u_1), a, (q_2,u_2)]$ 
with 
$[(q_1,u_1), (u_1,a,u_2), (q_2,u_2)]$.  Enumerating the words from the
resulting automaton in radix order corresponds to enumerating the
paths from $s$ to $t$ that match $r$ in radix order in polynomial
delay. We therefore have the following corollary.
\begin{corollary}
  \enumpaths and \enumshortpaths can be solved in polynomial delay.
\end{corollary}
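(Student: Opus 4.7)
The plan is to reduce both enumeration tasks to enumerating the language of a single NFA and then invoke Theorem~\ref{theo:ackermanAndShallit}. First I would build an NFA $N_r = (Q, \Sigma, \Delta, Q_I, Q_F)$ with $L(N_r) = L(r)$ of size $O(|r|)$, form the product $(G,s,t) \times N_r$ (with initial states $\{s\} \times Q_I$ and final states $\{t\} \times Q_F$), and then relabel each transition $((q_1,u_1), a, (q_2,u_2))$ by the edge $(u_1,a,u_2)$ of $G$. Call the resulting NFA $N'$; it has $O(|V| \cdot |r|)$ states over the alphabet $E$, so its size is polynomial in $|(G,s,t)| + |r|$.

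Next I would verify the key correspondence: a word $e_1 \cdots e_n$ over $E$ lies in $L(N')$ if and only if $e_1 \cdots e_n$ is the edge sequence of a path from $s$ to $t$ in $G$ whose $\Sigma$-label is in $L(r)$. Since a sequence of edges uniquely determines a path, this is a bijection between $L(N')$ and the set of paths from $s$ to $t$ matching $r$. Applying Theorem~\ref{theo:ackermanAndShallit} to $N'$ therefore enumerates exactly these paths with polynomial delay, which settles \enumpaths. For \enumshortpaths, I would exploit that Ackerman and Shallit's algorithm enumerates in radix order: output the first path $p_1$, record $\ell = |p_1|$, keep producing outputs while their length equals $\ell$, and halt as soon as the next output has length strictly greater than $\ell$ (or the procedure reports exhaustion). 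Polynomial delay is preserved because each decision requires only one additional polynomial-delay step of the underlying enumeration.

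The main conceptual hurdle is the bijection in the second step. If one naively takes the product automaton over $\Sigma$ and enumerates its language directly, two distinct paths from $s$ to $t$ that happen to share the same $\Sigma$-label (for instance, via parallel edges with the same label, or via different routings through the graph that produce the same label sequence) collapse into a single word, so one would undercount rather than enumerate all matches. The edge-relabeling step is precisely what repairs this, by promoting the edge sequence itself to the word being enumerated; the alphabet $E$ remains polynomial in the input, so the hypotheses of Theorem~\ref{theo:ackermanAndShallit} are still met and the delay bound carries over.
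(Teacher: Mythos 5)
Your proposal is correct and follows essentially the same route as the paper: build the product of $(G,s,t)$ with an NFA for $r$, relabel each transition by the corresponding edge of $G$ so that words of the new automaton are exactly edge sequences of matching paths, and apply Theorem~\ref{theo:ackermanAndShallit}; the radix-order enumeration then yields shortest paths first, settling \enumshortpaths as well. Your explicit remark on why the relabeling is needed (distinct paths with the same label would otherwise collapse) is a correct articulation of the point the paper leaves implicit.
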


For completeness, we note that counting the number of paths from $s$
to $t$ that match a given regular expression $r$ is \sharpp-complete
in general, even if $G$ is acyclic, see \cite[Theorem 4.8(1)]{LosemannM-tods13}
and \cite[Theorem 6.1]{ArenasCP-www12}.\footnote{Arenas et al.~\cite{ArenasCP-www12}
  actually prove that the problem is \textsc{spanL}-complete. Although
  it is not known if \textsc{spanL} $=$ \sharpp, they are equal under
  Cook reductions.} The
same holds for counting the number of shortest paths, since all paths
in the proof of \cite[Theorem 4.8(1)]{LosemannM-tods13} have equal length.

\section{Enumerating Simple Regular Paths}\label{sec:enum-yen}

We now turn to the question of enumerating simple paths with
polynomial delay. A starting point is Yen's algorithm \cite{yen} for
finding simple paths from a source $s$ to target $t$. Yen's algorithm
usually takes another parameter $K$ and returns the $K$ shortest simple paths, but
we present a version here for enumerating all simple paths.

\begin{algorithm}[t]
\algrenewcommand\algorithmicindent{10pt}\caption{Yen's algorithm}
\begin{algorithmic}[1]
\Require Graph $G=(V,E)$, nodes $s,t$ 
\Ensure \begin{tabular}[t]{l}
  The simple paths from $s$ to $t$ in $G$   \end{tabular}
\State $A$ $\gets$ $\emptyset$ \Comment{$A$ is the set of paths already written to output}
\State $B$ $\gets$ $\emptyset$ \Comment{$B$ is a set of paths from $s$ to $t$}
\State $p$ $\gets$ a shortest path from $s$ to $t$ in $G$ \label{alg:yen:3}
\While{$p \neq$ null} \Comment{As long as we find a path $p$}
\State \textbf{output} $p$
\State Add $p$ to $A$ \label{alg:yen:6}
\For{$i = 1$ to $|p|$ }   
\State $G' \gets (V',E')$, where $V' = V\setminus V(p[0,i-1])$ and $E' = E \cap (V' \times V')$
    \label{alg:yen:8}
  \For{every path $p_1$ in $A$ with $p_1[0,i-1] = p[0,i-1]$}\label{alg:yen:9}
    \State Delete the edge $p_1[i-1,i]$ in $G'$   \EndFor
  \Comment{$G'$ now no longer has paths already in $A$ }
  \State Find a shortest path $p_2$ from $p[i,i]$ to $t$ in $G'$ 
  		\label{alg:yen:12}  \State Add $p[0,i]\cdot p_2$ to $B$  \label{alg:yen:13}
\EndFor
\State $p \gets$ a shortest path in $B$ \Comment{$p \gets$ null if $B
  = \emptyset$} \label{alg:yen:15}
\State Remove $p$ from $B$
\EndWhile
\end{algorithmic}
\label{alg:yen}
\end{algorithm}

\begin{theorem}[Implicit in \cite{yen}]
  Given a graph $G$ and nodes $s$, $t$, Algorithm~\ref{alg:yen}
  enumerates all simple paths from $s$ to $t$ in polynomial delay.
\end{theorem}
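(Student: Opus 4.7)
The plan is to prove two properties of Algorithm~\ref{alg:yen}: polynomial delay between consecutive outputs, and that every simple $s$-$t$ path is written exactly once.

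For polynomial delay I would observe that between two outputs the for-loop at line~7 runs at most $|p| \leq |V|$ times, and each iteration performs a linear-time graph manipulation (lines~8--10) together with a single polynomial-time shortest-path computation (line~12). Extracting a shortest path from $B$ at line~15 takes polynomial time by scanning $B$, and the initial output at line~3 is a single shortest-path call. Hence every delay is polynomial in $|G|$.

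For correctness I would verify three things. (i) Every output is a simple $s$-$t$ path: each candidate $p[0,i] \cdot p_2$ inserted into $B$ is simple because $p_2$ is a shortest path (hence simple) in the subgraph $G'$ from which the earlier nodes of $p$ have been removed, so $p_2$ cannot revisit any node of $p[0,i-1]$. (ii) No path is output twice: treating $B$ as a set, the constructions at lines~8--10 ensure that every inserted candidate differs from every path currently in $A$ (by its prefix, or by its next edge after the common prefix); once a path enters $A$ it is never re-inserted. (iii) Every simple $s$-$t$ path $q$ is eventually output: given $q \notin A$, take $p^\star \in A$ maximising the length $\ell$ of the common prefix with $q$, and consider the for-loop iteration executed when $p^\star$ was processed that corresponds to the divergence of $q$ from $p^\star$; by maximality of $\ell$ the edge-deletions do not remove $q$'s next edge, and by simplicity of $q$ the node-deletions do not block the remainder of $q$. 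Consequently line~12 succeeds and line~13 inserts into $B$ a simple $s$-$t$ path of length at most $|q|$ that is not in $A$.

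The main obstacle will be part (iii), which requires carefully indexing the correct for-loop iteration and arguing that the iterated extractions from $B$ actually produce $q$ itself. The intended completion is an induction on the length of simple $s$-$t$ paths: if all such paths strictly shorter than $q$ have already been output, the inserted candidate of length at most $|q|$ either equals $q$, or is some other path of length $|q|$ that subsequently enters $A$; since the algorithm always extracts a shortest element of $B$ and only finitely many simple $s$-$t$ paths have length $|q|$, repeating the argument shows that $q$ is output after finitely many further extractions, which suffices to conclude before the algorithm terminates (i.e., before $B$ becomes empty).
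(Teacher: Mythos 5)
Your correctness outline (simplicity of outputs, no duplicates, completeness via the longest common prefix with a path already in $A$) is sound and essentially matches the standard argument; the paper itself defers correctness to Yen and only spells out this kind of argument in the appendix for its adapted variant. The problem is your delay analysis, which is where the actual content of this theorem lies. You describe lines~\ref{alg:yen:8}--\ref{alg:yen:9} as ``a linear-time graph manipulation,'' but the for-loop on line~\ref{alg:yen:9} iterates over \emph{every path in $A$} sharing the prefix $p[0,i-1]$. The set $A$ contains all paths output so far, and there can be exponentially many simple $s$-$t$-paths, so this loop is not polynomial in $|G|$ as written. This is precisely why the paper does not simply cite Yen: Yen's own bound on the cost per output is $O(KN+N^3)$, where $K$ is the number of paths already produced, and $K$ can be exponential in $|G|$. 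The missing idea is a data-structure choice: store $A$ as a prefix tree of paths, so that for a given prefix $p[0,i-1]$ the set of forbidden next edges can be read off from the children of the corresponding trie node in $O(N)$ time, making the whole loop over $i$ cost $O(N^2)$ and yielding delay $O(N^3)$.

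A secondary instance of the same oversight is your claim that line~\ref{alg:yen:15} ``takes polynomial time by scanning $B$'': since each while-iteration adds up to $N$ candidates to $B$ and only removes one, $|B|$ also grows proportionally to the number of outputs and can become exponential, so a linear scan is not polynomial delay either (a priority queue, whose operations are logarithmic in $|B|$ and hence polynomial in $|G|$, fixes this). Without addressing these two points, your argument establishes correctness of the enumeration but not the polynomial-delay bound, which is the non-trivial part of the statement.
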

\begin{proof}[Proof sketch]
  The original algorithm of Yen \cite{yen} finds, for a given $G$,
  $s$, $t$, and $K \in \nat$ the $K$ shortest simple paths from $s$ to
  $t$ in $G$. Its only difference to Algorithm~\ref{alg:yen} is that
  it stops when $K$ paths are returned.

  Yen does not prove that the algorithm has polynomial delay, but
  instead shows that the delay is $O(KN + N^3)$, where $N$ is the
  number of nodes in $G$.\footnote{In \cite{yen}, Section 5, he notes
    that computing path number $k$ in the output costs, in his
    terminology, $O(KN)$ time in Step I(a) and $O(N^3)$ in Step
    I(b).} Unfortunately,
  $K$ can be exponential in $|G|$ 
  in general. However, the reason why
  the algorithm has $K$ in the complexity is line~\ref{alg:yen:9},
  which iterates over all paths in $A$. If we do not store $A$ as a
  linked list as in \cite{yen} but as a prefix tree of paths instead,
  the algorithm only needs $O(N^2)$ steps to complete the entire
  for-loop on line~\ref{alg:yen:9} (without any optimizations). We
  therefore obtain delay $O(N^3)$ from Yen's analysis.
\end{proof}

\subsection{Downward Closed Languages}\label{sec:downwardclosed}
Yen's algorithm immediately shows that \enumnodespaths can be solved in
polynomial delay for languages that are closed under taking
subsequences. Formally, we say that a language $L$ is \emph{downward
  closed} if, for every word $w = a_1 \cdots a_n \in L$ and every
sequence $0 < i_1 < \cdots < i_k < n+1$, we have that $a_{i_1}
\cdots a_{i_k} \in L$. A regular expression is downward closed if it
defines a downward closed language.

\begin{restatable}{proposition}{propprefixclosed} \label{prop:prefixclosed}
 $\enumnodespaths(\cR)$ is in polynomial delay for the class $\cR$ of
 downward closed regular expressions, even when the paths need to be
 output in radix order.
\end{restatable}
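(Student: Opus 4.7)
The plan is to adapt Yen's algorithm (Algorithm~\ref{alg:yen}) by replacing its two unconstrained shortest-path computations with shortest-matching-path computations on the product automaton. The key structural fact I would first prove is a shortcut lemma for downward closed $L$: if $p$ is a path from $u$ to $v$ with $\lab(p) \in L$ that revisits a node at positions $i < j$, then $p[0,i] \cdot p[j,|p|]$ is a strictly shorter path from $u$ to $v$ whose label is a subsequence of $\lab(p)$, hence still in $L$ by downward closure. Consequently, any shortest path from $u$ to $v$ in any subgraph $G'$ whose label, prefixed by a fixed word $w \in \Sigma^*$, lies in $L$, is automatically simple in $G'$.

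Concretely, let $N_r$ be an \nfa for $r$. I would replace line~\ref{alg:yen:3} by ``a shortest path from $s$ to $t$ in $G$ whose label is in $L(r)$'' and line~\ref{alg:yen:12} by ``a shortest path $p_2$ from $p[i]$ to $t$ in $G'$ such that $\lab(p[0,i]) \cdot \lab(p_2) \in L(r)$''. Both are computed in polynomial time by BFS in the product $G' \times N_r$ from the set $\{(p[i],q) \mid q \in \delta^*(\lab(p[0,i]))\}$ to any $(t,q_f)$ with $q_f$ final, and projecting the resulting product-path to $G'$. By the shortcut lemma, the projection $p_2$ is simple in $G'$, and since $V(p_2) \cap V(p[0,i-1]) = \emptyset$ by construction of $G'$, the concatenation $p[0,i] \cdot p_2$ is a simple matching path from $s$ to $t$ in $G$.

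Correctness and polynomial delay then follow from Yen's analysis. Yen's completeness invariant goes through: if $p^*$ is any simple matching $s$-$t$-path not yet in $A$ and $p \in A$ shares a maximal common prefix of length $i$ with $p^*$, then the suffix $p^*[i,|p^*|]$ is itself a valid matching continuation in the restricted $G'$ when $p$ is processed (the edges deleted at line~\ref{alg:yen:9}, namely $p_1[i-1,i]$ for $p_1 \in A$ sharing the prefix $p[0,i-1]$, cannot forbid $p^*[i-1,i]$ since $p^*$ differs from each such $p_1$ at position $i$). Hence some shortest matching continuation is placed in $B$ during processing of $p$, and $p^*$ is eventually dequeued. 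The prefix-tree representation of $A$ from the proof of the previous theorem keeps every iteration polynomial, giving polynomial delay overall. For radix order I would break ties in the shortest-matching-path routine by selecting the lexicographically smallest among shortest matching paths, which is computable in polynomial time by an Ackerman--Shallit style lex-BFS on the product automaton (cf.\ Theorem~\ref{theo:ackermanAndShallit}); Yen's outer loop already outputs by increasing length, so the combined order is radix.

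The main obstacle I anticipate is verifying that the shortcut lemma interacts cleanly with the edge deletions at line~\ref{alg:yen:9}: one must observe that shortcutting only removes edges from $p_2$ and introduces no new ones, so the shorter path still lies inside the restricted $G'$ and thus contradicts shortestness there. A secondary point that needs explicit checking is that the downward-closure argument lifts to the product automaton, i.e., that the subsequence $\lab(p_2')$ of $\lab(p_2)$ is realised by some run of $N_r$ starting from a state in $\delta^*(\lab(p[0,i]))$; this is immediate because $\lab(p[0,i]) \cdot \lab(p_2') \in L(r)$ by downward closure, so some accepting run of $N_r$ exists, and its state after the fixed prefix lies in $\delta^*(\lab(p[0,i]))$ by definition.
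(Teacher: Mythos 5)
Your proposal is correct and follows essentially the same route as the paper's proof: adapt Yen's algorithm by computing shortest matching paths in the product with the NFA (using the derivative state set $\delta^*(\lab(p[0,i]))$ as initial states in line~\ref{alg:yen:12}), observe that downward closure makes every shortest matching path automatically simple after projection, and invoke Ackermann--Shallit for radix order. The only difference is that you make the shortcutting argument explicit as a lemma where the paper states it in one line, which is a presentational rather than substantive divergence.
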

\begin{proof}[Proof sketch.]
  Assume that $(G,s,t)$ and $r$ is an input for \enumnodespaths such that
  $L(r)$ is downward closed. Let $N = (Q,\Sigma,\delta, Q_I, Q_F)$ be
  an NFA for $r$.  We change Algorithm~\ref{alg:yen} as follows:
  \begin{itemize}
  \item In line~\ref{alg:yen:3}, instead of finding a shortest path
    $p$ in $G$, we first find a shortest path $p$ in $(G,s,t) \times
    N$. We then replace every node of the form $(u,q) \in V\times Q$
    in $p$ by $u$.
  \item In line~\ref{alg:yen:12} we need to find a shortest path in a
    product between $(G',p[i,i],t)$ and $N$. More precisely, let $J =
    \delta^*(\lab(p[0,i]))$ and denote by $N_J$ the NFA with initial
    state set $J$, that is, $(Q,\Sigma,\delta,J,Q_F)$. Then, in line
    \ref{alg:yen:12} we first find a shortest path $p_2$ from any node
    in $\{(p[i,i],q_i) \mid q_i \in \delta^*(\text{lab}(p[0,i]))\}$ to any
    node in $\{(t,q_F) \mid q_F \in Q_F\}$ in $(G',p[i,i],t) \times
    N_J$. We then replace every node of the form $(u,q) \in V\times Q$
    in $p_2$ by $u$.
  \end{itemize}
  We prove in Appendix~\ref{app:enum-yen} that the adapted algorithm is correct.
    By using Ackermann and Shallit's algorithm \cite{Ackerman-TCS09} from
  Theorem~\ref{theo:ackermanAndShallit}, we can even find a smallest
  path in $(G',p[i,i],t) \times N_J$ in radix order. Therefore, we can
  even enumerate the paths in $\enumnodespaths(\cR)$ in polynomial delay
  in radix order.
\end{proof}

Now we prove that upper bounds transfer from simple path problems to
trail problems. This is not immediate from Lemma~\ref{lem:linegraph},
since it only deals with unlabeled graphs.
Furthermore, there cannot be a polynomial time reduction from $\nodespath$ to $edgespath(a^kb^*)$ since $\edgespath(a^kb^*)$ is in FPT while $\nodespath(a^kb^*)$ is W[1]-hard. We will prove this later in Theorem~\ref{theo:dichotomy} and Theorem~\ref{theo:edgedichotomy}
\begin{restatable}{lemma}{trailtopathfptreduction}\label{lemma:edgeToNodes}
  Let $r$ be a regular expression and $(G,s,t)$ a graph. Then there
  exist graphs $(H_1,s_1,t_1),\ldots,(H_n,s_n,t_n)$ with $n \leq |G|$ such that there
  exists a trail from $s$ to $t$ in $G$ that matches $r$ if and only
  if there exists an $i$ such that there exists a simple path from
  $s_i$ to $t_i$ in $H_i$ that matches $r$.  Furthermore, each $H_i$
  is computable in polynomial time.
		\end{restatable}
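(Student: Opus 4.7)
The plan is to adapt the (unlabeled) line graph construction of Lemma~\ref{lem:linegraph} to the labeled setting. The main obstacle is that the plain line graph turns a trail of length $k$ into a simple path of length $k+1$, which would shift the label word by one symbol and therefore change which regular expressions are matched. I will sidestep this by (i) guessing the first edge of the trail and starting the simple path directly at the corresponding edge-node (removing one edge at the start), and (ii) labeling each edge of the line graph with the label of its \emph{source} edge in $G$ rather than treating the line graph as unlabeled.

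Concretely, for each edge $e^\star = (s, a^\star, v^\star) \in E$ leaving $s$, I will construct a graph $H_{e^\star}$ as follows. Its nodes are the edges of $G$ (each edge $e$ becomes a node $n_e$) together with a fresh target $t^\star$. For every pair $e = (u,a,v)$ and $e' = (v,b,w)$ of consecutive edges of $G$, add an edge from $n_e$ to $n_{e'}$ labeled $a$; and for every edge $e = (u,a,t)$ of $G$ entering $t$, add an edge from $n_e$ to $t^\star$ labeled $a$. Set $s_{e^\star} := n_{e^\star}$ and $t_{e^\star} := t^\star$. If additionally $s = t$, also include one extra graph $H_0$ consisting of a single node serving as both source and target, so that the empty simple path covers the case $\varepsilon \in L(r)$.

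The correspondence then goes through in both directions. A trail $e_1 e_2 \cdots e_k$ from $s$ to $t$ in $G$ with $e_1 = e^\star$ and $e_i = (v_{i-1}, a_i, v_i)$ maps to the simple path $n_{e_1} n_{e_2} \cdots n_{e_k} t^\star$ in $H_{e^\star}$: the nodes are pairwise distinct because the edges $e_i$ are (trail property) and $t^\star$ is fresh, and by the choice of edge labels the word of this simple path is exactly $a_1 a_2 \cdots a_k$, matching the trail's word. Conversely, any simple path from $n_{e^\star}$ to $t^\star$ in $H_{e^\star}$ traces, by construction, a sequence of pairwise distinct edges of $G$ forming a trail from $s$ to $t$ with the same label. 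The total number of graphs produced is at most $|E| + 1 \leq |G|$, each $H_{e^\star}$ has $|E|+1$ nodes and $O(|E|^2)$ edges, and all of them are computable in polynomial time from $G$.
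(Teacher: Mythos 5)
Your construction is correct and is essentially the paper's own proof: both build the line graph of $G$ with each line-graph edge carrying the label of its \emph{source} edge, append a fresh target node, and fix the starting point by enumerating the (at most $|E|$) possible first edges of the trail, so that the word of the resulting simple path equals the word of the trail. The only differences are cosmetic --- the paper first establishes the correspondence with a prepended dummy symbol (simple paths matching $a\cdot r$ versus trails matching $r$) and then strips that symbol by the same first-edge enumeration --- plus your explicit handling of the empty-trail case when $s=t$ and $\varepsilon\in L(r)$, a corner case the paper's write-up glosses over.
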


Using this Lemma, we can immediately show that the upper bound from Lemma~\ref{prop:prefixclosed} also holds for edge-disjoint problems.
\begin{corollary}
	$\enumedgespaths(\cR)$ is in polynomial delay for the class $\cR$ of
	downward closed regular expressions, even when the paths need to be
	output in radix order.
\end{corollary}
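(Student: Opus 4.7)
The plan is to combine Proposition~\ref{prop:prefixclosed} with Lemma~\ref{lemma:edgeToNodes}, by running the former on each of the graphs produced by the latter and merging the resulting streams. Concretely, on input $(G,s,t)$ and a downward closed $r$, I would first use Lemma~\ref{lemma:edgeToNodes} to compute in polynomial time graphs $(H_1,s_1,t_1),\ldots,(H_n,s_n,t_n)$ with $n \leq |G|$. For the enumeration version I need a slight strengthening of Lemma~\ref{lemma:edgeToNodes}: I need a \emph{label-preserving bijection} between the trails from $s$ to $t$ in $G$ matching $r$ and the disjoint union of the simple paths from $s_i$ to $t_i$ in $H_i$ matching $r$. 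Inspecting the line-graph-style construction suggested by Lemma~\ref{lem:linegraph}, this is a routine refinement: choose one $H_i$ per distinguishing feature (for instance, per first edge leaving $s$) so that each trail lands in exactly one $H_i$, and label the new edges of the $H_i$'s so as to preserve the word of the trail.

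Given such a correspondence, I would apply Proposition~\ref{prop:prefixclosed} to each $(H_i,s_i,t_i,r)$ in parallel, obtaining $n$ enumeration procedures that each produce the simple paths in $H_i$ matching $r$ in radix order with polynomial delay. To assemble these into a single radix-ordered output stream, I would maintain a priority queue whose $i$-th entry is the next pending output of the $i$-th procedure; at each step I extract the radix-smallest entry, translate it back to a trail in $G$ through the bijection, output it, and then request the next path from the corresponding enumerator. Since $n$ is polynomial in $|G|$ and each enumerator advances in polynomial time, both the time to the first output and the time between consecutive outputs remain polynomial.

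The main obstacle is exactly the strengthening of Lemma~\ref{lemma:edgeToNodes} from an existence biconditional to a label-preserving bijection: without bijectivity, the same trail could be produced by several $H_i$, so the merge would require a duplicate-detection step that threatens the delay bound; without label preservation, the radix order on the $H_i$-streams need not coincide with the intended radix order on the trails, and the priority-queue merge would no longer give the correct enumeration order. Once this bookkeeping is in place, the rest is a standard parallel-merge argument, matching the paper's own statement that the corollary follows immediately from Lemma~\ref{lemma:edgeToNodes} and Proposition~\ref{prop:prefixclosed}.
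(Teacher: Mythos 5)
Your proposal is correct and follows essentially the same route as the paper: decompose via Lemma~\ref{lemma:edgeToNodes}, run the algorithm of Proposition~\ref{prop:prefixclosed} on the $n$ instances in parallel, and merge the radix-ordered streams. The bookkeeping you flag (one $H_i$ per first edge of the trail, labels carried over so that radix order is preserved) is exactly what the paper's construction of $\Line$-style graphs in the appendix provides, so the paper simply treats the merge as immediate where you spell it out.
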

\begin{proof} Given $r \in \cR$ and a graph $G$. We use Lemma~\ref{lemma:edgeToNodes} to construct the graphs $(H_1,s_1,t_1),\allowbreak \ldots,\allowbreak (H_n,s_n,t_n)$. The algorithm in Lemma~\ref{prop:prefixclosed} allows us to enumerate all simple paths from $s_i$ to $t_i$ in $H_i$ in radix order. Therefore, we use $n$ parallel instances of this algorithm to enumerate, for all $i$, all simple paths from $s_i$ to $t_i$ in $H_i$ in radix order. Since each simple path in each $H_i$ corresponds to a trail in $G$, we can also output the corresponding paths in polynomial delay with radix order.
\end{proof}

\subsection{Beyond Downward Closed Languages, Data Complexity} \label{sec:BeyondDWC}

Once we go beyond downward-closed languages, simple paths or trails can not
always be enumerated in polynomial delay (if P $\neq$ NP). For instance, the
problems $\nodespath(a^*ba^*)$ and $\nodespath((aa)^*)$ are well known to be
\np-complete \cite{mendelzon} and it is easy to see that the
corresponding problems for trails are \np-complete too.

Bagan et al.\ \cite{bagan} studied the data complexity of \nodespath and
discovered a dichotomy w.r.t.\ a class $\ctract$ of regular
languages.\footnote{They actually proved that there is a trichotomy:
  the third characterization is that \nodespath is in AC$^0$ if $L(r)$ is
  finite.}
More precisely, although $\nodespath(r)$ can be \np-complete in general,
it is in \ptime if $L(r) \in \ctract$ and
\np-complete otherwise \cite[Theorem 2]{bagan}.  Here, $\ctract$ is defined as follows.
\begin{definition}[Similar to \cite{bagan}, Theorem 4] \label{def:loopabbrev}
  For $i \in \nat$, we say that a regular language \emph{can be
    $i$-loop abbreviated} if, for all $w_\ell, w,w_r \in \Sigma^*,
  w_1, w_2 \in \Sigma^+$, we have that, if $w_\ell w_1^i w w_2^i w_r
  \in L$, then $w_\ell w_1^i w_2^i w_r \in L$. We define
  $\ctract$ as the set of regular languages $L$ such that there
  exists an $i \in \nat$ for which $L$ can be $i$-loop abbreviated.
    \end{definition}

We show that Bagan et al.'s classification also leads to a dichotomy
w.r.t.\  polynomial delay enumeration in terms of data complexity.
\begin{restatable}{theorem}{BaganPolyDelay} \label{theorem:BaganPolyDelay}
  In terms of data complexity, 
  \begin{enumerate}[(a)]
  \item $\enumnodespaths(r)$ can be solved in polynomial delay if $L(r) \in \ctract$ and
  \item $\nodespath(r)$ is \np-complete otherwise.
  \end{enumerate}
\end{restatable}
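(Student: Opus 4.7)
Part (b) is immediate from Bagan et al.~\cite{bagan}: if $L(r) \notin \ctract$, then $\nodespath(r)$ is already \np-complete as a decision problem, so no polynomial-delay enumeration algorithm can exist unless P $=$ NP. The interesting direction is (a).

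For (a), the plan is a backtracking DFS equipped with a polynomial-time decision oracle (i.e., an enumeration-via-oracle scheme in the spirit of Lawler--Murty). Fix an \nfa $N = (Q,\Sigma,\delta,Q_I,Q_F)$ for $r$. The algorithm maintains a partial simple path $p$ from $s$ to a current node $u$ together with the reachable state set $Q_p = \delta^*(\lab(p))$. For each outgoing edge $(u,a,v)$ with $v \notin V(p)$ it calls the oracle to decide whether, in the subgraph $G'$ obtained from $G$ by removing $V(p)$, there is a simple path from $v$ to $t$ whose label is accepted by $N$ when started from $Q_v := \delta^*(Q_p, a)$. If the answer is yes, recurse on $v$; whenever $u = t$ and $Q_p \cap Q_F \neq \emptyset$, output $p$ and backtrack. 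Since each simple $s$-to-$t$ path matching $r$ corresponds to a unique root-to-leaf branch of the DFS tree, outputs are produced without duplicates.

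The oracle query is itself an instance of $\nodespath(r')$, where $L(r')$ is the left quotient of $L(r)$ by the word $\lab(p) \cdot a$. The crucial step is to observe that $\ctract$ is closed under left quotients: if $L$ is $i$-loop abbreviated and $w_\ell w_1^i w w_2^i w_r \in u^{-1} L$, then $u w_\ell w_1^i w w_2^i w_r \in L$, so by $i$-loop abbreviation (with $w'_\ell := u w_\ell$) we have $u w_\ell w_1^i w_2^i w_r \in L$, whence $w_\ell w_1^i w_2^i w_r \in u^{-1} L$. Since $r$ is fixed in data complexity, only finitely many residual NFAs arise, each defining a language in $\ctract$, so Bagan et al.'s \ptime algorithm answers every oracle query in time polynomial in $|G|$.

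Polynomial delay then follows from a routine DFS-with-oracle analysis: between two consecutive outputs, the DFS retracts and re-extends by at most $|V|$ steps, and at each step probes at most $|E|$ candidate edges with one oracle call each; every such call and every NFA update runs in polynomial time, so the delay is polynomial in $|G|$. The main obstacle I anticipate is precisely the closure of $\ctract$ under left quotients, i.e., guaranteeing that the oracle remains efficient on every intermediate NFA the search produces; once this is in hand, the remainder of the argument is standard enumeration-via-oracle machinery.
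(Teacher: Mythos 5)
Your proof is correct, but it takes a genuinely different enumeration scheme from the paper. The paper adapts Yen's algorithm: it keeps Yen's structure of generating spur paths from prefixes of already-output paths, and replaces the two shortest-path subroutine calls (lines~\ref{alg:yen:3} and \ref{alg:yen:12}) by calls to Bagan et al.'s decision/search procedure, once for $L(r)$ and once for the residual languages $L(N_J)$. You instead run a backtracking DFS over partial simple paths with an extension oracle (the classic flashlight/binary-partition method), pruning any edge whose residual instance is a no-instance. Both arguments hinge on exactly the same key lemma, which the paper isolates as Observation~\ref{obs:ctract-derivative}: \ctract is closed under left quotients, so every oracle/subroutine call is again a $\nodespath$ instance for a language in \ctract (and, since $r$ is fixed in data complexity, only finitely many residuals arise). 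Your quotient-closure argument matches the paper's. The trade-offs: your DFS uses only polynomial working space and needs no bookkeeping of previously emitted paths, whereas Yen's algorithm must store the output set $A$ (as a prefix tree) and the candidate set $B$; on the other hand, the Yen-based route lets the paper emit paths in increasing length and even in radix order (Lemma~\ref{BaganPolyDelayOrder}), which a plain DFS over edge extensions does not provide. One small point worth making explicit in your write-up: the reason every visited DFS node lies on a branch leading to at least one output (and hence the delay is polynomial) is precisely that you only descend after a positive oracle answer, and the suffix of any simple $s$-$t$ path extending the current prefix is itself a simple path in $G \setminus V(p)$ matching the quotient language, so no solutions are pruned.
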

\begin{proof}[Proof sketch]
  Part (b) is immediate from \cite[Theorem 1]{bagan}. For (a), our
  plan is to use Bagan et al.'s algorithm for simple paths (which we
  call BBG algorithm) as a subroutine in Yen's algorithm. We call BBG
  in lines \ref{alg:yen:3} and \ref{alg:yen:12}, so that the algorithm
  receives
  \begin{enumerate}[(i)]
  \item a simple path from $s$ to $t$ that matches $r$ in line
    \ref{alg:yen:3} and
  \item a simple path $p_2$ from $p[i,i]$ to $t$ such that
    $p[0,i]\cdot p_2$ matches $r$ in line \ref{alg:yen:12},
  \end{enumerate}
  respectively.   Change (i) to Yen's algorithm is trivial. Change (ii) can be done by
  calling BBG with $G'$ for the language of the automaton $N_J$ in the
  proof of Proposition~\ref{prop:prefixclosed}. We show that the
  adapted algorithm is correct in Appendix~\ref{app:enum-yen}.
\end{proof}
As we argue in Appendix~\ref{app:enum-yen}, the algorithm for Theorem~\ref{theorem:BaganPolyDelay}(a) can even be
adapted to output paths in increasing length (even radix order).

In fact, Bagan et al.'s dichotomy can also be extended to
$\edgespath(r)$. 
We note that the NP hardness of $\nodespath(r)$ does not carry over to
$\edgespath(r)$ with the reductions introduced in
Lemmas~\ref{lem:splitgraph} or \ref{lemma:edgeToNodes}. Lemma \ref{lem:splitgraph} only applies to
unlabeled graphs and, when adjusting it to labeled graphs, one would
only obtain hardness for a very restricted class of expressions
instead of all expressions in
\ctract. Lemma~\ref{lemma:edgeToNodes} on the other hand only allows to transfer
the \emph{upper bound}. We therefore need to revisit some of Bagan et al's methods.

\begin{restatable}{theorem}{bagantrails} \label{theo:baganForEdgeDisjoint}
	Let $r$ be a regular expression.
	\begin{enumerate}[(a)]
	\item If $L(r)$ belongs to \ctract, $\edgespath(r)$ is in \ptime.
	\item Otherwise, $\edgespath(r)$ is NP-complete.
	\end{enumerate}
\end{restatable}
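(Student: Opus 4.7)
For part (a), I would reduce $\edgespath(r)$ to polynomially many instances of $\nodespath(r)$ via Lemma~\ref{lemma:edgeToNodes}. Given an input $(G,s,t)$, that lemma produces in polynomial time graphs $(H_1,s_1,t_1),\ldots,(H_n,s_n,t_n)$ with $n \leq |G|$ such that a trail from $s$ to $t$ matches $r$ in $G$ iff some simple path from $s_i$ to $t_i$ matches $r$ in $H_i$. Since $L(r) \in \ctract$, Bagan et al.'s algorithm \cite[Theorem 2]{bagan} solves each $\nodespath(r)$ instance in polynomial time; running it on each $H_i$ and taking the disjunction yields polynomial time for $\edgespath(r)$.

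For part (b), membership in NP is immediate: guess a sequence of edges and verify in polynomial time that it forms a trail from $s$ to $t$ whose label lies in $L(r)$. For NP-hardness, the excerpt already observes that neither Lemma~\ref{lem:splitgraph} nor Lemma~\ref{lemma:edgeToNodes} is strong enough, so one has to replay Bagan et al.'s lower bound argument. Their reduction starts from $L(r) \notin \ctract$: for the appropriate $i \in \nat$ there is a witness $(w_\ell, w_1, w, w_2, w_r)$ to the failure of $i$-loop-abbreviation, and this witness is combined with a 2-disjoint-paths instance so that the only simple paths matching $r$ encode disjoint-paths solutions. My plan is to replay their construction starting from the two-edge-disjoint-trails problem (which is NP-hard in directed graphs), applying the $\Split$ idea of Lemma~\ref{lem:splitgraph} only to the bottleneck nodes of the gadget: each such node $v$ becomes an edge $(\Head(v),\Tail(v))$, so that ``the node is used at most once on a simple path'' is replaced by ``the edge is used at most once on a trail''. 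The witness words $w_\ell, w_1, w, w_2, w_r$ are then distributed over the edges of the modified gadget so that the label of any trail from the distinguished source to the distinguished target lies in $L(r)$ precisely when the trail encodes two edge-disjoint trails in the original instance.

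The main obstacle is choosing the labels of the newly introduced bottleneck edges. They must be consistent with the witness so that (i) a trail arising from a genuine solution still spells a word of the form $w_\ell w_1^i w w_2^i w_r \in L(r)$, and (ii) no ``cheating'' trail that reuses some edge can produce a word in $L(r)$. This forces a mild strengthening of Bagan et al.'s combinatorial analysis of loop-abbreviation witnesses: the fresh symbols inserted by $\Split$ must be absorbable into $w_1$ or $w_2$ (or placed inside $w_\ell$, $w_r$) without creating spurious membership in $L(r)$. Once this labeling is in place, the equivalence between trails matching $r$ in the reduced graph and solutions to the original two-edge-disjoint-trails instance follows exactly as in Bagan et al.
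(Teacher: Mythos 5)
Your part (a) and the \np-membership half of part (b) are correct and essentially identical to the paper's (the paper routes membership through Lemma~\ref{lemma:edgeToNodes} rather than guess-and-verify, which is immaterial). The genuine gap is in the hardness argument, which you present as a plan with an acknowledged open obstacle rather than a proof, and the obstacle you name is in fact fatal to the route you chose. If the new edges $(\Head(v),\Tail(v))$ introduced by $\Split$ carry fresh symbols, then no word traversing them can lie in $L(r)$ at all, since $L(r)$ is a fixed language over a fixed alphabet; and if they instead carry pieces of the witness words, you must prove that every resulting trail label is still forced into or out of $L(r)$, which is exactly the ``mild strengthening of Bagan et al.'s combinatorial analysis'' that you defer. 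The paper itself warns that adapting Lemma~\ref{lem:splitgraph} to labeled graphs ``would only obtain hardness for a very restricted class of expressions instead of all expressions in \ctract'' --- your construction heads straight into that dead end.

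The paper's actual reduction needs no splitting. It reduces from the two-edge-disjoint-trails problem on $(G,x_1,y_1,x_2,y_2)$ and uses the equivalent ``witness for hardness'' form of $L(r)\notin\ctract$: a state $q$ reached by some $w_\ell$, together with words $w_\text{m},w_1,w_2,w_\text{r}$ such that $w_\ell w_1^* w_\text{m} w_2^* w_\text{r} \subseteq L(r)$ while $w_\ell(w_1+w_2)^* w_\text{r} \cap L(r) = \emptyset$. The labeled graph keeps the nodes of $G$, replaces every edge of $G$ by \emph{two} parallel generalized edges labeled $w_1$ and $w_2$, and adds a $w_\ell$-edge into $x_1$, a $w_\text{m}$-edge from $y_1$ to $x_2$, and a $w_\text{r}$-edge out of $y_2$. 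Any trail avoiding the $w_\text{m}$-edge spells a word in $w_\ell(w_1+w_2)^*w_\text{r}$ and hence cannot match $r$; any trail through the $w_\text{m}$-edge decomposes into two edge-disjoint trails and can be relabeled, by switching between the $w_1$- and $w_2$-copies while keeping the same nodes, into one whose label lies in $w_\ell w_1^* w_\text{m} w_2^* w_\text{r} \subseteq L(r)$. This edge-doubling-plus-switching device is the idea your proposal is missing; without it, or a fully worked-out resolution of your labeling problem, part (b) does not go through.
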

\begin{proof}
Part (a) follows directly from Lemma~\ref{lemma:edgeToNodes} 
and the upper bound of Bagan et
al.~\cite[Theorem 2]{bagan}.
It remains to show (b). The upper bound again follows from
Lemma~\ref{lemma:edgeToNodes}. The hardness is similar to \cite[Lemma
2]{bagan}. We prove it in the Appendix.
\end{proof}

\subsection{Beyond Downward Closed Languages, Combined Complexity}

Unfortunately, Bagan et al.'s classification does not go through when
we consider combined complexity. Indeed, if $G$ is a graph with $n$
nodes and only $a$-labeled edges, then asking if there is a simple path that matches the
expression $a^n$ (which is finite and therefore in $\ctract$) is
the \np-complete \textsc{Hamilton Path} problem.

On the other hand, Alon et al.~\cite{AlonYZ-jacm95} proved that \nodespath for
graphs with $n$ nodes is in \ptime for the language $a^{\log n}$,
which is also in \ctract. It is open since 1995 whether \nodespath is in
\ptime for $a^{\log^2 n}$ \cite{AlonYZ-jacm95}. Recently, Bj\"orklund et
al.\ \cite{bjoerklund} showed that, under the Exponential Time
Hypothesis, there is no \ptime algorithm that can decide if there
exists a simple
path of length
$\Omega(f(n)\log^2 n)$ between two nodes in a graph of size $n$ for
any nondecreasing polynomial time computable function $f$ that tends
to infinity. The same holds if we consider trails instead of simple paths.
 
So, first of all, we see that all these languages are in \ctract and
behave very differently in terms of combined complexity. Second, the parameter $k$ of
$a^k$ plays a great role, which motivates us to study the
problem from
the angle of parameterized complexity next.

\makeatletter{}\section{Simple Paths With Length Constraints}\label{sec:ParamCompl}

In this section we investigate the parameterized complexity of
problems that involve simple paths with length constraints. The
problems we consider here are the core of the RPQ evaluation problems
in Section~\ref{sec:tractlanguages}. 
We first give a quick overview of some notions in parameterized complexity.
We follow the exposition of Cygan et al.~\cite{paramAlgo} and refer to
their work for further details.  A \emph{parameterized problem} is a
language $L_k \subseteq \Sigma^* \times \nat$ where, as before,
$\Sigma$ is a fixed, finite alphabet.  For an instance $(x, k) \in
\Sigma^* \times \nat$, we call $k$ the \emph{parameter}. The
\emph{size} $|(x,k)|$ of an instance $(x,k)$ is defined as
$|x|+k$.           A parameterized problem $L_k$ is called
\emph{fixed-parameter tractable} if there exists an algorithm
$\mathcal{A}$, a computable function $f: \nat \rightarrow \nat$, and a
constant $c$ such that, given $(x,k) \in \Sigma^* \times \nat$, the
algorithm $\mathcal{A}$ correctly decides whether $(x,k) \in L_k$ in
time bounded by $f(k) \cdot |(x,k)|^c$, where $c$ is a constant. The
complexity class containing all fixed-parameter tractable problems is
called \fpt.

Let $L_k$ and $L'_{k}$ be two parameterized problems. A
\emph{parameterized reduction} from $L_k$ to $L'_{k}$ is an algorithm
$\mathcal{R}$ that, given an instance $(x,k)$ of $L_k$, outputs an
instance $(x',k')$ of $L'_{k}$ such that
\begin{itemize}
\item $(x,k)$ is a yes-instance of $L_k$ if and only if $(x',k')$ is a
  yes-instance of $L'_{k}$,
\item $k' \leq g(k)$ for some computable function $g$, and
\item the running time of $\mathcal{R}$ is $f(k)\cdot |x|^{O(1)}$ for
  some computable function $f$.
\end{itemize}
Downey and Fellows introduced the W-hierarchy
\cite{DowneySiam-95}. The $k$-Clique problem is W[1]-complete, that is,
complete for the first level of the W-hierarchy
\cite{DowneyTCS-95}. Therefore, $k$-Clique not being fixed-parameter
tractable is equivalent to FPT $\neq$ W[1], which is a standard
assumption in parameterized complexity.

\subsection{One Simple Path}\label{sec:onepath}

We consider the following parameterized problems.
\begin{itemize}
\item $\nodesimpath_k$: Given an instance $((G,s,t),k)$ with $k \in \nat$, is
  there a simple path from $s$ to $t$ of length exactly $k$ in $G$?
\item $\nodesimpath_{\leq k}$ and $\nodesimpath_{\geq k}$: 
      these are
  defined analogously to $\nodesimpath_k$ but ask if there is a simple
  path of length $\leq k$ and $\geq k$, respectively.
\end{itemize}
The problems $\edgesimpath_k$,  $\edgesimpath_{\leq k}$, and
$\edgesimpath_{\geq k}$ are defined analogously but consider trails
instead of simple paths.

These three problems are in \fpt, but the techniques to prove it are
quite different. For $\nodesimpath_k$, membership in \fpt follows from
the famous color
coding technique~\cite{AlonYZ-jacm95}.
\begin{theorem}[Alon et al.~\cite{AlonYZ-jacm95}] \label{FPT:ak}
  $\nodesimpath_{k}$ is in \fpt.
\end{theorem}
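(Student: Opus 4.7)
The plan is to prove this via the \emph{color coding} technique. Given an instance $((G,s,t),k)$, I would first randomly assign, independently and uniformly at random, a color from $\{1,\ldots,k+1\}$ to every node of $G$. Call a simple path of length $k$ (i.e., having $k+1$ nodes) \emph{colorful} if all its nodes receive pairwise distinct colors. Every fixed simple path of length $k$ is colorful with probability $(k+1)!/(k+1)^{k+1} \geq e^{-(k+1)}$, so if some simple path of length $k$ from $s$ to $t$ exists, then a colorful such path exists with probability at least $e^{-(k+1)}$.

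Next I would show that one can decide the existence of a colorful simple $s$-to-$t$ path of length $k$ in a colored graph in time $f(k) \cdot |G|^{O(1)}$ by dynamic programming. The DP table is indexed by pairs $(v, S)$ where $v \in V$ and $S \subseteq \{1,\ldots,k+1\}$ is a set of colors with $|S| \leq k+1$, and stores whether there is a simple path from $s$ to $v$ whose set of node-colors is exactly $S$. The transition considers, for each edge $(u,a,v) \in E$ and each subset $S$ with $\mathrm{color}(v) \notin S$, whether $(u,S)$ is reachable; if so, we mark $(v, S \cup \{\mathrm{color}(v)\})$ reachable. The answer is yes iff some $(t,S)$ with $|S| = k+1$ is reachable. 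This runs in time $O(2^{k+1} \cdot k \cdot |E|)$.

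Repeating the random coloring $O(e^{k+1})$ times amplifies the success probability to, say, $1/2$, giving a randomized FPT algorithm with running time $(2e)^{O(k)} \cdot |G|^{O(1)}$. To make this deterministic, I would invoke the standard construction of a $(|V|, k+1)$-perfect family of hash functions from $V$ to $\{1,\ldots,k+1\}$ of size $2^{O(k)} \log |V|$ (as given by Alon, Yuster, and Zwick using Schwartz's construction). By the defining property of such a family, for every $(k+1)$-element subset $U \subseteq V$ there is some hash function in the family that is injective on $U$; applying this to the node set of a hypothetical simple $s$-to-$t$ path of length $k$ guarantees that at least one coloring in the family produces a colorful path. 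Running the DP once per coloring yields a deterministic $f(k) \cdot |G|^{O(1)}$ algorithm.

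The main obstacle is the derandomization step, since the randomized algorithm and its DP are straightforward, whereas the perfect hash family construction is the technically heavier ingredient. However, since only existence in \fpt is claimed and perfect hash families of the required size are a well-established tool, I would simply cite \cite{AlonYZ-jacm95} for that construction rather than reproving it.
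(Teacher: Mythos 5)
Your proposal is correct and is exactly the color-coding argument of Alon, Yuster, and Zwick that the paper itself relies on: the paper gives no independent proof of this theorem but simply attributes it to \cite{AlonYZ-jacm95}, and your random coloring, subset dynamic program, and derandomization via perfect hash families reproduce that cited argument faithfully.
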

$\nodesimpath_{\leq k}$ is trivially in \fpt because the shortest path
problem is in \ptime.
\begin{theorem} \label{FPT:leq-ak}
$\nodesimpath_{\leq k}$ is in \ptime (and therefore in \fpt).\end{theorem}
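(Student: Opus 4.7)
The plan is to reduce $\nodesimpath_{\leq k}$ to the ordinary shortest path problem, which is solvable in polynomial time (for instance by BFS in $O(|V|+|E|)$ time). The key observation is that a shortest path from $s$ to $t$ in an unweighted graph is always a simple path: if some node were repeated along a shortest path $p$, one could delete the cyclic subpath between the two occurrences and obtain a strictly shorter $s$-$t$-path, contradicting minimality.

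Concretely, given the input $((G,s,t),k)$, I would compute a shortest path $p$ from $s$ to $t$ in $G$ using BFS, and answer \true iff $p$ exists and $|p| \leq k$. Correctness follows in both directions: if BFS returns a path of length at most $k$, that path is simple by the argument above, so a simple $s$-$t$-path of length $\leq k$ exists; conversely, if any simple $s$-$t$-path of length $\leq k$ exists, then in particular some $s$-$t$-path of length $\leq k$ exists, so the BFS-shortest path has length $\leq k$ as well.

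Since BFS runs in time polynomial in $|G|$ (independent of the parameter $k$), the algorithm places $\nodesimpath_{\leq k}$ in \ptime, and hence also in \fpt with trivial parameter dependence $f(k) = 1$. There is no real obstacle here; the only subtlety worth flagging is the simple-versus-arbitrary path distinction, which the shortcut argument resolves immediately.
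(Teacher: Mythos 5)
Your proof is correct and is exactly the argument the paper relies on: the paper dismisses this case in one line ("trivially in FPT because the shortest path problem is in PTIME"), and your write-up simply makes explicit the observation that a shortest $s$-$t$-path is simple, so the BFS distance being at most $k$ is equivalent to the existence of a simple path of length at most $k$. No differences in approach and no gaps.
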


Finally, $\nodesimpath_{\geq k}$ can be shown to be in \fpt by adapting methods from Fomin et
al.~\cite{fomin}. They proved that finding simple cycles of length at
least $k$ is in \fpt for cycles and discovered that their technique also works for paths
\cite{holgercomm}. The following theorem is therefore due to the
authors of \cite{fomin}. We present a proof in Appendix~\ref{app:paramcompl}
because we need it to prove 
   Theorems \ref{theo:WConstOrNOTIsFPT} and \ref{theo:dichotomy}.
  (We note that Fomin et al.~\cite{fomin} did already consider
\longdirpath on \emph{undirected} graphs, but the techniques
needed on directed graphs are quite different.) 

\begin{restatable}{theorem}{LongDirPathFPT}(Similar to Theorem 5.3 in \cite{fomin}) \label{theorem:longdirpathInFPT}
\longdirpath is in \fpt.
\end{restatable}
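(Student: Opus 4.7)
The plan is to adapt the color-coding technique of Alon-Yuster-Zwick, combined with the representative families framework of Fomin et al., to the directed $s$-$t$ simple path setting. The key observation is that a simple $s$-$t$ path of length at least $k$ exists in $G$ if and only if there is a simple path $P_1 = v_0 v_1 \cdots v_k$ from $s = v_0$ of length exactly $k$, together with a (possibly empty) simple path $P_2$ from $v_k$ to $t$ in $G - \{v_0,\ldots,v_{k-1}\}$. Coloring each vertex of $G$ uniformly at random with one of $k+1$ colors, the vertices of any fixed $P_1$ receive pairwise distinct colors with probability at least $(k+1)!/(k+1)^{k+1} = e^{-\Theta(k)}$, so it suffices to search for a colorful prefix that can be completed to $t$.

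Next, we set up a dynamic program over ``colorful prefixes.'' For each vertex $v \in V$ and each color set $C \subseteq [k+1]$, we maintain a family $\mathcal{F}(v, C)$ of vertex subsets $S \subseteq V$ such that $S$ is the vertex set of some colorful simple path from $s$ to $v$ using exactly the colors in $C$. We keep this family \emph{representative} in the sense of Fomin et al.: for every ``completion set'' $T \subseteq V\setminus \{v\}$, if some $S$ (as above) satisfies $S \cap T = \emptyset$, then so does some $S' \in \mathcal{F}(v, C)$. The efficient representative families machinery guarantees $|\mathcal{F}(v, C)| \leq 2^{O(k)}$, and the entire DP runs in $2^{O(k)} \cdot n^{O(1)}$ time. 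Once $\mathcal{F}(v, [k+1])$ is populated for all $v$, for each $S \in \mathcal{F}(v,[k+1])$ we check by a single directed-reachability query whether $v$ reaches $t$ in $G - (S\setminus\{v\})$; representativity guarantees that this correctly detects the existence of a qualifying $s$-$t$ path. The random coloring is finally derandomized with an $(n,k+1)$-perfect hash family, yielding a deterministic $2^{O(k)} n^{O(1)}$ algorithm.

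The main obstacle is verifying that the representativity argument, which is usually described for undirected long-path or long-cycle problems, transfers correctly to the directed ``punctured reachability'' completion step. Specifically, one must check that whether $v$ reaches $t$ in $G - (S\setminus\{v\})$ depends only on the unordered set $S$ and not on the specific colorful path that realizes $S$, so that representativity with respect to the uniform matroid on $V$ is the right notion. A secondary subtlety is bookkeeping: the DP must propagate representative families along directed edges while respecting the ``color-disjointness'' constraint, which in the directed case loses the symmetry exploited in the undirected long-cycle argument of \cite{fomin}. Once these points are handled, the \fpt\ bound follows directly from the general framework and establishes that \longdirpath is in \fpt\ on directed graphs.
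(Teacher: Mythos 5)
Your algorithm is essentially the one the paper uses (compute a representative family of $(k+1)$-vertex prefixes ending at each $v$, then test reachability from $v$ to $t$ in the punctured graph), but the step you defer as ``the main obstacle'' is exactly the heart of the proof, and as stated your justification for it is wrong. Representative families in the sense of Fomin et al.\ only guarantee disjointness from completion sets $Y$ of \emph{bounded} size (here $|Y|\leq k+1$); a family that were representative against arbitrary completion sets $T\subseteq V\setminus\{v\}$, as you write, could not be kept of size $2^{O(k)}$. The completion in this problem is a $v$-$t$ path of \emph{unbounded} length, so ``representativity guarantees that this correctly detects the existence of a qualifying $s$-$t$ path'' does not follow from the definition. (Your secondary worry, that punctured reachability might depend on more than the unordered set $S$, is a non-issue.)

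The paper closes this gap with a separate combinatorial lemma (Lemma~\ref{proofforfptalg}): take a \emph{shortest} simple $s$-$t$-path $p$ of length at least $k$ and write $p = P\cdot e\cdot R\cdot e'\cdot Q$ where $P$ is the first $k$ edges and $Q$ the last $k$ edges. Representativity is invoked only against $Q$, which has $k+1$ vertices, yielding a replacement prefix $P'$ disjoint from $V(Q)$. One then shows $P'$ is automatically disjoint from the middle part $R$: if $P'$ met $R$, shortcutting from $P'$ into $R$ would produce a strictly shorter simple $s$-$t$-path that still contains $Q$ and hence still has length at least $k$, contradicting minimality of $p$. This minimality argument is what makes the single reachability test in the punctured graph sound, and it is absent from your write-up. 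Incidentally, the color-coding layer you add is unnecessary once representative families are used (the paper computes $\hat{P}^{k+1}_{sv}$ directly via the efficient representative-set computation), and color coding alone cannot substitute for the argument above, since a colorful prefix gives no control over whether the (arbitrarily long) completion avoids it.
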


By Lemma~\ref{lem:linegraph}, the complexities of
Theorems~\ref{FPT:ak},\ref{FPT:leq-ak}, and
\ref{theorem:longdirpathInFPT} carry over from simple paths to trails.
\begin{theorem}
	$\edgesimpath_k$,  $\edgesimpath_{\leq k}$, and $\edgesimpath_{\geq k}$ are in FPT
\end{theorem}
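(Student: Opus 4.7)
The plan is to reduce each of the three trail problems to the corresponding simple path problem by applying the $\Line$ construction of Section~\ref{sec:preliminaries}, and then invoke the matching FPT or PTIME result. Given an instance $((G,s,t),k)$ of one of the edge-disjoint problems, I would set $(G',s',t') := \Line(G,s,t)$ (the single-pair specialisation of the $\Line$ construction). By Lemma~\ref{lem:linegraph} there exists a trail of length $k$ from $s$ to $t$ in $G$ iff there exists a simple path of length $k+1$ from $s'$ to $t'$ in $G'$. The construction is computable in polynomial time and the parameter only grows from $k$ to $k+1$, so the reduction is a parameterized reduction.

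Concretely, for $\edgesimpath_k$ I would call the FPT algorithm of Theorem~\ref{FPT:ak} on $((G',s',t'),k+1)$; since $f(k+1)\cdot |G'|^{O(1)}$ is again of the form $f'(k)\cdot |G|^{O(1)}$, this yields membership in \fpt. For $\edgesimpath_{\leq k}$, I would invoke Theorem~\ref{FPT:leq-ak} (shortest paths in polynomial time) on $(G',s',t')$ and check whether the shortest simple path from $s'$ to $t'$ has length at most $k+1$; this lands in \ptime and hence in \fpt. For $\edgesimpath_{\geq k}$, I would invoke Theorem~\ref{theorem:longdirpathInFPT} on $((G',s',t'),k+1)$ to decide the existence of a simple path of length at least $k+1$ in $G'$, again yielding \fpt.

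The correctness in each case is a direct consequence of the biconditional in Lemma~\ref{lem:linegraph}: a trail of length exactly $k$ (respectively, $\leq k$, $\geq k$) in $G$ corresponds precisely to a simple path of length exactly $k+1$ (respectively, $\leq k+1$, $\geq k+1$) in $G'$. Since the correspondence is length-preserving up to the $+1$ shift, the three length regimes transfer without further modification.

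I expect no substantive obstacle: Lemma~\ref{lem:linegraph} already packages the nontrivial combinatorial content of the reduction, and the three target theorems (Theorems~\ref{FPT:ak}, \ref{FPT:leq-ak}, and \ref{theorem:longdirpathInFPT}) are applied as black boxes. The only point that deserves a brief sentence in the write-up is the observation that bumping $k$ to $k+1$ preserves \fpt membership (because for any computable $f$ the function $k \mapsto f(k+1)$ is also computable), so the reduction is genuinely a parameterized reduction and not merely a polynomial-time one.
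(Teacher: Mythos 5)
Your proposal is correct and matches the paper's argument exactly: the paper proves this theorem in one line by observing that Lemma~\ref{lem:linegraph} transfers the complexities of Theorems~\ref{FPT:ak}, \ref{FPT:leq-ak}, and \ref{theorem:longdirpathInFPT} from simple paths to trails, which is precisely your line-graph reduction with the $k \mapsto k+1$ shift. Your extra remarks on the parameter shift preserving \fpt membership are sound but not needed beyond what the paper already asserts.
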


\subsection{Two Node-Disjoint Paths}\label{sec:twodisjointpaths}

We consider variants of the \twodisjointpaths problem \cite{FortuneHW-TCS80}. A \emph{two-colored graph} is a directed graph in which every
edge is given one of two colors, say $a$ or $b$. An \emph{$a$-colored
  path} is a path consisting of only $a$-colored edges. We will denote an
$a$-colored edge from $u$ to $v$ with $u \stackrel{a}{\to} v$ (similar
for $b$-colored edges). In the remainder we abbreviate
$a$-colored edge and $a$-colored path by $a$-edge and $a$-path, respectively.
We consider the following parameterized problems.
\begin{itemize}
\item \knodedisjointpaths: Given a graph $G$, nodes $s_1,t_1,s_2,t_2$,
  and parameter $k \in \nat$, are there simple paths
  $p_1$ from $s_1$ to $t_1$ and $p_2$ from $s_2$ to $t_2$ such that
  $p_1$ and $p_2$ are node-disjoint and $p_1$ has length $k$?
\item \knodecolordisjointpaths: Given a two-colored graph $G$ and nodes $s_a,t_a,s_b,t_b$,
  is there a simple $a$-path $p_a$ from $s_a$ to $t_a$ and
  a simple $b$-path $p_b$ from $s_b$ to $t_b$ such that $p_a$
  and $p_b$ are node-disjoint and $p_a$ has length $k$?
\end{itemize}
It is well-known that \twodisjointpaths, the non-parameterized version
of \knodedisjointpaths, is \np-complete~\cite{FortuneHW-TCS80}.
Cai and Ye~\cite{CaiWG-16}
proved that \knodedisjointpaths is in FPT for \emph{undirected
  graphs}, both for the cases where one wants simple paths or
trails. They left the cases for directed graphs as open
problems \cite[Problem 2]{CaiWG-16}. We solve one of the cases by
showing in Theorem~\ref{theorem:two-disjoint-is-hard} that \knodedisjointpaths is W[1]-hard. We start
by proving that \knodecolordisjointpaths is W[1]-hard, because the proof
for \knodedisjointpaths relies on it.

\begin{restatable}{theorem}{TheoTwoColor}\label{theo:kcolor}
  \knodecolordisjointpaths is W[1]-hard. 
\end{restatable}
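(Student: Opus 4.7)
The plan is to reduce from Multicolored $k$-Clique, a canonical W[1]-hard problem: given a graph $H$ whose vertex set is partitioned into color classes $V_1, \ldots, V_k$, decide whether $H$ contains a clique with exactly one vertex in each class. From such an instance I build a two-colored directed graph $G$ with source-target pairs $(s_a, t_a)$ and $(s_b, t_b)$ and a parameter $k' = 6\binom{k}{2} - 1$, so that $H$ has a multicolored clique iff $G$ admits node-disjoint simple $a$- and $b$-paths with $|p_a| = k'$.

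The graph $G$ contains two families of gadgets wired together by shared \emph{witness} nodes. For every class $V_i$ there is a \emph{selection gadget} $L_i$, traversed by $p_b$, whose effect is to select one $v_i^* \in V_i$. For every pair $1 \leq i < j \leq k$ there is an \emph{edge verification gadget} $X_{ij}$, traversed by $p_a$, which forces $p_a$ to pick some edge $\{v,w\} \in E(H)$ with $v \in V_i$, $w \in V_j$. The gadgets $X_{ij}$ are chained in some fixed order by single $a$-edges $\gamma_{ij} \to \beta_{i'j'}$ from the exit of one to the entry of the next. For every $v \in V_i$ and every $j \neq i$ I introduce two witness nodes $m_v^{i,j,T}$ and $m_v^{i,j,B}$; these carry $a$-edges inside $X_{ij}$ and $b$-edges inside $L_i$. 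The intended semantics is that both copies of $m_v^{i,j}$ remain free of $p_b$ precisely when $v = v_i^*$, and the $a$-route for edge $\{v,w\}$ in $X_{ij}$ is the length-$5$ path $\beta_{ij} \to m_v^{i,j,T} \to m_v^{i,j,B} \to m_w^{j,i,T} \to m_w^{j,i,B} \to \gamma_{ij}$, which requires \emph{both} copies on each side to be free.

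The delicate step is the design of $L_i$: since $p_b$ has no length bound, its selection cannot be controlled by counting, yet it must choose exactly one vertex per class and simultaneously cover one of the two copies of $m_v^{i,j}$ for every non-selected $v$. I use a \emph{two-state} layered construction in which $p_b$ progresses row by row through $V_i$ and at each row either processes the current vertex $w_r$ (visiting all nodes $m_{w_r}^{i,\cdot,\sigma}$ for its current state $\sigma \in \{T,B\}$) or takes a single ``skip'' edge that changes its state from $T$ to $B$. Because the only cross-state transition is this one-way skip and $p_b$ must begin in state $T$ and end in state $B$, it skips exactly one row, and the skipped $w_r$ is the selected $v_i^*$. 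A processed vertex $w$ in state $\sigma$ occupies $m_w^{i,j,\sigma}$ for every $j$ but leaves $m_w^{i,j,\bar\sigma}$ free; this is exactly why the witness nodes must be doubled, so that the $a$-route for $\{v,w\}$ is blocked as soon as $v$ or $w$ is processed in \emph{either} state.

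Finally, I verify both directions. In the forward direction, a multicolored clique $v_1^*, \ldots, v_k^*$ directly induces the required paths: skip $v_i^*$ in each $L_i$ and traverse the edge $\{v_i^*, v_j^*\}$ in each $X_{ij}$. In the converse direction, node-disjointness together with the doubled-witness trick forces the $a$-route in each $X_{ij}$ to pick $v = v_i^*$ and $w = v_j^*$ (the vertices skipped in $L_i$ and $L_j$), so all $\binom{k}{2}$ pairs $\{v_i^*, v_j^*\}$ must belong to $E(H)$ and the $v_i^*$ form a clique. The construction is polynomial in $|H|$ and $k' = O(k^2)$ depends only on $k$, yielding a parameterized reduction and hence W[1]-hardness of \knodecolordisjointpaths.
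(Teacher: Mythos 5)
Your proposal is correct, but it follows a genuinely different construction from the paper's. The paper reduces from plain \kclique and builds a $k\times n$ grid of gadgets $G_{i,j}$: the $b$-path selects one vertex per row by \emph{skipping} one gadget in each row, while the length-$k'$ $a$-path must thread through exactly the skipped gadgets, with adjacency in $G$ encoded by inter-row $a$-edges and with dedicated control nodes $c_i, c_{i_1i_2}$ enforcing (via Lemma~\ref{lem:twopaths}) that every $a$-path of length $k'$ runs from $c_1$ to $c_{k+1}$ and visits every row. You instead reduce from Multicolored \kclique and decouple the two roles into per-class selection gadgets $L_i$ (for the $b$-path, with a one-way two-track skip enforcing ``exactly one vertex per class'') and per-pair edge-verification gadgets $X_{ij}$ (for the $a$-path), communicating only through doubled witness nodes; the doubling is the right fix for the fact that a non-selected vertex is processed on only one of the two tracks, and your layered DAG structure makes the length bound $k'=6\binom{k}{2}-1$ automatic. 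For Theorem~\ref{theo:kcolor} in isolation your construction is arguably more modular and transparent. What the paper's heavier machinery buys is reuse: the control nodes (which the authors note are unnecessary for Theorem~\ref{theo:kcolor} itself) are exactly what lets the \emph{same} graph, after replacing $b$-edges by long paths, prove the uncolored Theorem~\ref{theorem:two-disjoint-is-hard} (one must argue the second path cannot cheat by using $a$-edges, which is done via Lemma~\ref{lem:twopaths}(\ref{lem:twopaths:c})--(\ref{lem:twopaths:d})) and later drive the hardness sides of Theorems~\ref{theo:dichotomy} and \ref{theo:edgedichotomy}; your construction would need a separate analysis of this kind before the colors could be removed. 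Minor points you leave implicit but that are routine: the chaining of the $L_i$ into a single $s_b$-$t_b$ path, and the disjointness of the $\beta_{ij},\gamma_{ij}$ and chaining edges from everything the $b$-path touches.
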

\begin{proof}
  The proof is inspired by an adaptation of Grohe and Gr\"uber
  \cite[Lemma 16]{GroheICALP-07} of a proof by 
  Slivkins~\cite[Theorem 2.1]{slivkins}. Slivkins proved that
  $k$ \textsf{Disjoint Paths} is W[1]-hard in acyclic graphs, that is, he
  showed that it W[1]-hard to decide, given a DAG   $G$ and nodes $s_1, t_1, \ldots, s_k, t_k$ (with parameter $k$), if
  there are pairwise \emph{edge-disjoint} simple paths from $s_i$ to
  $t_i$ for each $i = 1,\ldots,k$. 
              
  We reduce from \kclique, which is well known to be
  W[1]-complete~\cite[Corollary 3.2]{DowneyTCS-95}.
                                  Let $G=(V,E)$ be an \emph{undirected} graph and assume \mbox{w.l.o.g.} that $V =
  \{1,\ldots,n\}$. We will construct a two-colored graph $G'$ with
  $kn\cdot 2(k+1) +k(k-1)/2+ 2(k+1)$   nodes such that $G$ has a \kclique if and only if $G'$ has
  node-disjoint simple paths $p_1$ from $s_1$ to $t_1$ and $p_2$ from
  $s_2$ to $t_2$ such that $p_1$ is $a$-colored and has length $k'
  \in \Theta(k^2)$ while $p_2$ is $b$-colored.  The graph $G'$
  contains $kn$ gadgets $G_{i,j}$ with $i=1,\ldots, k$ and $j =
  1,\ldots,n$, each consisting of $2(k+1)$ nodes. Gadgets will be
  ordered in $k$ rows, where row $i$ has gadgets $G_{i,1}, \ldots,
  G_{i,n}$.  Furthermore, $G'$ contains $k+1$ additional nodes
  $r_1,\ldots,r_{k+1}$ that link the rows together, and $k+1+k(k-1)/2$
  control nodes $c_1, \ldots c_{k+1}$ and $c_{i_1i_2}$ with $1 \leq i_1 <
  i_2 \leq k$ that will limit the number    of disjoint paths from row
  $i$ to row $i+1$ or from row $i_1$ to $i_2$,
  respectively.\footnote{We note that Theorem~\ref{theo:kcolor} can 
    be proved without using control nodes, but we need them to for
    Theorem~\ref{theorem:two-disjoint-is-hard} where we then only require a small change to the
    construction.}
  We define $s_1 = c_1$, $t_1 = c_{k+1}$, $s_2 = r_1$, and $t_2 = r_{k+1}$.

  We will now explain how the nodes are connected.  Each gadget
  contains a disjoint copy of $2(k+1)$ nodes which we call $u_1,u_2,\ldots,u_{k+1}$ and
  $v_1,v_2,\ldots,v_{k+1}$. To simplify notation, we give these nodes
  the same name in figures, even though they are different. One such gadget is depicted in Figure~\ref{fig:OneGadget}. To avoid
  ambiguity, we may also refer to node $u_\ell$ in gadget $G_{i,j}$ by
  $G_{i,j}[u_\ell]$. Each gadget contains edges $u_\ell
  \stackrel{a}{\to} v_\ell$ (for every $\ell = 1,\ldots,k+1$) and
  $u_\ell \stackrel{b}{\to} u_{\ell+1}$ and $v_\ell \stackrel{b}{\to}
  v_{\ell+1}$ (for every $\ell = 1,\ldots,k$).  
        
\tikzset{
  NODE/.style = {fill, inner sep=1pt, circle },
  T/.style = {draw, rounded corners,
                     to path={-| (\tikztotarget)},
                     }
}

    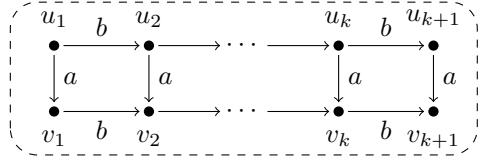
\begin{figure}[t]
   \centering
 \begin{tikzpicture}[->, scale = 0.96, auto ]    \def\distance{1.3}      \foreach \i in {1,2}{
	  \node[label=above:$u_\i$] (A\i) at (\distance*\i,0) {};
	  \node[label=below:$v_\i$] (B\i) at (\distance*\i,\distance*-.7) {}; 
    	  \path	(A\i) edge  node  {$a$} (B\i);
    	  \fill (A\i) circle (2pt);
    	  \fill (B\i) circle (2pt);
    }

    \foreach \i in {4}{
	  \node[label=above:$u_k$] (A\i) at (\distance*\i,0) {};
	  \node[label=below:$v_k$] (B\i) at (\distance*\i,\distance*-.7) {}; 
    	  \path	(A\i) edge  node  {$a$} (B\i);
    	  \fill (A\i) circle (2pt);
    	  \fill (B\i) circle (2pt);
    }    
    
    \foreach \i in {5}{
	  \node[label=above:$u_{k+1}$] (A\i) at (\distance*\i,0) {};
	  \node[label=below:$v_{k+1}$] (B\i) at (\distance*\i,\distance*-.7) {}; 
    	  \path	(A\i) edge  node  {$a$} (B\i);
    	  \fill (A\i) circle (2pt);
    	  \fill (B\i) circle (2pt);
    }
	\path 
    (A1) edge	  node  {$b$} (A2)
    (B1) edge	 [->]    node  [swap] {$b$} (B2) 
    (A4) edge	 [->]    node  {$b$} (A5)
    (B4) edge	 [->]    node  [swap] {$b$} (B5);
    
    \node[label=center:$\cdots$] (A3) at (\distance*3,0) {}; 
    \node[label=center:$\cdots$] (B3) at (\distance*3,\distance*-.7) {}; 
    \draw[->] (A2) -- ($(A2)!0.75!(A3)$);
    \draw[->] (B2) -- ($(B2)!0.75!(B3)$);
    \draw[->] ($(A3)!0.25!(A4)$) -- (A4);
    \draw[->] ($(B3)!0.25!(B4)$) -- (B4);
    \def\offset{0.6}     \draw[dashed, rounded corners=10pt] (\distance*1-\offset,\distance*-.7-\offset) rectangle (\distance*5+\offset,\offset) {};     \end{tikzpicture}
    \caption{Internal structure of each of the gadgets $G_{i,j}$.}
    \label{fig:OneGadget}
  \end{figure}

  We now explain how the gadgets $G_{i,j}$ are connected within the
  same row, see Figure~\ref{fig:gadgets:row}. In each row $i\in
  \{1,\ldots, k\}$, node $r_i$ has two outgoing edges $r_i
  \stackrel{b}{\to} G_{i,1}[u_1]$ and $r_i \stackrel{b}{\to}
  G_{i,2}[v_1]$. We also have two incoming edges for $r_{i+1}$, namely
  $G_{i,n-1}[u_{k+1}] \stackrel{b}{\to} r_{i+1}$ and $G_{i,n}[v_{k+1}]
  \stackrel{b}{\to} r_{i+1}$. Furthermore, we have the edges
  $G_{i,j}[u_{k+1}] \stackrel{b}{\to} G_{i,j+1}[u_1]$ and
  $G_{i,j}[v_{k+1}] \stackrel{b}{\to} G_{i,j+1}[v_1]$ for every $j =
  1,\ldots,n-1$. We also add edges $G_{i,j}[u_{k+1}] \stackrel{b}{\to}
  G_{i,j+2}[v_1]$ for every $j = 1,\ldots,n-2$.

\begin{figure*}[t]
\centering
\begin{tikzpicture}[scale = 1]
\node[NODE] (Si) at (-1.5,1) {};
\node[anchor=south, yshift=+0.2ex] at (Si) {$r_i$};
\node[NODE] (Si2) at (2.5+9,0) {};
\node[anchor=north, yshift=-0.2ex] at (Si2) {$r_{i+1}$};

\foreach \i in {1,...,3}{
	\node[NODE] (A\i) at (2.5*\i-2.5,1) {};
    \node[anchor=south, yshift=+0.2ex]       at (A\i) {$\mathstrut u_1$};
	
	\node[NODE] (B\i) at (2.5*\i-2.5,0) {};
    \node[anchor=north, yshift=-0.2ex]       at (B\i) {$\mathstrut v_1$};
	
	\node[NODE] (Ak\i) at (2.5*\i-1.5,1) {};
    \node[anchor=south, yshift=+0.2ex]        at (Ak\i) {$\mathstrut u_{k+1}$};
	\node[NODE] (Bk\i) at (2.5*\i-1.5,0) {};
    \node[anchor=north, yshift=-0.2ex]        at (Bk\i) {$\mathstrut v_{k+1}$}; 
  }

\node[NODE] (A4) at (9,1) {};
\node[anchor=south, yshift=+0.2ex]  at (A4) {$u_1$};

\node[NODE] (B4) at (9,0) {};
\node[anchor=north, yshift=-0.2ex] at (B4) {$v_1$};

\node[NODE] (Ak4) at (10,1) {};
\node[anchor=south, yshift=+0.2ex]  at (Ak4) {$u_{k+1}$};

\node[NODE] (Bk4) at (10,0) {};
\node[anchor=north, yshift=-0.2ex] at (Bk4) {$v_{k+1}$};
\def\offsetx{0.5} \def\offsety{0.6} \draw[dashed, rounded corners=10pt] (9-\offsetx,0-\offsety) rectangle (10+\offsetx,1+\offsety) {};

\foreach \i in {1,...,4}{
  \draw[dotted] (A\i) -- (Ak\i);
  \draw[dotted] (B\i) -- (Bk\i); 
}
\foreach \i in {1,...,3}{
\draw[dashed, rounded corners=10pt] (2.5*\i-2.5-\offsetx,0-\offsety) rectangle (2.5*\i-1.5+\offsetx,1+\offsety) {};   } 

\foreach \i in {1,...,3}{
  \draw (2.5*\i-2,-0.7) coordinate[label={below:$G_{i,\i}$}];
  }
  \draw (9.5,-0.7) coordinate[label={below:$G_{i,n}$}];
  
  \draw[->] (Si) -- (A1);  
  \draw[->] (Bk4) -- (Si2); 
  
  \draw[->] (Ak1) -- (A2);
  \draw[->] (Bk1) -- (B2);
  \draw[->] (Ak2) -- (A3);
  \draw[->] (Bk2) -- (B3);
  
    \draw[->] (Ak1) .. controls ($(A2)!0.5!(B2)$) and ($(Ak2)!0.5!(Bk2)$) .. (B3);
  \draw[->] (Si) .. controls ($(A1)!0.5!(B1)$) and ($(Ak1)!0.5!(Bk1)$) .. (B2);
  \draw (Ak2) .. controls ($(A3)!0.5!(B3)$) and ($(Ak3)!0.5!(Bk3)$) .. (7.5-1.5+0.2*4-0.2,0.2);   \draw (Ak3) -- (7.5-1.5+0.2*4-0.2,0.8);   \draw[->](2.5*3-1.5+0.8*4-0.8,0.2) -- (B4);
  \draw[->](2.5*3-1.5+0.8*4-0.8,0.8) .. controls ($(A4)!0.5!(B4)$) and ($(Ak4)!0.5!(Bk4)$) .. (Si2);  
  
    \draw[dotted, ->] (Ak3) -- (A4); 
  \draw[] (Ak3) -- ($(Ak3)!0.2!(A4)$); 
  \draw[] ($(Ak3)!0.8!(A4)$) -- (A4);
  \draw[dotted, ->] (Bk3) -- (B4); 
  \draw[] (Bk3) -- ($(Bk3)!0.2!(B4)$); 
  \draw[] ($(Bk3)!0.8!(B4)$) -- (B4); 
 
\end{tikzpicture}
\caption{The $b$-edges in row $i$. The internal structure of the
  $G_{i,j}$ is as in Figure~\ref{fig:OneGadget}.} 
\label{fig:gadgets:row}
\end{figure*}
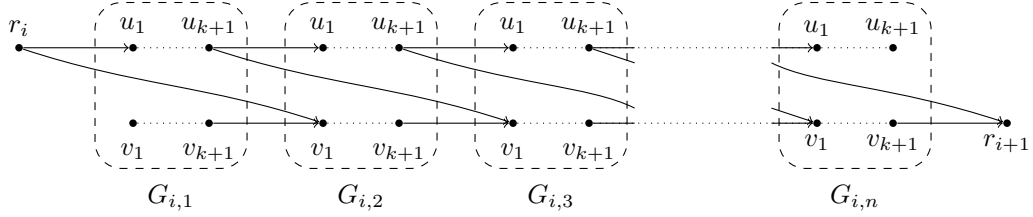

   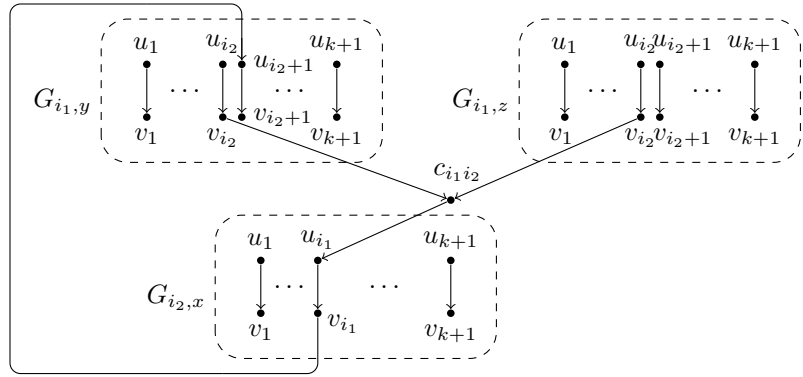
\begin{figure}[t]
   \centering
      \begin{tikzpicture}[scale = 1]
   \def\uphere{0.9}
   \def\shortenA{0.3}
        \foreach \i in {1}{
	\node[NODE] (A\i) at (1.5*\i-2.5,1-3.5*\i) {};
	\node[NODE] (B\i) at (1.5*\i-2.5,0-3.5*\i+\shortenA) {};
	\node[NODE] (Ak\i) at (1.5*\i,1-3.5*\i) {};
	\node[NODE] (Bk\i) at (1.5*\i,0-3.5*\i+\shortenA) {};  
	\node[anchor=south, yshift=+0.2ex]  at (A\i) {$u_1$};
	\node[anchor=north, yshift=-0.2ex]  at (B\i) {$v_1$};
	\node[anchor=south, yshift=+0.2ex]  at (Ak\i) {$u_{k+1}$};
	\node[anchor=north, yshift=-0.2ex]  at (Bk\i) {$v_{k+1}$};  
      }
      \foreach \i in {2}{ 	\node[NODE] (A\i) at (1.5*\i-2.5,1-3.5*\i+\uphere) {};
	\node[NODE] (B\i) at (1.5*\i-2.5,0-3.5*\i+\uphere+\shortenA) {};
	\node[NODE] (Ak\i) at (1.5*\i,1-3.5*\i+\uphere) {};
	\node[NODE] (Bk\i) at (1.5*\i,0-3.5*\i+\uphere+\shortenA) {};  
	\node[anchor=south, yshift=+0.2ex]  at (A\i) {$u_1$};
	\node[anchor=north, yshift=-0.2ex]  at (B\i) {$v_1$};
	\node[anchor=south, yshift=+0.2ex]  at (Ak\i) {$u_{k+1}$};
	\node[anchor=north, yshift=-0.2ex]  at (Bk\i) {$v_{k+1}$};  
      }
    \foreach \i in {3}{
	\node[NODE] (A\i) at (1.5*\i,-2.5) {};
	\node[NODE] (B\i) at (1.5*\i,-3.5+\shortenA) {};
	\node[NODE] (Ak\i) at (1.5*\i+2.5,-2.5) {};
	\node[NODE] (Bk\i) at (1.5*\i+2.5,-3.5+\shortenA) {};  
	\node[anchor=south, yshift=+0.2ex]  at (A\i) {$u_1$};
	\node[anchor=north, yshift=-0.2ex, xshift=-0.3ex]  at (B\i) {$v_1$};
	\node[anchor=south, yshift=+0.2ex]  at (Ak\i) {$u_{k+1}$};
	\node[anchor=north, yshift=-0.2ex]  at (Bk\i) {$v_{k+1}$};  
      }
            \node[NODE] (Ai1) at ($(A1)!0.4!(Ak1)$){};
      \node[NODE] (Bi1) at ($(B1)!0.4!(Bk1)$) {};
      \node[anchor=south, yshift=+0.2ex] at (Ai1) {$u_{i_2}$};
      \node[anchor=north, yshift=-0.2ex] at (Bi1) {$v_{i_2}$};       
      \node[NODE] (Ai2) at ($(A2)!0.3!(Ak2)$){};
      \node[NODE] (Bi2) at ($(B2)!0.3!(Bk2)$) {};  
      \node[anchor=south, yshift=+0.2ex] at (Ai2) {$u_{i_1}$};
      \node[anchor=west,yshift=-0.9ex] at (Bi2) {$v_{i_1}$};  
      \node[NODE] (Ai3) at ($(A3)!0.4!(Ak3)$){};
      \node[NODE] (Bi3) at ($(B3)!0.4!(Bk3)$) {};
      \node[anchor=south, yshift=+0.2ex] at (Ai3) {$u_{i_2}$};
      \node[anchor=north, yshift=-0.2ex] at (Bi3) {$v_{i_2}$};     
            \node[NODE] (Ai5) at ($(A1)!0.5!(Ak1)$){};
      \node[NODE] (Bi5) at ($(B1)!0.5!(Bk1)$) {};
      \node[anchor=west, xshift=+0.2ex] at (Ai5) {$u_{i_2+1}$};
      \node[anchor=west, xshift=+0.2ex] at (Bi5) {$v_{i_2+1}$};
      
       \node[NODE] (Ai4) at ($(A3)!0.5!(Ak3)$){};
      \node[NODE] (Bi4) at ($(B3)!0.5!(Bk3)$) {};
      \node[anchor=south, yshift=+0.2ex,xshift=+2ex] at (Ai4) {$u_{i_2+1}$};
      \node[anchor=north, yshift=-0.2ex,xshift=+2ex] at (Bi4) {$v_{i_2+1}$};  
            \node[label=center:$\ldots$, fill = white] (dotol) at ($(A1)!0.5!(Bi1)$) {};  
      \node[label=center:$\ldots$, fill = white] (dotor) at ($(Ai5)!0.5!(Bk1)$) {}; 
      \node[label=center:$\ldots$, fill = white] (dotul) at ($(B2)!0.5!(Ai2)$) {}; 
      \node[label=center:$\ldots$, fill = white] (dotur) at ($(Bi2)!0.5!(Ak2)$) {}; 
      \node[label=center:$\ldots$, fill = white] (dotol) at ($(A3)!0.5!(Bi3)$) {};  
      \node[label=center:$\ldots$, fill = white] (dotor) at ($(Ai4)!0.5!(Bk3)$) {}; 
            \foreach \i in {1,2,3}{
        \draw[->] (A\i) -- (B\i);
        \draw[->] (Ai\i) -- (Bi\i);
        \draw[->] (Ak\i) -- (Bk\i);  
      }
            \def\offset{0.6}         \draw[dashed, rounded corners=10pt] (1.5-2.5-\offset,0-3.5-\offset+\shortenA) rectangle (1.5+\offset,1-3.5+\offset) {};         \draw[dashed, rounded corners=10pt] (1.5*2-2.5-\offset,0-3.5*2-\offset+\uphere+\shortenA) rectangle (1.5*2+\offset,1-3.5*2+\offset+\uphere) {};       \draw[dashed, rounded corners=10pt] (1.5*3-\offset,-3.5-\offset+\shortenA) rectangle (1.5*3+2.5+\offset,-2.5+\offset) {};       
            \draw (-1.6,-3) coordinate[label={left:$G_{i_1,y}$}];
	  \draw (3.9,-3) coordinate[label={left:$G_{i_1,z}$}];
      \draw (-0.1,-6.5+\uphere) coordinate[label={left:$G_{i_2,x}$}];
  
            \node[NODE] (C) at (3,-5.5+\uphere+\shortenA) {};
      \node[anchor=west, xshift=-0.2ex, label={above:$c_{i_1i_2}$}] at (C) {};
      \draw[->] (Bi1) -- (C) ;
      \draw[->] (Bi3) -- (C);
      \draw[->] (C) -- (Ai2);
      \draw[->] (Ai5) -- (Bi5);
      \draw[->] (Ai4) -- (Bi4);  
     \draw (0,-7.8+\uphere+\shortenA) edge[T] (Bi2)            (0,-7.8+\uphere+\shortenA) edge[T] (-2.8,-2)
           (-1,-1.7) edge[T] (-2.8,-2)
           (-1,-1.7) edge[T] (Ai5)
     ;
     \draw[->] ($(Ai5)!-0.5!(Bi5)$)--(Ai5); 
     ;   
    \end{tikzpicture}
  \caption{
    The $a$-edges in the gadgets and between gadgets $G_{i_1,y}$,
    $G_{i_1,z}$ and $G_{i_2,x}$, with $i_1<i_2-1$,
    under the
    assumption that $(x,y) \in E$ and $(x,z) \notin E$.}
     \label{fig:gadgets:differentrows}
   \end{figure}

\begin{figure*}[ht]
\centering
\begin{tikzpicture}[scale = 1]

\foreach \i in {1,...,3}{
		\node[NODE] (A\i) at (4*\i-2.5,1) {};
    \node[anchor=south, yshift=+0.2ex, xshift=-0.5ex]       at (A\i) {$\mathstrut u_1$};
    
	\node[NODE] (B\i) at (4*\i-2.5,0) {};
    \node[anchor=south, yshift=+0.2ex, xshift=-0.5ex]       at (B\i) {$\mathstrut v_1$};
	
	\node[NODE] (Ak\i) at (4*\i,1) {};
    \node[anchor=south, yshift=+0.2ex, xshift=+1.3ex]        at (Ak\i) {$\mathstrut u_{k+1}$};
	\node[NODE] (Bk\i) at (4*\i,0) {};
    \node[anchor=south, yshift=+0.2ex, xshift=+1.3ex]        at (Bk\i) {$\mathstrut v_{k+1}$}; 
  }
\def\dist{2.3} 
\foreach \i in {1,...,3}{
	\node[NODE] (bA\i) at (4*\i-2.5,1-\dist) {};
    \node[anchor=north, yshift=-0.2ex, xshift=-0.5ex]       at (bA\i) {$\mathstrut u_1$};
	
	\node[NODE] (bB\i) at (4*\i-2.5,0-\dist) {};
    \node[anchor=north, yshift=-0.2ex, xshift=-0.5ex]       at (bB\i) {$\mathstrut v_1$};
	
	\node[NODE] (bAk\i) at (4*\i,1-\dist) {};
    \node[anchor=north, yshift=-0.2ex, xshift=+1.3ex]        at (bAk\i) {$\mathstrut u_{k+1}$};
	
	\node[NODE] (bBk\i) at (4*\i,0-\dist) {};
    \node[anchor=north, yshift=-0.2ex, xshift=+1.3ex]        at (bBk\i) {$\mathstrut v_{k+1}$}; 
  }

\foreach \i in {1,...,3}{
  \draw[dotted] (A\i) -- (Ak\i);
  \draw[dotted] (B\i) -- (Bk\i); 
}
\foreach \i in {1,...,3}{
  \draw[dotted] (bA\i) -- (bAk\i);
  \draw[dotted] (bB\i) -- (bBk\i); 
}

\def\offsetx{0.6} \def\offsety{0.6} \foreach \i in {1,...,3}{
\draw[dashed, rounded corners=10pt] (4*\i-2.5-\offsetx,0.3-\offsety) rectangle (4*\i+\offsetx,1+\offsety) {};   } 
\foreach \i in {1,...,3}{
\draw[dashed, rounded corners=10pt] (4*\i-2.5-\offsetx,0-\offsety-\dist) rectangle (4*\i+\offsetx,1-0.3+\offsety-\dist) {};   } 
\foreach \i in {1,...,3}{
  \draw (4*\i-1,2.3) coordinate[label={below:$G_{i,\i}$}];
  }
\foreach \i in {1,...,3}{
  \draw (4*\i-1,-0.7-\dist) coordinate[label={below:$G_{i+1,\i}$}];
  } 
\foreach \i in {1,...,3}{
 \node[NODE] (bu1\i) at ($(bA\i)!0.7!(bAk\i)$) {};
 \node[anchor=north, yshift=-0.2ex,xshift=+1.3ex]  at (bu1\i) {$\mathstrut u_{i+2}$};
 \node[NODE] (bv1\i) at ($(bB\i)!0.7!(bBk\i)$) {};
 \node[anchor=north, yshift=-0.2ex,xshift=+1.3ex]  at (bv1\i) {$\mathstrut v_{i+2}$};
}
\foreach \i in {1,...,3}{
 \node[NODE] (bu2\i) at ($(bA\i)!0.5!(bAk\i)$) {};
 \node[anchor=north, yshift=-0.2ex]  at (bu2\i) {$\mathstrut u_{i}$};
 \node[NODE] (bv2\i) at ($(bB\i)!0.5!(bBk\i)$) {};
 \node[anchor=north, yshift=-0.2ex]  at (bv2\i) {$\mathstrut v_{i}$};
}
\foreach \i in {1,...,3}{
 \node[NODE] (u2\i) at ($(A\i)!0.6!(Ak\i)$) {};
 \node[anchor=south, yshift=+0.2ex]  at (u2\i) {$\mathstrut u_{i+1}$};
 \node[NODE] (v2\i) at ($(B\i)!0.6!(Bk\i)$) {};
 \node[anchor=south, yshift=+0.2ex]  at (v2\i) {$\mathstrut v_{i+1}$};
}

 \node[NODE] (control2)  at (bu11 |- 1,1-0.7*\dist) {}; 
\node[anchor=east, xshift=-0.2ex]  at (control2) {$c_{ii+1}$};
 \node[NODE] (control1) at (bu23 |- 1,1-0.7*\dist) {};
\node[anchor=west, xshift=+0.3ex, yshift=-0.2ex]  at (control1) {$c_{i+1}$};

\foreach \i in {1,...,3}{
 \draw[->] (v2\i) -- (control2); 
 \draw[->] (control2) -- (bu2\i); 
}
\foreach \i in {1,...,3}{
 \draw[->] (Bk\i) -- (control1); 
 \draw[->] (control1) -- (bu1\i); 
}
\end{tikzpicture}
\caption{The $a$-edges from row $i$ to row $i+1$. (We
  assume $n=3$ in the picture).}
\label{fig:gadgets:ci}
\end{figure*}
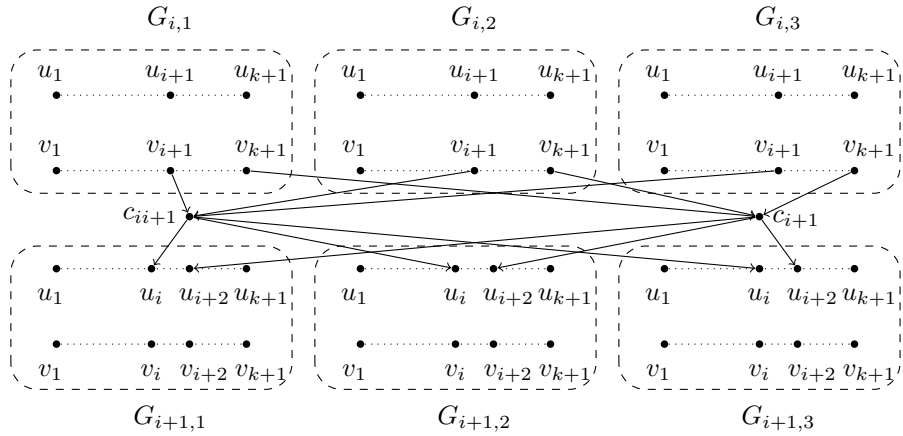

   Finally, we explain how the gadgets $G_{i,j}$ are connected in
   different rows via the control nodes $c_i$ and $c_{i_1i_2}$.  We
   first consider the edges from row $i$ to $i+1$.
      In each row $i=1,\ldots,k-1$, and every $j = 1,\ldots,n$, we add
   the edges $G_{i,j}[v_{k+1}] \stackrel{a}{\to} c_{i+1}$ and $c_{i+1}
   \stackrel{a}{\to} G_{i+1,j}[u_{i+2}]$.  Furthermore, we add the
   edges $c_{1} \stackrel{a}{\to} G_{1,j}[u_{2}]$ and
   $G_{k,j}[v_{k+1}] \stackrel{a}{\to} c_{k+1}$.  We connect two rows
   $i_1,i_2$, with $1 \leq i_1 < i_2 \leq k$, by adding the edges
   $G_{i_1,j}[v_{i_2}] \stackrel{a}{\to} c_{i_1i_2}$, and $c_{i_1i_2}
   \stackrel{a}{\to} G_{i_2,j}[u_{i_1}]$ for all $j=1,\ldots,n$.  The
   edges in $G$ are modeled in $G'$ by adding the edge
   $G_{i_2,x}[v_{i_1}] \stackrel{a}{\to} G_{i_1,y}[u_{i_2+1}]$ if and
   only if $1\leq i_1 < i_2 \leq k$, $x \neq y$, and $(x,y) \in
   E$. This is illustrated in Figures~\ref{fig:gadgets:differentrows} and \ref{fig:gadgets:ci}.
   Finally, we define $k' = k(k-1)/2\cdot 5+3k$. We prove in
   Appendix~\ref{app:twodisjointpaths} that the reduction is correct.
\end{proof}

In the proof of Theorem~\ref{theo:kcolor}, we have $t_1 = c_k$ only has incoming edges and
$s_2 = r_1$ only has outgoing edges. We note that
the reduction is also correct if $t_1 = s_2$. 
\begin{corollary} \label{akb*-is-hard}
\kcolordisjointpaths is W[1]-hard even if $t_1 = s_2$.
  \end{corollary}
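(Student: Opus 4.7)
My plan is to reuse the reduction from the proof of Theorem~\ref{theo:kcolor} almost verbatim, changing only one thing: I would identify the two nodes $t_1 = c_{k+1}$ and $s_2 = r_1$ into a single new node $v$ and declare $t_1 = s_2 = v$. The key observation I would rely on is the color-asymmetry highlighted just before the corollary: in the graph $G'$ constructed there, $c_{k+1}$ is incident only to incoming $a$-edges (the edges $G_{k,j}[v_{k+1}] \stackrel{a}{\to} c_{k+1}$) and $r_1$ is incident only to outgoing $b$-edges (the edges $r_1 \stackrel{b}{\to} G_{1,1}[u_1]$ and $r_1 \stackrel{b}{\to} G_{1,2}[v_1]$). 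Neither of them has an edge of the opposite color. After identification, $v$ therefore still has only incoming $a$-edges and only outgoing $b$-edges.

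I would then prove correctness by mirroring the original argument. The forward direction is immediate: a pair of node-disjoint paths $(p_1,p_2)$ witnessing a $k$-clique in $G$ via the original reduction lifts unchanged to a pair in the identified graph that shares exactly the node $v$, and this is a valid witness once we interpret the required node-disjointness as disjointness on internal nodes (the common endpoint $v = t_1 = s_2$ must of course be shared). For the backward direction, suppose $(p_1,p_2)$ is a witness in the modified graph. Because $v$ has no outgoing $a$-edges, the $a$-path $p_1$ can visit $v$ only as its terminal node; because $v$ has no incoming $b$-edges, the $b$-path $p_2$ can visit $v$ only as its initial node. Splitting $v$ back into $c_{k+1}$ and $r_1$ therefore yields fully node-disjoint paths in $G'$ that have the right lengths and endpoints, to which Theorem~\ref{theo:kcolor} applies directly, producing a $k$-clique in $G$.

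The main conceptual point, and essentially the only non-routine piece, is this argument that merging the two endpoints cannot create spurious solutions. It is guaranteed precisely by the color-asymmetry at $c_{k+1}$ and $r_1$ noted above; no other part of the construction is touched. The parameter $k' = k(k-1)/2\cdot 5 + 3k$ and the polynomial-time construction are unchanged, so this is still a valid parameterized reduction from \kclique, giving the claimed W[1]-hardness even under the constraint $t_1 = s_2$.
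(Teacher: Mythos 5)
Your proposal is correct and matches the paper's own (much terser) argument: the paper likewise observes that $t_1 = c_{k+1}$ has only incoming edges and $s_2 = r_1$ has only outgoing edges in $G'$, and concludes that the reduction of Theorem~\ref{theo:kcolor} remains correct when the two nodes are identified. Your additional remarks about the edge colors and the explicit forward/backward check are a more detailed spelling-out of the same observation, not a different route.
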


The two colors in the proof of Theorem~\ref{theo:kcolor} play a central role:
since the $a$-path cannot use any $b$-edges and vice
versa, we have much control over where the two paths can be. The following Theorem
shows that the construction in Theorem~\ref{theo:kcolor} can be
strengthened so that we do not need the two colors.
\begin{restatable}{theorem}{TheoDisjointPaths}\label{theorem:two-disjoint-is-hard}
  \knodedisjointpaths is W[1]-hard.
\end{restatable}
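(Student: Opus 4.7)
The plan is to reduce from \knodecolordisjointpaths, which is W[1]-hard by Corollary~\ref{akb*-is-hard} even under the additional assumption $t_1 = s_2$, to \knodedisjointpaths. The idea is a standard ``subdivision'' trick that replaces the color restriction on the path $p_1$ (of length $k'$) by an unforgeable length restriction: long $b$-edges become physically too long for $p_1$ to traverse.

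Given an instance $(G,s_a,t_a,s_b,t_b,k')$ of \knodecolordisjointpaths with $t_a=s_b$, I would build an uncolored directed graph $G^\star$ as follows. For each $b$-colored edge $(u,v)$ of $G$, replace it by a fresh internally node-disjoint directed path from $u$ to $v$ of length $k'+1$; keep all $a$-edges unchanged and then simply drop all colors. Set $s_1:=s_a$, $t_1:=t_a$, $s_2:=s_b$, $t_2:=t_b$, and keep the parameter $k'$. The construction is clearly polynomial time, the new parameter equals the old one, so this is a valid parameterized reduction provided correctness holds.

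For the forward direction, from colored node-disjoint paths $p_a$ (length exactly $k'$, $a$-colored) and $p_b$ ($b$-colored) in $G$, take $p_1:=p_a$, which still has length $k'$ in $G^\star$ because no edge of $p_a$ was touched by the subdivision, and obtain $p_2$ from $p_b$ by replacing each $b$-edge on $p_b$ by its subdivided path in $G^\star$. The fresh subdivision nodes appear only on $p_2$, so node-disjointness survives from $(p_a,p_b)$.

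The backward direction is the step I expect to be the main obstacle, and will rely crucially on the specific structure of the construction from Theorem~\ref{theo:kcolor}. Any simple $s_1$-$t_1$ path of length exactly $k'$ in $G^\star$ cannot cross even a single subdivided $b$-edge, since that alone would cost at least $k'+1$ edges; hence $p_1$ uses only (un-subdivided) original $a$-edges and corresponds to an $a$-colored simple path of length $k'$ in $G$. For $p_2$, contracting every traversed subdivision back to its original $b$-edge yields a walk in $G$ from $s_b$ to $t_b$; the delicate part is to argue that this walk can be taken to use only $b$-edges. Here I would exploit that in the construction of Theorem~\ref{theo:kcolor} every control node $c_i$ and $c_{i_1 i_2}$ lies on any valid $p_1$, while all inter-row $a$-edges pass through these control nodes; together with the fact that the within-gadget $a$-edges $u_\ell\to v_\ell$ cannot move $p_2$ between rows, any $p_2$ node-disjoint from $p_1$ must stay within the $b$-edge skeleton whenever it crosses rows, and the remaining local $a$-detours inside a gadget can be short-circuited without sacrificing disjointness. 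Extracting a simple path from the resulting $b$-walk yields the required $p_b$, so $G$ has a $k$-clique by the correctness of Theorem~\ref{theo:kcolor}. Hence \knodedisjointpaths is W[1]-hard.
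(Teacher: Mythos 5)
Your reduction is the same as the paper's: replace every $b$-edge of the instance from Theorem~\ref{theo:kcolor} by a long subdivided path so that a length-$k'$ path $p_1$ cannot afford to traverse any of them, then argue that $p_2$ must correspond to a $b$-path. The forward direction and the analysis of $p_1$ are fine (the paper subdivides into $k'$ rather than $k'+1$ edges, which makes no difference). The problem is in the backward direction, precisely at the step you flag as delicate, and on two counts. First, your justification for why $p_2$ cannot change rows via $a$-edges --- ``all inter-row $a$-edges pass through these control nodes'' --- is false: the edges $G_{i_2,x}[v_{i_1}] \stackrel{a}{\to} G_{i_1,y}[u_{i_2+1}]$ that encode the edges of $G$ connect gadgets in different rows directly, bypassing every control node. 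The paper excludes these ``upward'' edges by a different argument: since $p_2$ cannot use control nodes (all of them lie on $p_1$ by Lemma~\ref{lem:twopaths}), it must visit $r_1,\ldots,r_{k+1}$ in order, and an upward edge from row $j$ to row $i<j$ would force it to revisit $r_{i+1}$.

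Second, the claim that the within-gadget $a$-edges $u_\ell \stackrel{a}{\to} v_\ell$ used by $p_2$ ``can be short-circuited without sacrificing disjointness'' does not hold as a local operation: inside a gadget the only way to move from the $u$-track to the $v$-track is such an $a$-edge, so a segment of $p_2$ that enters a gadget at $u_1$ and leaves at $v_{k+1}$ has no pure-$b$ replacement with the same endpoints; any genuine reroute changes which gadgets are skipped, and its disjointness from $p_1$ is exactly what is in question. The correct argument, which the paper gives, is a contradiction rather than a repair: if $p_2$ uses $u_\ell \stackrel{a}{\to} v_\ell$ in row $i$, then (since skip edges require being on the $u$-track and land on the $v$-track, and there is no way back) $p_2$ must visit a full track of \emph{every} gadget in row $i$; but by Lemma~\ref{lem:twopaths}(\ref{lem:twopaths:d}) the path $p_1$ uses some pair $u_\ell, v_\ell$ in one gadget of that row, so $p_2$ cannot be node-disjoint from $p_1$. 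With these two fixes your proof becomes the paper's proof; as written, the backward direction does not go through.
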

\begin{proof}
  We adapt the reduction from Theorem~\ref{akb*-is-hard}.  The only
  change we make is that we replace each $b$-edge by a
  directed path of $k'$ edges (introducing $k'-1$ new nodes for each
  such edge). We prove in Appendix~\ref{app:twodisjointpaths} that the reduction is correct.
\end{proof}

For completeness, we mention the complexity of other variants of
\knodedisjointpaths, some of which can be shown by extending the technique
from Theorem~\ref{theorem:two-disjoint-is-hard}. We define $\twonodedisjointpaths_{\leq k}$ and
$\twonodedisjointpaths_{\geq k}$ analogously to \knodedisjointpaths by
requiring that $p_1$ has length $\leq k$ and $\geq k$, respectively.
\begin{restatable}{theorem}{TheoPathsAllTheRest}\label{theo:3statements}
  \begin{itemize}
  \item   $\twonodedisjointpaths_{\leq k}$ is W[1]-hard.
  \item   $\twonodedisjointpaths_{\geq k}$ is \np-complete for every
    constant $k \in \nat$ (\cite{FortuneHW-TCS80}).
  \item \knodecolordisjointpaths, \knodedisjointpaths, and $\twonodedisjointpaths_{\leq k}$ are in W[P].
      \end{itemize}
\end{restatable}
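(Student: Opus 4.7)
For (i), the plan is to reduce from $\knodedisjointpaths$, which is W[1]-hard by Theorem~\ref{theorem:two-disjoint-is-hard}, using the same graph and the same parameter value $k' = k(k-1)/2\cdot 5 + 3k$. The crucial observation is that, in the reduction of Theorems~\ref{theo:kcolor} and~\ref{theorem:two-disjoint-is-hard}, the path $p_1$ from $s_1 = c_1$ to $t_1 = c_{k+1}$ is forced to pass through every control node $c_2,\ldots,c_k$ (to reach $c_{k+1}$ from $c_1$) and, in between, through every pairwise control node $c_{i_1 i_2}$, using a fixed number of gadget-internal nodes for each excursion. A short combinatorial argument shows that every valid $p_1$ must therefore have length at least $k'$, so the ``exactly $k'$'' and the ``at most $k'$'' questions coincide on the constructed instances. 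Since $k'$ depends only on $k$, this is a parameterised reduction and establishes W[1]-hardness of $\twonodedisjointpaths_{\leq k}$.

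For (ii), the plan is to reduce from the classical \twodisjointpaths problem, which Fortune, Hopcroft and Wyllie proved NP-complete on directed graphs~\cite{FortuneHW-TCS80}. Given an input $(G,s_1,t_1,s_2,t_2)$ and a constant $k$, introduce $k$ fresh vertices $u_1,\ldots,u_k$ together with edges $u_1\to u_2\to\cdots\to u_k\to s_1$, and set $s_1' = u_1$. Because the fresh vertices lie on a chain that is vertex-disjoint from $G$, they cannot interfere with any potential second path, and every simple $s_1'$-to-$t_1$ path now has length at least $k$. Thus the modified instance is a yes-instance of $\twonodedisjointpaths_{\geq k}$ iff the original is a yes-instance of \twodisjointpaths. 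Containment in NP is immediate, since one can guess both paths and verify vertex-disjointness and the length bound in polynomial time.

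For (iii), the plan is a direct guess-and-check in the W[P] model: a problem is in W[P] if it admits a nondeterministic algorithm running in $f(k)\cdot|x|^{O(1)}$ steps that uses at most $g(k)\cdot\log|x|$ nondeterministic bits. In each of the three problems a witness consists of a simple path $p_1$ of length $k$ (or at most $k$), and so of at most $k+1$ vertices. Guessing this path costs $(k+1)\lceil\log|V|\rceil$ nondeterministic bits. After the guess, one verifies in polynomial time that $p_1$ is a simple path from $s_1$ to $t_1$ (for \knodecolordisjointpaths, restricted to $a$-edges) of the required length, removes the vertices of $p_1$ from $G$, and checks (ordinary) reachability from $s_2$ to $t_2$ in the remaining graph (restricted to $b$-edges for the coloured version). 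All checks run in polynomial time, and the total number of nondeterministic bits is $O(k\log|G|)$, which places all three problems in W[P].

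The main obstacle is part (i): one must carefully verify that the construction underlying Theorem~\ref{theorem:two-disjoint-is-hard} admits no ``short-cut'' for $p_1$, that is, that every candidate path is forced to have length at least $k'$, so that relaxing ``exactly $k'$'' to ``at most $k'$'' preserves the set of yes-instances.
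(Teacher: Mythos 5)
Your proposal is correct and follows essentially the same route as the paper: part (i) reuses the graph from Theorem~\ref{theorem:two-disjoint-is-hard} together with the fact (the paper's Observation~\ref{obs:replace-b}(a)) that every $c_1$-to-$c_{k+1}$ path there has length at least $k'$, and part (iii) is the same $O(k\log|G|)$-bit guess-and-check argument for W[P] membership that the paper gives via Flum--Grohe. For part (ii) the paper merely observes that $k=0$ is the plain \twonodedisjointpaths problem and cites Fortune et al., so your explicit length-$k$ padding chain is a slightly more complete treatment of arbitrary constant $k$, but not a different approach.
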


\subsection{Two Edge-Disjoint Trails}\label{sec:twoedgedisjoint}
Here, we study the trail versions of the disjoint paths problem where
we require the trails to be \emph{edge-disjoint}.
\begin{itemize}
\item \kedgedisjointpaths: Given a graph $G$, nodes $s_1,t_1, s_2, t_2$, and parameter $k \in \nat$, are there trails $p_1$ from $s_1$ to $t_1$ and $p_2$ from $s_2$ to $t_2$ such that $p_1$ and $p_2$ are edge-disjoint and $p_1$ has length $k$?
\end{itemize}

The following theorem shows W[1] hardness for \kedgedisjointpaths. 
\begin{restatable}{theorem}{twoEdgeDisjointPaths}\label{theo:edgeTwodisjoint}
\kedgedisjointpaths is W[1]-hard.
\end{restatable}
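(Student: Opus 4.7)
The plan is to reduce from \knodedisjointpaths, which is W[1]-hard by Theorem~\ref{theorem:two-disjoint-is-hard}, using the $\Split$ construction of Lemma~\ref{lem:splitgraph}. Given an instance $((G, s_1, t_1, s_2, t_2), k)$ of \knodedisjointpaths, I would output the instance $((G', s_1', t_1', s_2', t_2'), 2k+1)$ of \kedgedisjointpaths, where $(G', s_1', t_1', s_2', t_2') = \Split(G, s_1, t_1, s_2, t_2)$. The construction runs in polynomial time and the parameter blows up only linearly, so this is a valid parameterized reduction.

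For the forward direction, suppose $G$ has node-disjoint simple paths $p_1$ of length $k$ from $s_1$ to $t_1$ and $p_2$ of some length $\ell$ from $s_2$ to $t_2$. Then Lemma~\ref{lem:splitgraph}, applied with lengths $k_1 = k$ and $k_2 = \ell$, directly yields edge-disjoint trails in $G'$ of lengths $2k+1$ and $2\ell+1$, so $((G', s_1', t_1', s_2', t_2'), 2k+1)$ is a yes-instance of \kedgedisjointpaths.

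The reverse direction is where a small subtlety arises: \kedgedisjointpaths imposes no length constraint on $p_2'$, whereas Lemma~\ref{lem:splitgraph} is stated with a fixed length for each path. I would handle this by using the structure of $\Split$ directly. Each node $v$ of $G$ becomes $\Head(v) \to \Tail(v)$ in $G'$, and the only outgoing edge of $\Head(v)$ is this middle edge. Hence any trail from $\Head(s_i)$ to $\Tail(t_i)$ in $G'$ must alternate middle edges with original edges, so it has odd length $2\ell+1$ and corresponds bijectively to a simple path of length $\ell$ in $G$; moreover, two trails in $G'$ are edge-disjoint if and only if the two simple paths they encode are node-disjoint in $G$, since every visit to a node $v$ is forced through the unique middle edge. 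Thus edge-disjoint trails with $|p_1'| = 2k+1$ project back to node-disjoint simple paths in $G$ with $|p_1| = k$, yielding a yes-instance of \knodedisjointpaths.

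I do not expect a real obstacle here: once Theorem~\ref{theorem:two-disjoint-is-hard} is in hand, the reduction is almost purely bookkeeping on top of Lemma~\ref{lem:splitgraph}, and the only thing to verify carefully is the bijection between trails in $G'$ and simple paths in $G$ in the reverse direction.
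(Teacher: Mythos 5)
Your reduction is correct, and it is precisely the shortcut the paper itself points to in the body ("the W[1] hardness follows directly from Theorem~\ref{theorem:two-disjoint-is-hard} and the reduction in Lemma~\ref{lem:splitgraph}"); however, the proof the paper actually writes out in the appendix takes a different, self-contained route. There, the authors redo the entire \kclique gadget construction of Theorem~\ref{theo:kcolor} directly in the trail setting: each gadget node $u_\ell$ is split into a pair $u_\ell^{\text{in}} \to u_\ell^{\text{out}}$ (likewise for $v_\ell$, the control nodes and the $r_i$), the $b$-edges are replaced by paths of a recomputed length $k_{\text{new}} = 5k^2+3k+1$, and the correctness argument is rerun via Lemma~\ref{lem:twopaths}. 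The reason for this extra work is not that the $\Split$ route fails, but that the explicit labeled construction is what gets adapted for the W[1]-hardness half of the trail dichotomy (Theorem~\ref{theo:edgedichotomy}), where one must additionally control edge labels and cannot simply compose black-box reductions. Your approach buys brevity and modularity; theirs buys a reusable construction. Your treatment of the one genuinely delicate point --- that \kedgedisjointpaths leaves $|p_2|$ unconstrained while Lemma~\ref{lem:splitgraph} is stated for fixed lengths --- is sound: since $\Tail(v)$ has the middle edge as its unique incoming edge and $\Head(v)$ has it as its unique outgoing edge, every visit of a projected path to $v$ consumes that edge, which simultaneously forces the odd-length/bijection claim and converts edge-disjointness of the trails into node-disjointness of the projected simple paths.
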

The W[1] hardness follows directly from
Theorem~\ref{theorem:two-disjoint-is-hard} and the reduction in
Lemma~\ref{lem:splitgraph}. Nonetheless, we give a proof analogous to
Theorem~\ref{theorem:two-disjoint-is-hard} in
Appendix~\ref{app:paramcompl}, since we feel that this might help to better
understand our W[1] hardness proof of
Theorem~\ref{theo:edgedichotomy}. 

For completeness, we mention the complexity of other variants of
\kedgedisjointpaths, some of which can be shown by extending the technique
from Theorem~\ref{theorem:two-disjoint-is-hard}. We define $\twoedgedisjointpaths_{\leq k}$ and
$\twoedgedisjointpaths_{\geq k}$ analogously to \kedgedisjointpaths by
requiring that $p_1$ has length $\leq k$ and $\geq k$, respectively.
The next theorem is an immediate result of Lemma~\ref{lem:splitgraph} and Theorem~\ref{theo:3statements}.
\begin{restatable}{theorem}{TheoEdgePathsAllTheRest}
	\begin{itemize}
		\item   $\twoedgedisjointpaths_{\leq k}$ is W[1]-hard. 		\item   $\twoedgedisjointpaths_{\geq k}$ is \np-complete for every
		constant $k \in \nat$ (\cite{FortuneHW-TCS80}). 		\item \kedgedisjointpaths, and $\twoedgedisjointpaths_{\leq k}$ are in W[P]. 	\end{itemize}
\end{restatable}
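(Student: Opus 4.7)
The plan is to apply the Split reduction of Lemma~\ref{lem:splitgraph} to transfer each part of Theorem~\ref{theo:3statements} from node-disjoint simple paths to edge-disjoint trails. The starting observation is a tight consequence of the Split construction: in $G' = \Split(G,s_1,t_1,s_2,t_2)$ every trail from $s'_i$ to $t'_i$ must alternate the internal $\Head(v) \to \Tail(v)$ edges with the original edges of $G$, so it has odd length. Combined with the length-exact biconditional of Lemma~\ref{lem:splitgraph}, this yields a bijection between node-disjoint simple path pairs $(p_1, p_2)$ in $G$ of lengths $(\ell_1, \ell_2)$ and edge-disjoint trail pairs $(p'_1, p'_2)$ in $G'$ of lengths $(2\ell_1+1, 2\ell_2+1)$.

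For the first item, I would reduce $\twonodedisjointpaths_{\leq k}$ to $\twoedgedisjointpaths_{\leq 2k+1}$ by passing to $G'$. By the parity observation above, length $\leq k$ in $G$ corresponds exactly to length $\leq 2k+1$ in $G'$, with no spurious even-length trails arising on the trail side. Since $k \mapsto 2k+1$ is computable, this is a parameterized reduction, so W[1]-hardness carries over from Theorem~\ref{theo:3statements}. For the second item, NP-membership is immediate by guessing the two trails and verifying; for NP-hardness at any constant $k$, I would apply Split to the NP-complete unparameterized \twodisjointpaths problem~\cite{FortuneHW-TCS80}, and, for $k \geq 1$, prepend a length-$k$ chain of fresh edges to $s'_1$ so that every trail from the new source automatically has length at least $k$.

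For the third item (W[P] membership), rather than invoke Theorem~\ref{theo:3statements} via Split in the wrong direction, I would argue directly by guess-and-verify. A solution to \kedgedisjointpaths is determined by its first trail $p_1$, which consists of $k$ edges, so it can be guessed using $O(k \log |E|)$ nondeterministic bits; verifying that $p_1$ is a valid trail from $s_1$ to $t_1$ is polynomial, and the existence of an edge-disjoint $p_2$ reduces to $s_2$-to-$t_2$ reachability in the graph with the edges of $p_1$ removed (any simple path found that way is itself a trail, and any trail witnesses reachability). The same argument places $\twoedgedisjointpaths_{\leq k}$ in W[P], since the guess still has size at most $k \log |E|$. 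The only subtle point throughout is the parity observation underlying item 1 — without it one might worry that length-$\leq 2k+1$ trails in $G'$ produce solutions with no counterpart in $G$, but the odd-length constraint rules this out — and once this is in hand each claim follows almost immediately from Theorem~\ref{theo:3statements}.
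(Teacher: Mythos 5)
Your proof is correct, and for the two hardness items it is exactly the paper's route: the paper dispatches the whole theorem in one line as ``an immediate result of Lemma~\ref{lem:splitgraph} and Theorem~\ref{theo:3statements},'' and your parity observation (every $s'_i$-$t'_i$-trail in $\Split(G,\ldots)$ alternates internal $\Head(v)\to\Tail(v)$ edges with original edges, hence has odd length and cannot revisit a node pair without reusing its internal edge) is precisely the detail needed to make that one-liner go through for the $\leq k$ variant and for the constant-$k$ lower bound. Where you genuinely diverge is the W[P] item: the paper's citation of Lemma~\ref{lem:splitgraph} is the wrong direction for transferring an \emph{upper} bound from simple paths to trails (one would need the $\Line$ construction of Lemma~\ref{lem:linegraph} for that), and your direct guess-and-verify argument --- guess the $k$ edges of $p_1$ with $O(k\log|E|)$ nondeterministic bits, then test $s_2$-$t_2$ reachability after deleting them --- sidesteps this entirely. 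It is the exact trail analogue of the paper's own proof of Lemma~\ref{lem:UpperBound} for the node-disjoint case, and is arguably cleaner and more self-contained than what the paper's one-sentence proof actually licenses.
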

 
\makeatletter{}\section{Parameterized Complexity of Simple Regular Paths}\label{sec:tractlanguages}\label{sec:param-srp}
We now return to regular path query evaluation and consider
parameterized versions of \nodespath and \edgespath. In contrast to
Section~\ref{sec:ParamCompl}, the parameter $k$ will not be a
constraint on the length of the paths. Instead, we will search for
paths of arbitrary length (as in Sections~\ref{sec:enumerate} and \ref{sec:enum-yen}) and the
parameter $k$ will be determined by the regular expression. That is,
the parameterized version of \nodespath and \edgespath still has graph $(G,s,t)$ and
expression $r$ as input, but the parameter $k$ is implicitly determined by
$r$.  

\medskip \noindent \textbf{Some Concrete Languages.}
We
first consider a few simple examples of such problems and generalize
the approach later in this section. For $k\in\nat$, we define the regular expressions

\centerline{$a^k$, \qquad $(a?)^k$,\qquad $a^k a^*$,\qquad $a^k b^*$, \qquad and $a^{k-1} b a^*$}

\noindent to have \emph{parameter $k$}. By abusing notation, we denote by $a^k
b^*$ the class of regular expressions $\{a^k b^* \mid k \in \nat\}$
(similar for the other abovementioned expressions).  As such, the
\emph{parameterized problem} $\nodespath(a^k b^*)$ asks, given $(G,s,t)$
and a regular expression $r$ of the form $a^k b^*$, if there exists a
simple path from $s$ to $t$ that matches $r$. It is in \fpt if it can
be decided by an algorithm that runs in time $f(k) \cdot (|G|+|r|)^c$
for a computable function $f$ and a constant $c$.  The following can
now be easily deduced from Section~\ref{sec:ParamCompl}.
\begin{theorem} \label{theo:languageoverview}
\begin{enumerate}[(a)]
\item $\nodespath(a^k)$, $\nodespath((a?)^k)$, and $\nodespath(a^ka^*)$ are in \fpt.  \label{theo:languageoverview:a}
\item $\nodespath(a^kb^*)$ and $\nodespath(a^{k-1}b a^*)$ 
  are   W[1]-hard. \label{theo:languageoverview:b}
\end{enumerate}
\end{theorem}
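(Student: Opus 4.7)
The plan for part~(\ref{theo:languageoverview:a}) is to note that the three languages $L(a^k)$, $L((a?)^k)$, and $L(a^ka^*)$ consist only of words over $\{a\}$ with a length constraint---exactly $k$, at most $k$, and at least $k$, respectively. Consequently, the problem reduces to finding a simple $s$-$t$-path of the corresponding length in the subgraph $G_a$ of $G$ induced by the $a$-labeled edges. We then apply Theorems~\ref{FPT:ak}, \ref{FPT:leq-ak}, and \ref{theorem:longdirpathInFPT} to $G_a$ to obtain the three FPT claims.

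For part~(\ref{theo:languageoverview:b}), our plan is to give parameterized reductions from the W[1]-hard problems of Section~\ref{sec:twodisjointpaths}. For $\nodespath(a^{k-1}ba^*)$, we reduce from \knodedisjointpaths (Theorem~\ref{theorem:two-disjoint-is-hard}): given an instance $(G,s_1,t_1,s_2,t_2,k)$, we construct $G'$ by labeling every edge of $G$ with $a$ and adding a single new $b$-labeled edge from $t_1$ to $s_2$; we take $s_1$ as source, $t_2$ as target, and use the expression $a^{k}ba^*$, which is the class member with parameter $k{+}1$, giving a valid parameterized reduction. Since $(t_1,s_2)$ is the unique $b$-edge in $G'$, any simple path matching $a^{k}ba^*$ necessarily decomposes as a length-$k$ $a$-path from $s_1$ to $t_1$, the unique $b$-edge, and an $a$-path from $s_2$ to $t_2$; simpleness of the overall path then automatically enforces node-disjointness of the two $a$-paths, so the correctness of the reduction is immediate.

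For $\nodespath(a^kb^*)$, we reduce from Corollary~\ref{akb*-is-hard}, reinterpreting the two edge colors as labels $a$ and $b$ and keeping the same graph, source $s_1$, target $t_2$, and parameter $k$. The forward direction is a direct concatenation of the two color-disjoint paths at the shared node $t_1=s_2$. The hard part will be the backward direction: a simple path from $s_1$ to $t_2$ matching $a^kb^*$ decomposes into an $a$-prefix of length $k$ ending at some node $v$ and a $b$-suffix from $v$ to $t_2$, and we need to argue that $v$ must equal $t_1=s_2$. The intended argument exploits that, in the construction of Theorem~\ref{theo:kcolor}, $t_1$ has only incoming $a$-edges and $s_2$ only outgoing $b$-edges, so after identification this node is the canonical ``$a$-to-$b$ transition point''; combined with the rigid length constraint $k$ on the $a$-prefix and the specific target $t_2$ only reachable via the $b$-structure rooted at $s_2$, a case analysis of the gadget structure should force $v=t_1=s_2$ and thus recover the desired two node-disjoint color-restricted paths.
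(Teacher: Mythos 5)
Your proposal is correct and follows essentially the same route as the paper: part (a) is obtained from Theorems~\ref{FPT:ak}, \ref{FPT:leq-ak}, and \ref{theorem:longdirpathInFPT} (restricted to the $a$-labeled subgraph), and for part (b) the paper likewise derives $\nodespath(a^kb^*)$ directly from Corollary~\ref{akb*-is-hard} and $\nodespath(a^{k-1}ba^*)$ from Theorem~\ref{theorem:two-disjoint-is-hard} via the Mendelzon--Wood trick of adding a single $b$-edge from $t_1$ to $s_2$. The ``case analysis of the gadget structure'' you defer for forcing the $a$-to-$b$ transition node to be $t_1=s_2$ is precisely Lemma~\ref{lem:twopaths}(\ref{lem:twopaths:b}) --- every $a$-path of length exactly $k'$ in the gadget graph runs from $c_1$ to $c_{k+1}$ --- which the paper has already established, so no new argument is needed there.
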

\begin{proof}
  Part (a) is immediate from Theorems \ref{FPT:ak}, \ref{FPT:leq-ak},
  and \ref{theorem:longdirpathInFPT}, respectively.  For part (b), the
  hardness of $\nodespath(a^kb^*)$ is immediate from Corollary
  \ref{akb*-is-hard}. The hardness of $\nodespath(a^{k-1}b a^*)$ is
  obtained from Theorem \ref{theorem:two-disjoint-is-hard} by applying
  a simple proof of Mendelzon and Wood \cite[Theorem 1
  (2)]{mendelzon}, reducing \textsf{TwoDisjointPaths} to
  $\nodesimpath(a^*ba^*)$. 
  The idea is to add $t_1 \stackrel{b}{\to} s_2$ (and label all other edges $a$). 
  Then, every path from $s_1$ to $t_2$ matching $a^*ba^*$ must contain the $b$-edge. This implies that such 
  a path exists if and only if there exist two node-disjoint paths, one from
    $s_1$ to $t_1$ and the other from $s_2$ to $t_2$.
\end{proof}

We can even slightly generalize the proof of
Theorem~\ref{theo:languageoverview}(\ref{theo:languageoverview:a}) to deal with more complex
languages. Notice that the following result implies that $\nodespath(a^k b? a^*)$ is in
FPT, whereas $\nodespath(a^k b a^*)$ is W[1]-hard by Theorem~\ref{theo:languageoverview}(\ref{theo:languageoverview:b}).
\begin{restatable}{theorem}{WConstOrNOTIsFPT} \label{theo:WConstOrNOTIsFPT}
  For every constant $c$ and word $w$ with $|w|=c$, the problem
  \nodespath$(a^kw?a^*)$ with parameter $k$ is in FPT. 
\end{restatable}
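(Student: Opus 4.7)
The plan is to exploit the decomposition $a^k w? a^* = a^k a^* \cup a^k w a^*$: a matching simple path exists if and only if one exists for at least one of the two disjuncts, and we handle the two cases separately.

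Case A ($a^k a^*$) is immediate: a matching simple path is just a simple $a$-path of length at least $k$ from $s$ to $t$, i.e., an instance of $\nodesimpath_{\geq k}$ restricted to the $a$-subgraph of $G$. This is in FPT by Theorem~\ref{theorem:longdirpathInFPT}.

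Case B ($a^k w a^*$) requires more work. A matching simple path decomposes as $p = p_1 \cdot q \cdot p_2$, where $p_1$ is a simple $a$-path of length exactly $k$ from $s$ to some $u_1$, $q$ is a simple $w$-path of length $c$ from $u_1$ to some $u_2$, and $p_2$ is a simple $a$-path from $u_2$ to $t$. Since $|w|=c$ is constant, I would enumerate in polynomial time all simple $w$-paths $q$ in $G$ (there are at most $O(n^{c+1})$ of them, each determined by its $c+1$ nodes). For each $q$ from $u_1$ to $u_2$, let $G_q$ denote the $a$-subgraph of $G \setminus V(q^\circ)$. It remains to decide whether $G_q$ admits node-disjoint simple paths $p_1$ (from $s$ to $u_1$, of length $k$, avoiding $u_2$) and $p_2$ (from $u_2$ to $t$, avoiding $V(p_1)$).

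The main obstacle is that this inner two-disjoint-paths problem with a length constraint on $p_1$ is W[1]-hard in general by Theorem~\ref{theorem:two-disjoint-is-hard}. I would circumvent this by exploiting that $|p_1|$ is FPT-bounded and applying Monien's classical representative-sets construction for simple paths: for each $u_1$, compute in FPT time a family $\mathcal{F}_{u_1}$ of at most $k!$ simple $a$-paths of length $k$ from $s$ to $u_1$ in $G_q \setminus \{u_2\}$ such that for every set $W \subseteq V$, if any simple $a$-path of length $k$ from $s$ to $u_1$ in $G_q \setminus \{u_2\}$ avoids $W$, then some $p_1 \in \mathcal{F}_{u_1}$ avoids $W$. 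Applied with $W = V(p_2) \setminus \{u_2\}$ for any candidate $p_2$, this guarantees that if a valid $(p_1, p_2)$ pair exists, then some $p_1 \in \mathcal{F}_{u_1}$ is compatible with some $p_2$; compatibility then reduces to polynomial-time reachability of $t$ from $u_2$ in $G_q \setminus V(p_1)$. The hard part of the argument is verifying that the representative-sets bound of $k!$ indeed applies to our setting on directed graphs with an edge-label constraint (which can be done by working inside $G_a$ and invoking the directed-graph adaptation of Monien's argument, e.g., via the Fomin-Lokshtanov-Panolan-Saurabh framework). The total running time is $O(k! \cdot n^{c+O(1)})$, which is FPT since $c$ is constant, and combining with Case A establishes the theorem.
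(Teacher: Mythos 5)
There is a genuine gap in your Case B. The representative family you postulate --- a family $\mathcal{F}_{u_1}$ of FPT size such that for \emph{every} set $W \subseteq V$, if some simple $a$-path of length $k$ from $s$ to $u_1$ avoids $W$ then some member of $\mathcal{F}_{u_1}$ avoids $W$ --- does not exist. Monien's construction and the Fomin et al.\ framework only guarantee this for sets $W$ of size at most $q$ (for a $q$-representative family), and the family size grows with $q$; with no bound on $|W|$ the family must in general contain essentially all the paths (already for $n$ internally disjoint paths of length $2$ from $s$ to $u_1$, any proper subfamily is defeated by taking $W$ to be the set of middle vertices of the omitted paths). In your application $W = V(p_2)\setminus\{u_2\}$ where $p_2$ matches $a^*$, so $|W|$ is unbounded in $k$; this unbounded avoidance set is the real difficulty, not the directedness of the graph or the edge labels, which is where you locate the ``hard part.''

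The paper closes exactly this hole by coupling the two cases rather than treating them independently. It applies a $(k+1)$-representative family only against the \emph{last} $k+1$ nodes $Q$ of the path, so the replacement prefix $P'$ is guaranteed disjoint from $Q$ and from the interior of the $w$-subpath, but may still intersect the middle portion $R$ of $p_2$. If it does, shortcutting at the first intersection yields a simple path from $s$ to $t$ that skips $w$ entirely, uses only $a$-edges, and still has length at least $k$ because it contains $Q$ --- that is, a simple path matching $a^k a^*$. Since the algorithm only reaches this case after the $a^k a^*$ test has returned ``no,'' this is a contradiction, and hence $P'$ must avoid $R$ after all. Your write-up does run the $a^k a^*$ test first, but never uses its negative outcome inside Case B; without that interplay (or some substitute for it) the argument does not go through.
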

\begin{proof}[Proof sketch]
  First we use the algorithm from Theorem
  \ref{theorem:longdirpathInFPT} to decide \nodesimpath$(a^ka^*)$. If the
  answer is no, we enumerate all possible paths $p$ that match $w$
  and change the algorithm from Theorem \ref{theorem:longdirpathInFPT}
  to find two disjoint simple paths, not intersecting $p$: one from $s$ to $p[0]$ matching
  $a^k$ and one from $p[c]$ to $t$ matching $a^*$. (Recall that
  $p[i]$ denotes the $i$th node in $p$.)
\end{proof}
If FPT $\neq$ W[1], then Theorem~\ref{theo:WConstOrNOTIsFPT} cannot be generalized to
arbitrary words $w$, since \nodesimpath($b^ka^k?b^*$) with parameter $k$
is W[1]-hard. This can be shown by adapting the proof of Theorem
\ref{theo:kcolor}: We add a path matching $b^k$ from a new node
$s_1'$ to $s_1$ and define $t_1 = s_2$ as in Corollary~\ref{akb*-is-hard}.
Then, the problem whether
there is a simple path matching $b^ka^k?b^*$ from $s'_1$ to $t_2$ is the
same as asking whether there is a simple path matching $a^k b^*$
from $s_1$ to $t_2$.

\subsection{Simple Transitive Expressions}\label{sec:blocklanguages} We now aim at generalizing
the previous results to more general (but still very restricted) regular
expressions. However, we feel that these expressions are relevant and
important from a practical perspective since they constitute more than
$99\%$ of the property paths   found in SPARQL query logs in an extensive
recent study \cite{BonifatiMT-corr17}. Notice that SPARQL property
paths are RPQs with added syntactic sugar, so the syntax of the
expressions is not restricted as, e.g., in Cypher.

\begin{table}[tb]
  \centering
  \begin{tabular}{@{}crll|crll@{}}
    \toprule
    \emph{Expression Type} & \emph{Relative} & $\ell$& STE? &
    \emph{Expression Type} & \emph{Relative} & $\ell$ & STE?\\
    \midrule
    $(a_1 + \cdots +a_\ell)^*$ & 39.12\% & 2--4& yes &
    $a_1 a_2? \cdots a_\ell?$ & 0.02\% & 1--3 & yes \\
    $a^*$ & 26.42\%& & yes &$(a b^*)+c$ & 0.01\%&& no\\
    $a_1 \cdots a_\ell$ & 11.65\% &  2--6 & yes& $a^* b?$ & 0.01\% && yes\\
    $a^* b$ & 10.39\% & & yes& $a b c^*$ &   0.01\%&& yes\\
    $a_1+ \cdots +a_\ell$ & 8.72\% & 2--6 & yes& $!(a+b)$ & 0.01\%&& no\\
    $a^+$ & 2.07\%&& yes &$(a_1 +\cdots +a_\ell)^+$ & 0.01\% & 2& yes\\
    $a_1? \cdots a_\ell?$ & 1.55\% & 1--5&  yes & $A_1 A_2$ &  $<$ 0.01\% & & yes\\
    $a(b_1+\cdots+b_\ell)$ & 0.02\% & 2 & yes& other  & 0.01\%& & mixed\\    
  \end{tabular}
  \caption{Structure of the 250K property paths in the corpus of
    Bonifati et al.~\cite{BonifatiMT-corr17}\label{fig:propertypaths}
}
\end{table}

In the following definition, we use $A \subseteq \Sigma$ to abbreviate
$(a_1 + \cdots + a_n)$ so that $A = \{a_1,\ldots,a_n\}$. We allow $A = \emptyset$.
\begin{definition}\label{def:ste}
  An \emph{atomic expression} is of the form $A \subseteq \Sigma$.  A
  \emph{$k$-bounded expression} is a regular expression of the form
  $A_1 \cdots A_k$ or $A_1? \cdots A_k?$, where $k \geq 0$ and each
  $A_i$ is an atomic expression. 
  Finally, a \emph{simple transitive expression (STE)} is a
  regular expression $$B_\text{pre} A^* B_\text{suff},$$ where
  $B_\text{pre}$ and $B_\text{suff}$ are bounded expressions and $A$
  is an atomic expression. For an STE $r = B_\text{pre} A^*
  B_\text{suff}$, we define the \emph{parameter $k_r = k_1 + k_2$},
  where $B_\text{pre}$ is $k_1$-bounded and $B_\text{suff}$ is
  $k_2$-bounded.
\end{definition}
Notice that about 99.7\% of the property paths in
Table~\ref{fig:propertypaths} are STEs or trivially equivalent to an
STE (by taking $A = \emptyset$, for example).\footnote{In fact, all
  expressions in the full version of Table~\ref{fig:propertypaths} in \cite{BonifatiMT-corr17} except for one
  can be handled with the techniques we present here. They just don't
  fit the definition of STEs.}

\subsection{Two Dichotomies}\label{sec:twodichotomies}

\paragraph*{Dichotomy for Simple Paths}

\begin{definition} 
Let $r = B_\text{pre} A^* B_\text{suff}$
be an STE with $L(r)\neq \emptyset$. 
If $B_\text{pre} = A_1 \cdots A_{k_1} $, then its
\emph{left cut border} 
$c_1$
is the largest value such that $A \not \subseteq
A_{c_1}$
if it exists and zero otherwise. If $B_\text{pre} = A_1? \cdots
A_{k_1}?$, 
then its left cut border is zero. 
We define right cut borders symmetrically (e.g., for
$B_\text{suff} = A'_{k_2} \cdots A'_1$, it is the largest 
$c_2$ such that $A \not \subseteq A_{c_2}$).
\end{definition}
We explain the intuition behind cut borders in Figure~\ref{fig:cuttable}.

For $c \in \nat$, an expression is \emph{$c$-bordered} if the maximum
of its left and right cut borders is $c$. We call a class $\cR$
of STEs \emph{cuttable} if there exists a constant $c \in \nat$ such
that each expression in $\cR$ is $c'$-bordered for some $c' \leq c$.
We can now prove a dichotomy on the complexity of $\nodespath(\cR)$ for
classes of STEs $\cR$, if $\cR$ satisfies the following mild
condition. We say that $\cR$ \emph{can be sampled} if there exists
an 
algorithm that, given $k \in \nat$, returns an 
expression in $\cR$ that is $k'$-bordered with $k' \geq k$, and ``no'' otherwise.

\begin{restatable}{theorem}{dichotomyThm}\label{theo:dichotomy}
  Let $\cR$ be a class of STEs that can be sampled. Then,
  \begin{enumerate}[(a)]
  \item if $\cR$ is cuttable, then $\nodespath(\cR)$ is in \fpt with
    parameter $k_r$ and
  \item otherwise, $\nodespath(\cR)$ is W[1]-hard with
    parameter $k_r$. 
 \end{enumerate}
\end{restatable}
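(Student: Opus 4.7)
My plan is to prove the two directions separately, building on the parameterized results of Section~\ref{sec:ParamCompl}. For part~(a) I combine an enumeration of the ``hard'' borders of $B_\text{pre}$ and $B_\text{suff}$ with the FPT algorithm of Theorem~\ref{theorem:longdirpathInFPT} for \longdirpath. For part~(b) I give a parameterized reduction from the W[1]-hard disjoint paths problem of Theorem~\ref{theorem:two-disjoint-is-hard}.

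For part~(a), assume $\cR$ is cuttable with bound $c$, and let $r = B_\text{pre} A^* B_\text{suff} \in \cR$ have cut borders $c_1,c_2 \leq c$. The crucial observation is that for every position $i > c_1$ of $B_\text{pre}$, and symmetrically for $B_\text{suff}$, we have $A \subseteq A_i$, so the ``easy'' parts of $B_\text{pre}$ and $B_\text{suff}$ together with the $A^*$ middle form a long contiguous segment that accepts $A$-labelled edges at every position. The algorithm first enumerates all simple paths $p_1$ of length $c_1$ from $s$ matching the restrictive prefix $A_1\cdots A_{c_1}$ of $B_\text{pre}$ and all simple paths $p_2$ of length $c_2$ ending at $t$ matching the last $c_2$ atomic expressions of $B_\text{suff}$ (there are $|G|^{O(c)}$ such pairs). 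For each pair it invokes the FPT algorithm of Theorem~\ref{theorem:longdirpathInFPT} on the $A$-labelled subgraph of $G$ with the interior nodes of $p_1$ and $p_2$ deleted, searching for a simple path of length at least $k_r - c_1 - c_2$ between the endpoints of $p_1$ and $p_2$. The total running time is $|G|^{O(c)} \cdot f(k_r)$, which is FPT in $k_r$ since $c$ is a constant.

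For part~(b), assume $\cR$ is not cuttable. By the sampling assumption, for every $k \in \nat$ I can obtain $r_k \in \cR$ whose maximum cut border is at least $k$; say the left cut border $c_1 \geq k$, so $B_\text{pre}$ contains an atomic expression $A_{c_1}$ with $A \not\subseteq A_{c_1}$. Hence there exist symbols $a \in A \setminus A_{c_1}$ and $b \in A_{c_1}$ (using $L(r_k)\neq\emptyset$ to conclude $A_{c_1}\neq\emptyset$). The plan is a parameterized reduction from \knodedisjointpaths that generalises the reduction of Theorem~\ref{theo:languageoverview}(b) for $a^{k-1}ba^*$: given a \knodedisjointpaths instance with parameter $k$, we label the ``first path'' portion using $a$'s and a mandatory bottleneck edge at position $c_1$ using $b$, so that any matching simple path in the target graph must consist of a length-$c_1$ piece that traverses the bottleneck followed by an $A$-labelled tail modelling the second disjoint path; the resulting target parameter $k_{r_k}$ is $O(k)$.

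The main obstacle in part~(a) is rigorously showing that restricting the middle segment to $A$-labels does not miss any yes-instance, which I expect to handle by lifting the representative-set machinery underlying the proof of Theorem~\ref{theorem:longdirpathInFPT} to the product of $G$ with a small NFA for $r$; the cuttability bound $c$ ensures this product has a structure with at most $O(c)$ ``genuine'' states plus the $A^*$ self-loop, which is what keeps the representative sets FPT-sized. The main obstacle in part~(b) is that the atomic expressions in a non-cuttable STE can have arbitrary shape, so the reduction has to extract information purely from the cut-border condition $A \not\subseteq A_{c_1}$, and must pad the target graph carefully so that no spurious matching path exists that bypasses the bottleneck and thereby corresponds to no source solution.
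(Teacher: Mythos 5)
Your overall architecture matches the paper's (enumerate the constant-size cut borders by brute force, then solve the $0$-bordered residue with representative-set machinery; for hardness, exploit a large cut border to force a length-constrained disjoint-paths structure), but both halves as written have concrete gaps. In part~(a), running \longdirpath on the \emph{$A$-labelled subgraph} and asking for a path of length at least $k_r-c_1-c_2$ is not a correct algorithm for the residual expression $r'=A_{c_1+1}\cdots A_{k_1}A^*A'_{k_2}\cdots A'_{c_2+1}$: the positions beyond the cut border only satisfy $A\subseteq A_i$, so a matching path may legitimately use edges labelled in $A_i\setminus A$ there (e.g.\ $r=(a+b)a^*$ on the graph $s\stackrel{b}{\to}t$ is a yes-instance your algorithm rejects), and when a bounded part has the optional form $A_1?\cdots A_{k_1}?$ the length lower bound $k_r-c_1-c_2$ is simply wrong. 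You flag this as "the main obstacle" and point to lifting representative sets to a product with an NFA --- that is indeed the actual content of the paper's proof (expression-aware representative families $\hat P^{k}_{sv,r_1}$ in Lemma~\ref{lemma:SizeBoundWithRE}, used twice in a nested fashion in Algorithm~\ref{alg:block}, plus separate treatment of the finite case $A=\emptyset$ and the downward-closed optional forms via Lemma~\ref{lemma:0crit}) --- but in your write-up it remains an expectation rather than an argument.

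In part~(b) the gap is more fundamental. First, colouring the "first path portion" with the symbol $a\in A\setminus A_{c_1}$ does not make it match $A_1\cdots A_{c_1-1}$: the cut-border definition only constrains atoms \emph{after} position $c_1$ (those contain $A$); the atoms $A_1,\ldots,A_{c_1-1}$ are arbitrary nonempty sets that need not contain $a$ at all. Second, a reduction from an \emph{arbitrary} \knodedisjointpaths instance cannot assign position-dependent labels consistently, because an edge of a general graph may occur at different distances from $s_1$ on different candidate paths. The paper avoids both problems by reducing from \kclique through the specific layered gadget graph of Theorem~\ref{theo:kcolor}: there, Lemma~\ref{lem:twopaths} guarantees every $a$-path from source to bottleneck has the same length, so each edge has a well-defined position and can be given the corresponding letter of a fixed witness word $w_1\in L(A_1\cdots A_{k'})$; additional $w_2$- and $w_3$-labelled paths are appended so that the remaining mandatory positions of $B_\text{pre}$ and all of $B_\text{suff}$ are forced onto unique padding paths. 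If you insist on starting from \knodedisjointpaths you would have to restrict to the layered instances produced by Theorem~\ref{theorem:two-disjoint-is-hard} and re-prove hardness of that restriction, which amounts to the same construction. (Minor: the bound $k_{r_k}=O(k)$ is unjustified, but only a computable bound $k_{r_k}\le g(k)$ is needed, and that follows from the running time of the sampler.)
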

\begin{proof}[Proof idea]
  The main idea of the proof is to attack case (a) using the
  techniques for proving
  Theorem~\ref{theorem:longdirpathInFPT}. If $\cR$ is cuttable, we can
  use exhaustive search to enumerate all possible pre- and suffixes of
  length at most $c$. We then use a variation of the representative
  sets technique \cite{fomin} to obtain an \fpt algorithm.  In case
  (b), we show that it is possible to adapt the reduction in the proof
  of Theorem~\ref{theorem:two-disjoint-is-hard}.
\end{proof}

Notice that the difference between cuttable and non-cuttable classes
of STEs can be quite subtle.
For instance, $b^k a^*$ and $a^k (a+b)^*$ are non-cuttable, but $(a+b)^k a^*$ is cuttable.

\begin{figure}
\centering
\begin{tikzpicture}[every node/.style={fill, circle, inner sep = 1pt}]

\node (S) at (-2,0) {};
\node (K) at (2,0) {};
\node (L) at (5,0) {};
\node (T) at (8,0) {};

\draw[very thick] (S) -- (-1,0);
\draw[dashed] (-1,0)--(K);
\draw[dashed] (L) -- (6,0);
\draw[very thick] (6,0)--(T);
\draw[dashed] (K) to [out = 30, in = -50] (2,.8);
\draw[dashed] (2,.8) to [out = 130, in = 110] (-1,0);
\draw[very thick] (-1,0) to [out = -60, in = 210] (4,.5);\draw[very thick] (4,.5) to [out = 30, in = 110] (6,0);
\draw[dashed] (6,0) to [out = -70, in = -20] (5,-0.7);
\draw[dashed] (5,-0.7) to [out = 160, in = 200] (L);

\draw[<->] ($(S)+(0,.2)$) -- ($(K)+(0,.2)$);
\draw[<->] ($(L)+(0,.2)$) -- ($(T)+(0,.2)$);

\draw[<->] ($(S)-(0,.2)$) -- ($(-1,0)-(0,.2)$);
\draw (-1.4,0) coordinate[label={below:$\geq c_1$}];

\draw[<->] ($(6,0)-(0,.2)$) -- ($(T)-(0,.2)$);
\draw (7.0,0) coordinate[label={below:$\geq c_2$}];

\draw (S) coordinate[label={left:$s$}];
\draw (0,0.2) coordinate[label={above:$k_1$}];
\draw (6.5,0.2) coordinate[label={above:$k_2$}];
\draw (T) coordinate[label={right:$t$}];
\end{tikzpicture}
\vspace{-.8cm}
\caption{Assume $r = A_1\cdots
  A_{k_1} A^* A'_{k_2} \cdots A'_1$ has left and right cut borders
  $c_1$ and $c_2$, respectively. Assume that an arbitrary path
  from $s$ to $t$ matches $r$ such that its length $k_1$ prefix and
  length $k_2$ suffix are node disjoint. If, after removing all loops, (1)
  the length $c_1$ prefix and length $c_2$ suffix are still the same and (2) the
  path still has length at least $k_1+k_2$, then it matches $r$.}
\label{fig:cuttable}
\end{figure}
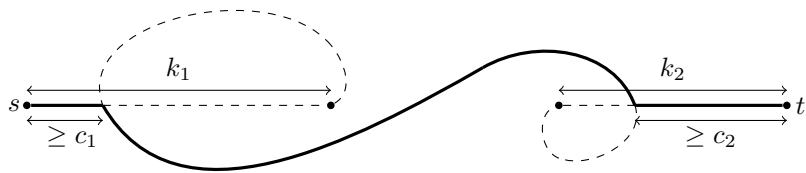

\paragraph*{Dichotomy for Trails}
We now present the dichotomy for trails. Perhaps
surprisingly, this dichotomy is slightly different. The underlying
reason is that $\edgespath(a^kb^*)$ is in FPT because the $a$-path and
the $b$-path can be evaluated independent of each other (no $a$-edge
will be equal to a $b$-edge). On the other
hand we have that $\edgespath(a^kba^*)$ is
W[1]-hard. 

Let $r = A_1 \cdots A_{k_1} A^* A'_{k_2} \cdots A'_1$ be an STE with left cut border $c_1$ and right cut border $c_2$. We say that $A_i$ with $i \leq c_1$ (resp., $A'_j$ with $j \leq c_2$) is a \emph{conflict position} if there exists a symbol $\sigma \in A_i \cap A$ (resp., $\sigma \in A_j \cap A$). We say that $\cR$ is \emph{almost conflict free} if there exists a constant $c$ such that each $r \in \cR$ has at most $c$ conflict positions.

We say that $\cR$ \emph{can be conflict-sampled} if there exists
an algorithm that, given $k \in \nat$, returns an 
expression in $\cR$ that has $k'$ conflict labels with $k' \geq k$, and ``no'' otherwise.  
\begin{restatable}{theorem}{edgeDichotomyThm}\label{theo:edgedichotomy}
  Let $\cR$ be a class of STEs that can be conflict-sampled. Then,
  \begin{enumerate}[(a)]
  \item if $\cR$ is almost conflict free, then $\edgespath(\cR)$ is in \fpt with
    parameter $k_r$ and 
  \item otherwise, $\edgespath(\cR)$ is W[1]-hard with
    parameter $k_r$.  \end{enumerate}
\end{restatable}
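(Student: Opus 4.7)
The proof parallels Theorem~\ref{theo:dichotomy} but leverages a phenomenon unique to trails: if $A_i$ is not a conflict position, then $A_i\cap A = \emptyset$, so every edge matching $A_i$ bears a label outside $A$ and cannot clash with any edge used by the middle $A^*$-subtrail. Thus, out of all edges of the prefix and suffix, only the at most $c$ conflict edges can interfere with the middle, where $c$ is the constant provided by the almost-conflict-free hypothesis. This relaxation is precisely what explains why $\edgespath(a^k b^*)$ is in \fpt while $\nodespath(a^k b^*)$ is W[1]-hard (Theorem~\ref{theo:languageoverview}).

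For part (a), fix $r = A_1\cdots A_{k_1} A^* A'_{k_2}\cdots A'_1 \in \cR$ and let $E_A$ denote its $A$-labeled edges. My plan is to enumerate all tuples $(e_1,\ldots,e_c) \in E^c$ of candidate edges at the conflict positions; this takes $|E|^c = \mathrm{poly}(|G|)$ time since $c$ is a constant of $\cR$. For each tuple I verify (i) that each maximal non-conflict sub-block of the prefix and suffix admits a trail with the prescribed label pattern and length, pairwise edge-disjoint and also disjoint from $\{e_1,\ldots,e_c\}$, and (ii) that an $A$-trail connects the end of the prefix to the start of the suffix inside $E_A\setminus\{e_1,\ldots,e_c\}$. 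Positions beyond the cut borders, which accept any $A$-edge, are absorbed into (ii); step (ii) itself reduces to polynomial-time reachability, since a shortest walk in $E_A\setminus\{e_1,\ldots,e_c\}$ is automatically a trail. Step (i) is solved in \fpt in $k_r$ by color-coding $G$'s edges directly with $k_r$ colors (not a product automaton, since trails in a product graph do not in general correspond to trails in $G$) and combining the per-segment solutions via a representative-sets dynamic program in the style of the proof of Theorem~\ref{theorem:longdirpathInFPT}. The main technical obstacle here is forcing the per-segment color palettes to cooperate, so that the representative-sets merge simultaneously respects the label patterns and the global edge-disjointness with the fixed conflict edges.

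For part (b), the conflict-sampling hypothesis supplies, for every $k$, an expression $r_k\in\cR$ with at least $k$ conflict positions; from $r_k$ I extract a window of $k$ atomic sets $X_1,\ldots,X_k$, each intersecting $A$. I reduce from the W[1]-hard problem \kedgedisjointpaths of Theorem~\ref{theo:edgeTwodisjoint}: given an instance $(G,s_1,t_1,s_2,t_2)$, I construct a labeled graph $G'$ in which every trail matching $r_k$ is forced to encode a short trail of length $k$ through the $X_i$-positions (using labels chosen in each $X_i\cap A$) together with a companion trail through the $A^*$-middle, both drawing their edges from the same $A$-labeled subgraph and therefore compelled to be edge-disjoint. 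The principal difficulty is that the conflict positions returned by the sampler may be scattered across the prefix and suffix and interleaved with non-conflict positions; a case analysis following the adaptations between Theorems~\ref{theo:kcolor}, \ref{theorem:two-disjoint-is-hard}, and \ref{theo:edgeTwodisjoint}, together with label padding at the non-conflict positions so that the reduction graph still carries the required label patterns, handles these variants and transfers W[1]-hardness.
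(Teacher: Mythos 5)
Your high-level intuition for part (a) --- that only the (at most $c$) conflict positions can clash with the $A^*$-middle, so one can afford to enumerate the edges used there --- matches the paper's. But two steps would fail as written. First, your step (ii) does not reduce to reachability: the positions $A_{c_1+1}\cdots A_{k_1}$ and $A'_{k_2}\cdots A'_{c_2+1}$ that you ``absorb into the middle'' impose a minimum length of $k_1-c_1+k_2-c_2$ on that portion and allow labels in $A_i\supseteq A$ that need not lie in $A$, so a shortest walk in $E_A$ is neither long enough nor general enough; this is exactly why the paper runs the full representative-sets machinery of Algorithm~\ref{alg:block} (with the shortcutting argument of Lemma~\ref{proofforfptalg}) on this part. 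Second, you leave the edge-set analogue of representative sets as an acknowledged ``technical obstacle,'' but that analogue is never needed in the paper: part (a) is proved by first applying the line-graph reduction of Lemma~\ref{lemma:edgeToNodes}, which turns edges of $G$ into nodes of $H_i$ and edge-disjointness into node-disjointness, so that the existing node-based machinery of Theorem~\ref{theo:dichotomy}(a) applies verbatim after relabelling the at most $c$ enumerated conflict nodes (and the corresponding conflict labels in $r_1,r_2$) with fresh symbols. Your worry that ``trails in a product graph do not correspond to trails in $G$'' is beside the point --- the reduction used is the line graph, not a product automaton --- and by working directly in $G$ you take on a substantial technical burden that the paper's route avoids entirely.

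For part (b) there is a concrete obstruction to reducing from a generic \kedgedisjointpaths instance. The conflict positions of $r_k$ are scattered among non-conflict positions whose label sets are disjoint from $A$, so to make a trail matching $r_k$ encode a length-$k$ trail in the source graph you must assign to each edge ``the label of the $i$-th position,'' where $i$ is its distance from the source. This is only well defined if the graph is \emph{layered}, i.e., every path from the source to a given edge has the same length; an arbitrary \kedgedisjointpaths instance has no such property, and your ``label padding at the non-conflict positions'' cannot be carried out consistently. The paper therefore reduces from \kclique directly, reusing the split-node gadget graph of Theorem~\ref{theo:edgeTwodisjoint}: Lemma~\ref{lem:twopaths}(\ref{lem:twopaths:b}) guarantees that every $c_1$-to-$c_{k+1}$ path has the same length, which is precisely what makes the position-dependent labelling (a witness word $w_1\in L(A_1\cdots A_{k''})$ with as many $A$-symbols as possible placed on the gadget edges, padded by non-$A$ subwords of fixed length) well defined, and Lemma~\ref{lem:twopaths}(\ref{lem:twopaths:c})--(\ref{lem:twopaths:d}) are what force the $A^*$-middle to stay off the prefix's rows and control nodes. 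Repairing your reduction would amount to rebuilding this layered gadget, i.e., redoing the paper's proof.
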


\makeatletter{}\section{Enumeration Problems for Simple Transitive Expressions}\label{sec:enumfpt}

We now observe that our tractability results can be carried over to the enumeration setting. To this end, a \emph{parameterized enumeration problem} is defined analogously as an enumeration problem, but its input is of the form $(x,k) \in \Sigma^* \times \nat$. It is in \emph{FPT delay} if there exists an algorithm that enumerates the output such that the time between two consecutive outputs is bounded by $f(k)\cdot |x|^c$ for a constant $c$. Notice that each problem in polynomial delay is also in FPT delay.

The problems in the following theorems are straightforward enumeration versions of problems we already considered. 
\begin{restatable}{theorem}{fptdelayi}\label{theo:fptdelay:1}
  $\enumnodespaths_{\geq k}$ is in FPT delay. 
\end{restatable}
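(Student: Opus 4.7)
The plan is to adapt Yen's algorithm (Algorithm~\ref{alg:yen}) in the same spirit as the proofs of Proposition~\ref{prop:prefixclosed} and Theorem~\ref{theorem:BaganPolyDelay}. The key idea is to replace every call that computes a shortest simple path by a call to a subroutine that computes a \emph{shortest simple path of length at least $\ell$}, where the threshold $\ell$ is locally derived from $k$ and the current prefix length.

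Concretely, the modifications would be as follows. In line~\ref{alg:yen:3}, instead of an arbitrary shortest simple path from $s$ to $t$, we would find a shortest simple path from $s$ to $t$ in $G$ of length at least $k$; if none exists, the algorithm terminates without output. In line~\ref{alg:yen:12}, we would find a shortest simple path $p_2$ from $p[i,i]$ to $t$ in $G'$ of length at least $\max(0, k-i)$, so that the candidate $p[0,i]\cdot p_2$ added to $B$ has length at least $k$. Correctness then follows exactly as in Yen's original argument: every simple path from $s$ to $t$ of length at least $k$ that has not yet been written to output must deviate from a previously output path at some first index $i$, and the corresponding deviation is constructed when the iteration processes index $i$ of that previously output path.

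The missing ingredient is a subroutine that, given a graph $H$, nodes $u,v$, and $\ell \in \nat$, returns a shortest simple $u$-$v$ path in $H$ of length at least $\ell$ (or reports that none exists). We would obtain such a subroutine by adapting the FPT algorithm underlying Theorem~\ref{theorem:longdirpathInFPT}: using its representative-sets machinery to test, for each length $\ell' \geq \ell$, whether a simple $u$-$v$ path of length exactly $\ell'$ exists, and iterating $\ell' = \ell, \ell+1, \ldots$ until the first successful $\ell'$ is found, yields the desired shortest path in time $f(\ell)\cdot |H|^{O(1)}$. Since $\ell \leq k$ throughout the modified Yen run, and since the prefix-tree representation of $A$ (as in the proof following Algorithm~\ref{alg:yen}) lets us complete the for-loop on line~\ref{alg:yen:9} in polynomial time, each output incurs delay $f(k)\cdot |G|^{O(1)}$, which is FPT delay.

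The main obstacle is verifying that the algorithm for Theorem~\ref{theorem:longdirpathInFPT} genuinely supports the optimization variant sketched above with the claimed running time; this requires a careful inspection of its representative-sets construction and of how path length is tracked inside the recursion, but we do not expect any fundamentally new algorithmic idea to be required.
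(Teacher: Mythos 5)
Your overall architecture (Yen's algorithm with the thresholds $k$ and $\max(0,k-i)$, plus a prefix tree for line~\ref{alg:yen:9}) matches the paper's proof, which plugs a witness-producing version of Algorithm~\ref{fptalg} into the YenSimple variant from Theorem~\ref{theorem:BaganPolyDelay}. However, the subroutine you rely on has a genuine gap. You propose to find a \emph{shortest} simple $u$-$v$-path of length at least $\ell$ by testing, for $\ell' = \ell, \ell+1, \ldots$, whether a simple $u$-$v$-path of length \emph{exactly} $\ell'$ exists, stopping at the first success. The first successful $\ell'$ need not be close to $\ell$: in a graph whose only $u$-$v$-paths have length $n-1$, the loop runs up to $\ell' = n-1$, and deciding existence of a simple path of exact length $\ell'$ is the $\ell'$-path problem (Hamiltonian path for $\ell'=n-1$), costing time exponential in $\ell'$, not in $k$. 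So the claimed bound $f(\ell)\cdot|H|^{O(1)}$ does not hold, and the delay is not FPT in $k$.

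The gap is avoidable because the statement only asks for FPT delay with no output order, so shortest paths are not needed at all: the correctness argument for the adapted Yen's algorithm (no duplicates via edge deletion; completeness via the longest-shared-prefix progress argument) goes through when the subroutine returns an \emph{arbitrary} simple path of length at least the threshold. That is exactly what Algorithm~\ref{fptalg} provides once one observes that the sets $X \in \hat{P}^{k+1}_{sv}$ can be kept in an order that encodes a concrete length-$k$ $s$-$v$-path, which is then concatenated with any $v$-$t$-path in $G$ minus $X\setminus\{v\}$. If you do want shortest paths (e.g., to get radix order as in Lemma~\ref{lemma:a-geqk-shortest}), the correct route is not to iterate over exact lengths but to note that the proof of Lemma~\ref{proofforfptalg} already applies to a \emph{shortest} simple $s$-$t$-path of length at least $k$; hence enumerating $v$ and $X \in \hat{P}^{k+1}_{sv}$, computing a shortest $v$-$t$-path by BFS in $G$ minus $X\setminus\{v\}$, and minimizing over all choices yields a shortest qualifying path in time $f(k)\cdot|G|^{O(1)}$.
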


\begin{restatable}{theorem}{fptdelayii}\label{theo:fptdelay:2}
  For each constant $c$ and each word $w$ with length $|w|=c$, the problem $\enumnodespaths(a^kw?a^*)$ is in FPT delay.
\end{restatable}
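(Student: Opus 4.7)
The plan is to run a Yen-style enumeration in the spirit of Proposition~\ref{prop:prefixclosed} and Theorem~\ref{theorem:BaganPolyDelay}, using a constrained variant of the FPT decision procedure from Theorem~\ref{theo:WConstOrNOTIsFPT} as the subroutine invoked at lines~\ref{alg:yen:3} and~\ref{alg:yen:12} of Algorithm~\ref{alg:yen}. Concretely, the subroutine takes as input a graph $G'$, nodes $s',t'$, a forbidden node set $F$, a forbidden edge set $E_f$, and a ``residual'' regular expression $r'$, and must return a simple path from $s'$ to $t'$ in $G'$ that avoids $F$ and $E_f$ and matches $r'$. The residual $r'$ is what remains of $a^kw?a^*$ after the prefix built so far in Yen's recursion has been consumed.

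First I would catalogue the residuals. Writing $w=w_1\cdots w_c$, any prefix of a matching path falls into one of three regimes, which determine the residual: (i) we are still inside $a^k$, so $r'=a^{k-i}w?a^*$ for some $0\le i\le k$; (ii) we have committed to the $w?$ option and consumed a proper prefix of $w$, so $r'=w_j\cdots w_c a^*$ for some $1\le j\le c+1$; or (iii) $w?$ has been skipped or completely consumed, so $r'=a^*$. Case (iii) is plain reachability along $a$-edges inside $G'[V\setminus F]\setminus E_f$. In case (ii), the next at most $c$ labels along the path are fixed, so we branch over the at most $|G|^c=|G|^{O(1)}$ realizations of those edges in $G'\setminus F$ and reduce each branch to case (iii). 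For case (i), I would invoke the algorithm of Theorem~\ref{theo:WConstOrNOTIsFPT} on the induced subgraph $G'[V\setminus F]$ with the edges in $E_f$ deleted and with parameter $k-i\le k$; since that algorithm is built from the color-coding and representative-sets search underlying Theorem~\ref{theorem:longdirpathInFPT}, deleting nodes and edges reduces to restricting the input graph at the outset. The subroutine therefore runs in time $f(k)\cdot|G|^{O(1)}$.

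Plugging this subroutine into Algorithm~\ref{alg:yen}, and using the prefix-tree storage of $A$ described in the proof of Yen's algorithm above to bound the per-iteration bookkeeping by $|G|^{O(1)}$, gives delay $f(k)\cdot|G|^{O(1)}$, i.e., FPT delay. The main technical obstacle is case (i): one needs to verify that the internal invariants of the algorithm of Theorem~\ref{theo:WConstOrNOTIsFPT}, and in particular the representative-sets machinery underlying Theorem~\ref{theorem:longdirpathInFPT}, remain valid when the ambient graph is restricted to $V\setminus F$ and edges of $E_f$ are removed, and also that the ``two disjoint simple paths around $w$'' reasoning of Theorem~\ref{theo:WConstOrNOTIsFPT} survives this restriction. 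No new ideas beyond those of Section~\ref{sec:ParamCompl} should be required, but the bookkeeping that tracks which residual is in force at each branch of Yen's recursion has to be set up carefully.
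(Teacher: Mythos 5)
Your proposal matches the paper's proof in all essentials: the paper likewise plugs the FPT decision procedure of Theorem~\ref{theo:WConstOrNOTIsFPT} (Algorithm~\ref{b?fptalg}), adjusted to emit a witness path, into the YenSimple framework of Theorem~\ref{theorem:BaganPolyDelay}, and handles exactly your three residual regimes — $a^{k-i}w?a^*$ by the same technique, $w'a^*$ for suffixes $w'$ of $w$ by brute-force enumeration of the $O(n^c)$ realizations of $w'$ followed by reachability, and $a^*$ as the $w'=\varepsilon$ special case. The only cosmetic difference is that the paper passes the restricted graph $G'$ directly to the subroutine rather than carrying explicit forbidden sets $F$ and $E_f$, which amounts to the same thing.
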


\begin{restatable}{theorem}{fptdelayiii}\label{theo:fptdelay:3}
  Let $\cR$ be a cuttable class of STEs. Then $\enumnodespaths(\cR)$ is in FPT delay.
\end{restatable}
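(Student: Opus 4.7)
The plan is to adapt Yen's algorithm (Algorithm~\ref{alg:yen}) in the same spirit as Proposition~\ref{prop:prefixclosed} and Theorem~\ref{theorem:BaganPolyDelay}: the ``find a shortest path'' subroutine calls on lines~\ref{alg:yen:3} and~\ref{alg:yen:12} are replaced by a subroutine that returns any simple path whose label is accepted by an \nfa $N$ for $r$ when started from the tracked state set $J = \delta^*(\lab(p[0,i]))$, while respecting the nodes and edges already deleted by Yen's outer procedure. The correctness argument that every matching simple path is enumerated exactly once carries over verbatim from the appendix proofs for those two cases; only the cost of the subroutine, and hence the delay, changes.

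The key step is to argue that this subroutine is an FPT instance of $\nodespath$ on a cuttable class of STEs with the same constant $c$ as $\cR$. Fix an \nfa $N$ for $r = B_\text{pre} A^* B_\text{suff}$ whose natural structure is a chain for $B_\text{pre}$, a self-looping state for $A^*$, and a chain for $B_\text{suff}$, so $|Q_N| = O(k_r)$. Then $L(N_J) = \bigcup_{q \in J} L(N_{\{q\}})$ is a union of $O(k_r)$ residuals, and a case analysis on the position of $q$ shows that each $L(N_{\{q\}})$ is itself an STE (a suffix of $B_\text{pre}$ followed by $A^* B_\text{suff}$, or $A^* B_\text{suff}$, or a suffix of $B_\text{suff}$) whose left and right cut borders are bounded by those of $r$. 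Running the FPT algorithm behind Theorem~\ref{theo:dichotomy}(a) once per $q \in J$ on the modified graph therefore solves the subroutine in FPT time.

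With this subroutine in hand, the delay of the adapted algorithm is bounded exactly as in the analysis of Algorithm~\ref{alg:yen} itself: storing $A$ as a prefix tree, the non-subroutine overhead between two consecutive outputs is polynomial in $|G|$ and the subroutine is invoked only polynomially often per output, giving overall delay $f(k_r) \cdot (|G|+|r|)^{O(1)}$. The main obstacle I anticipate is turning the FPT decision procedure behind Theorem~\ref{theo:dichotomy}(a) into a \emph{constructive} one that actually returns a witness simple path; this is standard for the representative-sets machinery that the dichotomy relies on, but it requires inspecting and slightly adapting the algorithm hidden in the proof of Theorem~\ref{theo:dichotomy} so that it also respects the dynamically chosen start state set and the forbidden nodes and edges passed in by Yen's outer loop.
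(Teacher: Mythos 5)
Your proposal matches the paper's proof in all essentials: the paper likewise plugs an FPT witness-producing search into lines~\ref{alg:yen:3} and~\ref{alg:yen:12} of the adapted Yen's algorithm, handles the residual languages via Lemma~\ref{lem:ste-derivatives} (which states exactly your observation that each left derivative of a $c$-bordered STE is a union of at most $|r|$ STEs that are at most $c$-bordered), and discharges your anticipated obstacle by noting that the representative-set families $\hat{P}^{k+1}_{sv,r_1}$ can be kept ordered so that their members directly encode witness paths. The only cosmetic difference is that you package the residuals via NFA state sets $N_{\{q\}}$ where the paper uses Brzozowski derivatives.
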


The proofs of these theorems are all along the same lines. 
In the proof of Theorem~\ref{theorem:BaganPolyDelay} we adapted Yen's algorithm to work with simple instead of shortest paths. We already showed that the problems $\nodesimpath_{\geq k}$, $\nodesimpath(a^kw?a^*)$, and $\nodesimpath(\cR)$ are in FPT. Furthermore, these FPT algorithms can trivially be adjusted to also return a matching path if it exists. We also need to show that we can find simple paths matching \emph{suffixes}\footnote{More precisely, we need language derivatives, see Appendix~\ref{app:yen}.} in the language (for the adapted line~\ref{alg:yen:12} of Yen's algorithm in the proof of Theorem~\ref{theorem:BaganPolyDelay}). This can also be done for each of these theorems, essentially because the suffixes of the languages we need to consider again can be solved with our FPT algorithms.
In Appendix~\ref{app:enumfpt} we prove that this approach works.

Furthermore, we can also show that the FPT result from Theorem~\ref{theo:edgedichotomy} carries over to enumeration problems.
\begin{restatable}{theorem}{fptdelayedge}
	Let $\cR$ be a class of STE that is almost conflict-free. Then,
	$\enumedgespaths(\cR)$ is in FPT delay.
\end{restatable}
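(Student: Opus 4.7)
The plan is to follow the template of Theorems~\ref{theo:fptdelay:1}--\ref{theo:fptdelay:3}: adapt Yen's algorithm (Algorithm~\ref{alg:yen}) to enumerate trails. The modifications are the natural ones: in place of $V' = V \setminus V(p[0,i-1])$ we remove the \emph{edges} of $p[0,i]$ to form $G'$, and on Lines~\ref{alg:yen:3} and~\ref{alg:yen:12} we call the FPT algorithm for $\edgespath(\cR)$ from Theorem~\ref{theo:edgedichotomy}(a) to produce a shortest matching trail. Correctness then follows along the same lines as the adaptation in the proof of Theorem~\ref{theorem:BaganPolyDelay}: Yen's algorithm only relies on a ``canonical shortest extension consistent with the required language and avoiding previously used structure'' subroutine, and this transfers cleanly from simple paths to trails.

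The real technical step is Line~\ref{alg:yen:12}. Having committed to a prefix trail $p[0,i]$, we need a shortest trail from $p[i]$ to $t$ in $G'$ whose label lies in the left derivative $D = \{w \mid \lab(p[0,i])\cdot w \in L(r)\}$ of $L(r)$. For an almost conflict-free STE $r = A_1\cdots A_{k_1} A^* A'_{k_2}\cdots A'_1$ the canonical NFA has a chain-like structure, so inspecting its state set after reading $\lab(p[0,i])$ shows that $D$ is a union of $O(k_r)$ languages of sub-STEs of the form
\[
A_{j+1}\cdots A_{k_1} A^* A'_{k_2}\cdots A'_1,\quad A^* A'_{k_2}\cdots A'_1,\quad \text{or}\quad A'_{j'}\cdots A'_1.
\]
Each such sub-STE has parameter at most $k_r$ and no more conflict positions than $r$, so it lies in an almost conflict-free class with the same constant as $\cR$. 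I would therefore invoke Theorem~\ref{theo:edgedichotomy}(a) separately on each sub-STE in the decomposition, with source $p[i]$, target $t$, and graph $G'$, and return a shortest among the trails produced.

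The main obstacle I expect is bookkeeping rather than any new idea: one has to verify that almost conflict-freeness is preserved under taking derivatives with the same constant, that the FPT subroutine of Theorem~\ref{theo:edgedichotomy}(a) can be upgraded from a decision procedure to one returning a shortest witnessing trail (the analogous upgrade for simple paths is already used in the proof of Theorem~\ref{theo:fptdelay:3}), and that storing the set $A$ of already-produced trails as a prefix trie keeps the per-iteration overhead polynomial in $|G|+|r|$, just as in the analysis following Algorithm~\ref{alg:yen}. Once these are in place, each outer-loop iteration performs $O(|G|\cdot k_r)$ calls to the FPT subroutine on instances of parameter at most $k_r$, each costing $f(k_r)\cdot(|G|+|r|)^{O(1)}$, which gives FPT delay in $k_r$.
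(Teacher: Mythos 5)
Your plan breaks at the step you call the one that ``transfers cleanly'': the deviation-edge mechanism of Yen's algorithm is \emph{not} sound when you move from simple paths to trails by deleting edges instead of nodes. In Yen's algorithm, after outputting a path that shares the root $p[0,i-1]$ with earlier outputs, one deletes the earlier paths' next edges out of the spur node and then searches for an extension. For simple paths this is correct because the extension can never return to the spur node, so ``forbidden as the next edge'' and ``deleted from the graph'' coincide. For trails they do not: a legitimate extension may revisit the spur node and traverse a deviation edge \emph{later}. Concretely, take $G$ with nodes $s,v,a,t$ and edges $e_0=(s,x,v)$, $e=(v,x,a)$, $e_1=(a,x,v)$, $e'=(v,x,t)$, and $r=A^*$ with $A=\{x\}$ (a conflict-free STE). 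There are exactly two trails from $s$ to $t$, namely $e_0e'$ and $e_0\,e\,e_1\,e'$. After your algorithm outputs $e_0e'$, the spur iteration at $v$ deletes both $e_0$ (root edge) and $e'$ (deviation edge), so no extension from $v$ reaches $t$ and the second trail is never produced. The enumeration is incomplete. This is precisely why the paper does not run a trail-native Yen: it first applies the line-graph reduction of Lemma~\ref{lemma:edgeToNodes}, under which trails of $G$ correspond bijectively to simple paths of the graphs $H_i$, and then reuses the already-proven YenSimple machinery there; in $H_i$ a deviation edge encodes a forbidden \emph{transition} $e\to e'$ rather than a forbidden edge $e'$, and simple paths never revisit the spur node, so the deletion is harmless. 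Your approach is repairable --- replace the deviation-edge deletion by branching over each admissible next edge $e\notin F$ and searching for an extension of $P\cdot e$ in $G-E(P\cdot e)$ only --- but as written the algorithm is wrong, and the repair is exactly the content you would have to prove from scratch.

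Two smaller points. First, the paper additionally needs (and checks) that Yen's node deletions inside $H_i$ preserve the property that every node of $H_i$ corresponds to a single edge of $G$, since the conflict-relabelling argument of Theorem~\ref{theo:edgedichotomy}(a) lives on that correspondence; in your setup the analogous check is that the subroutine of Theorem~\ref{theo:edgedichotomy}(a), invoked on edge-deleted subgraphs with derivative expressions, still applies --- your observation that derivatives are unions of sub-STEs with no more conflict positions is correct and matches Lemma~\ref{lem:ste-derivatives} and the paper's argument. Second, you do not need (and should not claim) a \emph{shortest} witnessing trail from the FPT subroutine: the theorem asks only for FPT delay with no order guarantee, an arbitrary witness suffices for the Lawler--Murty partition, and extracting shortest witnesses from the representative-sets machinery requires the extra hypotheses and work of Lemma~\ref{lemma:0crit-shortest}.
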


\makeatletter{}\section{Conclusions}

Our main results are two dichotomies on the parameterized complexity of
evaluating \emph{simple transitive expressions (STEs)}, which are a
class of regular expressions powerful enough to capture over 99\% of
the RPQs occurring in a recent practical study
\cite{BonifatiMT-corr17}. These dichotomies are for \emph{simple path
  semantics} and \emph{trail semantics}, respectively.

For \emph{simple path semantics}, the central property that we require
for a class of expressions so that evaluation is in FPT is
\emph{cuttability}, i.e., constant-size \emph{cut borders} (also see
Figure~\ref{fig:cuttable}).  

For \emph{trail semantics}, the dichotomy is such that the FPT
fragment is slightly larger. Even if the cut borders of a class of
expressions is not bounded
by a constant, it can be evaluated in FPT if the number of
\emph{conflict positions} are bounded by a constant. An example of a
non-cuttable class of expressions with a constant number of conflict
positions is $\{a^kb^* \mid k \in \nat\}$. For this class, evaluation
over trail semantics is in FPT (with parameter $k$) but W[1]-hard over
simple path semantics.

Looking at Table~\ref{fig:propertypaths}, we see that the cut borders
for expressions in practice are indeed very small: it is one for $a^*
b$, two for $a b c^*$, and zero in all other cases. All these
expressions have FPT evaluation for simple path and trail semantics.
Therefore, although the simple path and trail semantics
of RPQs are known to be hard in general, it seems that the RPQs that
users actually ask are much less harmful. In fact, since the vast
majority of expressions in Table~\ref{fig:propertypaths} has cut
borders of at most two, our FPT result in Theorem~\ref{theo:dichotomy}
implies that evaluation for this majority of expressions is in
polynomial time combined complexity. Furthermore, matching paths can
be enumerated in polynomial delay. (Recall that, if P $\neq$ NP, this
is impossible even for fixed expressions: evaluation for $a^*ba^*$ or
$(aa)^*$ under simple path semantics is \np-complete.)

For the expressions in Table~\ref{fig:propertypaths}, the parameter
$k_r$ is at most six. Since the function $f$ in our FPT algorithms is only
single exponential, we believe that these expressions can be dealt
with in practical scenarios, in principle. The data complexity of our
FPT algorithms is currently $O(m n \log n + n^2 + mn)$ with $m = |E|$ and $n = |V|$. This 
bound comes from Fomin et al.'s representative set technique
\cite{fomin} and we did not yet investigate yet if this can be
improved. We believe that this would be an interesting future direction.

\section*{Acknowledgments}

We are grateful to Phokion Kolaitis for suggesting us to study 
enumeration problems on simple paths matching RPQs. We
are also grateful to Holger Dell for pointing us to
Theorem~\ref{theorem:longdirpathInFPT} and providing us with a proof sketch.

\bibliographystyle{abbrv}
\bibliography{references}

\begin{thebibliography}{10}

\bibitem{AbiteboulV-jcss99}
S.~Abiteboul and V.~Vianu.
\newblock Regular path queries with constraints.
\newblock {\em Journal of Computer and System Sciences (JCSS)}, 58(3):428--452,
  1999.

\bibitem{Ackerman-TCS09}
M.~Ackerman and J.~Shallit.
\newblock Efficient enumeration of words in regular languages.
\newblock {\em Theor. Comput. Sci.}, 410(37):3461--3470, 2009.

\bibitem{AlonYZ-jacm95}
N.~Alon, R.~Yuster, and U.~Zwick.
\newblock Color-coding.
\newblock {\em J. {ACM}}, 42(4):844--856, 1995.

\bibitem{pablo}
R.~Angles, M.~Arenas, P.~Barcel{\'{o}}, A.~Hogan, J.~L. Reutter, and D.~Vrgoc.
\newblock Foundations of modern graph query languages.
\newblock {\em CoRR}, abs/1610.06264, 2016.

\bibitem{ArenasCP-www12}
M.~Arenas, S.~Conca, and J.~P{\'e}rez.
\newblock Counting beyond a yottabyte, or how sparql 1.1 property paths will
  prevent adoption of the standard.
\newblock In {\em Proceedings of the 21st International Conference on World
  Wide Web (WWW)}, pages 629--638, New York, NY, USA, 2012. ACM.

\bibitem{baganOld}
G.~Bagan, A.~Bonifati, and B.~Groz.
\newblock A trichotomy for regular simple path queries on graphs.
\newblock {\em CoRR}, abs/1212.6857, 2012.

\bibitem{bagan}
G.~Bagan, A.~Bonifati, and B.~Groz.
\newblock A trichotomy for regular simple path queries on graphs.
\newblock In {\em Symposium on Principles of Database Systems (PODS)}, pages
  261--272, New York, NY, USA, 2013. ACM.

\bibitem{Barcelo-pods13}
P.~Barcel{\'{o}}.
\newblock Querying graph databases.
\newblock In {\em Symposium on Principles of Database Systems (PODS)}, pages
  175--188, 2013.

\bibitem{bjoerklund}
A.~Bj{\"{o}}rklund, T.~Husfeldt, and S.~Khanna.
\newblock Approximating longest directed paths and cycles.
\newblock In {\em International Colloquium on Automata, Languages and
  Programming (ICALP)}, pages 222--233, 2004.

\bibitem{BonifatiMT-corr17}
A.~Bonifati, W.~Martens, and T.~Timm.
\newblock An analytical study of large {SPARQL} query logs.
\newblock {\em CoRR}, abs/1708.00363, 2017.
\newblock To appear in VLDB 2018.

\bibitem{BrzozowskiJ-JACM64}
J.~A. Brzozowski.
\newblock Derivatives of regular expressions.
\newblock {\em J. ACM}, 11(4):481--494, Oct. 1964.

\bibitem{CaiWG-16}
L.~Cai and J.~Ye.
\newblock Finding two edge-disjoint paths with length constraints.
\newblock In {\em International Workshop on Graph-Theoretic Concepts in
  Computer Science (WG)}, pages 62--73, 2016.

\bibitem{CalvaneseGLV-jcss02}
D.~Calvanese, G.~D. Giacomo, M.~Lenzerini, and M.~Vardi.
\newblock Rewriting of regular expressions and regular path queries.
\newblock {\em Journal of Computer and System Sciences (JCSS)}, 64(3):443--465,
  2002.

\bibitem{CalvaneseGLV-kr00}
D.~Calvanese, G.~D. Giacomo, M.~Lenzerini, and M.~Y. Vardi.
\newblock Containment of conjunctive regular path queries with inverse.
\newblock In {\em Principles of Knowledge Representation and Reasoning (KR)},
  pages 176--185, 2000.

\bibitem{CalvaneseGLV-pods00}
D.~Calvanese, G.~D. Giacomo, M.~Lenzerini, and M.~Y. Vardi.
\newblock View-based query processing for regular path queries with inverse.
\newblock In {\em Symposium on Principles of Database Systems (PODS)}, pages
  58--66, New York, NY, USA, 2000. ACM.

\bibitem{ConsensM-pods90}
M.~P. Consens and A.~O. Mendelzon.
\newblock Graph{L}og: a visual formalism for real life recursion.
\newblock In {\em Symposium on Principles of Database Systems (PODS)}, pages
  404--416, New York, NY, USA, 1990. ACM.

\bibitem{CruzMW-sigmod87}
I.~F. Cruz, A.~O. Mendelzon, and P.~T. Wood.
\newblock A graphical query language supporting recursion.
\newblock In {\em ACM SIGMOD International Conference on Management of Data
  (SIGMOD)}, pages 323--330, New York, NY, USA, 1987. ACM.

\bibitem{paramAlgo}
M.~Cygan, F.~V. Fomin, L.~Kowalik, D.~Lokshtanov, D.~Marx, M.~Pilipczuk,
  M.~Pilipczuk, and S.~Saurabh.
\newblock {\em Parameterized Algorithms}.
\newblock Springer Publishing Company, Incorporated, 1st edition, 2015.

\bibitem{holgercomm}
H.~Dell.
\newblock Personal communication, 2017.

\bibitem{DeutschT-dbpl01}
A.~Deutsch and V.~Tannen.
\newblock Optimization properties for classes of conjunctive regular path
  queries.
\newblock In {\em DBPL}, pages 21--39, 2001.

\bibitem{DowneySiam-95}
R.~G. Downey and M.~R. Fellows.
\newblock Fixed-parameter tractability and completeness {I:} basic results.
\newblock {\em {SIAM} J. Comput.}, 24(4):873--921, 1995.

\bibitem{DowneyTCS-95}
R.~G. Downey and M.~R. Fellows.
\newblock Fixed-parameter tractability and completeness {II:} on completeness
  for {W}[1].
\newblock {\em Theoretical Computer Science}, 141(1):109 -- 131, 1995.

\bibitem{FlorescuLS-pods98}
D.~Florescu, A.~Y. Levy, and D.~Suciu.
\newblock Query containment for conjunctive queries with regular expressions.
\newblock In {\em PODS}, pages 139--148, 1998.

\bibitem{FlumG-springer06}
J.~Flum and M.~Grohe.
\newblock {\em Parameterized Complexity Theory}.
\newblock Texts in Theoretical Computer Science. An {EATCS} Series. Springer,
  2006.

\bibitem{fomin}
F.~V. Fomin, D.~Lokshtanov, F.~Panolan, and S.~Saurabh.
\newblock Efficient computation of representative families with applications in
  parameterized and exact algorithms.
\newblock {\em J. {ACM}}, 63(4):29:1--29:60, 2016.

\bibitem{FortuneHW-TCS80}
S.~Fortune, J.~Hopcroft, and J.~Wyllie.
\newblock The directed subgraph homeomorphism problem.
\newblock {\em Theoretical Computer Science}, 10(2):111 -- 121, 1980.

\bibitem{golenberg2011}
K.~Golenberg, B.~Kimelfeld, and Y.~Sagiv.
\newblock Optimizing and parallelizing ranked enumeration.
\newblock {\em {PVLDB}}, 4(11):1028--1039, 2011.

\bibitem{GroheICALP-07}
M.~Grohe and M.~Gr{\"{u}}ber.
\newblock Parameterized approximability of the disjoint cycle problem.
\newblock In {\em International Colloquium on Automata, Languages and
  Programming (ICALP)}, pages 363--374, 2007.

\bibitem{KalinskyEK-edbt17}
O.~Kalinsky, Y.~Etsion, and B.~Kimelfeld.
\newblock Flexible caching in trie joins.
\newblock In {\em International Conference on Extending Database Technology
  (EDBT)}, pages 282--293, 2017.

\bibitem{KimelfeldS-icdt13}
B.~Kimelfeld and Y.~Sagiv.
\newblock Extracting minimum-weight tree patterns from a schema with
  neighborhood constraints.
\newblock In {\em International Conference on Database Theory (ICDT)}, pages
  249--260, 2013.

\bibitem{LapaughR-jcss80}
A.~S. Lapaugh and R.~L. Rivest.
\newblock The subgraph homeomorphism problem.
\newblock {\em Journal of Computer and System Sciences}, 20(2):133 -- 149,
  1980.

\bibitem{lawler}
E.~L. Lawler.
\newblock A procedure for computing the k best solutions to discrete
  optimization problems and its application to the shortest path problem.
\newblock {\em Management Science}, 18(7):401--405, 1972.

\bibitem{LibkinMV-jacm16}
L.~Libkin, W.~Martens, and D.~Vrgoc.
\newblock Querying graphs with data.
\newblock {\em J. {ACM}}, 63(2):14:1--14:53, 2016.

\bibitem{LosemannM-tods13}
K.~Losemann and W.~Martens.
\newblock The complexity of regular expressions and property paths in {SPARQL}.
\newblock {\em {ACM} Trans. Database Syst.}, 38(4):24:1--24:39, 2013.

\bibitem{Makinen-actaCyber97}
E.~M{\"a}kinen.
\newblock On lexicographic enumeration of regular and context-free languages.
\newblock {\em Acta Cybern.}, 13(1):55--62, 1997.

\bibitem{mendelzon}
A.~O. Mendelzon and P.~T. Wood.
\newblock Finding regular simple paths in graph databases.
\newblock {\em SIAM J. Comput.}, 24(6):1235--1258, Dec. 1995.

\bibitem{murty}
K.~G. Murty.
\newblock An algorithm for ranking all the assignments in order of increasing
  cost.
\newblock {\em Operations Research}, 16(3):682--687, 1968.

\bibitem{cypher}
Neo4J.
\newblock Intro to cypher.
\newblock \url{https://neo4j.com/developer/cypher-query-language/}, 2017.

\bibitem{opencypher}
Opencypher.
\newblock \url{www.opencypher.org}.
\newblock Visited on Sept.\ 14, 2017.

\bibitem{CIP17}
S.~Plantikow, M.~Rydberg, and P.~Selmer.
\newblock {CIP}2017-01-18 -- configurable pattern matching semantics.
\newblock
  \url{https://github.com/boggle/openCypher/blob/isomatch/cip/1.accepted/CIP2017-01-18-configurable-pattern-matching-semantics.adoc}.
\newblock Accessed: 2017-08-08.

\bibitem{lapaugh-papadimitriou}
A.~S.~Lapaugh and C.~H.~Papadimitriou.
\newblock The even‐path problem for graphs and digraphs.
\newblock {\em Networks}, 14:507 -- 513, 12 1984.

\bibitem{slivkins}
A.~Slivkins.
\newblock Parameterized tractability of edge-disjoint paths on directed acyclic
  graphs.
\newblock {\em SIAM Journal on Discrete Mathematics}, 24(1):146--157, 2010.

\bibitem{valiant}
L.~Valiant.
\newblock The complexity of computing the permanent.
\newblock {\em Theoretical Computer Science}, 8(2):189 -- 201, 1979.

\bibitem{w3c-sparql11}
Sparql 1.1 query language.
\newblock \url{https://www.w3.org/TR/sparql11-query/}, 2013.
\newblock World Wide Web Consortium.

\bibitem{w3c}
World wide web consortium.
\newblock \url{www.w3.org}.
\newblock Visited on Sept.\ 14, 2017.

\bibitem{Yannakakis-pods90}
M.~Yannakakis.
\newblock Graph-theoretic methods in database theory.
\newblock In {\em Symposium on Principles of Database Systems (PODS)}, pages
  230--242, New York, NY, USA, 1990. ACM.

\bibitem{yen}
J.~Y. Yen.
\newblock Finding the k shortest loopless paths in a network.
\newblock {\em Management Science}, 17(11):712--716, 1971.

\end{thebibliography}

\onecolumn
\newpage

\appendix
In the Appendix we provide proofs for which there was no space in the
body of the paper. In some proofs, we indicate by \fbox{$\cdots$}
where we continue a proof that was partly presented in the body.

\makeatletter{}\section{Proofs for Section~\ref{sec:preliminaries}}

\splitgraphlemma*
\begin{proof}
  This is an easy consequence of the construction.
  \end{proof}

\linegraphlemma*
\begin{proof}
  This is an easy consequence of the construction.
  \end{proof}

\makeatletter{}

\noindent \textbf{Notation.} In the appendix, we sometimes use
\emph{$u$-$v$-path} to refer to a \emph{path from $u$ to $v$}.

\section{Proofs for Section~\ref{sec:enum-yen}}\label{app:enum-yen}
\propprefixclosed*
\begin{proof}
  Assume that $(G,s,t)$ and $r$ is an input for \enumnodespaths such that
  $L(r)$ is downward closed. Let $N = (Q,\Sigma,\delta, Q_I, Q_F)$ be
  an NFA for $r$.  We change Algorithm~\ref{alg:yen} as follows:
  \begin{itemize}
  \item In line~\ref{alg:yen:3}, instead of finding a shortest path
    $p$ in $G$, we first find a shortest path $p$ in $(G,s,t) \times
    N$. We then replace every node of the form $(u,q) \in V\times Q$
    in $p$ by $u$.
  \item In line~\ref{alg:yen:12} we need to find a shortest path in a
    product between $(G',p[i,i],t)$ and $N$. More precisely, let $J =
    \delta^*(\lab(p[0,i]))$ and denote by $N_J$ the NFA with initial
    state set $J$, that is, $(Q,\Sigma,\delta,J,Q_F)$. Then, in line
    \ref{alg:yen:12} we first find a shortest path $p_2$ from any node
    in $\{(p[i,i],q_i) \mid q_i \in \delta^*(\lab(p[0,i]))\}$ to any
    node in $\{(t,q_F) \mid q_F \in Q_F\}$ in $(G',p[i,i],t) \times
    N_J$. We then replace every node of the form $(u,q) \in V\times Q$
    in $p_2$ by $u$.
  \end{itemize}

  \fbox{$\cdots$} We now prove that this leads to a polynomial delay algorithm for
  \enumnodespaths.  As the \productautomaton can be constructed in time
  $O(|G||N|)$, the algorithm still runs in polynomial delay. 

  To prove that this algorithm is correct, we first show that no path
  is written to the output more than once: Each such path is stored in
  $A$ and cannot be found again, because the prefix $p[0,i]$ differs
  or at least one edge will be deleted in line~\ref{alg:yen:9}.

  We now prove that the algorithm only writes simple paths that match
  $r$ to the output. Each shortest path $p_2$ considered in the
  \productautomaton $(G',p[i,i],t) \times N$ in line~\ref{alg:yen:12} is, after
  replacing nodes $(u,q)$ with $u$, a simple path in $G$, because $L$
  is downward closed.
        Therefore, since $p_2$ is disjoint from $V(p[0,i-1])$ due to
  line~\ref{alg:yen:8} of the algorithm, $p[0,i]\cdot p_2$ is also a simple
  path that matches $r$.

  Finally, we prove that the algorithm finds all such simple paths.
      If a simple path $p$ in $(G,s,t)$ matches $r$, then this path is also a
  simple path in $(G,s,t) \times N$. So, we can find this path using the
  changed algorithm if and only if we do not delete any edge from $p$
  in $G$, which is only done in Line~\ref{alg:yen:9}.  But we did
  not change this line, so it follows from the correctness of Yen's
  algorithm that $p$ can be found.
\end{proof}

\trailtopathfptreduction*
\begin{proof}
			Given a graph $(G,s,t)$, we will construct a graph $(H,s',t')$ such
	that there exists a simple path from $s'$ to $t'$ matching $a r$ in
	$H$ if and only if there exists a trail from $s$ to $t$ matching $r$
	in $G$, where $a$ is an arbitrary symbol. 
	Excluding $s'$ and $t'$, the graph $H$ is the line graph of $G$. 	We can then enumerate all possible $a$-edges that start in $s'$ to obtain up to $n$ new instances $(H_1,s'_1,t'), \ldots (H_n,s'_n,t')$, such that there exists a trail from $s$ to $t$ matching $r$ in $G$ if and only if there exists an $i$ such that there is a simple path from $s'_i$ to $t'$ in $H_i$ that matches $r$.
	
		So it remains to give the construction of $H$ and prove the correctness of the reduction. Let $a \in \Sigma$ be fixed. 
	Let $H=(V',E')$ with $V' = \{v_e \mid e \in E\}\cup \{s',t'\}$ and $E' = \{(v_{(u_1,\sigma_1,u_2)},\sigma_1,v_{(u_2,\sigma_2,u_3)})\mid u_1, u_2, u_3 \in V\} \cup \{(s',a ,v_{(s,\sigma,u)}), (v_{(u,\sigma,t)},\sigma,t') \}$. 
		An example of this reduction can be seen in Figure~\ref{fig:edgeToNodes}.
	We will now show the correctness of the reduction. Assume there exists a path $$p = (s,a_0,v_1)(v_1,a_1, v_2)\cdots (v_k,a_{k},t)$$ from $s$ to $t$ in $G$ that matches $r$ and has pairwise disjoint edges. Then the path $$p' = (s',a,v_{(s,a_0, v_1)})(v_{(s,a_0, v_1)},a_0,v_{(v_1,a_1, v_2)})(v_{(v_1,a_1, v_2)},a_1,v_{v_2,a_2,v_3})\cdots (v_{(v_k,a_k,t)},a_{k},t')$$ is a simple path from $s'$ to $t'$ in $H$ that matches $a r$. The other direction follows analogous since each path from $s'$ to $t'$ in $H$ that matches $a r$ has this form and we can therefore find the corresponding path from $s$ to $t$ in $G$.
\end{proof}

We note that, in the above proof there is a clear correspondence
between nodes in $H_i$ and edges in $G$.
\begin{corollary}
  Furthermore, each node in $H_i$, except for $s_i$ and $t_i$, corresponds to exactly one edge in $G$. 
\end{corollary}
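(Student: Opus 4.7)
The plan is to read the claim directly off the construction used in the proof of Lemma~\ref{lemma:edgeToNodes}. Recall that the graph $H$ is built with vertex set $V' = \{v_e \mid e \in E\} \cup \{s', t'\}$, so the map $v_e \mapsto e$ is a bijection between $V' \setminus \{s', t'\}$ and the edge set $E$ of $G$. Each instance $(H_i, s_i, t_i)$ inherits this vertex set (it is the same graph $H$, only with a distinguished source and target), so the vertices of $H_i$ other than $s_i$ and $t_i$ are of the form $v_e$ for a unique $e \in E$. This is essentially the content of the corollary.

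The only thing I would verify carefully is exactly which vertices play the roles of $s_i$ and $t_i$. From the preceding proof, $t_i = t'$ for every $i$, and $t'$ is outside $\{v_e \mid e \in E\}$, so it correctly needs to be excluded. For $s_i$, one can either read the construction as taking $s_i = s'$ (in which case $s'$ is also outside $\{v_e \mid e \in E\}$), or as replacing the start by $v_{(s, \sigma, u_i)}$ for the $i$-th outgoing edge of $s$; in the latter case $s_i$ does correspond to an edge of $G$, but it is explicitly excluded by the statement. Either reading makes the mapping $v \mapsto e$ a well-defined bijection from $V(H_i) \setminus \{s_i, t_i\}$ onto a subset of $E$, and by construction every edge $e$ of $G$ gives rise to exactly one such vertex.

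There is really no obstacle here: the corollary is an immediate structural observation, and its proof is one line once the construction of $H$ has been unpacked. If anything, the ``work'' lies only in fixing notation and pointing out that the bijection $v_e \leftrightarrow e$ is inherent in the definition of $V'$. This is why the authors phrase it as a corollary rather than a lemma, and why the short remark preceding the statement already conveys the essence of the proof.
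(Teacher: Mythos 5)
Your proposal is correct and matches the paper's treatment: the corollary is stated as an immediate consequence of the construction in Lemma~\ref{lemma:edgeToNodes}, whose vertex set $V' = \{v_e \mid e \in E\} \cup \{s',t'\}$ already encodes the bijection $v_e \leftrightarrow e$, and the paper offers no further argument beyond the remark preceding the statement. Your additional care about which vertex plays the role of $s_i$ is a reasonable clarification but does not change the substance.
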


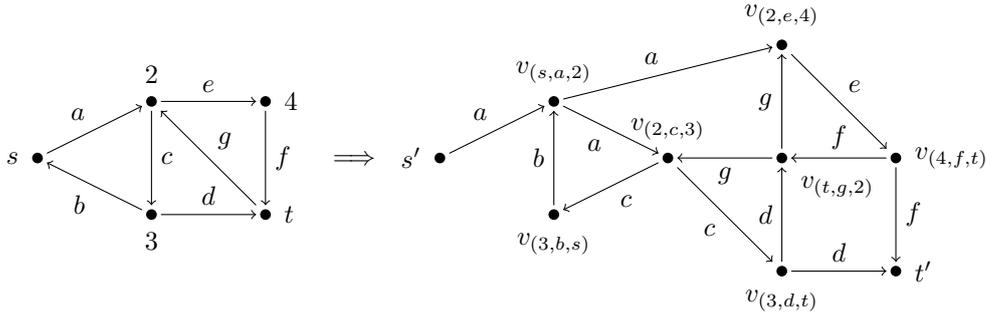
\begin{figure}[t]
	\centering
	\begin{minipage}{.3\linewidth}
		\begin{tikzpicture}[->, scale = 1, auto ,b/.style={fill, circle, inner sep = 1pt}]
		\def\stretch{1.5};
		\node [label=left:$s$] (v1) at (\stretch*0,\stretch*0) {};
		\node [label=above:$2$] (v2) at (\stretch*1,\stretch*0.5) {};
		\node [label=below:$3$] (v3) at (\stretch*1,\stretch*-0.5) {};
		\node [label=right:$4$] (v4) at (\stretch*2,\stretch*0.5) {};
		\node [label=right:$t$] (v5) at (\stretch*2,\stretch*-.5) {};
		\foreach \i in {1,2,3,4,5}{
			\fill (v\i) circle (2pt);
		}
		\path 
		(v1) edge	[->]  node  {$a$} (v2)
		(v3) edge	 [->]    node  {$b$} (v1) 
		(v2) edge	 [->]    node  {$c$} (v3)
		(v3) edge	 [->]    node  {$d$} (v5)
		(v2) edge	 [->]    node  {$e$} (v4)
		(v4) edge	 [->]    node  {$f$} (v5)
		(v5) edge	 [->,swap]    node  {$g$} (v2) 
		;
		\end{tikzpicture}
	\end{minipage} 
	$\implies$
	\begin{minipage}{.58\linewidth}
		\begin{tikzpicture}[->, scale = 1, auto ,b/.style={fill, circle, inner sep = 1pt}]
		\def\stretch{3};
		\node [label=left:$s'$] (v1) at (\stretch*0,\stretch*0) {};
		\node (v2) at (\stretch*1,\stretch*0.5) {};
		\node (v3) at (\stretch*1,\stretch*-0.5) {};
		\node (v4) at (\stretch*2,\stretch*0.5) {};
		\node [label=right:$t'$] (v5) at (\stretch*2,\stretch*-.5) {};
		\node [label=above:$v_{(s,a,2)}$] (e1) at ($(v1)!0.5!(v2)$) {};
		\node [label=below:$v_{(3,b,s)}$] (e2) at ($(v3)!0.5!(v1)$) {};
		\node [label=above:$v_{(2,c,3)}$] (e3) at ($(v2)!0.5!(v3)$) {};
		\node [label=below:$v_{(3,d,t)}$] (e4) at ($(v3)!0.5!(v5)$) {};
		\node [label=above:$v_{(2,e,4)}$] (e5) at ($(v2)!0.5!(v4)$) {};
		\node [label=right:$v_{(4,f,t)}$] (e6) at ($(v4)!0.5!(v5)$) {};
		\node [label=below right:$v_{(t,g,2)}$] (e7) at ($(v5)!0.5!(v2)$) {};
		\foreach \i in {1,2,3,4,5,6,7}{
			\fill (e\i) circle (2pt);
		}
		\fill (v1) circle (2pt);
		\fill (v5) circle (2pt);
		
		\path 
		(v1) edge	[->]  node  {$a$} (e1)
		(e2) edge	 [->]    node  {$b$} (e1) 
		(e1) edge	 [->,swap]    node  {$a$} (e3)
		(e3) edge	 [->]    node  {$c$} (e2)
		(e3) edge	 [->,swap]    node  {$c$} (e4)
		(e4) edge	 [->]    node  {$d$} (v5)
		(e5) edge	 [->]    node  {$e$} (e6) 
		(e6) edge	 [->]    node  {$f$} (v5)
		(e4) edge	 [->]    node  {$d$} (e7)
		(e6) edge	 [->,swap]    node  {$f$} (e7)
		(e7) edge	 [->]    node  {$g$} (e3) 
		(e1) edge	 [->]    node  {$a$} (e5)
		(e7) edge	 [->]    node  {$g$} (e5) 
		;
		\end{tikzpicture}
	\end{minipage}
	\caption{Example of a part of the reduction in Lemma~\ref{lemma:edgeToNodes}. There exists a trail from $s$ to $t$ matching $r$ in the left graph if and only if there exists a simple path from $s'$ to $t'$ matching $a \cdot r$ in the right graph.}
	\label{fig:edgeToNodes}
\end{figure}

\subsection{Proofs for Section \ref{sec:BeyondDWC}}  \label{app:yen}

In the following proof, we need \emph{language derivatives}.  For a
language $L$ and word $w$, the \emph{left derivative}\footnote{These
  are sometimes also called Brzozowski derivatives of $L$
  \cite{BrzozowskiJ-JACM64}.}  \mbox{w.r.t.} $w$, denoted $w^{-1} L$,
is defined as $\{v \mid wv \in L\}$. 

\BaganPolyDelay*
\begin{proof}
  Part (b) immediately follows from \cite[Theorem 1]{bagan}. It
  therefore only remains to prove (a). Our plan is to use Bagan et
  al.'s algorithm for simple paths as a subroutine in Yen's
  algorithm. We refer to Bagan et al.'s algorithm as the BBG
  algorithm.

  So, we adapt Yen's algorithm by calling BBG in lines
  \ref{alg:yen:3} and \ref{alg:yen:12}, so that the algorithm
  receives
  \begin{enumerate}[(i)]
  \item a simple path from $s$ to $t$ that matches $r$ in line
    \ref{alg:yen:3} and
  \item a simple path $p_2$ from $p[i,i]$ to $t$ such that
    $p[0,i]\cdot p_2$ matches $r$ in line \ref{alg:yen:12},
  \end{enumerate}
  respectively. (We do not need to change line \ref{alg:yen:12}.)
  We refer to the adapted algorithm as YenSimple.

  Change (i) to Yen's algorithm is trivial. Change (ii) can be done by
  calling BBG with $G'$ for the language of the automaton $N_J$ in the
  proof of Proposition~\ref{prop:prefixclosed}.

  \fbox{$\cdots$}
  We show that YenSimple is correct. First of all, notice
  that the algorithm still does not output duplicates, because the for-loop
  in line~\ref{alg:yen:9} deletes at least one edge of all paths that
  are already in $A$.
	
  Second, we show that the algorithm writes all simple paths matching
  $r$ into the output.  Therefore, 
      let $\pi$ be a simple path from $s$ to $t$ in $G$ that
  matches $r$.
    Due to the correctness of
  the BBG algorithm, YenSimple has found a simple path matching $r$ in
  line~\ref{alg:yen:3} and therefore $A\neq \emptyset$. 
  We now consider an arbitrary iteration of the while-loop and prove
  that either $\pi$ must have been found already or YenSimple will find
  a path $\tilde{\pi}$ that shares a longer prefix with $\pi$ than all
  paths in $A$. Clearly, this shows that $\pi$ will eventually be found.
  
  Assume that we are at the beginning of the while-loop and let $S$ be
  the set of paths in $A$ that share the longest prefix with $\pi$,
  that is, $S =\{\pi' \in A \mid \exists 0 \leq i \leq |V|:
  \pi'[0,i]=\pi[0,i]$ and there exists no path $\tilde{\pi}\neq \pi'$
  in $A$ with $\tilde{\pi}[0,i+1]=\pi[0,i+1]\}$. Since $A \neq
  \emptyset$ and $A$ only contains paths from $s$ to $t$ and
  $\pi[0,0]=s$, i.e., they share at least the first node, $S$ is not
  empty. Take $\pi' \in S$ such that $\pi'$ was the last element in
  $S$ that was added to $A$ (and therefore written to the output).  As
  $\pi'$ and $\pi$ are both simple paths from $s$ to $t$ and $\pi$ was
  not yet written to the output, $\pi'$ must have at least one edge
  that is not in $\pi$.

  After having added $\pi'$ to $A$, YenSimple searched, for each $i$
  from $1$ to $|\pi'|$, a simple path having $\pi'[0,i-1]$ as prefix, but
  not having the edge $(\pi'[i-1,i])$. Let $i'$ be maximal with
  $\pi'[0,i'-1]=\pi[0,i'-1]$.  Notice that the edge $(\pi'[i'-1,i'])$ was
  not deleted in line~\ref{alg:yen:9}, because otherwise there would
  have been a path $\tilde{\pi} \in A$ with $\tilde{\pi}[0,i'] = \pi[0,i']$,
  contradicting $\pi' \in S$.  So, if $\pi$ is a simple path from $s$ to $t$, $\pi$
  must have been found already or the algorithm will find a
  simple path $\tilde{\pi}$ with $\tilde{\pi}[0,i] = \pi[0,i]$ that is
  not in $A$. This concludes the proof that all simple paths from $s$ to $t$ that
  match $r$ will be found by YenSimple.

  Before we turn to complexity, we need a simple observation. 
  \begin{observation}\label{obs:ctract-derivative}
    \ctract is closed under taking left derivatives, that is, if $L
    \in \ctract$ and $w$ is a word, then $w^{-1} L \in \ctract$.
  \end{observation}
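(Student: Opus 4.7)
The plan is to show that the very same witness $i$ that works for $L$ also works for $w^{-1}L$. That is, I would establish: if $L$ can be $i$-loop abbreviated, then so can $w^{-1} L$.

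Concretely, I would unfold the definitions. Suppose $L \in \ctract$ and pick $i \in \nat$ such that $L$ is $i$-loop abbreviated. Write $L' = w^{-1} L = \{v \mid wv \in L\}$. To show $L'$ is $i$-loop abbreviated, I would fix arbitrary $u_\ell, u, u_r \in \Sigma^*$ and $u_1, u_2 \in \Sigma^+$ and assume $u_\ell u_1^i u u_2^i u_r \in L'$. By definition of $L'$, this means $w u_\ell u_1^i u u_2^i u_r \in L$. Now apply the $i$-loop abbreviation property of $L$ by choosing its parameters as $w_\ell := w u_\ell$ (which is still in $\Sigma^*$), $w := u$, $w_r := u_r$, $w_1 := u_1$, and $w_2 := u_2$. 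The hypothesis then yields $w u_\ell u_1^i u_2^i u_r \in L$, which translates back to $u_\ell u_1^i u_2^i u_r \in L'$, as desired.

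Since $i$ was fixed at the start and did not depend on the choice of $u_\ell, u, u_r, u_1, u_2$, this shows $L' \in \ctract$. There is no real obstacle here; the closure is essentially a syntactic consequence of the fact that the loop-abbreviation condition quantifies over arbitrary prefixes $w_\ell$, so a fixed prefix $w$ can always be absorbed into $w_\ell$. The observation is therefore immediate from Definition~\ref{def:loopabbrev}.
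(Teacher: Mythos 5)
Your proof is correct and uses the same key idea as the paper's: the prefix $w_\ell$ in Definition~\ref{def:loopabbrev} is universally quantified, so the fixed word $w$ can be absorbed into it, and the same witness $i$ works for $w^{-1}L$. The paper merely phrases the identical argument in contrapositive form (if $w^{-1}L \notin \ctract$ then $L \notin \ctract$), which is not a substantive difference.
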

  \begin{proof}
    The observation immediately follows from the definition of
    \ctract. Indeed, if $w^{-1}L \notin \ctract$, then, for every $i \in
    \nat$ there exist words $w_\ell, w,w_r \in \Sigma^*, w_1, w_2 \in
    \Sigma^+$, such that $w_\ell w_1^i w w_2^i w_r \in w^{-1}L$ but
    $w_\ell w_1^i w_2^i w_r \notin w^{-1}L$. However, then we also
    have for every $i \in \nat$ that
    $w w_\ell w_1^i w w_2^i w_r \in L$ and $ww_\ell w_1^i w_2^i
    w_r \notin L$, which contradicts that $L \in \ctract$.
  \end{proof}

  We now turn to complexity. In terms of data complexity, the time
  bound of Yen's algorithm (i.e., polynomial delay) is not affected by
  searching simple instead of shortest paths, since BBG operates in
  polynomial time for $L(r)$, which is in \ctract \cite[Lemma
  16]{bagan}. The same holds for $L(N_J)$, which is in \ctract due to
  Observation~\ref{obs:ctract-derivative}. This concludes the proof.
\end{proof}

The algorithm for Theorem~\ref{theorem:BaganPolyDelay}(a) does not yield any order on
the paths.  But an order from shortest to longest paths can easily be
obtained by changing BBG to output a shortest path. As Bagan et al.\
already note \cite[Section 3.2]{bagan}, this is indeed a simple change
in their algorithm.\footnote{See also Lemma~\ref{BaganPolyDelayOrder} in Appendix~\ref{app:yen}.} 
Moreover, it is also possible to use Ackerman and Shallit's algorithm \cite{Ackerman-TCS09} 
for finding shortest and lexicographically smallest paths in the BBG
algorithm. It is therefore also possible to enumerate the paths in
radix order in polynomial delay.

\begin{lemma}\label{BaganPolyDelayOrder}
   If $L(r) \in \ctract$, then 
   in terms of data complexity, $\enumspaths(r)$ can be solved in
   polynomial delay and with all paths enumerated in radix order.
\end{lemma}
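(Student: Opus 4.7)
The plan is to adapt the algorithm YenSimple from the proof of Theorem~\ref{theorem:BaganPolyDelay}(a) by replacing ``shortest path'' with ``radix-smallest path'' throughout. Concretely, I would make two changes: first, modify the BBG subroutine invoked in lines~\ref{alg:yen:3} and~\ref{alg:yen:12} so that it returns the radix-smallest simple path whose label lies in the relevant language (namely $L(r)$ for line~\ref{alg:yen:3} and the left derivative $\lab(p[0,i])^{-1}L(r)$ for line~\ref{alg:yen:12}); second, change line~\ref{alg:yen:15} so that the path extracted from $B$ is the radix-smallest one rather than the shortest. Call this modified procedure YenRadix.

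First I would justify that BBG can indeed be adapted to return radix-smallest simple paths in polynomial data complexity. Bagan et al.~\cite[Section 3.2]{bagan} already note that their algorithm can be changed to return shortest simple paths. To break ties in lexicographic order within a length class, I would combine this with the Ackerman--Shallit technique~\cite{Ackerman-TCS09} used in Theorem~\ref{theo:ackermanAndShallit}, which produces radix-smallest words accepted by an NFA. Since \ctract is closed under taking left derivatives (Observation~\ref{obs:ctract-derivative}), every language appearing in a recursive call of YenRadix is still in \ctract, so the adapted BBG runs in polynomial time in $|G|$ for every iteration (as $|r|$ is constant in data complexity).

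Next I would prove correctness of YenRadix along the lines of the correctness proof of Theorem~\ref{theorem:BaganPolyDelay}, but with ``shortest'' replaced by ``radix-smallest''. Duplicate-freeness is inherited unchanged, since the for-loop in line~\ref{alg:yen:9} still deletes at least one edge of every path already stored in $A$. For the radix-order invariant I would argue as follows: let $\pi$ be the radix-smallest path from $s$ to $t$ matching $r$ that has not yet been output, and let $\pi' \in A$ share the longest common prefix $\pi[0,i]$ with $\pi$. When $\pi'$ was added to $A$, YenRadix inserted into $B$ the path $\pi'[0,i] \cdot q$ where $q$ is the radix-smallest simple continuation from $\pi[i]$ to $t$ in the current $G'$ whose label matches $\lab(\pi[0,i])^{-1}L(r)$. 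Because $\pi$ is available as a continuation, $\pi'[0,i] \cdot q$ is at most as radix-large as $\pi$, and maximality of the shared prefix forces $\pi'[0,i] \cdot q = \pi$. Hence $\pi \in B$, and since line~\ref{alg:yen:15} picks the radix-smallest element of $B$, $\pi$ is the next path output.

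The main obstacle, and the place to be careful, is the modification of BBG to produce the \emph{radix}-smallest simple path (not merely a shortest one). Length minimization fits naturally into Bagan et al.'s dynamic programming, but lexicographic tie-breaking must be threaded through the same computation without disturbing its polynomial bound. I would handle this by first computing the minimum length $\ell$ of a matching simple path using the length-optimized BBG, then reconstructing the radix-smallest representative at length $\ell$ by a symbol-by-symbol greedy search: at each step, for every candidate next label $a$ (in alphabetical order), test in polynomial time whether the current partial path can be extended to a length-$\ell$ simple path whose remaining label lies in the appropriate further derivative, again a \ctract language by Observation~\ref{obs:ctract-derivative}. Feasibility of this test is exactly an instance of BBG on a smaller product graph. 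Once radix-smallest BBG runs in polynomial data complexity, the overall delay of YenRadix is polynomial for the same reasons as in Theorem~\ref{theorem:BaganPolyDelay}, giving the claim.
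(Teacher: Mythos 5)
Your overall skeleton coincides with the paper's: run Yen's algorithm with a ``radix-smallest matching simple path'' subroutine in lines~\ref{alg:yen:3} and~\ref{alg:yen:12}, pick the radix-smallest element of $B$ in line~\ref{alg:yen:15}, and use Observation~\ref{obs:ctract-derivative} to keep every derivative language inside $\ctract$. Where you genuinely diverge is in how the radix-smallest subroutine is obtained. The paper performs surgery \emph{inside} the BBG algorithm: it enumerates all candidate summaries as in \cite[Lemma 15]{bagan}, replaces each restricted segment by the radix-smallest one computed via Ackerman--Shallit (Theorem~\ref{theo:ackermanAndShallit}), uses the fact that shortest simple paths are always admissible \cite[Lemma 13]{bagan}, and keeps the global radix-minimum over all summaries. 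You instead treat BBG as a black box and wrap a greedy prefix-extension search around it. Your route is more modular and avoids relying on internals of Bagan et al.'s candidate-summary machinery, which is a real advantage if it works.

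The step that does not work as written is the feasibility oracle. You claim that testing whether a partial path $p$ extends to a simple matching path of total length \emph{exactly} $\ell$ ``is exactly an instance of BBG on a smaller product graph.'' It is not: BBG decides $\nodespath$ for a fixed language in $\ctract$ with no length parameter, and the two natural encodings of the length constraint both fail. Taking a product with a length-counting automaton destroys the correspondence between simple paths in the product and simple paths in $G$; and intersecting the derivative language with $\Sigma^{\ell-|p|}$ yields a language that varies with the input graph, so the fixed-language data-complexity bound of \cite[Theorem 2]{bagan} no longer applies (compare the paper's remark that $\nodespath(a^n)$ with $n=|V|$ is Hamiltonian Path). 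The gap is repairable, but it needs an argument you did not give: since $\ell$ is the \emph{minimum} length of a matching simple path in the current graph, any simple matching extension of $p\cdot e$ has length at least $\ell-|p|-1$, so ``extendable to total length exactly $\ell$'' is equivalent to ``the shortest simple matching extension has length $\ell-|p|-1$,'' and the latter is computable by the shortest-path variant of BBG on the derivative language, which is still in $\ctract$. With that patch (and with the greedy performed over edges rather than labels, since radix order on paths is radix order on edge sequences), your argument goes through and the delay analysis is as in Theorem~\ref{theorem:BaganPolyDelay}.
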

\begin{proof}
Instead of showing that Yen's algorithm also works with simple paths, as we did in the proof of Theorem \ref{theorem:BaganPolyDelay} part a), we slightly
change the algorithm in Bagan et al.'s paper
  \cite{bagan} to find a shortest and lexicographically smallest simple path. Then we can use this algorithm as subroutine in Yen's algorithm.

To this end, we change the algorithm introduced in \cite[Lemma 15]{bagan}. There, in the second step, we replace $(\text{left}_i,\text{cut}_{C_i}, \text{right}_i)$ with a smallest simple $\text{Set}_i$-restricted $\Sigma^*_{C_i}$ path in radix order from $\text{left}_i$ to $\text{right}_i$ (i.e., we require additionally that it is a lexicographically smallest path in radix order). We can find such a path using the algorithm of Ackerman and Shallit~\cite{Ackerman-TCS09}, see also Theorem~\ref{theo:ackermanAndShallit}. (We view the subgraph of $G$ that contains only $\text{Set}_i$-restricted nodes (and $\text{right}_i$) and $\Sigma_{C_i}$-labeled edges between these nodes as NFA with start state $\text{left}_i$ and final state $\text{right}_i$.) Since shortest simple paths are always admissible \cite[Lemma 13]{bagan}, we are indeed able to find a smallest simple path in radix order in the following way: 
Just like Bagan et al., we enumerate all possible candidate summaries $S$ w.r.t.\ $(L(r),G,s,t)$ and then apply to each the adapted algorithm from \cite[Lemma 15]{bagan}. If we find a solution, we do not return `yes' immediately, but continue the enumeration while holding the smallest solution w.r.t. radix order in our storage and update it when necessary.
 
As the algorithm of Ackerman and Shallit~\cite{Ackerman-TCS09} runs in polynomial time, the adapted algorithm \cite[Lemma 15]{bagan} still runs in polynomial time. Since $w^{-1}L(r) \in \ctract$ (see Observation \ref{obs:ctract-derivative}), we can use this algorithm as subroutine in Yen's algorithm in the lines \ref{alg:yen:3} and \ref{alg:yen:12} to obtain a polynomial delay algorithm that enumerates all paths in radix order.
\end{proof}

\bagantrails*
\begin{proof}
  Part (a) follows directly from Lemma~\ref{lemma:edgeToNodes} and the
  upper bound of Bagan et al.~\cite[Theorem 2]{bagan}.

  It remains to show (b). The upper bound again follows from
  Lemma~\ref{lemma:edgeToNodes}. The hardness is similar to
  \cite[Lemma 2]{bagan}, as we show next.

  \fbox{$\cdots$}  Let $L(r) \notin \ctract$. We exhibit a reduction from the
  \twodisjointpaths problem, that is: Given a graph $G = (V,E)$ and
  nodes $s_1,t_1,s_2,t_2$. Is there a pair of node-disjoint simple
  paths in $G$, one from $s_1$ to $s_2$ and the other from $s_2$ to
  $t_2$?  This problem is NP-complete \cite{FortuneHW-TCS80} and can
  be transformed into the corresponding trail problem using
  \cite[Lemma 1]{LapaughR-jcss80}.                         	
	Let $N= (Q,\Sigma,\Delta,Q_I, Q_F)$ be an NFA with $Q_I=s_N$ and $L(N) = L(r)$. 
	For this proof we need a different definition of
        \ctract. A language $L(r) \notin \ctract$ if and only if there
        exists a \emph{witness for hardness} $(q, w_\text{m},
        w\text{r}, w_1, w_2)$ where $q \in Q, w_\text{r} \in \Sigma^*,
        w_1 \in Loop(w_1)$, and $w_2, w_\text{m} \in \Sigma^+$
        satisfying $w_\text{m} w_2^* w_r \subseteq L_q$ and
        $(w_1+w_2)^*w_\text{r} \cap L_q = \emptyset$. 
              This definition is equivalent to Definition~\ref{def:loopabbrev} , see \cite[Definition 1, Theorem 4]{bagan}
	
	Since $L(r) \notin \ctract$, there exists a witness for
        hardness $(q, w_\text{r}, w_\text{m},w_1,w_2)$ \cite[Lemma
        1]{bagan}. Let $w_\ell$ be a word such that
        $\Delta^*(w_\ell) = q$.  By definition we have
	$w_\ell(w_1 + w_2)^*w_\text{r} \cap L = \emptyset$  and $w_\ell w_1^*w_\text{m}w_2^*w_\text{r} \subseteq L(r)$. 	
	We build from $G$ a generalized graph $G'$ whose edges are
        labeled by non empty words. The generalized graph $G'$ can
        easily be turned into a graph by adding intermediate nodes,
        replacing an edge labeled by word $w$ by a path whose edges form the word $w$.
	The graph $G'$ is constructed as follows. The nodes of $G'$ are the same as the nodes of $G$. For each edge $(v_1,v_2$) in G, we add two edges $(v1, w_1, v2)$ and $(v1, w_2, v2)$. Moreover, we add two new nodes $x, y$ and three edges $(x, w_\ell, x_1), (y_1, w_\text{m}, x_2)$, and $(y_2,wr,y)$.
	By construction, for every trail $p$ from $x$ to $y$ in $G'$
        that contains the edge $(y_1 , w_\text{m} , x_2 )$, we can
        obtain a similar path that matches a word in $w_\ell
        w_1^*w_\text{m}w_2^*w_\text{r}$ by switching $w_1$ and $w_2$
        edges, keeping the same nodes. Every trail $p$ from $x$ to $y$
        in $G$ that does not contain the edge $(y_1,w_\text{m},x_2)$
        matches a word in $w_\ell(w_1 + w_2)^*w_\text{r}$. By
        definition of $q \in Q, w_\text{r} \in \Sigma^*$,
        $w_\text{m},w_1,w_2 \in \Sigma^+$, no path of the form
        $w_\ell(w_1 + w_2)^*w_\text{r}$ matches $r$, whereas every
        path matching $w_\ell w_1^*w_\text{m}w_2^*w_\text{r}$
        automatically matches $r$. Thus, $\edgespath(r)$ returns
        ``yes'' for $(G',x,y)$ iff there is a trail from $x$ to $y$ in
        $G'$ that contains the edge $(y_1 , w_\text{m} , x_2 )$ that
        is, iff Two-Edge-Disjoint-Path returns ``yes'' for $(G,x_1,
        y_1,x_2,y_2)$. 
\end{proof}

\makeatletter{}
\section{Proofs for Section \ref{sec:ParamCompl}}\label{app:paramcompl}

\section*{Proofs for Section \ref{sec:onepath}}

We present how Theorem \ref{theorem:longdirpathInFPT} can be proved,
following the explanation we received from Holger Dell
\cite{holgercomm}. (To the best of our knowledge, the result and proof should be
attributed to the authors of \cite{fomin}.)  We first need some
terminology. The following is Definition 3.1 from \cite{fomin}, which
we rephrased from matroids to graphs to simplify presentation. \begin{definition}[$k$-representative family \cite{fomin}] 
  Given a graph $G=(V,E)$ and a family $\cS$ of subsets of $V$, and $k
  \in \nat$, we
    say that a
  subfamily $\hat{\cS} \subseteq \cS$ is \emph{$k$-representative for
    $\cS$} if the following holds: for every set $Y \subseteq V$ of
  size at most $k$, if there is a set $X \in \cS$ disjoint from $Y$
  with $|X \cup Y| \leq 2k$, then there is a set $\hat{X} \in \hat{\cS}$
  disjoint from $Y$ with $|\hat{X} \cup Y|\leq 2k$.  We abbreviate this
  by $\hat{\cS} \subseteq^k_\text{rep} \cS$.
\end{definition}
In the following we define 
\begin{multline*}
  P^k_{sv} = \{X \mid X\subseteq V \text{ such that } s,v \in X,|X| = k, 
   \text{ and there is a path from $s$ to $v$ in } G \\
   \text{ of length } k-1 \text{ containing exactly the nodes in } X\}
\end{multline*}
The following Lemma shows that, in order to find a solution for
\longdirpath, it suffices to consider paths where the first $k+1$
nodes belong to a set in $\hat{P}^{k+1}_{sv} \subseteq^{k+1}_\text{rep}
P^{k+1}_{sv}$. This lemma and its proof are analogous to Lemma 5.1 in
\cite{fomin}. We state it here because it is bordered for the
correctness of Algorithm~\ref{fptalg} and we need to adapt this proof
to show the correctness of Algorithms~\ref{b?fptalg} and
\ref{alg:block}, which build upon
Algorithm~\ref{fptalg}. 
\begin{lemma} \label{proofforfptalg} Assume that $\hat{P}^{k+1}_{sv}
  \subseteq^{k+1}_\text{rep} P^{k+1}_{sv}$. Then a graph $G=(V,E)$ has a
  simple $s$-$t$-path of length at least $k$ if and only if there
  exists a node $v \in V$ and $X \in \hat{P}^{k+1}_{sv}
  \subseteq^{k+1}_\text{rep} P^{k+1}_{sv}$, such that $G$ has a simple
  $s$-$t$-path of length at least $k$ with the first $k+1$ nodes
  belonging to $X$.
\end{lemma}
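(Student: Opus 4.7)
The plan is to handle the two directions separately. The forward direction is immediate: any simple $s$-$t$-path of length at least $k$ whose first $k+1$ nodes form some $X \in \hat{P}^{k+1}_{sv}$ is in particular a simple $s$-$t$-path of length at least $k$ in $G$.

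For the backward direction, I would start with an arbitrary simple $s$-$t$-path $p$ of length at least $k$ in $G$ and set $v = p[k]$, $X_p = V(p[0,k])$ (the first $k+1$ nodes of $p$), and $q = p[k,|p|]$ (the suffix from $v$ to $t$). Then $X_p \in P^{k+1}_{sv}$ by definition, and $X_p \cap V(q) = \{v\}$ because $p$ is simple.

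The main step is to invoke the $(k+1)$-representative property of $\hat{P}^{k+1}_{sv}$ with $Y = V(q) \setminus \{v\}$. In the ``short suffix'' case where $|Y| \leq k+1$, we have $|X_p \cup Y| \leq 2(k+1)$ and $X_p \cap Y = \emptyset$, so the property directly yields $\hat{X} \in \hat{P}^{k+1}_{sv}$ disjoint from $Y$. By definition of $P^{k+1}_{sv}$, this $\hat{X}$ corresponds to a simple $s$-$v$-path $\hat{p}$ of length $k$ using exactly the nodes in $\hat{X}$. Since $\hat{X} \cap V(q) = \{v\}$, the concatenation $\hat{p} \cdot q$ is a simple $s$-$t$-path of length $k + |q| \geq k$, and its first $k+1$ nodes form exactly $\hat{X}$, which is what we need.

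The main obstacle is the ``long suffix'' case, $|V(q) \setminus \{v\}| > k+1$, since then $Y$ is too large for the representative property to apply directly. To handle this, I would first replace $q$ by the shortest simple $v$-$t$-path $q^*$ in $G \setminus (X_p \setminus \{v\})$, which exists because $q$ itself is such a path. If $|V(q^*) \setminus \{v\}| \leq k+1$, we are back in the easy case. Otherwise, I would use an exchange argument in the spirit of the analogous Lemma~5.1 in Fomin et al.~\cite{fomin}: either reroute the prefix/suffix to reduce the suffix below the $k+1$ threshold, or slide the split point from $v$ to a later node $v' = p[k+j]$ on $p$ so that the remaining suffix is short enough, and then reapply the representative property at $v'$ (possibly combined with iterated applications along the way). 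This reduction to the short-suffix case is the most delicate ingredient of the proof.
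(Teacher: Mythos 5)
Your forward direction and your ``short suffix'' case are correct and coincide with the paper's argument (the paper's case $|p|\le 2k+1$). The genuine gap is the long-suffix case, which is exactly where the content of the lemma lies, and neither of your proposed fixes closes it. Replacing $q$ by a shortest simple $v$-$t$-path in $G\setminus(X_p\setminus\{v\})$ does not help: that shortest path can still have far more than $k+1$ nodes. Sliding the split point to $v'=p[k+j]$ is incompatible with the definitions: $P^{k+1}_{sv'}$ contains only node sets of size exactly $k+1$ realizable as paths of length exactly $k$ from $s$ to $v'$, so the first $k+j+1$ nodes of $p$ are not a member of any such family, and the conclusion of the lemma specifically concerns the first $k+1$ nodes of the witnessing path.

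The paper's resolution (following Lemma~5.1 of \cite{fomin}) rests on two ideas your plan is missing. First, take $p$ to be a \emph{shortest} simple $s$-$t$-path among those of length at least $k$. Second, apply the representative property not with the whole suffix but with $Y=V(Q)$, where $Q$ consists only of the \emph{last} $k+1$ nodes of $p$; then $|Y|\le k+1$ and $|V(P)\cup Y|\le 2(k+1)$ hold unconditionally, so one always obtains $P'$ with $V(P')\in\hat{P}^{k+1}_{sv_k}$ disjoint from $V(Q)$. The replacement $P'$ may intersect the middle segment $R$ of $p$, but then shortcutting $p$ at the first such intersection yields a simple $s$-$t$-path that still contains $Q$ (hence still has length at least $k$) yet is strictly shorter than $p$, since it avoids $v_k$ --- contradicting the minimality of $p$. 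Hence $P'$ is in fact disjoint from $R$ and the exchange goes through. Without the global minimality of $p$, this exchange argument has no contradiction to lean on, which is precisely where your proposal stalls.
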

\begin{proof}
  The only-if direction is straightforward: if $G$ has a simple path
  $p$ whose first $k+1$ nodes belong to $X$, then its length is at
  least $k$.  For the other direction, let $p=(v_0,v_1) \cdots
  (v_{r-1},v_r)$ be a shortest $s$-$t$-path of length at least $k$,
  i.e., such that $r \geq k$.
 
  If $|p|=r \leq 2k+1$,   we define $P= (v_0,v_1) \cdots (v_{k-1},v_k)$ and $Q=
  (v_{k+1},v_{k+2}) \cdots \allowbreak (v_{r-1},v_r)$. Because $|V(Q)|
  \leq k+1$, $V(P) \in P^{k+1}_{sv_k}$ and $V(P)\cap V(Q)= \emptyset$,
  the definition of $\hat{P}^{k+1}_{sv_k} \subseteq^{k+1}_\text{rep}
  P^{k+1}_{sv_k}$ guarantees the existence of an $s$-$v_k$-path $P'$
  with $V(P') \in \hat{P}^{k+1}_{sv_k}$ and $V(P') \cap V(Q) =
  \emptyset$. By replacing $P$ with $P'$ in $p$, we obtain a simple
  $s$-$t$-path of length $|p|$.

  Otherwise, we have that $|p| = r > 2k+1$.  Then we define $P=
  (v_0,v_1) \cdots (v_{k-1},v_k)$, $R = (v_{k+1},v_{k+2}) \cdots
  \allowbreak (v_{r-k-2},v_{r-k-1})$, and $Q= (v_{r-k},v_{r-k+1})
  \cdots \allowbreak (v_{r-1},v_r)$, so we have $p = P\cdot (v_k,v_{k+1})\cdot R\cdot (v_{r-k-1},v_{r-k})\cdot Q.$  (We wrote some of the
  concatenation operators $\cdot$ explicitly to improve readability.)
    Since $|V(Q)| = k+1$, $V(P) \in
  P^{k+1}_{sv_k}$, and $V(P)\cap V(Q)= \emptyset$, the definition of
  $\hat{P}^{k+1}_{sv_k} \subseteq^{k+1}_\text{rep} P^{k+1}_{sv_k}$
  guarantees the existence of an $s$-$v_k$-path $P'= (v_0,v'_1) \cdots
  (v'_{k-1},v_k)$ with $V(P') \in \hat{P}^{k+1}_{sv_k}$ and $V(P')
  \cap V(Q) = \emptyset$.  If $P'$ is disjoint from $R$, the path
  $$p' = (v_0,v'_1) \cdots (v'_{k-1},v_k)(v_k,v_{k+1}) \cdots
  (v_{r-1},v_r)$$ is a simple path of length $r$, so we are done.
      
  We show that $P'$ must be disjoint from $R$. Towards a
  contradiction, assume that there is an $i \in \{1, \ldots, k-1\}$
  such that $v'_i=v_j \in R$. We choose $i$ minimal and build a new
  simple path $p' = (v_0,v'_1) \cdots
  (v'_{i},v_{j+1})(v_{j+1},v_{j+2}) \cdots (v_{r-1},v_r)$ with $|p'|
  \geq k$, because it contains $Q$. But $V(p')$ does not contain
  $v_{k}$, so $p'$ is shorter than $p$, which contradicts that $p$ was
  a shortest $s$-$t$-path of length at least $k$.  So
  $P'$ must be disjoint from
  $R$. \end{proof}

We still need to show that $\hat{P}^{k+1}_{sv} \subseteq^{k+1}_\text{rep}
P^{k+1}_{sv}$ exists.  This is done by Lemma 5.2 in \cite{fomin},
which also restricts the size of $\hat{P}^k_{sv}$ and gives an upper
bound for its computation time. Since their Lemma 5.2 in \cite{fomin}
is 
more general than we need, we give more concrete bounds here that come
from the proof of Theorem 5.3 in \cite{fomin}.
\begin{lemma}
\label{lemmaSizeBound}
A collection of families $\hat{P}^k_{sv} \subseteq^k_\text{rep} P^k_{sv}, v \in V\setminus \{s\}$ of size at most $\binom{2k}{k} \cdot 2^{o(2k)}$ each can be found in time 
$$O\left( 8^{k+o(k)}m \log n \right),$$ 
where $n = |V|$ and $m = |E|$.
\end{lemma}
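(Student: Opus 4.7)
The idea is standard dynamic programming on the path length, using representative families to keep the intermediate sizes under control. For $i = 1$, initialize $\hat{P}^1_{ss} = \{\{s\}\}$ and $\hat{P}^1_{sv} = \emptyset$ for $v \neq s$. For $i = 2, \ldots, k$, form the candidate family
\[
C^i_{sv} \;=\; \bigcup_{(u,v)\in E}\; \{\,X \cup \{v\} \mid X \in \hat{P}^{i-1}_{su},\; v \notin X\,\},
\]
and then set $\hat{P}^i_{sv}$ to be a $(2k-i)$-representative family of $C^i_{sv}$ (so that, combined with a completion of size $2k-i$, the whole object still has size at most $2k$). The final output is $\hat{P}^k_{sv}$ for each $v$.

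\textbf{Correctness.} The key step is to argue that $\hat{P}^i_{sv} \subseteq^{2k-i}_\text{rep} P^i_{sv}$, using two standard closure properties of representative families: (a) if $\hat{\cS}_j \subseteq^q_\text{rep} \cS_j$ for each $j$ in a finite index set, then $\bigcup_j \hat{\cS}_j \subseteq^q_\text{rep} \bigcup_j \cS_j$; and (b) extending every set in a $q$-representative family by a common element $v$ (and filtering out sets already containing $v$) yields a $q$-representative family of the analogously extended collection. Inductively, any $X \in P^i_{sv}$ arises as $X' \cup \{v\}$ for some edge $(u,v)$ and $X' \in P^{i-1}_{su}$; by the induction hypothesis we may replace $X'$ by some $\hat{X}' \in \hat{P}^{i-1}_{su}$ respecting any blocker of size $\le 2k-i+1$, and then representing again after extension gives a $(2k-i)$-representative subfamily. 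Taking $i = k+1$ yields the statement needed for Lemma~\ref{proofforfptalg}.

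\textbf{Size and running time.} For the size bound, I invoke the classical representative-sets construction (Fomin--Lokshtanov--Panolan--Saurabh for the uniform matroid on $V$), which guarantees that a $q$-representative subfamily of $p$-element sets of size at most $\binom{p+q}{p}$ exists; plugging in $p = i$, $q = 2k-i$ yields the uniform bound $\binom{2k}{k} \cdot 2^{o(k)}$. The same construction computes such a subfamily in time essentially proportional to $|C^i_{sv}|$ times a factor of $\binom{2k}{k}^{O(1)} = 2^{O(k)}$; since $|C^i_{sv}| \le \sum_{(u,v) \in E} |\hat{P}^{i-1}_{su}|$, summing over all $v$ gives total work per round $O(m \cdot \binom{2k}{k} \cdot 2^{o(k)}) = O(4^{k+o(k)} m)$ for collecting candidates and $O(\binom{2k}{k}^{2} \cdot 2^{o(k)} m) = O(8^{k+o(k)} m)$ for computing the representatives (the dominant contribution comes from the truncation step). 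The additional $\log n$ factor accounts for dictionary operations on node sets, and the $k$ rounds of the DP are absorbed into the $2^{o(k)}$.

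\textbf{Main obstacle.} The bookkeeping in the correctness argument is routine; the delicate point is the time bound, which requires invoking the right representative-sets algorithm in order to reach $8^{k+o(k)} m \log n$ rather than a cruder bound like $\binom{2k}{k}^2 \cdot k \cdot m$ times polynomial overhead. Matching the bound of \cite{fomin} precisely hinges on using their efficient truncation of a uniform matroid (so that no matrix multiplication blow-up is incurred) and on representing each $\hat{P}^{i-1}_{su}$ compactly enough that the union step in the definition of $C^i_{sv}$ does not introduce unnecessary duplicates.
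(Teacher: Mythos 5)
Your proposal is correct and follows essentially the same route as the paper: the paper obtains this lemma by direct appeal to Fomin et al.'s Lemma 5.2 / Theorem 5.3, and the identical length-indexed dynamic program with iterated $(2k-i)$-representative truncation (union over in-edges, extension by $v$, transitivity of representation, and the uniform-matroid computation bound of Corollary~\ref{Fomin4.16}) is spelled out verbatim in the paper's proof of the annotated variant, Lemma~\ref{lemma:SizeBoundWithRE}. The only cosmetic slip is the remark about ``taking $i=k+1$'': to get a $(k+1)$-representative family $\hat{P}^{k+1}_{sv}$ one reruns the scheme with parameter $k+1$ (target $2(k+1)$), rather than reading off level $k+1$ of the run targeted at $2k$.
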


\begin{algorithm}[t]
\caption{FLPS Algorithm} \begin{algorithmic}[1]
\Require Graph $G=(V,E)$, nodes $s,t$ in $G$, parameter $k$ 
\Ensure Decide if there exists a simple path from $s$ to $t$ with length at least $k$
\For {every $v \in V$} 
\State Compute $\hat{P}^{k+1}_{sv} \subseteq^{k+1}_\text{rep} P^{k+1}_{sv}$\label{fptalg:2}
\For {every $X \in \hat{P}^{k+1}_{sv}$}\label{fptalg:3}
\State $V' \gets (V\setminus X)\cup \{v\}$
\State $E' \gets E \cap (V' \times V')$
\If {there exists a path from $v$ to $t$ in $(V',E')$}\label{fptalg:4} 
\State return YES
\EndIf
\EndFor
\EndFor
\State return NO
\end{algorithmic}
\label{fptalg}
\end{algorithm}

From Lemmas~\ref{proofforfptalg} and \ref{lemmaSizeBound}, we can
infer that Algorithm~\ref{fptalg} correctly solves
\longdirpath in FPT.

\LongDirPathFPT*
\begin{proof}
  The problem can be solved using Algorithm \ref{fptalg}. Its
  correctness follows directly from Lemma \ref{proofforfptalg}. Using
  Lemma \ref{lemmaSizeBound}, we 
  now show that the algorithm is indeed a \fpt algorithm.

  Let $n = |V|$ and $m = |E|$.  We obtain from
  Lemma~\ref{lemmaSizeBound} that line~\ref{fptalg:2} of the
  Algorithm~\ref{fptalg} takes $O\left( 8^{k+o(k)}m \log n \right)$
  time for each $v \in V$. Since we need to consider at most $n \cdot
  \binom{2(k+1)}{{k+1}} \cdot 2^{o(2(k+1))} $ sets $X$ in
  line~\ref{fptalg:3}, the number of such sets we need to consider
  throughout the entire algorithm is at most
  $O(n4^{k+o(k)})$. Finally, line~\ref{fptalg:4} can be checked by
  a reachability test (say, depth-first search) in time $O(m+n)$, so the overall running time is bounded
  by $$O\left( 8^{k+o(k)}mn \log n + 4^{k+o(k)}\cdot(n^2+mn)
  \right),$$ which is clearly in \fpt for the parameter $k$.
\end{proof}

\subsection*{Proofs for Section~\ref{sec:twodisjointpaths}}\label{app:twodisjointpaths}

\TheoTwoColor*
\begin{proof}

   We now prove that the reduction is correct, that is, $G$ has a
   $k$-clique iff there are simple paths $p_1$ from
   $c_1$ to $c_{k+1}$ and $p_2$ from $r_1$ to $r_{k+1}$ such that $p_1$ and $p_2$ are node-disjoint, $p_1$ is colored $a$ and has length $k'$ while $p_2$ is colored $b$. We need
   the following Lemma. Let $G'_a$ denote the subgraph obtained from $G'$ 
  by removing the $b$-edges and the nodes $r_1,\ldots,r_{k+1}$ (which
  have no adjacent $a$-edges).
  Then $G'_a$ has the following properties:\footnote{We only need part
    (\ref{lem:twopaths:b}) in this proof. Parts (\ref{lem:twopaths:c})
    and (\ref{lem:twopaths:d}) are used to prove Theorem \ref{theorem:two-disjoint-is-hard}.}
   \begin{restatable}{lemma}{twopathslemma} \label{lem:twopaths}
     $G'_a$ has the following properties:
     \begin{enumerate}[(a)]
                    \item Each path in $G'_a$ has length exactly $k'$ if and only if it
       is from $c_1$ to $c_{k+1}$. \label{lem:twopaths:b}
       
     \item Each path in $G'_a$ of length $k'$ visits all control
       nodes, i.e., it contains all $c_{i}$ and $c_{i_1i_2}$, with $i \in
       \{1,\ldots, k+1\}$ and $1\leq i_1<i_2 \leq k$. \label{lem:twopaths:c}
     \item Each path in $G'_a$ of length $k'$ has at least one edge
       $u_\ell \stackrel{a}{\to} v_\ell$ in  every row of $G'_a$. \label{lem:twopaths:d}
     \end{enumerate}
   \end{restatable}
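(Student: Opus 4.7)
The plan is to exploit the highly constrained structure of $G'_a$: every node has a uniquely determined outgoing edge type (within-gadget, a control-node edge, or a clique edge), so the sequence of edge types along any simple path from $c_1$ is completely forced, leaving only the gadget column indices free. First, I would tabulate the outgoing edges of each node type: from $G_{i,j}[u_\ell]$ the only outgoing edge is $G_{i,j}[u_\ell] \to G_{i,j}[v_\ell]$; from $G_{i,j}[v_\ell]$, the unique outgoing edges go to $c_{i,\ell}$ when $i < \ell \le k$, to $c_{i+1}$ or $c_{k+1}$ when $\ell = k+1$, and to clique-edge targets $G_{\ell,y}[u_{i+1}]$ when $\ell < i$; and $G_{i,j}[v_i]$ is a dead end.

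For part~(a), I would prove the $(\Leftarrow)$-direction by tracing this forced structure starting at $c_1$. For each row $i = 1, \ldots, k$, the path must handle all pairs $(i, i_2)$ with $i < i_2 \le k$, contributing $5$ edges per pair for the subsequence $u_{i_2} \to v_{i_2} \to c_{i,i_2} \to u_i \to v_i \to u_{i_2+1}$ (crossing into row $i_2$ and returning to row $i$ via a clique edge), and then transitioning to row $i+1$ (or terminating at $c_{k+1}$) using $3$ edges per row. Summing gives $k(k-1)/2 \cdot 5 + 3k = k'$. For the $(\Rightarrow)$-direction, a simple path of length $k'$ cannot start at any ``dead'' source $G_{i,j}[u_i]$ (its only outgoing edge leads to the dead end $G_{i,j}[v_i]$, yielding a single-edge path), nor at any non-$c_1$ node on the forced path from $c_1$ (whose suffix to $c_{k+1}$ is strictly shorter than $k'$); hence it must start at $c_1$ and, by the $(\Leftarrow)$-direction, end at $c_{k+1}$.

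Parts~(b) and~(c) then follow from the forced-sequence analysis of~(a). For~(b), the rigid sequence explicitly visits every control node $c_{i_1 i_2}$ (once, when handling the pair $(i_1,i_2)$) and every $c_i$ (once, when transitioning between rows $i-1$ and $i$). For~(c), the unique incoming edge of $c_{i+1}$ (or $c_{k+1}$) comes from some $G_{i,j}[v_{k+1}]$, whose unique incoming edge in turn is the within-gadget edge $G_{i,j}[u_{k+1}] \to G_{i,j}[v_{k+1}]$; combined with~(b), this forces at least one within-gadget edge $u_{k+1} \to v_{k+1}$ on $p$ in every row~$i$.

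The main obstacle will be the case-analysis bookkeeping: at each junction $G_{i,j}[v_\ell]$ one must verify that the claimed outgoing edge type is indeed the only option consistent with the rules of the construction, and that the rigid sequence from $c_1$ to $c_{k+1}$ visits every control node exactly once. Only then does the careful count $k(k-1)/2 \cdot 5 + 3k$ collapse precisely to $k'$ and rule out longer detours.
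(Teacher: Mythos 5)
Your proof is correct and rests on the same underlying fact as the paper's, namely that the $a$-edge construction is rigid enough that the \emph{position} (row index and $u_\ell/v_\ell$ subscript, or control-node identity) of every successor is forced, with only the gadget column free; both arguments then reduce to the same edge count $5\cdot k(k-1)/2+3k=k'$. The organization differs, though: the paper first passes to the supergraph $C'_a$ obtained from the complete graph, proves $C'_a$ is a DAG whose control nodes are totally ordered by reachability, and computes the fixed length ($3$ or $5$) of every segment between consecutive control nodes, whereas you work directly in $G'_a$ with a per-node forced-successor table and trace the unique position sequence out of $c_1$. Your version buys a slightly cleaner treatment of one point: the paper's proof of the ``only if'' half of (a) asserts that every node of $C'_a$ is reachable from $c_1$, which is not literally true for the nodes $G_{i,j}[u_i]$ and $G_{i,j}[v_i]$ of row $i$ (these have no incoming $a$-edges, respectively only the incoming edge from $u_i$); you isolate exactly these as dead sources supporting only length-$1$ paths, which is the right way to dispose of them. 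Conversely, the paper's segment decomposition makes the length computation and the ``visits all control nodes'' claim of (b) slightly more transparent, since both fall out of the total order $c_1\prec c_{12}\prec\cdots\prec c_{k+1}$. Two small points to tighten when writing this up: make explicit that each position occurs exactly once in the forced sequence (so that ``the suffix from a non-$c_1$ node is strictly shorter than $k'$'' is well defined and the sequence contains no cycles), and note that your $3$-edges-per-row accounting charges the entering edge $c_i\to u_{i+1}$ and the exiting pair $u_{k+1}\to v_{k+1}\to c_{i+1}$ to row $i$, so that rows $1$ and $k$ are covered by the same bookkeeping.
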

   We prove the lemma after the present proof.

   Let us first assume that the undirected graph $G$ has a $k$-clique
   with nodes $\{n_1, \ldots, n_k\}$. Then an $a$-path can go
   from $c_1$ to $c_{k+1}$ using only the gadgets $G_{i,n_i}$ with $i = 1,
   \ldots, k$. The reason is that, since $(n_{i_1},n_{i_2}) \in E$,
   the edges $G_{i_2,n_{i_2}}[v_{i_1}] \stackrel{a}{\to}
   G_{i_1,n_{i_1}}[u_{i_2+1}]$ exist for all $i_1 \leq i_2$. Due to
   Lemma~\ref{lem:twopaths}(\ref{lem:twopaths:b}), this path has exactly $k'$ edges.
   The $b$-path, on the other hand, can go from $r_1$ to $r_{k+1}$
   and skip exactly $G_{i,n_i}$ for all $i = 1, \ldots, k$ (using the
   diagonal edges in Figure~\ref{fig:gadgets:row}). Since it
   skips these $G_{i,n_i}$, it is node-disjoint from the $a$-path and therefore we have a solution for
   \kcolordisjointpaths.

   For the other direction let us assume that there exist a simple
   $a$-path $p_a$ from $c_1$ to $c_{k+1}$ and a simple $b$-path $p_b$ from $r_1$ to $r_{k+1}$ in $G'$ such that $p_a$
   and $p_b$ are node-disjoint and $p_a$ has length $k'$.  We show
   that $G$ has a $k$-clique.
                     Since every $b$-path from $r_1$ to $r_{k+1}$ goes through
   each row, that is, from $r_i$ to $r_{i+1}$ for all $i=1,\ldots,k$,
   this is also the case for $p_b$. By construction $p_b$ must also
   skip exactly one gadget in each row, using the diagonal edges in
   Figure~\ref{fig:gadgets:row}. Furthermore, for each gadget
   $G_{i,j}$ that $p_b$ visits, it must be the case that it either
   visits all nodes $u_1, \ldots, u_{k+1}$ or all nodes $v_1, \ldots,
   v_{k+1}$. (This is immediate from Figure~\ref{fig:OneGadget},
   showing all internal edges of a gadget.)  Therefore, since $p_a$
   and $p_b$ are node-disjoint, the $p_a$ cannot visit any gadget
   $G_{i,j}$ already visited by $p_b$.  Therefore, $p_a$, which goes
   from $c_1$ to $c_{k+1}$ by
   Lemma~\ref{lem:twopaths}(\ref{lem:twopaths:b}), can only do so
   through the $k$ skipped gadgets, call them $G_{i,n_i}$ for
   $i=1,\ldots,k$. Recall that the edges between the gadgets
   $G_{i_2,n_{i_2}}$ and $G_{i_1,n_{i_1}}$ only exist if
   $(n_{i_1},n_{i_2}) \in E$.  As these edges are necessary for the
   existence of the $a$-path from $c_1$ to $c_{k+1}$, all $n_i$
   must be pairwise adjacent in $G$. That is, they form a clique of
   size $k$ in $G$.
 \end{proof}

\begin{proof}[Proof of Lemma~\ref{lem:twopaths}]
  First observe that $G'_a$ contains a fixed part that only depends on
  $n$ and $k$, plus a set of edges that represent edges in $G$, i.e.,
  edges that are present in $G'$ if and only if there exists a
  corresponding edge in $G$. Therefore, every possible graph $G'$
  that the reduction produces is a subgraph of the case where $G$ is
  a complete graph (i.e., if $G$ has $n$ nodes, it is the $n$-clique). Let $C'$ denote the graph $G'$ in the case where
  $G$ is the $n$-clique. We prove the following points, which imply
  the Lemma:
  \begin{enumerate}[(1)]
  \item The subgraph $C'_a$ of $C'$ consisting of the $a$-colored
    edges is a DAG.
  \item Each path $C'_a$ from $c_1$ to $c_{k+1}$ has length exactly $k'$.
  \item Each path in $C'_a$ has length exactly $k'$ if and only if it
    is from $c_1$ to $c_{k+1}$.
  \item Each path in $C'_a$ of length $k'$ visits all control
    nodes, i.e., it contains all $c_{i}$ and $c_{i_1i_2}$, with $i \in
    \{1,\ldots, k+1\}$ and $1\leq i_1<i_2 \leq k$. 
    \item Each path in $C'_a$ of length $k'$ has at least one edge $u_\ell\stackrel{a}{\to} v_\ell$ in every row of $C'_a$.
  \end{enumerate}

  We first prove part (1). We first show that, if $C'_a$ has a cycle,
  then this cycle must contain a control node. Indeed, within the same
  row, the graph $C'_a$ only has the edges from $u_i$ to $v_i$ in all
  the gadgets. So, there cannot be a cycle that only contains nodes
  from a single row. Therefore, the cycle must contain a path from
  some node in a row $i_1$ to a node in row $i_2$, for $i_1 <
  i_2$. Since every path in $C'_a$ from row $i_1$ to $i_2$
  with $i_1 < i_2$ contains, by construction, at least one control
  node, we have that every cycle in $C'_a$ must contain a control node.

  It therefore remains to show that $C'_a$ contains no cycle that uses
  a control node. To this end, observe that the relation $\prec$ where
  $n_1 \prec n_2$ iff $n_1 \neq n_2$ and $n_2$ is reachable from $n_1$ 
  is a strict total order
  \begin{equation}\label{eq:control-nodes-ordering}
    c_1, c_{12},c_{13},\ldots, c_{1k},c_2,c_{23},\ldots, c_{k-2k},
    c_{k-1k}, c_k, c_{k+1}
     \tag{$\dagger$} 
  \end{equation}
  on the control nodes. That is, the order is such that control nodes are reachable in $C'_a$
  from all control nodes to their left and none to their right.

  We now prove part (2). 
                First we prove that, between two consecutive control nodes in
  $C'_a$, each path has a fixed length that depends only on the kind
  of control nodes.  Then, since $C'_a$ is a DAG by part (1), we can
  simply concatenate paths to obtain the length of paths from $c_1$ to
  $c_{k+1}$, showing (2). In this proof, when we consider a path that visits nodes in
  row $i$ in $C'_a$, then by construction of $C'$, the length of this
  path is independent of the gadget $G_{i,j}$ that the path
  visits. That is, the path's length is the same for every $j =
  1,\ldots,n$. To simplify notation, we therefore omit the $j$ in
  $G_{i,j}[u]$ and write $G_{i}[u]$ instead.
  
  We first consider the length of paths between consecutive control
  nodes in the ordering \eqref{eq:control-nodes-ordering}. Therefore,
  fix two such consecutive control nodes $n_1$ and $n_2$. We make a
  case distinction:
  \begin{itemize}
  \item $n_1 = c_i$ and $n_2 = c_{i(i+1)}$: Each path from $c_i$ to
    $c_{i(i+1)}$ is of the form $c_{i},G_{i}[u_{i+1}], \allowbreak
    G_{i}[v_{i+1}], \allowbreak c_{i (i+1)}$ and therefore has length 3.
  \item $n_1 = c_{ij}$ and $n_2 = c_{i(j+1)}$: Each path from $c_{ij}$
    to $c_{i(j+1)}$ with $1 \leq i < j \leq k-1$ has the form
    $c_{ij}, \allowbreak G_{j}[u_{i}], G_{j}[v_{i}], G_{i}[u_{j+1}],
    \allowbreak G_{i}[v_{j+1}], \allowbreak c_{i(j+1)}$ and therefore 
    length 5.
  \item $n_1 = c_{ik}$ and $n_2 = c_{i+1}$: Each path from $c_{ik}$ to
    $c_{i+1}$ has the form $c_{ik}, G_{k}[u_{i}], G_{k}[v_{i}], \allowbreak
    G_{i}[u_{k+1}], \allowbreak G_{i}[v_{k+1}], c_{i+1}$ and therefore
    length 5.
  \item $n_1 = c_{k}$ and $n_2 = c_{k+1}$: Each path
  from $c_{k}$ to $c_{k+1}$ is of the form $c_{k}, G_{k}[u_{k+1}],
  G_{k}[v_{k+1}],\allowbreak c_{k+1}$ and therefore of length 3.
  \end{itemize}

  Since $\prec$ is a strict total order, this means that each path
  from $c_1$ to $c_{k+1}$ in $C'_a$ has the same length.
   We show that this length is
  exactly $k(k-1)/2\cdot 5+3k = k'$. The paths $c_i$ to $c_{ii+1}$ $(i
  = 1, \ldots, k-1)$ and $c_k$ to $c_{k+1}$ sum up to length $3k$. For a
  fixed $i$ we have $5\cdot(k-i-1)$ paths from $c_{ii+1}$ to $c_{ik}$,
  which sum up to length $5(k(k-1)/2)-5k+5$ for $i=1,\ldots, k-2$.
  Finally, we need to consider the paths from $c_{ik}$ to $c_{i+1}$,
  which, for $i=1,\ldots, k-1$, sum up to length $5k-5$.  This shows
  (2).

  Since $C'_a$ is a DAG, every node in $C'_a$ is reachable from $c_1$,
      since $c_{k+1}$ does not have outgoing edges in $C'_a$, and since each
  path of length $k'$ starting from $c_1$ ends in $c_{k+1}$, we also have
  (3).
  Since $\prec$ is a strict total order on the control nodes, we also have (4).

  Due to (3) and (4) each path of length $k'$ in $G'_a$ contains $c_i$ for $i=1,\ldots, k+1$. Since each path from $c_i$ to the next control node contains $(G_{i,j}[u_{i+1}],G_{i,j}[u_{i+1}])$, for a $j \in \{1,\ldots,n\}$ we also have (5).
            \end{proof}

\TheoDisjointPaths*
\begin{proof}
  We adapt the reduction from Theorem~\ref{akb*-is-hard}.  The only
  change we make is that we replace each $b$-edge by a
  directed path of $k'$ edges (introducing $k'-1$ new nodes for each
  such edge). 

  \fbox{$\cdots$}
 Call the
  resulting graph $G''$. 
  We make the following observation:
 \begin{observation} \label{obs:replace-b}
                In $G''$, we have that
    \begin{enumerate}[(a)]
    \item every path from $c_1$ to $c_{k+1}$ has length $\geq k'$ and
    \item every path from $c_1$ to $c_{k+1}$ has length exactly $k'$ if
      and only if it only uses $a$-edges.
    \end{enumerate}
  \end{observation}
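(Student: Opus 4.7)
The plan is to case-split on whether a given $c_1$-$c_{k+1}$-path in $G''$ uses any of the subpaths that replaced the original $b$-edges. Let me write $G'_a$ for the $a$-subgraph of $G'$ as in the proof of Lemma~\ref{lem:twopaths}, and let me call each length-$k'$ directed path that replaced some $b$-edge a \emph{$b$-chain}. The first thing I would note, from the construction in Theorem~\ref{theo:kcolor}, is that the control nodes $c_1,\dots,c_{k+1}$ and $c_{i_1 i_2}$ are touched only by $a$-edges; in particular $c_1$ (resp.\ $c_{k+1}$) has no outgoing (resp.\ incoming) edge that belongs to a $b$-chain. Moreover, every intermediate node of a $b$-chain has in-degree and out-degree exactly one in $G''$, so any path that enters a $b$-chain at its source must exit at its target, contributing exactly $k'$ edges.

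First I would handle the case where the path $p$ from $c_1$ to $c_{k+1}$ uses no $b$-chain at all. Then every edge of $p$ is an $a$-edge that was already present in $G'$, so $p$ is a $c_1$-$c_{k+1}$-path in $G'_a$, and Lemma~\ref{lem:twopaths}(\ref{lem:twopaths:b}) gives $|p|=k'$. This covers one direction of (b) and one case of (a).

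Next I would handle the case where $p$ uses at least one $b$-chain $B$. By the degree-one property above, $p$ must traverse all $k'$ edges of $B$ consecutively, so already $|p|\ge k'$. Because $c_1$ has no outgoing edge into any $b$-chain and $c_{k+1}$ has no incoming edge from any $b$-chain, $p$ must also use at least one $a$-edge to reach the source of $B$ from $c_1$ and at least one $a$-edge to reach $c_{k+1}$ from the target of $B$. Hence $|p|\ge k'+2>k'$. This gives (a) in the remaining case, and also the contrapositive of the other direction of (b): any path of length exactly $k'$ cannot use any $b$-chain, and hence lives entirely in $a$-edges.

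The argument is essentially bookkeeping; the only step that warrants care is the claim that intermediate nodes of a $b$-chain have in/out-degree one in $G''$, which relies on the $b$-chains being introduced with fresh internal nodes (so they cannot be reused by any other edge of $G''$), and on the control-node endpoints $c_1,c_{k+1}$ being $b$-edge free in $G'$. Both facts are immediate from the explicit construction in Theorem~\ref{theo:kcolor} and the replacement rule used to define $G''$, so no further technical obstacle is expected.
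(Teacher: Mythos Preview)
Your proposal is correct and follows essentially the same approach as the paper: both case-split on whether the path uses any (replaced) $b$-edge, invoke Lemma~\ref{lem:twopaths}(\ref{lem:twopaths:b}) for the $a$-only case, and argue that touching a $b$-chain forces at least $k'$ extra edges plus at least one $a$-edge adjacent to $c_1$ or $c_{k+1}$. Your version is slightly more explicit (you observe both that $c_1$ has no outgoing and $c_{k+1}$ no incoming $b$-chain edge, yielding $|p|\geq k'+2$), whereas the paper only uses the incoming side at $c_{k+1}$ to get $|p|\geq k'+1$; either suffices.
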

          We prove the observation using
  Lemma~\ref{lem:twopaths}(\ref{lem:twopaths:b}). For part (a) we
  have two cases. If a path from $c_1$ to $c_{k+1}$ uses $a$-edges only,
  the result is immediate from
  Lemma~\ref{lem:twopaths}(\ref{lem:twopaths:b}). If it uses at
  least one $b$-edge, then it uses at least $k'$ $b$-edges by
  construction. 

  For part (b), if a path from $c_1$ to $c_{k+1}$ has length
  exactly $k'$, it uses at least one $a$-edge since $c_{k+1}$ only has
  incoming $a$-edges. If it would use at least one
  $b$-edge, it uses at least $k'$ $b$-edges by construction, which
  contradicts that the length is $k'$. The converse direction is
  immediate from Lemma~\ref{lem:twopaths}(\ref{lem:twopaths:b}). This
  concludes the proof of Observation~\ref{obs:replace-b}.

  We show that $G'$ and $k'$ are in \knodecolordisjointpaths if and only
  if $G''$ and $k'$ are in \knodedisjointpaths. That is, $G'$ has a
  simple $a$-path $p_a$ from $s_1$ to $c_{k+1}$ (of
  length $k'$) and simple $b$-path $p_b$ from $r_1$ to
  $r_{k+1}$ such that $p_a$ has length $k'$ and is node-disjoint from $p_b$ if and only if $G''$ has simple paths $p_1$
  from $c_1$ to $c_{k+1}$ and $p_2$ from $r_1$ to $r_{k+1}$, where $p_1$
  has length $k'$ and $p_1$ and $p_2$ are node-disjoint.

  If $G'$ and $k'$ are in \knodecolordisjointpaths, then we can use the
  corresponding paths in $G''$ (where we follow $b$-paths in $G''$
  instead of $b$-edges in $G'$). Conversely, if $G''$ and $k'$ are in
  \knodedisjointpaths, it follows from Observation \ref{obs:replace-b}
  that $p_1$ can only use $a$-edges.       We now show that the path $p_2$ from $r_1$ to $r_{k+1}$ can only use
  $b$-edges, that is, we show that it cannot use $a$-edges. There are
  3 types of $a$-edges: (i) the ones from and to control-nodes,
  (ii) ``upward'' edges that connect row $j$ to row $i$ with
  $j>i$, and (iii) edges from $u_\ell$ to $v_\ell$ in one
  gadget.

  Notice that, by construction, $p_2$ must visit nodes in row 1 and
  later also nodes in row $k$. To do so, $p_2$ cannot use edges from
  or to control nodes (type (i)), since, due to
  Lemma~\ref{lem:twopaths}(\ref{lem:twopaths:c}), $p_1$ already visits
  all of them.  So $p_2$ cannot go from row $i$ to a row $j$ with
  $i<j$ via $a$-edges. This means that, if $i < j$, then $p_2$ can
  only go from row $i$ to row $j$ by going through $r_{i+1}$ (and
  through nodes in row $i+1$), since every remaining path from row $i$
  to a larger row goes through $r_{i+1}$. So, in order to go from row
  1 to row $k$, path $p_2$ needs to visit all nodes $r_1,\ldots,r_k$,
  in that order.  This means that it is also impossible for $p_2$ to
  use edges of type \emph{(ii)}. Indeed, if $p_2$ were to use an edge
  from row $j$ to row $i$ with $j > i$, then it would need to visit
  $r_{i+1}$ a second time to arrive back in row $j$.
                  Finally, if $p_2$ used an $a$-edge of type \emph{(iii)} in row
  $i$, then, by construction, it would have to visit every gadget in this
  row.  But
  since $p_1$ already uses at least one edge in each row, see Lemma~\ref{lem:twopaths}(\ref{lem:twopaths:d}), this means
  that $p_2$ cannot be node-disjoint with $p_1$.  This shows that $G'$
  and $k'$ are in \knodecolordisjointpaths.
  \end{proof}

\TheoPathsAllTheRest*
\begin{proof}
  The theorem follows immediately from
  Lemmas~\ref{lemma:two-disjoint-at-most-k-is-hard}, \ref{lem:fortune}, and \ref{lem:UpperBound}.
\end{proof}

\begin{lemma} \label{lemma:two-disjoint-at-most-k-is-hard}
  $\twonodedisjointpaths_{\leq k}$ is W[1]-hard.
\end{lemma}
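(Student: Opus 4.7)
The plan is to reuse the reduction already constructed in the proof of Theorem~\ref{theorem:two-disjoint-is-hard} essentially verbatim. Recall that that reduction, on input $(G,k)$ of \kclique, produces a graph $G''$ together with endpoints $(s_1,t_1,s_2,t_2) = (c_1,c_{k+1},r_1,r_{k+1})$ and a target length $k' \in \Theta(k^2)$, such that $G$ has a $k$-clique iff $G''$ contains node-disjoint simple paths $p_1$ from $c_1$ to $c_{k+1}$ and $p_2$ from $r_1$ to $r_{k+1}$ with $|p_1| = k'$. Since $k' \in \Theta(k^2)$, this is a parameterized reduction.

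The key observation is Observation~\ref{obs:replace-b}(a): every path in $G''$ from $c_1$ to $c_{k+1}$ has length \emph{at least} $k'$. Consequently, asking for a simple $c_1$-$c_{k+1}$ path of length at most $k'$ is equivalent to asking for a simple $c_1$-$c_{k+1}$ path of length exactly $k'$. Therefore, with the same instance $(G'',s_1,t_1,s_2,t_2)$ and parameter $k'$, we have that $(G'',s_1,t_1,s_2,t_2,k') \in \twonodedisjointpaths_{\leq k}$ iff $(G'',s_1,t_1,s_2,t_2,k') \in \knodedisjointpaths$, and the latter holds iff $G$ has a $k$-clique by Theorem~\ref{theorem:two-disjoint-is-hard}.

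Since \kclique is W[1]-complete and the reduction is a parameterized reduction (the new parameter $k'$ is bounded by a computable function of $k$ and the construction runs in time polynomial in $|G|$ times a function of $k$), we conclude that $\twonodedisjointpaths_{\leq k}$ is W[1]-hard. No separate construction or length-padding trick is needed: the ``lower-bound on path length'' property of the gadgets in Theorem~\ref{theorem:two-disjoint-is-hard} already forces the $\leq k'$ constraint to coincide with an exact length constraint, so there is no real obstacle beyond invoking the existing machinery.
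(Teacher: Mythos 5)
Your proof is correct and follows essentially the same route as the paper's: the paper likewise reuses the graph $G''$ from Theorem~\ref{theorem:two-disjoint-is-hard} and invokes Observation~\ref{obs:replace-b} to conclude that no $c_1$-$c_{k+1}$ path is shorter than $k'$, so the $\leq k'$ constraint coincides with the exact-length constraint on that instance. Your write-up is in fact slightly more explicit than the paper's about why this yields a valid parameterized reduction.
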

\begin{proof}
    We start from the same graph as in the proof of Theorem \ref{theorem:two-disjoint-is-hard}. Then,
  from Observation~\ref{obs:replace-b} we know that there exist no
  path from $c_1$ to $c_{k+1}$ that has length smaller than $k$. So the answer to our problem on
  this instance is the same as for \knodedisjointpaths, which completes
  the reduction.
\end{proof}

For completeness, we observe that $\twonodedisjointpaths_{\geq k}$ with
$k=0$ is simply the \twonodedisjointpaths problem.
\begin{lemma}[\cite{FortuneHW-TCS80}]\label{lem:fortune}
  $\twonodedisjointpaths_{\geq k}$ is \np-complete for every constant $k \in \nat$.
\end{lemma}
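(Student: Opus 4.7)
The plan is to prove this by reducing from the ordinary \twodisjointpaths problem, which is exactly the case $k=0$ and is \np-complete by Fortune, Hopcroft, and Wyllie \cite{FortuneHW-TCS80}. Membership of $\twonodedisjointpaths_{\geq k}$ in \np is immediate for every constant $k$: guess two simple paths, and verify in polynomial time that they go between the correct endpoints, are node-disjoint, and that $p_1$ has length at least $k$.

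For the hardness part, I would fix an arbitrary constant $k \in \nat$ and describe a polynomial-time reduction from \twodisjointpaths. Given an instance $(G, s_1, t_1, s_2, t_2)$, I construct $G'$ by adding $k$ fresh nodes $v_1, \ldots, v_k$ together with the directed edges $(v_1, v_2), (v_2, v_3), \ldots, (v_{k-1}, v_k), (v_k, s_1)$. The new instance is $(G', v_1, t_1, s_2, t_2)$ with parameter $k$. Since $k$ is a constant, $G'$ has size linear in $|G|$ and is computable in polynomial time.

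For the correctness argument, observe that $v_1$ has no incoming edges, $v_i$ ($i \geq 2$) has only $v_{i-1}$ as in-neighbor, and the only outgoing edge of $v_k$ leads to $s_1$. Consequently, any simple path from $v_1$ to $t_1$ in $G'$ must begin with the forced prefix $v_1, v_2, \ldots, v_k, s_1$ and then continue with a simple path from $s_1$ to $t_1$ lying entirely in $G$; such a path has length at least $k$. Moreover, since $s_2, t_2 \in V(G)$ and the new nodes $v_1, \ldots, v_k$ have no edges to any node of $G$ other than $s_1$, any simple $s_2$--$t_2$ path in $G'$ is contained in $G$. From these two observations it is straightforward to show that two node-disjoint simple paths of the required form exist in $G'$ if and only if the original instance is a yes-instance of \twodisjointpaths.

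This reduction is very light, so I do not expect any real obstacle; the only thing to be careful about is that prepending the forced path to $s_1$ does not alter the existence of a disjoint $s_2$--$t_2$ path, which is guaranteed by the fact that the freshly added nodes are only reachable from $v_1$ and only reach $G$ through $s_1$, and the latter is already visited by $p_1$ and hence excluded from $p_2$ by node-disjointness.
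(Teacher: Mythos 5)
Your proof is correct. The paper itself gives essentially no argument here: it merely observes that the case $k=0$ coincides with \twodisjointpaths and cites Fortune, Hopcroft, and Wyllie, leaving the general constant-$k$ case implicit. Your padding reduction---prepending a forced directed path of $k$ fresh nodes feeding into $s_1$, so that every simple path from the new source $v_1$ to $t_1$ automatically has length at least $k$---supplies exactly the missing detail, and your correctness argument (the fresh nodes have no incoming edges from $G$ and only leave through $s_1$, so they can neither be revisited by $p_1$ nor touched by the $s_2$-$t_2$ path) is sound, as is the \np membership via guess-and-check. It is worth being explicit, as you implicitly are, about \emph{why} a reduction is needed at all for $k\geq 1$: the constraint genuinely changes the problem, since an instance of \twodisjointpaths whose only witnesses have a first path of length below $k$ would be a no-instance of $\twonodedisjointpaths_{\geq k}$, so one cannot simply reuse the $k=0$ identification. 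In short, your write-up is more complete than the paper's.
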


\begin{lemma}\label{lem:UpperBound}
  \knodecolordisjointpaths, \knodedisjointpaths, and $\twonodedisjointpaths_{\leq k}$ are in W[P].
\end{lemma}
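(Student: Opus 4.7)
The plan is to use the standard nondeterministic characterization of W[P]: a parameterized problem belongs to W[P] if it can be solved by a nondeterministic algorithm running in FPT time that makes at most $g(k)\log |x|$ nondeterministic choices, where $(x,k)$ is the input. For each of the three problems, the distinguishing feature is that the length of $p_1$ is bounded by a function of the parameter $k$ (either exactly $k$ or at most $k$), so $p_1$ has at most $k+1$ nodes and can therefore be guessed with only $(k+1)\log n$ nondeterministic bits, where $n = |V|$.

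First, I would give a single generic algorithm covering all three cases. Nondeterministically guess a sequence of at most $k+1$ nodes $v_0,\ldots,v_\ell$ in $G$, using $(k+1)\lceil\log n\rceil$ bits. Then verify deterministically in polynomial time that (i) this sequence forms a simple path from $s_1$ (resp.\ $s_a$) to $t_1$ (resp.\ $t_a$) in $G$; (ii) its length $\ell$ satisfies the prescribed length constraint ($\ell = k$ for \knodedisjointpaths and \knodecolordisjointpaths, $\ell \leq k$ for $\twonodedisjointpaths_{\leq k}$); and (iii) in the color-disjoint variant, every edge of the sequence is $a$-colored. All three checks require only polynomial time.

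Next, I would handle the second path. Let $G'$ be obtained from $G$ by deleting $V(p_1)\setminus\{s_2,t_2\}$ (resp.\ restricting to $b$-colored edges in the colored variant). Deterministically test, in polynomial time, whether $t_2$ is reachable from $s_2$ in $G'$ via an ordinary graph traversal. This suffices because the existence of \emph{any} walk from $s_2$ to $t_2$ in $G'$ implies the existence of a simple path by shortcutting repeated nodes, and a simple path in $G'$ is automatically node-disjoint from $p_1$. Accept iff both $p_1$ is validated and $t_2$ is reachable from $s_2$ in $G'$.

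The total running time is polynomial in $|G|$ (with the factor $(k+1)\log n$ hidden in the bit complexity of the nondeterministic guesses), and the number of nondeterministic bits used is $(k+1)\lceil\log n\rceil = O(k\log|x|)$, which matches the requirement for W[P]. There is no real obstacle here: the proof reduces entirely to observing that the \emph{short} path $p_1$ can be guessed within the W[P] budget, and that checking the remaining reachability condition for $p_2$ is polynomial-time decidable. Hence \knodecolordisjointpaths, \knodedisjointpaths, and $\twonodedisjointpaths_{\leq k}$ all lie in W[P].
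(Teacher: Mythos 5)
Your proof is correct and follows essentially the same route as the paper: guess the $O(k)$ nodes of the short path $p_1$ with $O(k\log n)$ nondeterministic bits, verify it deterministically, and then check reachability of $t_2$ from $s_2$ in the graph with $p_1$'s nodes removed. The paper spells this out for \knodedisjointpaths and declares the other two cases analogous, whereas you phrase it as one generic algorithm; the substance is identical.
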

\begin{proof}
  We show membership in W[P] by using Definition 3.1 in Flum and Grohe~\cite{FlumG-springer06}.    They say that W[P] is the class of parameterized problems that can
  be decided by a nondeterministic Turing machine (NTM) in time
  $f(k)\cdot |x|^{O(1)}$ and such that it makes at most $O(h(k)\cdot
  \log n)$ nondeterministic choices in the computation of any input $(x,k)$.
 
    The problem \knodedisjointpaths can be decided by such an nondeterministic Turing machine as follows. Given a graph $G=(V,E)$ and a parameter $k$.
    The NTM
    first uses $(k-1) \cdot \log n$ steps to guess $k-1$ nodes $v_1,\ldots, v_{k-1}$
    in the right order. Then we can verify in $O(k \log n)$ steps that these
    nodes form a simple path $p_1=(s_1, v_1)(v_1, v_2) \cdots
    (v_{k-1}, t_1)$ from $s_1$ to $t_1$.  
    After this, the NTM     tests deterministically that $t_2$ is reachable from $s_2$ while
    avoiding the nodes $s_1,v_2,\ldots,v_{k-1},t_1$. This can be done
    in time polynomial in $|G|$.

    For \knodecolordisjointpaths and $\twonodedisjointpaths_{\leq k}$  the proof is analogous.
\end{proof}

\subsection{Proofs for Section~\ref{sec:twoedgedisjoint}}

\twoEdgeDisjointPaths*
\begin{proof}
  Just like the reduction in Theorem~\ref{theo:kcolor}, this proof is
  inspired by the adaption of Grohe and Gr\"uber \cite[Lemma
  16]{GroheICALP-07} of the main reduction of Slivkins~\cite{slivkins}. In
  fact, since the reduction in Slivkins also considers trails, we use
  exactly the same gadgets here.  

 We give a reduction
  from \kclique. Let $(G,k)$ be an instance of \kclique.  We construct
  the graph $G'$ from Theorem~\ref{theo:kcolor} and make the following
  changes to obtain our final graph $H$:
\begin{itemize}
\item In each gadget $G_{i,j}$, we split each $u_\ell$ in two nodes, that is $u_\ell^\text{in}$ and $u_\ell^\text{out}$. We call the two nodes that resulted from the same node a \emph{node pair}. We redirect all incoming edges from $u_\ell$ to $u_\ell^\text{in}$ and let all outgoing edges from $u_\ell$ begin in $u_\ell^\text{out}$.  Finally, we add an edge $u_\ell^\text{in} \to u_\ell^\text{out}$.  We depict this in Figure~\ref{fig:edgeOneGadget}.
We make exactly the same change to all $v_\ell$, $c_i$, $c_{i_1i_2}$, and $r_i$. 
\item We replace each $b$-edge by a $b$-path of length $k_\text{new} = k(k-1)/2\cdot 5 + 3k+ k+1+k(k-1)/2 +k \cdot 2(k+1)= 5k^2+3k+1$. Notice that this $k_\text{new}$ is longer than  $k'=k(k-1)/2\cdot 5 + 3k$ in Theorem~\ref{theo:kcolor} because we split some nodes and added new edges between them. To be precise, the length of the $a$-path from $c_1^\text{in}$ to $c_{k+1}^\text{out}$ became longer because we split all $k+1+k(k-1)/2$ control nodes and it has to pass in total $k \cdot 2(k+1)$ new edges between $u_\ell^\text{in}$ and $u_\ell^\text{out}$ and between $v_\ell^\text{in} $ and $v_\ell^\text{out}$. 
\end{itemize}  
The correctness proof now follows the lines of the proof of Theorem~\ref{theorem:two-disjoint-is-hard}. We show that there exist paths $p_a$ and $p_b$ in $G'$ that such that $p_a$ is an $a$-path of length $k'$ from $c_1$ to $c_{k+1}$ and $p_b$ is a $b$-path from $r_1$ to $r_{k+1}$ if and only if there exist two edge-disjoint $p_1$ and $p_2$, where $p_1$ has length exactly $k_\text{new}$ and is from $c_{1}^\text{in}$ to $c_{k+1}^\text{out}$ and $p_2$ is from $r_1^\text{in}$ to $r_{k+1}^\text{out}$. 

We will now show that $p_1$ corresponds to $p_a$ and $p_2$ corresponds to $p_b$. This then proves the lemma. If $p_a$ and $p_b$ exist, then we can use their nodes (or node pairs) to build edge disjoint paths $p_1$ and $p_2$. (We use the same nodes or node pairs thereof and do not change the order.)

For the other direction, let us assume that $p_1$ and $p_2$ exist in $H$.
We first show that $p_1$ corresponds to an $a$-path in $G'$. 
We have constructed our graph $H$ such that each path from $c_{1}^\text{in}$ to $c_{k+1}^\text{out}$ has length at least $k_\text{new}$ and length exactly $k_\text{new}$ if and only if it chooses a path corresponding to an $a$-path in $G'$, see also Observation~\ref{obs:replace-b}. 
We now show that $p_2$ cannot correspond to any path in $G'$ that uses $a$-edges.
Recall that there are 3 types of $a$-edges in $G'$: (i) the ones from and to control nodes, (ii) ``upward'' edges that connect row $j$ to row $i$ with $j > i$, and (iii) edges from $u_\ell$ to $v_\ell$ in one gadget.
Since $p_1$ is a path from $c_{1}^\text{in}$ to $c_{k+1}^\text{out}$ of length exactly $k_\text{new}$, the corresponding $a$-path from $c_1$ to $c_{k+1}$ in $G'$ uses all control nodes, see Lemma~\ref{lem:twopaths}(\ref{lem:twopaths:c}). Therefore, $p_1$ must do the same. Since we did split all control nodes, $p_1$ especially contains the edge between each node pair of control nodes. This implies that $p_2$ cannot use the edge between any node pair of control nodes and its corresponding path in $G'$ cannot contain any $a$-edge from or to an control node, that is type (i). 
So $p_2$ cannot go from row $i$ to a row $j$ with $i<j$ via control nodes. This means that, if $i < j$, then $p_2$ can only go from row $i$ to row $j$ by going through $r_{i+1}^\text{in}$ and $r_{i+1}^\text{out}$ since every remaining path from row $i$ to a larger row goes through $r_{i+1}^\text{in}$ and $r_{i+1}^\text{out}$. So, in order to go from row 1 to row $k$, path $p_2$ needs to visit all nodes $r_{1}^\text{in},r_{1}^\text{out}, \ldots, r_{1}^\text{in},r_{1}^\text{out}$, in that order. This means that it is also impossible for $p_2$ to use ``upward'' edges. (Otherwise there would be an $i$, such that $p_2$ would use the edge between $r_{i+1}^\text{in}$ and $r_{i+1}^\text{out}$ twice.) So the corresponding path in $G'$ must not use $a$-edges of type (ii). 
Finally, if $p_2$ used an edge between $u_\ell^{out}$ and $v_\ell^{in}$ in any gadget in row $i$, then it would have to visit every gadget in this row by construction, i.e., for all $j \in \{1,\ldots,n\}$ and all $\ell \in \{1,\ldots,k+1\}$: $(G_{i,j}[u_\ell^{in}],a,G_{i,j}[u_\ell^{in}]) \in p_2 \lor (G_{i,j}[v_\ell^{in}],a,G_{i,j}[v_\ell^{in}]) \in p_2$. 
But we know from Lemma~\ref{lem:twopaths}(\ref{lem:twopaths:d}) that the path corresponding to $p_1$ in $G'$ uses at least one edge in each row. This means that in each row there exists a gadget $G_{i,j}$ and an $\ell$ such that $p_1$ uses the edges $(G_{i,j}[u_\ell^{out}],a,G_{i,j}[u_\ell^{out}]), (G_{i,j}[u_\ell^{out}],a,G_{i,j}[v_\ell^{in}])$, and $(G_{i,j}[v_\ell^{in}],a,G_{i,j}[v_\ell^{out}])$. So $p_2$ cannot be edge-disjoint with $p_1$ if is uses such an edge. This implies that the path corresponding to $p_2$ in $G'$ cannot use edges of type (iii), so we finally know that the path corresponding to $p_2$ in $G'$ only contains $b$-edges.
So $G'$ and $k'$ are indeed in \knodecolordisjointpaths and we have an FPT-reduction.
\end{proof}

 \begin{figure}[t]
   \centering
 \begin{tikzpicture}[->, scale = 1, auto ]
\def\xdistance{2}  \def\distance{2.5}
\def\xout{+.0}
\def\yout{-.7}
\def\labell{${}$}
	\foreach \i in {1,2,4,5}{
		\node (Ai\i) at (\xdistance*\i,0) {};
		\node  (Bo\i) at (\xdistance*\i+\xout,\distance*-.7+\yout) {}; 
		\node  (Ao\i) at (\xdistance*\i+\xout,0+\yout) {};
		\node  (Bi\i) at (\xdistance*\i,\distance*-.7) {}; 
		\path	(Ai\i) edge  node  {\labell} (Ao\i);
		\path	(Bi\i) edge  node  {\labell} (Bo\i);
		\fill (Ai\i) circle (2pt);
		\fill (Bi\i) circle (2pt);
		\fill (Ao\i) circle (2pt);
		\fill (Bo\i) circle (2pt);
		\path	(Ao\i) edge  node  {\labell} (Bi\i);
	} 

    \foreach \i in {1,2}{
	  \node[label=right:$u_\i^\text{in}$,xshift = -1ex] at (Ai\i) {};
	  \node[label=left:$v_\i^\text{out}$,xshift = +1ex] at (Bo\i) {}; 
	  \node[label=left:$u_\i^\text{out}$,xshift = +1ex] at (Ao\i) {};
	  \node[label=right:$v_\i^\text{in}$,xshift = -1ex] at(Bi\i)  {}; 
    }
    \foreach \i in {4}{
	  \node[label=right:$u_k^\text{in}$,xshift = -1ex] at  (Ai\i) {};
	  \node[label=left:$v_k^\text{out}$,xshift = +1ex] at (Bo\i){};
	  \node[label=left:$u_k^\text{out}$,xshift = +1ex] at (Ao\i){};
	  \node[label=right:$v_k^\text{in}$,xshift = -1ex] at (Bi\i){};
    }    
    
    \foreach \i in {5}{
	  \node[label=right:$u_{k+1}^\text{in}$,xshift = -1ex] at (Ai\i){};
	  \node[label=left:$v_{k+1}^\text{out}$,xshift = +1ex] at (Bo\i){};
	  \node[label=left:$u_{k+1}^\text{out}$,xshift = +1ex] at (Ao\i){};
	  \node[label=right:$v_{k+1}^\text{in}$,xshift = -1ex] at  (Bi\i){};
    }
    
	\path 
    (Ao1) edge	  node  {\labell} (Ai2)
    (Bo1) edge	 [->]    node  {\labell} (Bi2) 
    (Ao4) edge	 [->]    node  {\labell} (Ai5)
    (Bo4) edge	 [->]    node  {\labell} (Bi5);   
    \node[label=center:$\cdots$] (A3) at ($(Ao2)!0.5!(Ai4)$) {}; 
    \node[label=center:$\cdots$] (B3) at ($(Bo2)!0.5!(Bi4)$) {}; 
    \draw[->] (Ao2) -- ($(Ao2)!0.75!(A3)$);
    \draw[->] (Bo2) -- ($(Bo2)!0.75!(B3)$);
    \draw[->] ($(A3)!0.25!(Ai4)$) -- (Ai4);
    \draw[->] ($(B3)!0.25!(Bi4)$) -- (Bi4);
\def\offset{1}     \draw[dashed, rounded corners=10pt] (\xdistance*1-\offset,\distance*-.9-\offset+.4) rectangle (\xdistance*5+\offset,\offset-0.6) {};     \end{tikzpicture}
    \caption{Internal structure of each gadget $G_{i,j}$ in the proof of Theorem~\ref{theo:edgeTwodisjoint}. All edges are $a$-edges.}
    \label{fig:edgeOneGadget}
  \end{figure}
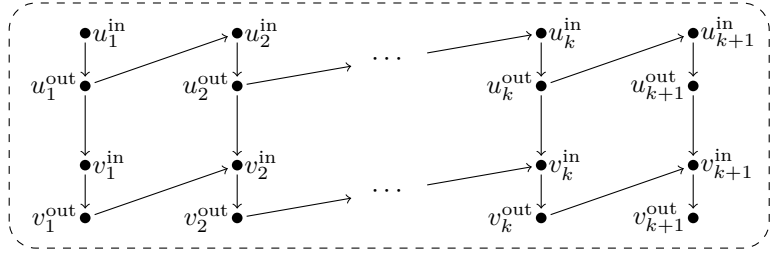

\makeatletter{}\section{Proofs for Section~\ref{sec:tractlanguages}}

\begin{algorithm}[t]
\caption{}\begin{algorithmic}[1]
\Require Graph $G=(V,E)$, nodes $s,t$ in $G$, parameter $k$, \re $a^kw?a^*$
\Ensure Decide if there exists a simple path from $s$ to $t$ matching the given \re
\If {FLPS$((V,E\cap(V \times \{a\} \times V)),s,t,k)$} return
YES  \label{alg:b?fptalg:1} \Comment{Call Algorithm~\ref{fptalg}}
\EndIf
\State $S = \{p_c \mid p_c=(u_0,u_1)\cdots (u_{c-1},u_c)$ is a simple
path that matches $w\}$
\For {each $p_c \in S$} \label{alg:b?fptalg:3} 
\State $V'_a \gets V \setminus V(p_c[1,c-1])$ \Comment{Delete all but the first and last node of $p_c$}
\State $E'_a = E \cap (V'_a \times \{a\} \times V'_a)$ \Comment{Consider only $a$-edges}
\State Compute $\hat{P}^{k+1}_{su_0} \subseteq^{k+1}_\text{rep} P^{k+1}_{su_0}$ in $(V'_a,E'_a)$ \label{alg:b?fptalg:7} 
\For {all sets $X \in \hat{P}^{k+1}_{su_0}$} \label{alg:b?fptalg:8} 
\State $V' \gets (V'_a \setminus X)$
\State $E' \gets E'_a \cap (V' \times V')$
\If {there exists a path from $u_c$ to $t$ in $(V',E')$}
\State return YES
\EndIf
\EndFor
\EndFor
\State return NO
\end{algorithmic}
\label{b?fptalg}
\end{algorithm}

\WConstOrNOTIsFPT*
\begin{proof}
          We give an FPT algorithm that solves this problem, see
  Algorithm~\ref{b?fptalg}.
  We first prove that Algorithm~\ref{b?fptalg} is correct.  If
  there exists a simple path matching $a^ka^*$, we find it using
  Algorithm~\ref{fptalg} in line~\ref{alg:b?fptalg:1} and return `yes'.

  So let us assume that there exists no such path.  We then use brute
  force to find all simple paths $p_c=(u_0,u_1)\cdots (u_{c-1},u_c)$
  matching $w$ and store them in a set $S$.   For each path $p_c \in S$, we compute the graph
  $(V'_a,E'_a)$ that does not contain the inner nodes of $p_c$ and
  only contains $a$-edges. Then we compute a set
  $\hat{P}^{k+1}_{su_0} \subseteq^{k+1}_\text{rep} P^{k+1}_{su_0}$ in
  $(V'_a,E'_a)$. We will
  argue using Lemma~\ref{proofforfptalg} that it suffices to consider
  paths in which the first $k+1$ nodes belong to a set $X \in
  \hat{P}^{k+1}_{su_0}$. To this end, assume that there is a
  simple path $$p = (v_0,v_1) \cdots (v_{k-1},u_0) \cdot p_c \cdot (u_c,v_{k+1})
  \cdots (v_{r-1},v_r)$$ matching $a^kwa^*$.  (We wrote some of the
  concatenation operators $\cdot$ explicitly to improve readability.)

  We consider two cases. If $|p| \leq
  2k+c$, we define $P= (v_0,v_1) \cdots (v_{k-1},u_0)$ and
  $Q=(u_c,v_{k+1}) \cdots \allowbreak (v_{r-1},v_r)$. Notice that
  $|V(Q)| \leq k+1$, $V(P)\cap V(Q) = \emptyset$, and $P \in  P^{k+1}_{su_0}$.
  Therefore, since $\hat{P}^{k+1}_{su_0}
  \subseteq^{k+1}_\text{rep} P^{k+1}_{su_0}$, there exists at least one
  path $P'$ with $V(P') \in \hat{P}^{k+1}_{su_0}$ and $V(P')\cap
  V(Q)=\emptyset$.  
  Since  $(V'_a,E'_a)$ does not contain any nodes
  of $p_c[1,c-1]$ by definition, we also know that $V(P') \cap V(p_c) =
  \{u_0\}$. (Notice that $u_c$ cannot be in the intersection, because
  it is in $V(Q)$.) This means that $P' \cdot p_c\cdot Q$ is indeed a simple path that matches $a^kwa^*$. 
	
  Otherwise we have that $|p| > 2k+c$, in which case we define $P= (v_0,v_1) \cdots (v_{k-1},u_0)$
  and $Q=(v_{r-k},v_{r-k+1}) \cdots \allowbreak (v_{r-1},v_r)$ and $R
  = (u_c,v_{k+1}) \cdots (v_{r-k-2},v_{r-k-1})$. So we have that 
  $$p = P \cdot p_c \cdot R\cdot (v_{r-k-1},v_{r-k}) \cdot Q.$$
  We also know that $P \in  P^{k+1}_{su_0}$ and $|V(Q)| = k+1$.
  Therefore, by
  definition of $\hat{P}^{k+1}_{su_0} \subseteq^{k+1}_\text{rep}
  P^{k+1}_{su_0}$, there must be a path $P'=(v_0,v'_1)(v'_1,v'_2)\cdots(v'_{k-1},u_0)$ with $V(P') \in
  \hat{P}^{k+1}_{su_0}$ such that $V(P')\cap V(Q) = \emptyset$. The
  path $P'$ is also does not contain any of the inner nodes of $p_c$, because $G'_a$ does not
  contain nodes of
  $V(p_c[1,c-1])$. 
	
  There are again two possibilities: $P'$ intersects with $R$ or
  not. In the first case, there exists a node $v'_i \in V(P')$ with
  minimal $i$ such that $v'_i = v_j \in V(R)$.  Then we replace the path $p$ by a new simple path $p' = (v_0,v'_1) \cdots
  (v'_{i},v_{j+1})(v_{j+1},v_{j+2}) \cdots (v_{r-1},v_r)$. But then
  $p'$ matches $a^ka^*$, because it does not contain $p_c$, whereas it
  still contains $Q$. This contradicts that no such path $p'$ exists
  (we would have found this path with Algorithm~\ref{fptalg} in
  line~\ref{alg:b?fptalg:1}).  Therefore, $P'$ does not intersect with
  $R$ and we have found our path.

  Finally we note that this algorithm is indeed an FPT algorithm. Line
  \ref{alg:b?fptalg:1} works in \fpt due to
  Theorem~\ref{theorem:longdirpathInFPT}. The set $S$ in
  line~\ref{alg:b?fptalg:3} can contain at most $O(n^c)$ different
  paths, so enumerating all of them is in \ptime. The rest is
  analogous to Algorithm~\ref{fptalg} and therefore in \fpt.
\end{proof}

\section{Proofs for Section~\ref{sec:blocklanguages}}

\subsection{Dichotomy for Node-Disjoint Paths}

\begin{algorithm}[t]{}
\caption{}
\begin{algorithmic}[1]
  \Require 
  \begin{tabular}[t]{l}
    Graph $G=(V,E)$, nodes $s,t$ in $G$, and 0-bordered $r= A_1 \cdots
    A_{k_1} A^* A'_{k_2} \cdots A'_1$\\
    \multicolumn{1}{r}{(assuming $A, A_{1},\ldots, A_{k_1}, A'_1,\ldots A'_{k_2} \neq \emptyset$)}
  \end{tabular}
  \Ensure Does there exist a simple path from $s$ to $t$ matching $r$?
\State $k \gets k_1 + k_2$ 
\If {there exists a simple path from $s$ to $t$ matching  $A_1 \cdots A_{k_1} A^{\leq k} A'_{k_2} \cdots A'_1$\label{alg:block1}}
\State return YES
\EndIf
\For {all $v \in V$\label{alg:block4}}
\State Compute $\hat{P}^{k+1}_{sv,r_1} \subseteq^{k+1}_\text{rep} P^{k+1}_{sv,r_1}$ in $G$ with  $r_1= A_1 \cdots A_{k_1} A^{k_2} $.  \label{alg:block5} \For {all sets $X \in \hat{P}^{k+1}_{sv,r_1}$\label{alg:block6}}
\State $V' \gets (V \setminus X) \cup \{v\}$
\State $E' \gets E \cap (V' \times \Sigma \times V')$\label{alg:blockRevEdges}\For {all $u \in V$} \label{alg:block12}
\State Compute $\hat{P}^{k_2+1}_{ut,r_2} \subseteq^{k_2+1}_\text{rep} P^{k_2+1}_{ut,r_2}$ in $(V',E')$ with  $r_2= A'_{k_2} \cdots A'_{1}$. \label{alg:block13}
\For {all sets $X' \in \hat{P}^{k_2+1}_{ut,r_2}$}
\State $V'' \gets (V' \setminus X') \cup \{u\}$
\State $E'' \gets E' \cap (V'' \times A \times V'')$ \Comment{$(V'',E'')$ has only $A$-edges}
\If {there exists a path from $v$ to $u$ in $(V'',E'')$\label{alg:blockShortest2}}
\State return YES
\EndIf 
\EndFor
\EndFor \label{alg:block22}
\EndFor
\EndFor \label{alg:block:lastfor}
\State return NO
\end{algorithmic}
\label{alg:block}
\end{algorithm}

\begin{lemma} \label{lemma:0crit}
  Let $\cR$ be the class of 0-bordered STEs. Then
  $\nodesimpath(\cR)$ is in \fpt with parameter $k_r$.
\end{lemma}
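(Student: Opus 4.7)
The plan is to verify that Algorithm~\ref{alg:block} decides $\nodesimpath(r)$ in FPT time with parameter $k_r = k_1 + k_2$ for every $0$-bordered STE $r = B_\text{pre} A^* B_\text{suff}$. I will first treat the non-optional case $B_\text{pre} = A_1 \cdots A_{k_1}$ and $B_\text{suff} = A'_{k_2} \cdots A'_1$, in which $0$-borderedness means $A \subseteq A_i$ and $A \subseteq A'_j$ for every $i,j$; the fully optional case reduces to this by enumerating, for each matching simple path, the $2^{k_r}$ possible subsets of optional atoms that are actually consumed, which preserves FPT.

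The correctness argument mimics the blueprint of Theorem~\ref{theorem:longdirpathInFPT}, applying representative sets (Lemma~\ref{lemmaSizeBound}) twice: from the source side for prefixes matching $r_1 = A_1 \cdots A_{k_1} A^{k_2}$ of length $k = k_1 + k_2$, and from the target side for suffixes matching $r_2 = A'_{k_2} \cdots A'_1$ of length $k_2$. The required extension of Lemma~\ref{lemmaSizeBound} from length-$k$ paths to paths matching a fixed regular expression of length $O(k_r)$ built from atomic expressions is routine: take the product of $G$ with a small NFA for $r_1$ (or $r_2$), run the original construction on the product, and project back; this introduces only a polynomial factor in $|G|$ and an FPT factor in $k_r$. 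Let $p$ be a shortest simple path from $s$ to $t$ matching $r$. If $|p| \leq 2(k_1+k_2)$, then $p$ matches $A_1 \cdots A_{k_1} A^{\leq k} A'_{k_2} \cdots A'_1$, an expression of length $O(k_r)$, so line~\ref{alg:block1} finds $p$ via color coding on the product of $G$ with an $O(k_r)$-state NFA (Theorem~\ref{FPT:ak}). Otherwise, decompose $p = P \cdot R \cdot Q$, where $P$ has length $k_1 + k_2$ and ends at $v$, $Q$ has length $k_2$ and starts at $u$, and $R$ is the remaining all-$A$ middle. Then $P$ matches $r_1$ by construction, so $V(P) \in P^{k+1}_{s,v,r_1}$; since $|V(Q)| = k_2 + 1 \leq k+1$, the representative set yields some $P' \in \hat{P}^{k+1}_{s,v,r_1}$ disjoint from $V(Q)$. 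The shortcut argument of Lemma~\ref{proofforfptalg} extends to this setting: should $P'$ meet an interior node $v_j$ of $R$ at position $i$, one obtains a simple path $p' = P'[0,i]\cdot p[j,\ldots]$ shorter than $p$; this $p'$ still matches $r$ because if $i \leq k_1$, the missing atoms $A_{i+1} \cdots A_{k_1}$ are absorbed by the subsequent $A$-labeled edges of $p$, using precisely that $A \subseteq A_j$ for every $j$ (i.e., $0$-borderedness). A symmetric argument on the suffix side, performed in the subgraph of line~\ref{alg:blockRevEdges} where $V(P') \setminus \{v\}$ has already been removed, yields $Q' \in \hat{P}^{k_2+1}_{u,t,r_2}$ disjoint from $V(P')$ and the interior of $R$, and the reachability check at line~\ref{alg:blockShortest2} then confirms the existence of an $A$-path from $v$ to $u$ avoiding both $P'$ and $Q'$.

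For complexity, line~\ref{alg:block1} is FPT by color coding on an $O(k_r)$-length expression; each representative-set computation is FPT by the extended Lemma~\ref{lemmaSizeBound}, producing families of size $2^{O(k_r)} \binom{2k_r}{k_r}$; the nested loops at lines~\ref{alg:block4}--\ref{alg:block:lastfor} iterate over $O(n^2)$ choices of $(v,u)$ and over FPT-many representative sets on each side, with a linear-time reachability check at the bottom. The main obstacle will be formalizing the two-sided shortcut argument, namely showing that the prefix and suffix swaps cannot simultaneously introduce an intersection with $R$ that fails to yield a shorter matching path contradicting minimality of $|p|$. This decouples cleanly because once $P'$ is fixed, the suffix representative set is computed in the subgraph obtained by removing $V(P') \setminus \{v\}$, so the suffix shortcut argument proceeds independently of the prefix choice and is a direct mirror of the prefix analysis.
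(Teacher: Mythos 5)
Your reduction of the optional case to the non-optional one is broken. For an optional bounded expression the cut border is zero \emph{by definition}, irrespective of whether $A \subseteq A_i$; so a class such as $\{(b?)^k a^* \mid k \in \nat\}$ is 0-bordered even though $\{a\} \not\subseteq \{b\}$. Expanding the optionals into the $2^{k_r}$ subsets of consumed atoms produces disjuncts like $b^k a^*$, which is \emph{not} 0-bordered and whose \nodesimpath problem is W[1]-hard (Corollary~\ref{akb*-is-hard}); your algorithm would have to decide each such disjunct, so the reduction does not preserve membership in \fpt, and the containment $A\subseteq A_i$ that your own shortcut argument relies on fails for exactly these disjuncts. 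The union of the disjuncts is tractable only because of structure that the expansion destroys: when both bounded parts are optional the whole language is downward closed and Proposition~\ref{prop:prefixclosed} applies, and the mixed cases are handled in the paper by modifying the algorithm (treating the optional side as a downward-closed tail) rather than by expansion. You also omit the degenerate case $A=\emptyset$, which the paper dispatches separately via Bagan et al.'s \fpt algorithm for finite languages.

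Second, your exchange argument in the main non-optional case has a quantitative hole. You take $Q$ to be the length-$k_2$ suffix of $p$. If the replacement prefix $P'$ meets an interior node $v_j$ of $R$ at its position $i$, the rerouted path $p' = P'[0,i]\cdot p[j,\ldots]$ has length $i + |p| - j$, which can be as small as $k_2+2$ (take $i=1$ and $j = |p|-k_2-1$); for $k_1 \geq 3$ this is below $k_1+k_2$, so $p'$ need not match $r$ and you get no contradiction with the minimality of $p$ among matching paths. This is precisely why the paper's adaptation of Lemma~\ref{proofforfptalg} keeps $Q$ as the last $k+1$ nodes of $p$ (length $k = k_1+k_2$): then $p'$ always contains a length-$k$ tail, is long enough to match $r$ by 0-borderedness, and the contradiction goes through (and the representativity parameter $k+1$ is chosen to accommodate this larger $Q$). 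Finally, the ``routine'' product-with-an-NFA extension of Lemma~\ref{lemmaSizeBound} is not routine: representativity in the product graph guarantees disjointness from sets of \emph{product} nodes and simplicity of \emph{product} paths, whereas you need disjointness from $V(Q)\subseteq V$ and simplicity in $G$ after projection; the paper's Lemma~\ref{lemma:SizeBoundWithRE} instead threads the expression through the dynamic program directly over $V$ to avoid exactly this issue.
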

\begin{proof}
  We prove the lemma by case distinction on the form of $$r =
  B_\text{pre} A^* B_\text{suff} \in \cR.$$ There are two cases for $A$:
  either $A = \emptyset$ or $A \neq \emptyset$. In the former case,
  $L(r)$ is finite. Then we can use an algorithm obtained by 
  Bagan et al.~\cite[Theorem 6, Corollary 2]{bagan}, to solve it in
  FPT. (Bagan et al.\ use the size of the NFA as parameter for their
  algorithm.)
  
  Otherwise, we know that $A \neq \emptyset$. Furthermore, we can
  assume \mbox{w.l.o.g.} that all $A_1,\ldots, \allowbreak A_{k_1}, \allowbreak A'_{1},\ldots,
  A'_{k_2}$ are non-empty. Indeed, if this would not be the case,
  then the expression can be simplified ($L(r) = \emptyset$ is easy to
  test and $\emptyset ?$ can be simplified to $\varepsilon$).  We now
  differentiate between the forms of $B_\text{pre}$ and
  $B_\text{suff}$. There are two possible forms, that is (1) $B_1?
  \cdots B_{\ell}?$ with $\ell \geq 0$ or (2) $B_1 \cdots B_{\ell}$
  with $\ell \geq 1$. If
  $B_\text{pre}$ and $B_\text{suff}$ are of form (1), the language is
  downward closed. Therefore we can evaluate the answer in \ptime, see
  Proposition~\ref{prop:prefixclosed}.  If $B_\text{pre}$ and
  $B_\text{suff}$ are both of form (2), we will show that we can use
  Algorithm~\ref{alg:block}.  We show the correctness of this
  algorithm next and we explain later how to change it if
  $B_\text{pre}$ has form (2) (resp., (1)) and $B_\text{suff}$ has
  form (1) (resp., (2)).

  We first give the idea of the algorithm. Let $k = k_1 + k_2$ (that
  is, $k$ is the parameter $k_r$ from Definition~\ref{def:ste}).
   First the algorithm tests
  if there is a simple path that matches $A_1 \cdots A_{k_1} A^{\leq
    k} A'_{k_2} \cdots A'_1$. Dealing with this case separately
  simplifies the cases we need to treat in
  lines~\ref{alg:block4}--\ref{alg:block:lastfor}. In
  lines~\ref{alg:block4}--\ref{alg:block:lastfor} the algorithm
  essentially performs two nestings of Algorithm~\ref{fptalg}: Since
  neither the language of $B_\text{pre}$ nor the language of
  $B_\text{suff}$ is downward closed, we need to execute
  Algorithm~\ref{fptalg} once to find the prefix and once to find the
  suffix. Furthermore, we use a variant of $\hat{P}^k_{sv}$, namely
  $\hat{P}^k_{sv,r_1}$, that allows us to make sure that the prefix
  (suffix, resp.) of the path matches $B_\text{pre}$ ($B_\text{suff}$,
  resp.). More precisely, 
  \begin{multline*}
    P^{k}_{sv,r_1} := \{X \mid X\subseteq V \text{ such that } s,v \in X,
|X| = k, \text{ and there is a path from $s$ to $v$ in } G \\
 \text{ of length } k-1, \text{ matching } r_1,  \text{ and containing exactly the nodes in } X \}.
\end{multline*}

 We now show the correctness.
  If we have a simple $s$-$t$-path matching $r$ of length up to $2k$, it will be found in
  line~\ref{alg:block1}.   So it remains to test whether there exists a simple $s$-$t$-path matching $r$ of length at least $2k+1$.
  In each such path the first $k+1$ nodes must match $r_1 = A_1 \cdots A_{k_1} A^{k_2}$, while the rest of the path matches $A^* A'_{k_2} \cdots A'_{1}$. 
  We now prove analogously to Lemma~\ref{proofforfptalg} that it suffices to consider paths in which the first $k+1$ nodes belong to $X \in \hat{P}^{k+1}_{sv,r_1} \subseteq^{k+1}_\text{rep} P^{k+1}_{sv,r_1}$. 
In fact, the proof of Lemma~\ref{proofforfptalg} needs no adaption. It
still works since $r$ is 0-bordered. To be more precise: if we start
with a shortest simple path $p=P \cdot (v_k,v_{k+1}) \cdot  R \cdot  (v_{|p|-k-1},v_{|p|-k}) \cdot  Q$ of length at least $2k+2$ (or $p=P \cdot (v_k,v_{k+1}) \cdot  Q$ with $R = \varepsilon$ if $|p| = 2k+1$) that matches $r$, we also find a path $P'$ with $V(P') \in \hat{P}^{k+1}_{sv_k,r_1}$ that is disjoint from $V(Q)$. If $P'$ and $R$ intersect, we obtain a shorter simple path that still matches $r$ because $r$ is 0-bordered and the resulting path is still longer than $k$ (it contains $Q$). Notice that this is the reason why we need to consider paths of length
$k$ in line~\ref{alg:block5} of the algorithm, instead of length $k_1$.

We now obtained that if there is a simple $s$-$t$-path matching $r$ of length at least $2k+1$, then there exists a $v \in V$ and a set $X$ in $\hat{P}^{k+1}_{sv,r_1}$, such that its first $k+1$ nodes belong to $X$. Then we need to find the rest of the path, that is, a simple $v$-$t$-path matching $A^* A'_{k_2} \cdots A'_1$ in the graph without $X\setminus \{v\}$. 

Due to line \ref{alg:block1} we know that the $v$-$t$-path must have length at least $k_2+1$. Symmetrically to before we can show that, if such a path exists, then there exists a $u$ such that its last $k_2+1$ nodes belong to a set $X' \in \hat{P}^{k_2+1}_{ut,r_2} \subseteq^{k_2+1}_\text{rep} P^{k_2+1}_{ut,r_2}$. It then remains to test if there is a path from $v$ to $u$ that matches $A^*$ which is done in line \ref{alg:blockShortest2}.
This concludes the correctness proof.

We next show that the algorithm is indeed in \fpt.
Bagan et al.~\cite[Theorem 6]{bagan} showed that the test in line \ref{alg:block1} is in FPT, to be precise $O(2^{O(2k)}\cdot|N|\cdot |G| \cdot \log |G|)$ where $|N|$ is the size of the NFA used. Here we use an NFA of size $2k$.
It remains to show that the $\hat{P}^{k+1}_{sv,r_1}$ in line
\ref{alg:block5} (and $\hat{P}^{k_2+1}_{ut,r_2}$ in line
\ref{alg:block13}, resp.) can be computed in \fpt time and its size
depends only on $k$. This is guaranteed by
Lemma~\ref{lemma:SizeBoundWithRE}.                                 (Notice that we can efficiently compute $\hat{P}^{k_2+1}_{ut,r_2}$ for all $u$ by using the construction in Lemma~\ref{lemma:SizeBoundWithRE} to compute $\hat{P}^{k_2+1}_{tu,r_2}$ on the graph with reversed edges.)

So Algorithm~\ref{alg:block} is indeed an \fpt algorithm if $B_\text{pre}$ and $B_\text{suff}$ are of form (2). We now explain how it can be changed to work if $B_\text{pre}$ is of form (2) and $B_\text{suff}$ of form (1), that is: Assume we have $r = A_1 \cdots A_{k_1} A^* A'_{k_2}?\cdots A'_1?$. Then we make the following changes:
\begin{itemize}
\item in line \ref{alg:block1} the path should match $A_1 \cdots A_{k_1} A^{\leq k+k_2} A'_{k_2}?\cdots A'_1?$.
\item we replace line \ref{alg:block12} to \ref{alg:block22} with a test for an $v$-$t$-path matching $r'=A^* A'_{k_2}?\cdots A'_1?$. Notice that this implies that there exists a simple $v$-$t$-path matching $r'$ since $L(r')$ is downward closed.
\end{itemize}
The case $r= A_1? \cdots A_{k_1}? A^* A'_{k_2}\cdots A'_1$ is symmetric. It has indeed the same answer as $A'_1 \cdots A'_{k_1} A^* A_{k_2}?\cdots A_1?$ on the graph with reversed edges. 
\end{proof}

So, to complete the proof of Lemma~\ref{lemma:0crit}, it remains to
prove Lemma~\ref{lemma:SizeBoundWithRE}. We need to introduce some
terminology and notation. 
For $p \in \nat$, a \emph{$p$-family} $\mathcal{A}$ is a set containing
sets of size $p$. By $|\mathcal{A}|$ we denote the number of sets in $\mathcal{A}$.

We also restate Lemma 3.3
and Corollary 4.16 from \cite{fomin} since we need them in the
proof. Lemma~\ref{Fomin3.3} states that the relation ``is a $q$-representative
set for'' is transitive.
\begin{lemma}[Lemma 3.3 in \cite{fomin} for directed graphs]\label{Fomin3.3}
Given a graph $G=(V,E)$ and $\cS$ a family of subsets of $V$. If $\cS'\subseteq^q_\text{rep} \cS$ and $\hat{\cS}\subseteq^q_\text{rep} \cS'$, then $\hat{\cS}\subseteq^q_\text{rep} \cS$.
\end{lemma}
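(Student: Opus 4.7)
The statement is a straightforward transitivity property, so the proof plan is essentially to unfold the definition of $q$-representative twice.

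The plan is to take an arbitrary witness set $Y \subseteq V$ with $|Y| \leq q$ and an arbitrary $X \in \cS$ that is disjoint from $Y$ with $|X \cup Y| \leq 2q$, and to produce a corresponding $\hat{X} \in \hat{\cS}$ with the same properties. First, I would invoke the hypothesis $\cS' \subseteq^q_\text{rep} \cS$ on the pair $(X, Y)$ to obtain a set $X' \in \cS'$ which is disjoint from $Y$ and satisfies $|X' \cup Y| \leq 2q$. Then, using the hypothesis $\hat{\cS} \subseteq^q_\text{rep} \cS'$ on the pair $(X', Y)$ (noting that $|Y| \leq q$ still holds and $X' \in \cS'$ satisfies the required disjointness and size bound), I would obtain a set $\hat{X} \in \hat{\cS}$ that is disjoint from $Y$ with $|\hat{X} \cup Y| \leq 2q$, which is exactly what is needed to conclude $\hat{\cS} \subseteq^q_\text{rep} \cS$.

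There is no real obstacle here: both applications of the definition match syntactically, and the parameter $q$ is the same throughout, so no arithmetic or combinatorial argument is required. The proof is essentially a two-step diagram chase through the definition. One should only be careful to verify that the intermediate set $X'$ produced by the first application indeed serves as a valid input to the second application, i.e., that it lies in $\cS'$ (which is exactly what the first application guarantees) and meets the size and disjointness requirements (ditto). Since the inclusion $\hat{\cS} \subseteq \cS' \subseteq \cS$ also follows from the two hypotheses, the subfamily relation required by the definition is inherited for free.
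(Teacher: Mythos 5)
Your proof is correct: the two-step unfolding of the definition (obtain $X' \in \cS'$ for the pair $(X,Y)$, then feed $(X',Y)$ into the second hypothesis to get $\hat{X} \in \hat{\cS}$) is exactly the standard transitivity argument, and you correctly check that $X'$ satisfies the disjointness and size conditions needed for the second application and that $\hat{\cS} \subseteq \cS$ is inherited. The paper itself states this lemma without proof, citing Fomin et al., whose own proof is this same definitional chase.
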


\begin{corollary}[Corollary 4.16 in \cite{fomin}, without weight function]
\label{Fomin4.16}
There is an algorithm that, given a $p$-family $\mathcal{A}$ of sets
over a universe $U$ of size $n$ and an integer $q$, computes in time $O\left(|\mathcal{A}| \cdot \left(\dfrac{p+q}{q} \right)^q \cdot 2^{o(p+q)} \cdot \log n \right)$ a subfamily $\hat{\mathcal{A}} \subseteq^q_\text{rep} \mathcal{A}$ such that $|\hat{\mathcal{A}}|\leq  \binom{p+q}{p} \cdot 2^{o(p+q)}$. 
\end{corollary}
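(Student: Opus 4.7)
The plan is to obtain this statement from the matroid-based representative-families machinery of Fomin, Lokshtanov, Panolan, and Saurabh, which itself refines Marx's derandomization of color coding via truncated matroids. First, I would fix a linear representation of the uniform matroid $U_{p+q,n}$ on the universe $U$: assign to each element of $U$ a vector in $\mathbb{F}^{p+q}$ over a sufficiently large field so that every $p+q$ vectors are linearly independent; a Vandermonde-style construction over an algebraic extension of a small prime field works. Next, I would lift this to an embedding of $p$-subsets of $U$ into the exterior power $\bigwedge^{p}\mathbb{F}^{p+q}$ of dimension $\binom{p+q}{p}$, mapping each $A \in \mathcal{A}$ to the wedge product of the vectors associated with its elements. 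The classical wedge-product criterion says that a $p$-set $A$ extends through a disjoint $q$-set $Y$ to an independent $(p+q)$-set iff $w(A) \wedge w(Y) \neq 0$, which turns $q$-representativity into linear algebra in the exterior power.

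Given this, $\hat{\mathcal{A}}$ is produced by a single greedy pass over $\mathcal{A}$: process its sets in any order, maintain a basis of the linear span of their wedge images, and keep exactly those $A$ whose image is linearly independent from the current basis. Correctness of $\hat{\mathcal{A}} \subseteq^{q}_{\text{rep}} \mathcal{A}$ is immediate from the wedge criterion (any $Y$ witnessing representativity for some discarded $A$ also witnesses it for some kept $A$ whose image lies in the span), and the size bound $|\hat{\mathcal{A}}| \leq \binom{p+q}{p}$ is just the dimension of the ambient exterior power. This already gives the combinatorial part of the statement.

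The running time is the delicate part. A naive implementation in the full $\binom{p+q}{p}$-dimensional exterior power costs a $\binom{p+q}{p}^{O(1)}$ factor per set, which is worse than the claimed $\bigl(\tfrac{p+q}{q}\bigr)^{q} \cdot 2^{o(p+q)}$. The improvement comes from \emph{truncating} the representation: instead of using $\bigwedge^{p}\mathbb{F}^{p+q}$, one projects it into a lower-dimensional space (of size roughly $\bigl(\tfrac{p+q}{q}\bigr)^{q} \cdot 2^{o(p+q)}$) while preserving $q$-representativity, by multiplying with a carefully chosen matrix whose generic instances satisfy the required rank condition (verified via Schwartz–Zippel) and are then derandomized using algebraic designs. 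Each independence check then costs $\bigl(\tfrac{p+q}{q}\bigr)^{q} \cdot 2^{o(p+q)}$ field operations, and the extra $\log n$ absorbs the cost of arithmetic over elements whose encoding is polynomial in $|U|$. The main obstacle is precisely this truncation step: showing deterministically, within the stated time budget, that the projected representation still reflects $q$-representativity. This is where I would invoke the Lokshtanov–Misra–Panolan–Saurabh construction used in \cite{fomin} rather than build it from scratch, since reproducing it requires a nontrivial algebraic analysis that is orthogonal to the rest of our paper.
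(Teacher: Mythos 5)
The paper does not actually prove this statement: it is imported verbatim from Fomin et al.\ (Corollary 4.16 of \cite{fomin}, with the weight function dropped) and used as a black box, so there is no internal proof to compare yours against. Your reconstruction of the external argument therefore has to stand on its own.

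The combinatorial half of your argument is sound and is indeed the classical route: represent $U_{p+q,n}$ linearly, map each $p$-set to a wedge product in $\bigwedge^{p}\mathbb{F}^{p+q}$, and greedily keep a maximal independent subfamily; the wedge criterion gives $q$-representativity and the dimension $\binom{p+q}{p}$ gives the size bound. The gap is in the running time, which is the entire point of this corollary. The per-set cost $\bigl(\tfrac{p+q}{q}\bigr)^{q}\cdot 2^{o(p+q)}$ is \emph{not} obtained in \cite{fomin} by projecting or ``truncating'' the exterior-power representation; their algebraic machinery only yields a per-set cost polynomial in $\binom{p+q}{p}$, which is exponentially worse when $p \gg q$. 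The uniform-matroid speed-up comes from an entirely different, purely combinatorial construction: an $(n,p,q)$-separating collection $\mathcal{F}$ (for every disjoint $p$-set $A$ and $q$-set $B$, some $F\in\mathcal{F}$ satisfies $A\subseteq F$ and $F\cap B=\emptyset$), built from splitters and brute force on small sub-universes, with $|\mathcal{F}|\leq\binom{p+q}{p}\cdot 2^{o(p+q)}\cdot\log n$ and with only $\bigl(\tfrac{p+q}{q}\bigr)^{q}\cdot 2^{o(p+q)}\cdot\log n$ members of $\mathcal{F}$ containing any fixed $p$-set; one then retains at most one set of $\mathcal{A}$ per member of $\mathcal{F}$. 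Your proposed truncation via a generic projection certified by Schwartz--Zippel is not a construction known to preserve $q$-representativity at that reduced dimension, and in any case you explicitly fall back on invoking the construction from \cite{fomin} for exactly this step, which makes the proposal circular as a standalone proof. To make it self-contained you would need to reproduce the separating-collection construction, or else settle for the $\binom{p+q}{p}^{O(1)}$ per-set bound --- which would still suffice for the FPT claims in this paper, but does not establish the corollary as stated.
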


We now adapt Lemma 5.2 in Fomin et al.~\cite{fomin} to show a time and
space bound for the sets $\hat{P}^{k+1}_{sv,r_1}$ and $\hat{P}^{k_2+1}_{ut,r_2}$ in the proof of
Lemma~\ref{lemma:0crit} (where $r_1=A_1\cdots A_{k_1}A^{k_2}$ and $r_2 = A'_{k_2}\cdots A'_1$). 
  We think
that this lemma might be of interest for others that try to find more
languages that can be evaluated in \fpt.

Recall that 
  \begin{multline*}
    P^{k}_{sv,r} := \{X \mid X\subseteq V \text{ such that } s,v \in X,
|X| = k, \text{ and there is a path from $s$ to $v$ in } G \\
 \text{ of length } k-1, \text{ matching } r,  \text{ and containing exactly the nodes in } X \}.
\end{multline*}
\begin{lemma} \label{lemma:SizeBoundWithRE} For each regular expression
  $r=A_1\cdots A_{k-1}$, a collection of families $\hat{P}^k_{sv,r}
  \subseteq^k_\text{rep} P^k_{sv,r}$, $v \in V\setminus \{s\}$ of size at
  most $\binom{2k}{k} \cdot 2^{o(2k)}$ each can be found in time
  $O\left( 8^{k+o(k)}m \log n \right),$ where $n = |V|$ and $m = |E|$.
\end{lemma}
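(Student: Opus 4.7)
The plan is to adapt the proof of Lemma 5.2 in Fomin et al.~\cite{fomin} to handle the positional label constraints imposed by $r = A_1 \cdots A_{k-1}$. The underlying technique is dynamic programming on the prefix length of the path matched so far, combined with the representative-set compression provided by Corollary~\ref{Fomin4.16} to keep the intermediate families small. The only essential difference to the unlabeled case is that, when extending a partial path by one edge, we only consider edges whose label lies in the allowed set $A_i$ for the current position $i$.

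Concretely, I would compute families $\hat P^{\,i}_{sv,r_i}$ for $i = 1, \ldots, k$ and all $v \in V$, where $r_i = A_1 \cdots A_{i-1}$. Initialize $\hat P^{\,1}_{ss,\varepsilon} = \{\{s\}\}$ and $\hat P^{\,1}_{sv,\varepsilon} = \emptyset$ for every $v \neq s$. For the transition, to compute $\hat P^{\,i+1}_{sv,r_{i+1}}$, first form the candidate family
\[
 Q^{\,i+1}_{sv} \;=\; \bigl\{\, X \cup \{v\} \;\bigm|\; \exists u \in V, a \in A_i \text{ with } (u,a,v) \in E,\; X \in \hat P^{\,i}_{su,r_i},\; v \notin X \,\bigr\},
\]
and then apply Corollary~\ref{Fomin4.16} with $p = i+1$ and $q = k$ to compute $\hat P^{\,i+1}_{sv,r_{i+1}} \subseteq^{k}_\text{rep} Q^{\,i+1}_{sv}$, of size at most $\binom{k+i+1}{i+1} \cdot 2^{o(k+i+1)}$. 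Taking $i = k-1$ at the end gives a family of size $\binom{2k}{k}\cdot 2^{o(2k)}$ as required.

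Correctness reduces to two points. First, $Q^{\,i+1}_{sv}$ contains, for every $X \in P^{\,i+1}_{sv,r_{i+1}}$, a predecessor witness: if $X$ arises from a path of length $i$ matching $r_{i+1}$ ending in some $u$ with $(u,a,v) \in E$ and $a \in A_i$, then $X \setminus \{v\} \in P^{\,i}_{su,r_i}$, and $k$-representativity of $\hat P^{\,i}_{su,r_i}$ yields a representative $\hat X$ disjoint from any blocker of size at most $k$, so $\hat X \cup \{v\} \in Q^{\,i+1}_{sv}$ is a representative for $X$. Hence $Q^{\,i+1}_{sv} \subseteq^k_\text{rep} P^{\,i+1}_{sv,r_{i+1}}$, and transitivity (Lemma~\ref{Fomin3.3}) gives $\hat P^{\,i+1}_{sv,r_{i+1}} \subseteq^k_\text{rep} P^{\,i+1}_{sv,r_{i+1}}$. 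For the running time, each iteration touches every edge at most $|\hat P^{\,i}_{su,r_i}|$ times per endpoint and then invokes Corollary~\ref{Fomin4.16} once per node; bounding sizes as in Fomin et al.\ and summing over $i \leq k$ yields the claimed bound $O\bigl(8^{k+o(k)} m \log n\bigr)$, with the label restriction only pruning edges and therefore not affecting the asymptotic estimate.

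The main obstacle is verifying that representativity is indeed preserved through the dynamic programming step after the label restriction; once this is settled via the observation above together with Lemma~\ref{Fomin3.3}, the bounds carry over essentially unchanged from the unlabeled setting of \cite{fomin}.
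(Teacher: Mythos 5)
Your overall strategy---dynamic programming over the prefix length, extending partial paths only along edges whose label lies in the current $A_i$, compressing each intermediate family with Corollary~\ref{Fomin4.16}, and chaining the representativity guarantees via transitivity (Lemma~\ref{Fomin3.3})---is exactly the paper's approach, and your treatment of the label constraint (the only point where the labeled setting differs from Fomin et al.) is correct. The genuine gap is your choice of the representativity parameter: you compute a $k$-representative family at \emph{every} level $i$, whereas the induction requires the parameter to \emph{decrease} with $i$. Concretely, to show that $Q^{i+1}_{sv}$ represents $P^{i+1}_{sv,r_{i+1}}$ against a blocker $Y$, you peel off the last edge $(u,a,v)$ and must find a representative of $X\setminus\{v\}$ in $\hat P^{i}_{su,r_i}$ that avoids the set $Y\cup\{v\}$, which has size $|Y|+1$. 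So if level $i+1$ must tolerate blockers of size $k$, level $i$ must tolerate blockers of size $k+1$, and by downward induction level $i$ must be $(2k-i)$-representative. With your fixed $q=k$, the step ``$k$-representativity of $\hat P^{i}_{su,r_i}$ yields a representative disjoint from any blocker of size at most $k$'' is being applied to a blocker of size $k+1$ and does not follow; the deficit compounds over the $k-1$ levels. The paper's proof (adapting Claim 5.1 and Lemma 5.2 of Fomin et al.) maintains the invariant $\hat P^{i}_{sv,r^{i-1}}\subseteq^{2k-i}_\text{rep} P^{i}_{sv,r^{i-1}}$, so that $p+q=2k$ is constant throughout, the final level is exactly $k$-representative as the lemma demands, and the size bound $\binom{2k}{i}\cdot 2^{o(2k)}$ and the running time come out as stated. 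Once you replace your fixed $q=k$ by $q=2k-(i+1)$ at level $i+1$, the rest of your argument goes through essentially unchanged.
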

\begin{proof}

  We describe a dynamic programming-based algorithm. Let $V =
  \{s,v_1,\ldots,v_{n-1}\}$ and $D$ be a $(k-1)\times (n-1)$ matrix
  where the rows are indexed with integers in ${2,\ldots, k}$ and the
  columns are indexed with nodes in $\{v_1, \ldots ,
  v_{n-1}\}$. The entry $D[i, v]$ will store the family
  $\hat{P}^i_{sv,r^{i-1}} \subseteq^{2k-i}_\text{rep} P^i_{sv,r^{i-1}}$,
  where $r^i$ denotes the prefix $A_1 \cdots A_i$ of $r$. We fill the
  entries in the matrix $D$ in the increasing order of rows.  For $i =
  2$, we set $D[2,v] = \{\{s,v\}\}$ if $G$ has an edge of the form $s
  \stackrel{a}{\to} v$ with $a \in A_1$ (otherwise $D[2,v] = \{\}$). 
  Assume that we have filled
  all the entries until row
  $i$. 
 For two families of
sets $\mathcal{A}$ and $\mathcal{B}$, we define $$\mathcal{A} \bullet
\mathcal{B} = \{ X \cup Y \mid X \in \mathcal{A},\text{ }Y \in \mathcal{B},
\text{ and } X \cap Y = \emptyset\}.$$  
  We denote by $u \stackrel{A_i}{\to} v$ that
  there exists an edge $u \stackrel{a}{\to} v$ with $a \in A_i$. 
   Let $$\mathcal{N}^{i+1}_{sv,r^i}=\bigcup_{u\stackrel{A_i}{\to}
    v} \hat{P}^i_{su,r^{i-1}} \bullet \{v\}.$$
We next adapt Claim 5.1 in \cite{fomin} such that it takes $r$ into account, that is:
\begin{claim}
\label{Fomin5.1}
$\mathcal{N}^{i+1}_{sv,r^i}\subseteq^{2k-(i+1)}_\text{rep} P^{i+1}_{sv,r^i}$
\end{claim}
\begin{proof}
Let $S \in P^{i+1}_{sv,r^i}$ and $Y$ be a set of size $2k-(i+1)$ such that $S \cap Y = \emptyset$. We will show that there exists a set $S' \in \mathcal{N}^{i+1}_{sv,r^i}$ such that $S' \cap Y = \emptyset$. This will imply the desired result. Since $S \in P^{i+1}_{sv,r^i}$,
there exists a path $P = (s,u_1)\cdots (u_{i-1},v)$ in $G$ such that
$S = V(P)$ and $u_{i-1} \stackrel{A_{i}}{\to} v$. 
The existence of path $P[0, i-1]$, the subpath of $P$ between $s$ and $u_{i-1}$, implies that $X^* = S\setminus\{v\} \in P^{i}_{su_{i-1},r^{i-1}}$. Take $Y^* = Y \cup \{v\}$. Observe that $X^*\cap Y^* = \emptyset$ and $|Y^*|=2k-i$. Since $\hat{P}^i_{su_{i-1},r^{i-1}} \subseteq^{2k-i}_\text{rep} P^i_{su_{i-1}r^{i-1}}$, there exists a set $\hat{X}^* \in \hat{P}^i_{su_{i-1},r^{i-1}}$ such that $\hat{X}^*\cap Y^* = \emptyset$.
However, since $u_{i-1}\stackrel{A_i}{\to} v$ and $\hat{X}^*\cap \{v\} = \emptyset$ (as $\hat{X}^*\cap Y^* = \emptyset$), we have $\hat{X}^*\bullet \{v\} = \hat{X}^*\cup \{v\}$ and $\hat{X}^*\cup \{v\} \in \mathcal{N}^{i+1}_{sv,r^i}$. Taking $S' = \hat{X}^*\cup\{v\}$ suffices for our purpose. This completes the
proof of the claim.
\end{proof}

We fill the entry for $D[i+1, v]$ as follows. Observe that 
$$\mathcal{N}^{i+1}_{sv,r^i}=\bigcup_{u\stackrel{A_i}{\to} v} D[i,u] \bullet \{v\}.$$
We already have computed the family corresponding to $D[i, u]$ for all $u$. By Corollary~\ref{Fomin4.16}, we have
$|\hat{P}^i_{su,r^{i-1}}| \leq \binom{2k}{i} 2^{o(2k)}$ and thus also
$|\mathcal{N}^{i+1}_{sv,r^i}| \leq d^-(v)\binom{2k}{i}2^{o(2k)}$.
Furthermore, we can compute $\mathcal{N}^{i+1}_{sv,r^i}$ in time
$O\left(d^-(v)\binom{2k}{i}2^{o(2k)} \right)$, where $d^-(v)$ denotes
the indegree of $v$, i.e., the number of edges that end in $v$. (We assume that testing whether there
exists an edge with a label in $A_i$ has the same complexity as
testing whether there exists an edge at all, up to a constant factor.) Now, using Corollary~\ref{Fomin4.16}, we compute $\hat{\mathcal{N}}^{i+1}_{sv,r^i} \subseteq^{2k-i-1}_\text{rep} \mathcal{N}^{i+1}_{sv,r^i}$ in time $$O\left( d^-(v)\binom{2k}{i} 2^{o(2k)}  \cdot \left(\dfrac{2k}{2k-i-1} \right)^{2k-i-1} \cdot 2^{o(2k)} \cdot \log n \right).$$ By Claim 5.1, we know that
$\mathcal{N}^{i+1}_{sv,r^i} \subseteq^{2k-i-1}_\text{rep} P^{i+1}_{sv,r^i}$. Thus, Lemma~\ref{Fomin3.3} implies that $\hat{\mathcal{N}}^{i+1}_{sv,r^i} = \hat{P}^{i+1}_{sv,r^i} \subseteq^{2k-i-1}_\text{rep} P^{i+1}_{sv,r^i}$. We assign this family to $D[i+1,v]$. This completes the description and the correctness of the algorithm. 

Notice that, if we keep the elements in the sets in the
order in which they were built using the $\bullet$ operation, then
they directly correspond to paths. As such, every ordered set in our family represents a path in the graph.

Since our only change was that we test $u\stackrel{A_i}{\to} v$ instead of $u\to v$, the
time bound of $O\left( 8^{k+o(k)}m \log n \right)$ \cite[Lemma 5.2]{fomin} carries over. The size bound is still guaranteed by Corollary~\ref{Fomin4.16}.\end{proof}

\dichotomyThm*

\begin{proof}

  We first prove part (a). To this end, let $c$ be a constant such
  that every $r \in \cR$ is at most $c$-bordered. Let $r \in R$. 
  Then we know that the
  maximum of its left cut border $c_1$ and its right cut borders $c_2$ is at most $c$.
    So we can enumerate, for all $u,v \in V$, all simple $s$-$u$-paths $p_1$ matching
  $A_1 \cdots A_{c_1}$ and all simple $v$-$t$-paths $p_2$ matching
  $A'_{c_2} \cdots A'_1$ in time $O(n^c)$. For all such
  node-disjoint paths $p_1$ and $p_2$, we delete in $G$ all nodes in
  $(V(p_1)\setminus\{u\}) \cup (V(p_2) \setminus \{v\})$. 
  In the
  remaining graph, we search a $u$-$v$-path that matches
  \begin{itemize}
  \item $r'=A_{c_1+1} \cdots A_{k_1} A^* A'_{k_2} \cdots A'_{c_2+1}$,
  \item $r'=A_{1}? \cdots A_{k_1}? A^* A'_{k_2} \cdots A'_{c_2+1}$,
  \item $r'=A_{c_1+1} \cdots A_{k_1} A^* A'_{k_2}? \cdots A'_{1}?$, or
  \item $r'=A_{1}? \cdots A_{k_1}? A^* A'_{k_2}? \cdots A'_{1}?$.
  \end{itemize}
  This are the only possibilities. Remember that the left cut border of $B_\text{pre}=A_{1}? \cdots A_{k_1}?$ is 0.
  Since $r'$ is 0-bordered,
  Lemma~\ref{lemma:0crit} allows us to solve \nodesimpath$(r')$ in time $f(k_{r'})\cdot |G|^d$ for a constant
  $d$ and a computable function $f$. Since $k_{r'}+c_1+c_2 = k_r$, this
  shows that $\nodesimpath(\cR)$ is in FPT with parameter $k_r$. 

  We now prove part (b). Let $\cR$ be an arbitrary but fixed class of
  STEs that can be sampled.
    We show that $\nodesimpath(\cR)$ is W[1]-hard by giving an FPT reduction from
  \kclique, which is known to be W[1]-hard (with parameter $k$). 
  Let $(G,k)$ be an input to \kclique. We will construct a graph
  $(H,s,t)$ and an expression $r \in \cR$ such that $(G,k) \in
  \kclique$ if and only if $H$ has a simple $s$-$t$-path that matches
  $r$. Let $k' = k(k-1)/2\cdot 5 + 3k$.  Since $\cR$ can be sampled, a
  $k''$-bordered expression $r\in\cR$ with $k'' \geq k'+1$ can be
  computed within time $f(k)$, for some computable function
  $f$. Therefore, we also know $k_r \leq f(k)$, else $r$ could not be computed in this time.     
  Since $r$ is $k''$-bordered, we have that its left cut border is $k''$ or its right cut border is $k''$ (or both). 
  
    Here we only consider the case that the left cut border is $k''$, i.e., $A \not \subseteq A_{k''}$, the other is symmetric. 
  For $r$ to have a left cut border of $k''$, it must be of the form
 $$r = A_1 \cdots A_{k'} \cdots A_{k''} \cdots A_{k_1} A^* A'_{k_2} \cdots A'_1 
 \text{\hspace{0.25cm} or \hspace{0.25cm}}
 r = A_1 \cdots A_{k'} \cdots A_{k''} \cdots A_{k_1} A^* A'_{k_2}? \cdots A'_1?$$ 
  with $A,A_1,\ldots,A_{k_1},A'_1,\ldots,A'_{k_2}\neq \emptyset$. (Remember that we assume $L(r)\neq \emptyset$ and the left cut border of $B_\text{pre} = A_1?\cdots A_{k_1}?$ is $0$. Furthermore, $A \not \subseteq A_{k''}$ implies $A \neq \emptyset$.)  
The following construction holds for both forms that $r$ can have.

  We now construct $(H,s,t)$.  The main idea is to have at most one
  edge with a label in $A_{k''}$ that is reachable from $s$ by a path of
  length $k''-1$. Then each path matching $r$ must route through it
  and we can do a similar proof as for $\nodespath(a^{k-1}b a^*)$ in
  Corollary~\ref{theo:languageoverview}(\ref{theo:languageoverview:b}).    

  More formally, fix an $x \in (A \setminus A_{k''})$. Fix three words
  $w_1$, $w_2$, and $w_3$ such that
  \begin{itemize}
  \item $w_1 \in L(A_1 \cdots A_{k'})$,
  \item $w_2 \in L(A_{k'+1} \cdots  A_{k''} \cdots
    A_{k_1})$, and
  \item $w_3 \in L(A'_{k_2} \cdots A'_1)$.\footnote{We use $w_3 \in
      L(A'_{k_2} \cdots A'_1)$ in case that $r$ ends with $A'_{k_2}
      \cdots A'_1$ but also if it ends with $A'_{k_2}? \cdots A'_1?$.}
  \end{itemize}
  Notice that such words indeed exist.  For the construction of $H$,
  we start with the graph $G'$ used in the proof of Theorem
  \ref{theo:kcolor} and make the following changes:
  \begin{itemize}
  \item We replace each $b$-edge in $G'$ with an $x$-path of length $k''$
    (using $k''-1$ new nodes for each replacement).
  \item We change the labels of the $a$-edges in $G'$ such that each
    $c_1$-$c_{k+1}$-path is labeled $w_1$. Notice that the label for each
    such edge is well-defined. Indeed, by
    Lemma~\ref{lem:twopaths}(\ref{lem:twopaths:b}) we have that 
    each $a$-path from $c_1$ to $c_{k+1}$ has length exactly $k'$.  
                If there would be an edge
    $e$ on an $a$-labeled $c_1$-$c_{k+1}$-path that is reachable from $c_1$ through
    $n_1$ edges and also through $n_2$ edges, with $n_1 \neq n_2$,
    then, since $c_{k+1}$ is reachable from $e$, it means that there would
    be paths of different lengths from $c_1$ to $r_1$.
    We relabel all other edges with $x$.
          \item We add a path labeled $w_2$ from $c_{k+1}$ to
    $r_1$. We refer to this path as the \emph{$w_2$-labeled path} in
    the remainder of the proof.
          \item We add a path labeled $w_3$     from $r_{k+1}$ to a new node $t_\text{new}$, 
            to which we will refer as the
    \emph{$w_3$-labeled path} in the remainder of the proof.
      \end{itemize} 
  The resulting graph $(H,c_1,t_\text{new})$ together with the expression $r \in
  \cR$ serves as input for $\nodesimpath(\cR)$. This concludes the reduction.

  We show that the reduction is correct. This can be proved analogously 
  to the proof of Theorem \ref{theorem:two-disjoint-is-hard}, that is,
  we show that $G'$ and $k'$ are in \knodecolordisjointpaths if and only if
  $(H,c_1,t_\text{new})$ and $r$ are in \nodesimpath$(\cR)$.

  If $G'$ and $k'$ are in \knodecolordisjointpaths with solution $p_a$ and
  $p_b$, then there exists a (unique) simple path from $c_1$ to $t_\text{new}$ in $H$ that contains the nodes
  $V(p_a) \cup V(p_b)$ and matches $r$. 

  Conversely, if $(H,c_1,t_\text{new})$ and $r$ are in $\nodesimpath(\cR)$, there
  exists a simple path $p$ from $c_1$ to $t_\text{new}$ in $H$ that matches $r$. We will
  prove the following:
  \begin{enumerate}[(i)]
  \item The prefix of $p$ of length $k'$ corresponds to a simple path
    from $c_1$  to $c_{k+1}$ in $G'_a$ from the proof of
    Theorem~\ref{theo:kcolor} in
    Appendix~\ref{app:twodisjointpaths}.\footnote{$G'_a$ is the graph
      obtained from $G'$ by deleting all $b$-edges and nodes that have no adjacent $a$-edges.} (That is, $p[0,k']$ is a path
    from $c_1$ to $c_{k+1}$-path in $G'_a$.) 
  \item The prefix of $p$ of length $k_1$ ends in $r_1$. Its prefix
    is labeled $w_1$ and its suffix is the $w_2$-labeled path.
  \item We show that $\lab(p)=w_1w_2 w' w_3$ with $w' \in L(A^*)$. 
     \end{enumerate}
   We prove (i). By definition of $r$, the edge $p[k''-1,k'']$ is
  labeled by some symbol in $A_{k''}$. Therefore, this symbol cannot
  be $x$. By construction of $H$, this edge is either an edge that was
  labeled $a$ in $G'$, an edge on the $w_2$-labeled path, or an edge
  on the $w_3$-labeled path (since all other edges are labeled $x$).
  
  Since the $w_3$-labeled path is not reachable with a path of length
  smaller than $k''$ and the $w_2$-labeled path starts in $c_{k+1}$
  and is therefore only reachable with a path of length at least $k'$,
  see Observation \ref{obs:replace-b}, the first $k'+1$ nodes must
  form an $a$-path. This implies that $p[0,k']$ is entirely in
  $G'_a$. 
 From
  Lemma~\ref{lem:twopaths}(\ref{lem:twopaths:b}), we know that each
  path in $G'_a$ of length $k'$ goes from $c_1$ to $c_{k+1}$ which implies (i).
Since all nodes (except $r_1$) that belong to the $w_2$-labeled path of length $k_1-k'$ have only one outgoing edge, we have that $p[0,k_1]$ ends in $r_1$ and must match $w_1w_2$. This shows (ii).

Since $p$ matches $r=A_1 \cdots A_{k_1} A^* A'_{k_2} \cdots A'_1$ or
$r=A_1 \cdots A_{k_1} A^* A'_{k_2}? \cdots A'_1?$, and since each word
in $A_1 \cdots A_{k_1}$ has length $k_1$, it follows that $\lab(p)=w_1
w_2 w'$ with $w' \in L(A^* A'_{k_2} \cdots \allowbreak A'_1) \cup L(A^* A'_{k_2}? \cdots A'_1?)$.

By construction of $H$, the $w_3$-labeled path is the unique path of
length $|w_3|$ leading to $t_\text{new}$. Therefore, each
$c_1$-$t_\text{new}$-path in $H$ must end with the $w_3$-labeled
path. Since $w_3 \in L(A'_{k_2} \cdots A'_1)$ and $|w_3|$ is the
length of every word in $L(A'_{k_2} \cdots A'_1)$, we have that
$\lab(p) = w_1 w_2 w' w_3$ where $w\in L(A^*)$. So we have (iii).  Let
$p'$ be the part of $p$ labeled $w'$.  We now show that $p'$ can only
consist of edges labeled $x$.  Since $p$ is a simple path, it must be
node-disjoint with its prefix $p[0,k']$. We showed in (i) that
$p[0,k']$ corresponds to a $c_1$-$c_{k+1}$-path in $G'_a$, so we know
from Lemma~\ref{lem:twopaths}(\ref{lem:twopaths:c}) and
(\ref{lem:twopaths:d}) that it uses all control-nodes and at least one
edge in each row. Therefore, it follows as in the proof of
Theorem~\ref{theorem:two-disjoint-is-hard} that $p'$ cannot use edges
that correspond to ones in $G'_a$.  Therefore, $p'$ only consists of
edges labeled $x$.  This shows that $G'$ and $k'$ are in
\knodecolordisjointpaths, because $p[0,k']$ corresponds to a path
$p_a$ and $p'$ to $p_b$, which are solutions to
\knodecolordisjointpaths.

Finally, we note that the construction can indeed be done in FPT since
the expression $r \in \cR$ can be determined in time $f(k)$ for a
computable function $f$, the graph from the proof of Theorem
\ref{theo:kcolor} was constructed in FPT, and all changes
we made to the graph are in time $ h(k) \cdot |G'|^c$, for a constant $c$ and a computable function $h$, which is FPT. Indeed, we only relabeled
all edges, replaced each edge at most once with $k''$ new edges and
added other paths of length at most $|r|\leq f(k)$. Since $|r|\leq f(k)$, we also have $k_r\leq f(k)$, so we have indeed an FPT reduction.
\end{proof}

\subsection{Dichotomy for Edge-Disjoint Paths}

Next we will prove the dichotomy on STEs for trails.

\edgeDichotomyThm*
\begin{proof}
We first prove part (a). 
Since $\cR$ is almost conflict free, there exists a constant $c$ such that each $r \in \cR$ has at most $c$ conflict labels. Let $r \in \cR$ an STE with left cut border $c_1$ and right cut border $c_2$. We will show how to decide whether there exists a path from $s$ to $t$ in $G$ matching $r$ in FPT. 

We use the reduction from Lemma~\ref{lemma:edgeToNodes} to convert this problem into at most $n$ instances of the corresponding problem \nodesimpath($r$). We now show how to decide \nodesimpath($r$) on an instance $(H_i, s'_i, t')$ in time $f(k_r)\cdot |H_i|^{O(1)}$.
We observe that each node in $H_i$ (except $s'$ and $t'$) corresponds to exactly one edge in $G$.

If $r$ is $c'$-bordered for a $c' \leq c$, then we can immediately use the methods of Theorem~\ref{theo:dichotomy}(a). If this is not the case, we know that $A, A_1, \ldots, A_{k_1}, A'_1, \ldots, A_{k_2}\neq \emptyset$ and it remains to consider $r = A_1 \cdots A_{k_1} A^* A'_{k_2} \cdots A'_1$, $r = A_1? \cdots A_{k_1}? A^* A'_{k_2} \cdots A'_1$, and $r = A_1 \cdots A_{k_1} A^* A'_{k_2}? \cdots A'_1?$.
We will show how to solve this problem for $$r = A_1 \cdots A_{k_1} A^* A'_{k_2} \cdots A'_1,$$ by adapting Algorithm~\ref{alg:block}. The other cases then follow as in Theorem~\ref{theo:dichotomy}(a).

We will first explain what we change in Algorithm~\ref{alg:block} and show its correctness afterwards.
\begin{itemize}
\item We also change $r_1$ (line \ref{alg:block5}) and $r_2$ (line \ref{alg:block13}) by relabeling the conflict labels. Assume $A_\ell$ is an conflict label. Then we define $A_\ell$ in $r_1$ as $A_\ell \setminus A \cup \{a' \mid a \in A_\ell \cap A\}$. We proceed analogously for conflict labels $A'_j$ in $r_2$. \tina{easier understandable if we do this generally for $A_j$ with $j \leq c_1$ or $A'_j$ with $j \leq c_2$?}
\item We enumerate all subsets of up to $c$ nodes $v_{u_1,a,u_2}$ with $a \in A$ in $H_i$. For each possible subset $S$, we generate a graph $H_S$ by changing the nodes $v_{u_1,\sigma,u_2} \in S$ to $v_{u_1,\sigma',u_2} \in S$ and relabel the outgoing edges with $\sigma'$ (where $\sigma'$ is a new symbol, i.e., we add at most $|A|$ different symbols in total). 
\end{itemize}
This completes the changes we do. It is obvious that these changes are possible in FPT, since enumerating all subsets of size at most $c$ is in $O(|H_i|^c)$. Since the original algorithm was in FPT, the adapted one is as well.
We now show the correctness.
We first show that it indeed suffices to consider subsets of up to $c$ nodes.
Let $p_\text{pref}$ be an arbitrary path matching $A_1 \cdots A_{c_1}$ and $p_\text{suff}$ be an arbitrary path matching $A'_{c_2}\cdots A'_{1}$ that is edge-disjoint from $p_\text{pref}$.
Since $r$ only has $c$ conflict labels, we know that there are at most $c$ edges that can be shared between $p_\text{pref}$ and $p_\text{suff}$ with an arbitrary path matching $A^*$. Since we constructed $H_i$ such that every edge in $G$ corresponds to exactly one node in $H_i$, see Lemma~\ref{lemma:edgeToNodes}, this means that simple paths matching $A_1 \cdots A_{c_1}$ and $A'_{c_2}\cdots A'_{1}$ can share at most $c$ nodes with an arbitrary path matching $A^*$.
So, in order to assure node disjointness between those paths, it indeed suffices to consider subsets of up to $c$ nodes and force the paths matching $A_1 \cdots A_{c_1}$ and $A'_{c_2}\cdots A'_{1}$ to only use those while the $A$-path may only choose other nodes. We enforce this by changing the labels.

If there exists a simple path from $s'_i$ to $t'$ in $H_i$ of length at most $2k$, where $k= k_1+k_2$, it will be found in line \ref{alg:block1}.
We now show that, if there exists a path from $s'_i$ to $t'$ in $H_i$ of length at least $2k+1$, then it will be found in the adapted algorithm between line \ref{alg:block4} and \ref{alg:block:lastfor}. Like in Lemma~\ref{lemma:0crit} we have that it suffices to consider paths in which the $k+1$ first nodes belong to $X \in \hat{P}^{k+1}_{sv,r_1}$. The proof is again analogous to Lemma~\ref{proofforfptalg}: The paths $P'$ and $R$ it cannot intersect in the first $c_1+1$ nodes of $P'$ since those nodes only have outgoing edges that have labels not in $A$. Since $R$ matches $A^*$, it cannot use them. And if $P'$ and $R$ intersect after the first $c_1+1$, the obtained simple path still matches (if we replace the new symbols with their usual ones) $r$, since we have that $A \subseteq A_{c_1+1}, \ldots, A_{k_1}$, due to definition of $c_1$, and the path is long enough because it still contains $Q$.
From line \ref{alg:block1} we know that the remaining path from $v$ to $t$ must have length at least $k_2+1$.
So we can prove again analogous to Lemma~\ref{proofforfptalg} that it suffices to consider paths in which the last $k_2+1$ nodes belong to $X' \in \hat{P}^{k_2+1}_{ut,r_2}$, for some $u$. So the adapted algorithm is indeed correct.

It remains to consider case (b). Notice that $\cR$ is not cuttable, as this would imply that it is almost conflict free. The proof follows the lines of Theorem~\ref{theo:dichotomy} part (b), i.e., we give an reduction from \kclique. Given an instance $(G,k)$ from \kclique, we find an $r \in \cR$ that has at least $2k_{new}$ conflict labels where $k_{new} = 5k^2+3k+1$ (this comes from Theorem~\ref{theo:edgeTwodisjoint}. Notice that we need so many conflict labels to ensure that they are on the right position). Let us assume that we have at least $k_{new}$ conflict labels in $A_1 \cdots A_{k''}$, where $k''=c_1$ is the left cut border of $r$. The case that we have at least $k_{new}$ conflict labels in $A'_{c_2} \cdots A'_{1}$ is symmetric.
Notice that we use $k''$ instead of $c_1$ to avoid confusion with the node $c_1$. Furthermore, $A_{c_1}$ has the same property as $A_{k''}$ in Theorem~\ref{theo:dichotomy} part (b).
Due to definition of cut border, we have that $A \not \subseteq A_{k''}$.

We will now basically use the graph from Theorem~\ref{theo:dichotomy} part (b) except that we label the $a$-path from $c_1$ to $r_1$ differently and split the nodes like in Theorem~\ref{theo:edgeTwodisjoint}. 
We will now explain the changes in detail. 

First we fix an $x \in (A \setminus A_{k''})$. Fix three words
  $w_1$, $w_2$, and $w_3$ such that
  \begin{itemize}
  \item $w_1 \in L(A_1 \cdots A_{k''})$, such that $w_1$ contains as many symbols in $A$ as possible
  \item $w_2 \in L(A_{k''+1} \cdots
    A_{k_1})$, and
  \item $w_3 \in L(A'_{k_2} \cdots A'_1)$.\footnote{We use $w_3 \in
      L(A'_{k_2} \cdots A'_1)$ in case that $r$ ends with $A'_{k_2}
      \cdots A'_1$ but also if it ends with $A'_{k_2}? \cdots A'_1?$.}
  \end{itemize}
We start with the graph $G'$ from Theorem~\ref{theo:kcolor}.
\begin{itemize}
\item In each gadget $G_{i,j}$, we split each $u_\ell$ in two nodes, that is $u_\ell^\text{in}$ and $u_\ell^\text{out}$. We call the two nodes that resulted from the same node a \emph{node-pair}. We redirect all incoming edges from $u_\ell$ to $u_\ell^\text{in}$, while all outgoing edges begin in $u_\ell^\text{out}$. We depict this in Figure~\ref{fig:edgeOneGadget}.
 Finally, we add an edge $u_\ell^\text{in} \to u_\ell^\text{out}$. We make exactly the same change to all $v_\ell$, $c_i$, $c_{i_1i_2}$, and $r_i$. 
\item We replace each $b$-edge by a $x$-path of length $k''$ and label the edge between $r_i^\text{in}$ and $r_i^\text{out}$ with $x$ for all $i$.
\item We will now adapt the graph so that each path from $c_1^\text{in}$ to $c_{k+1}^\text{out}$ has length at least $k''$ and exactly $k''$ if and only if it uses edges corresponding to an $a$-path from $c_1$ to $c_{k+1}$ in $G'$. Since we have at least $k_{new}$ symbols from $A$ in $w_1$ and $k_{new}\leq k'' = |w_1|$, we can indeed label each edge with an symbol from $A$.\footnote{It would suffice to label each edge between each $u_\ell^\text{in}\to u_\ell^\text{out}$ and each $v_\ell^\text{in}\to v_\ell^\text{out}$ with a label in $A$.} 
We choose the first $k_{new}-1$ such symbols from $w_1$ and place them
on edges that correspond to $a$-paths from $c_1^{in}$ to
$c_{k+1}^{in}$ in the right order. We label $c_{k+1}^{in}
\stackrel{A_{k''}}{\to} c_{k+1}^{out}$. To obtain a path matching
$w_1$, we then insert paths matching subwords of $w_1$ that contain no
symbol in $A$ between two such edges, or containing arbitrary symbols,
right before $c_{k+1}^{in}$. (Here we just need to make sure that all
paths have the same length.) \tina{ sounds awful, I know, but we somehow need to make sure that all paths have the same length.}
\item We add a path matching $w_2$ from $c_{k+1}^{out}$ to $r_1^\text{in}$, which we will call \emph{$w_2$-labeled path}, and a path matching $w_3$ from $r_{k+1}^\text{out}$ to a new node $t_\text{new}$, which we will call \emph{$w_3$-labeled path}. 
\end{itemize}
This completes the construction of our graph $H$. We can now prove the correctness analogously to Theorem~\ref{theo:dichotomy} part (b) and Theorem~\ref{theo:edgeTwodisjoint}. 
For the first direction, let $p_a$ be a simple $a$-path from $c_1$ to $c_{k+1}$ and $p_b$ a simple $b$-path from $r_1$ to $r_{k+1}$ in $G'$, such that $p_a$ has length $k'$ and is node-disjoint from $p_b$. By construction, we can use the same nodes (or node-pairs) as $p_a$ to obtain a trail $p_1$ from $c_1^\text{in}$ to $c_{k+1}^\text{out}$ in $H$ that matches $w_1$. And we can use the same nodes (or node-pairs) as $p_b$ to obtain a trail $p_2$ from $r_1^\text{in}$ to $r_{k+1}^\text{out}$ in $H$ that matches $x^*$, i.e., $A^*$. We can complete it to a trail from $c_1^\text{in}$ to $t_\text{new}$ that matches $r$ by adding the $w_2$-labeled and $w_3$-labeled path.
For the other direction, let $p$ be a trail from $c_1^\text{in}$ to $t_\text{new}$ in $H$ that matches $r$. We will prove the following:
  \begin{enumerate}[(i)]
  \item The prefix of $p$ of length $k'$ corresponds to a simple path
    from $c_1$  to $c_{k+1}$ in $G'_a$ from the proof of
    Theorem~\ref{theo:kcolor} in
    Appendix~\ref{app:twodisjointpaths}.\footnote{$G'_a$ is the graph
      obtained from $G'$ by deleting all $b$-edges and nodes that have no adjacent $a$-edges.} (That is, $p[0,k']$ is a path
    from $c_1$ to $c_{k+1}$-path in $G'_a$.) 
  \item The prefix of $p$ of length $k_1$ ends in $r_1$. Its prefix
    is labeled $w_1$ and its suffix is the $w_2$-labeled path.
 \item We show that $\lab(p)=w_1w_2 w' w_3$ with $w' \in L(A^*)$. 
  \end{enumerate}
  We prove (i). By definition of $r$, the edge $p[k''-1,k'']$ is
  labeled by some symbol in $A_{k''}$. Therefore, this symbol cannot
  be $x$. By construction of $H$, this edge is either an edge that was
  labeled $a$ in $G'$, an edge on the $w_2$-labeled path, or an edge
  on the $w_3$-labeled path (since all other edges are labeled $x$).
  
  Since the $w_3$-labeled path is not reachable with a path of length
  at most $k''$ and the $w_2$-labeled path starts in $c_{k+1}^\text{out}$
  and is therefore only reachable with a path of length at least $k''$
  (due to construction), the first $k''+1$ nodes must
  form an $a$-path. This implies that $p[0,k'']$ is entirely in
  $G'_a$. 
 From
  Lemma~\ref{lem:twopaths}(\ref{lem:twopaths:b}), we know that each
  path in $G'_a$ of length $k'$ goes from $c_1$ to $c_{k+1}$ which implies (i).
Since all nodes that belong to the $w_2$-labeled path of length $k_1-k''$ have only one outgoing edge, we have that $p[0,k_1]$ ends in $r_1^\text{in}$ and must match $w_1w_2$. This shows (ii).

Since $p$ matches $r=A_1 \cdots A_{k_1} A^* A'_{k_2} \cdots A'_1$ (the case $r=A_1 \cdots A_{k_1} A^* A'_{k_2}? \cdots A'_1?$ is analogous) and each word in $A_1 \cdots A_{k_1}$ has length $k_1$, it follows that $lab(p)=w_1 w_2 w'$ with $w' \in L(A^* A'_{k_2} \cdots A'_1)$. 

  By construction of $H$, the $w_3$-labeled path is the unique path of
  length $|w_3|$ leading to $t_\text{new}$. Therefore, each $c_1^\text{in}$-$t_\text{new}$-path in
  $H$ must end with the $w_3$-labeled path. Since $w_3 \in L(A'_{k_2}
  \cdots A'_1)$ and $|w_3|$ is the length of every word in $L(A'_{k_2}
  \cdots A'_1)$, we have that $\lab(p) = w_1 w_2 w' w_3$ where $w\in
  L(A^*)$. So we have (iii).
  Let $p'$ be the part of $p$ labeled $w'$.
  We now show that $p'$ can only consist of edges labeled $x$.

    Since $p$ is a simple path, it must be node-disjoint with its prefix 
    $p[0,k'']$. We showed in (i) that $p[0,k'']$ corresponds to a $c_1$-$c_{k+1}$-path 
    in $G'_a$, so we know from Lemma~\ref{lem:twopaths}(\ref{lem:twopaths:c}) and (\ref{lem:twopaths:d}) that it uses all control nodes and at least one edge in each row. Therefore, it follows
    as in the proof of
  Theorem~\ref{theo:edgeTwodisjoint} that $p'$ cannot use edges that correspond to ones in $G'_a$.  Therefore, $p'$ only consists of edges labeled $x$. 
  This shows that $G'$ and $k'$ are in \knodecolordisjointpaths, 
  because $p[0,k'']$ corresponds to a path $p_a$ and $p'$ to $p_b$, which are solutions to \knodecolordisjointpaths.

Finally, we note that the construction can indeed be done in FPT since
the expression $r \in \cR$ can be determined in time $f(k)$ for a
computable function $f$, the graph from the proof of Theorem
\ref{theo:kcolor} was constructed in FPT, and all changes
we made to the graph are in time $ h(k) \cdot |G'|^c$, for a constant $c$ and a computable function $h$, which is FPT. Indeed, we only relabeled
all edges, replaced each edge at most once with $c_2$ new edges, split each node at most once in two new ones, and added other paths of length at most $|r|\leq f(k)$. Since $|r|\leq f(k)$, we also have $k_r\leq f(k)$, so we have indeed an FPT reduction.
\end{proof}

\section{Proofs for Section~\ref{sec:enumfpt}}\label{app:enumfpt}

\fptdelayi*
\begin{proof}
  In the proof of Theorem~\ref{theorem:BaganPolyDelay} we adapted
  Yen's algorithm to work with simple instead of shortest paths. We
  already showed that the problem \nodesimpath$_{\geq k}$ is in
  FPT. Furthermore, the FPT algorithm can be adjusted to also return a
  matching path.  This is because, due to definition of
  $P^{k+1}_{sv}$, the nodes in $\hat{P}^{k+1}_{sv}$ form a path from
  $s$ to $v$ of length $k$. Given those nodes, we can easily built
  such a path.  (In fact, the construction of $\hat{P}^{k+1}_{sv}$
  allows to order the elements in the sets in $\hat{P}^{k+1}_{sv}$ so,
  that they directly correspond to such a path, see \cite[Lemma
  5.2]{fomin}.)  If Algorithm~\ref{fptalg} returns `yes', then there
  exists a $v$ and a $X \in \hat{P}^{k+1}_{sv}$ such that there exists
  a path from $v$ to $t$ in the graph without $X \setminus \{v\}$. As
  explained before we can construct a path from $s$ to $v$ that uses
  only nodes in $X$. We concatenate this path with a simple path from
  $v$ to $t$ that does not use nodes in $X$ except $v$ to obtain a
  simple path from $s$ to $t$ that has length at least $k$.
  Therefore, we can use this algorithm as a subroutine of YenSimple to
  obtain an FPT delay algorithm. For proving the correctness of this
  approach, we need to note that we can also deal with derivatives of
  the language, i.e., \nodesimpath$_{\geq i}$ with $i \leq k$, which is
  needed in line~\ref{alg:yen:12} of the YenSimple algorithm. However,
  in this case, we can simply solve     \nodesimpath$_{\geq j}$ with $j = \max\{k-i,0\}$
      with the
  same technique as \nodesimpath$_{\geq k}$.
\end{proof}

\fptdelayii*
\begin{proof}
We adapt the proof of Theorem~\ref{theo:fptdelay:1} for this case.
Therefore, we first show that the Algorithm~\ref{b?fptalg} can indeed output simple paths.
If there exists a path matching \nodesimpath$(a^ka^*)$, Algorithm~\ref{fptalg} finds it and can output it (see Theorem~\ref{theo:fptdelay:1}). Otherwise the algorithm only returns `yes' if there exists a path $p_c \in S$, a set $X \in \hat{P}^{k+1}_{su_0}$ and a path from $u_c$ to $t$ that are all node-disjoint except for $u_0$ and $u_c$. As explained before we can built a simple path $p_1$ from $s$ to $u_0$ from the nodes in $X$ and find a simple path $p_2$ from $u_c$ to $t$ that does not use nodes in $X$.
We then return $p_1 p_c p_2$ which is indeed a simple path matching $a^k w? a^*$.

For the derivatives, we need to deal with \nodesimpath$(a^iw?a^*)$, \nodesimpath$(w'a^*)$ for suffixes of $w$, and \nodesimpath$(a^*)$.
The first, \nodesimpath$(a^iw?a^*)$ with $i \leq k$ can be solved with the
same technique as \nodesimpath$(a^kw?a^*)$. For \nodesimpath$(w'a^*)$, where
$w'$ is a suffix of $w$, we enumerate all possible paths $p'$ that
match $w'$, which are again at most $O(n^c)$ many since $|w'| \leq |w|
=c$. Then, we search for a simple path $p_2$ from the last node of $p'$ to $t$
that does not use other nodes from $p'$. If we found one (which must be the case if we return `yes'), we return $p'p_2$.
We can use the same technique to deal with \nodesimpath$(a^*)$, that is, we choose $w' = \varepsilon$.
\end{proof}

\begin{lemma}\label{lem:ste-derivatives}
  Let $w \in \Sigma^*$ and $r$ be a $c$-bordered STE of size $n$. Then $w^{-1}L(r)$ is a union
  of STEs $r_1, \ldots, r_m$ such that 
  \begin{itemize}
  \item $m \leq n$ and
  \item each $r_i$ is $c'$-bordered for some $c' \leq c$.
  \end{itemize}
\end{lemma}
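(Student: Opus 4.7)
The plan is to view $r = B_\text{pre} A^* B_\text{suff}$ as a natural NFA with one state per position in the expression, and to use the standard characterization of the left derivative $w^{-1} L(r) = \bigcup_q L_q$, where $q$ ranges over the NFA states reachable from the start state by reading $w$ and $L_q$ is the language accepted from $q$. Every reachable $q$ yields a tail expression of one of three forms: (i) a tail of $B_\text{pre}$ followed by $A^* B_\text{suff}$, if $q$ lies inside $B_\text{pre}$; (ii) $A^* B_\text{suff}$ itself, if $q$ is the Kleene loop state; or (iii) a tail of $B_\text{suff}$, if $q$ lies inside $B_\text{suff}$. Each of these tails is syntactically an STE -- case (iii) viewed as an STE whose middle alphabet $A'$ is $\emptyset$, so that $A'^* = \varepsilon$ -- and hence $w^{-1}L(r)$ is a union of STE languages.

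Next I would verify that each such STE $r_i$ is $c'$-bordered for some $c' \le c$. For a type-(i) tail $A_{i+1}\cdots A_{k_1} A^* B_\text{suff}$, stripping off $A_1,\ldots,A_i$ shifts the prefix indices down by $i$, so the new left cut border is $\max(0,\, c_1 - i) \le c_1 \le c$, while the right cut border is inherited unchanged from $r$. For type (ii) the new left cut border is $0$ and the right cut border is unchanged. For type (iii) the middle alphabet is empty, so by the definition of cut borders both sides are $0$. The $?$-variants of $B_\text{pre}$ and $B_\text{suff}$ are even easier: their cut borders are $0$ by fiat, and any tail of a $?$-bounded expression is again $?$-bounded.

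Finally I bound $m$ by the number of distinct reachable tail expressions. This is at most the number of NFA positions of $r$, which is $O(k_1 + k_2)$, and since every non-optional atom contributes at least one $\Sigma$-occurrence to $|r|$ (an empty non-optional atom would force $L(r) = \emptyset$, which we have excluded throughout), we obtain $m \le n$. The main subtlety is the treatment of $?$-expressions, whose NFAs have $\varepsilon$-transitions so that many positions may be simultaneously reachable after reading $w$; the upside is that each such position still contributes a $?$-bounded tail with cut borders $0$, so the bound on $m$ and on the cut borders both go through uniformly.
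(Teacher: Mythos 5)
Your proof is correct and follows essentially the same route as the paper's: both express $w^{-1}L(r)$ as a union of suffix (``tail'') expressions of $r$, observe that passing to a tail can only decrease the cut borders, and bound the number of tails by the number of positions of $r$. If anything, your NFA-state formulation is slightly more careful at the one delicate point --- when $w$ ends inside the $A^*$ block the corresponding tail must retain the leading $A^*$ --- which the paper's formula $\sum_{j \in J} B_{j+1} \cdots B_n$ glosses over.
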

\begin{proof}
  Let $r = B_1 \cdots B_n$ be a $c$-bordered STE such that each $B_i$ is either of
  the form $A$, $A?$ or $A^*$ as in Definition~\ref{def:ste}.
  Let $w\in \Sigma^*$ and $J = \{j \mid w \in L(B_1 \cdots
  B_j)\}$. Then $w^{-1}L(r) = L(\Sigma_{j \in J} B_{j+1} \cdots
  B_n)$. Since $|J| \leq n$ and each expression $B_{j+1} \cdots
  B_n$ is $c'$-bordered for some $c' \leq c$ by definition, the result follows.
\end{proof}

\fptdelayiii*
\begin{proof}
We adapt the proof of Theorem~\ref{theo:fptdelay:1} for this case. So let $\cR$ be a cuttable class of STEs and let $r \in \cR$ with left cut border $c_1$ and right cut border $c_2$.
First we enumerate all possible paths $p_{c_1}$ that can match $A_1 \cdots A_{c_1}$ and $p_{c_2}$ that match $A'_{c_2} \cdots A'_{1}$. (These paths can also be empty if $c_1 = 0$ or $c_2 = 0$.)
We know now that the remaining regular expression, that is $r' = B'_\text{pre} A^* B'_\text{suff}$, is 0-bordered. So we now search for a path matching $r'$ from the last node of $p_1$ to the first of $p_2$ in the graph without the other nodes of $p_1$ and $p_2$.

Now we do case distinctions depending on its actual form.
If $A = \emptyset$, we can use the algorithm of Bagan et
al.~\cite[Theorem 6]{bagan}. In the appendix of the arXiv-version
\cite{baganOld} they give the corresponding algorithm which is based
on color coding \cite{AlonYZ-jacm95} and dynamic programming. This algorithm
can easily be adapted to generate a witness $p$ in case the decision
algorithm returns `yes'. Indeed, while running the dynamic algorithm,
we can always store the witnessing information for each newly computed
entry, from which the witnessing path can be computed at the end of
the algorithm.
We then return $p_{c_1} p\, p_{c_2}$.
Otherwise we know that $A \neq \emptyset$. If $r' = A_{1}? \cdots A_{k_1}? A^* A'_{k_2}? \cdots A'_{1}?$, its language is downward closed, so we can find a simple path $p$ matching $r'$, see Proposition~\ref{prop:prefixclosed}. We then return $p_{c_1}p\, p_{c_2}$.

For $r' = A_{c_1+1} \cdots A_{k_1} A^* A'_{k_2} \cdots A'_{c_2+1}$, we show that Algorithm~\ref{alg:block} can indeed output simple paths. As explained before, we can obtain a simple path matching $r'$ of length at most $2k$ in line \ref{alg:block1} by using the algorithm of Bagan et
al.~\cite[Theorem 6]{bagan}. 
So, if Algorithm~\ref{alg:block} returned `yes', but we did not find a path in line \ref{alg:block1}, there exists nodes $u, v \in V$, sets $X\in \hat{P}^{k+1}_{sv,r_1}$ and $X'\in \hat{P}^{k_2+1}_{ut,r_2}$, and a simple path $p$ from $v$ to $u$ that matches $A^*$ and is node disjoint from $X$ and $X'$ except for $v$ and $u$. 
Due to definition of $P^{k+1}_{sv,r_1}$,
the nodes in $X\in \hat{P}^{k+1}_{sv,r_1}$ form a path from $s$ to $v$ that matches $r_1$ and has length $k$. Since we built the sets in $\hat{P}^{k+1}_{sv,r_1}$ analogous to \cite[Lemma 5.2]{fomin}, see Lemma~\ref{lemma:SizeBoundWithRE}, we can easily built such a path. (In fact, the construction of $\hat{P}^{k+1}_{sv,r_1}$ allows to order the elements in the sets so, that they directly correspond to such a path).
So we can construct a path $p_1$ from $s$ to $v$ that uses only nodes in $X$ and matches $r_1$ and a path $p_2$ from $u$ to $t$ that uses only nodes in $X'$ and matches $r_2$. 
So Algorithm~\ref{alg:block} can indeed output the path $p_1 p\, p_2$. 
We then obtain our solution for $r$ by adding $p_{c_1}$ and $p_{c_2}$, if necessary.  

Since, we can use an easier variant of Algorithm~\ref{alg:block} if $r' = A_{c_1+1} \cdots A_{k_1} A^* A'_{k_2}? \cdots A'_{1}?$ or $r' = A_{1}? \cdots A_{k_1}? A^* A'_{k_2} \cdots A'_{c_2+1}$, we can also output paths in this cases.

It remains to show that we can handle all possible derivatives of
STEs. According to Lemma~\ref{lem:ste-derivatives}, we only need to
consider $k_1+k_2+1$ STEs that are $c'$-bordered for some $c' \leq
c$. 
Indeed, according to Lemma~\ref{lem:ste-derivatives}, each possible derivative is a union of at most $k_1+k_2+1$ STEs. Since each such STE is $c'$-bordered for some $c' \leq c$, we can solve \nodesimpath\ for each of them in FPT. And, since it are at most $k_1+k_2+1$ many, solving it for each of them is still in FPT. We can obviously use the same case distinctions as above and return a path if \nodesimpath\ answers `yes'.  \end{proof}

 We will now show that we can even output paths in FPT delay with radix order. Therefore, we will use Yen's algorithm, so we need algorithms that output shortest and lexicographically smallest paths. We will show how to achieve this, also for the derivatives needed in line~\ref{alg:yen:12} of Yen's algorithm.
 \begin{lemma}\enumeratesimpaths$_{\geq k}$ is in FPT delay with radix order. \label{lemma:a-geqk-shortest}
 \end{lemma}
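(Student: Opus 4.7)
My plan is to strengthen the proof of Theorem~\ref{theo:fptdelay:1} so that Yen's algorithm outputs paths in radix order rather than an arbitrary order. Since Yen's algorithm enumerates paths in the order induced by its shortest-path subroutine, it suffices to replace the subroutine by one that returns the \emph{radix}-smallest simple $s$-$t$-path of length at least $k$, and analogously in line~\ref{alg:yen:12}, where we need the radix-smallest simple completion $p_2$ with $|p[0,i]\cdot p_2|\ge k$ in a subgraph $G'$; the latter reduces to the same problem with parameter $\max\{0,k-i\}$ in $G'$.

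To build this subroutine I would replace the unweighted representative sets of Algorithm~\ref{fptalg} by the \emph{weighted} variant of Corollary~\ref{Fomin4.16} from \cite{fomin}, using as weight $w(X)$ of a set $X\in P^{k+1}_{sv}$ the radix rank of the lexicographically smallest edge-labeling of a simple $s$-to-$v$-path whose node set is exactly $X$; this weight is FPT-computable by a standard subset-DP on $X$, since $|X|=k+1$. For each resulting $(v,\hat{X})$ in the weighted family $\hat{P}^{k+1}_{sv}$, I would invoke Ackerman and Shallit's algorithm (Theorem~\ref{theo:ackermanAndShallit}) on $G\setminus(\hat{X}\setminus\{v\})$ to obtain, in polynomial time, the radix-smallest simple $v$-$t$-path, then concatenate it with the $w$-witnessing prefix through $\hat{X}$ and return the overall radix minimum across all $(v,\hat{X})$.

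Correctness will follow from a strengthening of Lemma~\ref{proofforfptalg}. Let $P^*$ be the true radix-smallest simple $s$-$t$-path of length at least $k$, decomposed into a length-$k$ prefix on node set $X^*$ followed by a suffix $Q^*$. Applying the weighted representative property with the same case split on $|P^*|$ as in Lemma~\ref{proofforfptalg} produces some $\hat{X}$ disjoint from $V(Q^*)$ with $w(\hat{X})\le w(X^*)$, hence a prefix labeling radix-$\le \lab(P^*[0,k])$. Since $\hat{X}$ avoids $V(Q^*)$, the suffix $Q^*$ still lives in $G\setminus(\hat{X}\setminus\{v^*\})$, so Ackerman--Shallit returns a suffix of length $|Q^*|$ and label radix-$\le \lab(Q^*)$. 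The concatenation is thus radix-$\le P^*$, and equal by minimality; plugging the new subroutine into the adapted Yen's algorithm of Theorem~\ref{theo:fptdelay:1} then yields radix-order enumeration in FPT delay. The main obstacle will be the long-path case: one has to verify that the lower-weight $\hat{X}$ chosen by the weighted representative property also avoids the ``middle'' section $R$ of $P^*$, which will follow by the same shortest-path argument as in Lemma~\ref{proofforfptalg} once one observes that a prefix swap of equal length and smaller weight cannot create a shorter detour that contradicts the minimality of $|P^*|$.
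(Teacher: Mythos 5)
Your proposal follows the same skeleton as the paper's proof: observe that Lemma~\ref{proofforfptalg} already applies to a \emph{shortest} simple $s$-$t$-path of length at least $k$, turn Algorithm~\ref{fptalg} into a procedure that returns a minimal witness by taking the best concatenation over all pairs $(v,X)$ with $X\in\hat{P}^{k+1}_{sv}$, plug that procedure into the adapted Yen's algorithm, and handle the deviation step of line~\ref{alg:yen:12} by solving the same problem with parameter $\max\{0,k-i\}$ in the reduced graph. Where you genuinely diverge is on the lexicographic component of radix order: the paper's proof only argues that the subroutine returns a \emph{shortest} admissible path, and hence that the enumeration proceeds from shortest to longest, whereas you additionally minimize the labeling via min-weight representative families plus Ackerman--Shallit. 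That is precisely the tie-breaking the stated radix-order claim needs, and your correctness argument is sound: the representative prefix has weight at most $w(X^*)$, it must avoid the middle section $R$ or else shortcutting would contradict minimality of $|P^*|$, and the suffix returned by Ackerman--Shallit is forced to have length exactly $|Q^*|$ for the same reason, hence is lexicographically no larger than $\lab(Q^*)$.

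Two technical points should be made explicit. First, Corollary~\ref{Fomin4.16} takes the family as explicit input, and $P^{k+1}_{sv}$ can be exponentially large in $n$; the min-weight representative family must therefore be obtained by threading the weight through the iterative dynamic program of Lemma~\ref{lemmaSizeBound}, not by subset-DPs on an already-listed family. This does work for your weight, because for words of equal length the radix rank composes under appending one edge ($\mathrm{rank}(wa)=|\Sigma|\cdot\mathrm{rank}(w)+\mathrm{rank}(a)$), so it behaves like an additive weight in the sense required by Fomin et al.'s weighted machinery --- but it needs saying. Second, Ackerman--Shallit returns the radix-smallest \emph{path}, not the radix-smallest \emph{simple} path; this is harmless only because the radix-smallest $v$-$t$-path in the reduced graph is a shortest one and hence automatically simple, which is also the only reason the concatenation you output is a simple path. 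With these two points added, your argument is complete and, on the ordering claim, more careful than the paper's own one-line treatment (elsewhere the paper obtains lexicographic minimality by enumerating candidate label words, e.g.\ in Lemma~\ref{lemma:0crit-shortest}, a trick that would not be FPT here since the alphabet is unbounded; your weighted-representative route avoids that).
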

 \begin{proof}
 We have seen in Theorem \ref{theo:fptdelay:1} that Algorithm~\ref{fptalg} can indeed output paths. 
 We now explain how to change this algorithm to output a shortest simple path from $s$ to $t$ that has length at least $k$.
 Therefore, we first observe that the proof of Lemma~\ref{proofforfptalg} works with the shortest simple path longer than $k$. So this lemma also implies that there exists a shortest simple path longer than $k$ such that its first $k+1$ nodes belong to a $X \in \hat{P}^{k+1}_{sv}$. 
 So we can find a shortest simple path longer than $k$ by running the algorithm for each $v \in V$ and each $X \in \hat{P}^{k+1}_{sv}$ and searching a shortest $v$-$t$-path in line~\ref{fptalg:4}. We always store the actual shortest simple path that is still longer than $k$. 
  Therefore, we can use this
   algorithm as a subroutine of Yen's algorithm to obtain an FPT delay
   algorithm that enumerates the paths from shortest to longest. 
   Notice that this algorithm can also deal with derivatives of the language, i.e.,  \nodesimpath$_{\geq j}$ with $j = \max\{k-i,0\}$, which is needed in line \ref{alg:yen:12} of Yen's algorithm.
  \end{proof}

 \begin{lemma}
 For each constant $c$ and each word $w$ with length $|w|=c$, the problem $\enumnodespaths(a^kw?a^*)$ is in FPT delay with radix order. \label{lemma:w?-shortest}
 \end{lemma}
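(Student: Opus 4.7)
The plan is to combine the ideas of Lemma~\ref{lemma:a-geqk-shortest} with those of Theorem~\ref{theo:fptdelay:2}. As in both of those results, the overall framework is Yen's algorithm plugged together with an FPT subroutine, but for radix order the subroutine must return a \emph{shortest} simple path matching the desired language, with ties broken lexicographically on the label sequence.

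First I would build a subroutine that, given $(G,s,t)$ and parameter $k$, returns a radix-smallest simple path from $s$ to $t$ matching $a^k w? a^*$. The structure of Algorithm~\ref{b?fptalg} already isolates two cases: the path avoids $w$ entirely (matching $a^k a^*$) or it uses $w$. For the first case I would invoke the shortest/lex-smallest variant of Algorithm~\ref{fptalg} developed in Lemma~\ref{lemma:a-geqk-shortest}. For the second case I would iterate over the at most $O(n^c)$ simple paths $p_c$ matching $w$ and, for each $p_c$, search for a shortest $a$-path of length exactly $k$ from $s$ to $p_c[0]$ and a shortest $a$-path from $p_c[c]$ to $t$, both disjoint from the inner nodes of $p_c$. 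The representative-set construction already used in Algorithm~\ref{b?fptalg} yields such a shortest witness because the swap argument in Lemma~\ref{proofforfptalg} never increases the path length; the lex tiebreak can be enforced while scanning over the polynomially many representative sets and over the final reachability step. Taking the radix-minimum over case one and over all $p_c$ gives the desired subroutine.

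Next I would plug this subroutine into Yen's algorithm (lines~\ref{alg:yen:3} and~\ref{alg:yen:12}), exactly as in the proof of Theorem~\ref{theorem:BaganPolyDelay} and Theorem~\ref{theo:fptdelay:2}. To handle line~\ref{alg:yen:12}, I need the subroutine to work for every left derivative $w^{-1} L(a^k w? a^*)$ reachable as a prefix is fixed. By direct inspection these derivatives are of one of three forms: $a^i w? a^*$ for some $i\le k$, $w' a^*$ for some suffix $w'$ of $w$, or $a^*$. The first is the same problem with a smaller parameter; for the second I would enumerate the at most $n^{|w'|}$ simple paths matching $w'$ and complete each with a shortest $a$-path; the third is just a shortest simple $a$-path. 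Each case admits a radix-smallest FPT subroutine by the same representative-set argument.

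The main obstacle is verifying that tracking ``shortest with lexicographic tiebreak'' remains compatible with the representative-set machinery, since the proof of Lemma~\ref{proofforfptalg} only preserves length, not the label sequence. To handle this, I would replace the existence search over $\hat{P}^{k+1}_{sv}$ by an enumeration that computes, for each candidate $v$ and $X$, the radix-smallest completion, and then take the radix-minimum over all outcomes; since $|\hat{P}^{k+1}_{sv}|$ is bounded by a function of $k$ only, the FPT bound is preserved. With this subroutine inside Yen, the adaptation of the proof of Theorem~\ref{theorem:BaganPolyDelay} then delivers the paths of $\enumnodespaths(a^k w? a^*)$ in radix order with FPT delay.
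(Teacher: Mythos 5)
Your proposal is correct and takes essentially the same route as the paper: the same two-case subroutine (reusing the radix-order variant of Lemma~\ref{lemma:a-geqk-shortest} for the $a^ka^*$ case and enumerating the $O(n^c)$ simple witnesses of $w$ for the other), the same derivative analysis ($a^{i}w?a^*$, $w'a^*$, $a^*$) for line~\ref{alg:yen:12}, and the same plug-in to Yen's algorithm. The only detail the paper makes explicit that you leave implicit is why running both cases and taking the radix-minimum stays sound even though the correctness of the $w$-case in Algorithm~\ref{b?fptalg} originally assumed that no $a^ka^*$-path exists; the weaker fact that no simple $a^ka^*$-path is strictly shorter than the best $a^kwa^*$-path suffices, and the final length comparison guarantees it.
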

 \begin{proof}
 We have seen in Theorem \ref{theo:fptdelay:2} that Algorithm~\ref{b?fptalg} can indeed output paths. 
 We show how to change this algorithm 
 to output a shortest and lexicographically smallest simple path that matches $a^kw?a^*$. .    If there exists a shortest simple path matching $a^ka^*$, we will find it in line \ref{alg:b?fptalg:1} and store it in $p_1$. We explain in Lemma~\ref{lemma:a-geqk-shortest} how this can be done. In addition to this, we find the overall shortest and lexicographically smallest path matching $a^kw a^*$ (if it exists) by enumerating all paths $p_c \in S$ and again using the algorithm obtained in Lemma~\ref{lemma:a-geqk-shortest} to find the shortest paths that completes $p_c$ to a path matching $a^kw a^*$. 
  We store the actual shortest and lexicographically smallest path that
 matches $a^kw a^*$ in $p_2$.\footnote{We used in the correctness
   proof of this algorithm that there is no path matching $a^ka^*$ if
   we search for a simple path matching $a^kw a^*$. Indeed the much
   weaker version suffices: there exists no simple path matching
   $a^ka^*$ that is strictly shorter than the shortest path matching $a^kw a^*$. This suffices since, if the path matching $a^kw a^*$ is not simple and the repeated node is neither in the subpath matching $w$ nor in the subpath matching the last $k$ nodes, then the resulting simple path matches $a^ka^*$ and is really shorter.}
  If we only have one path in $p_1$ or $p_2$, we will output this one. 
 Otherwise, we compare the length of $p_1$ with the length of $p_2$. If $|p_1| = |p_2|$, we output the lexicographically smaller one, else the shorter one. This completes the changes. Notice that this algorithm also works if $a$ is not the lexicographically smallest symbol. Since we output the shorter path, the path must indeed be simple. Therefore, the algorithm still works correctly.
 
 We again need to show how to handle the derivatives. 
 This algorithm can also be used for \nodesimpath{($a^{k-i}w ?a^*$)} with $i \leq k$. For \nodesimpath{($w' a^*$)}, where $w'$ is a suffix of $w$, we can enumerate all possible simple paths $p'$ that match $w'$, which are again at most $O(n^c)$ many, since $|w'|\leq |w| = c$. Then, we search for a shortest path $p_2$ from the last node of $p'$ to $t$ matching $a^*$ that does not use other nodes from $p'$. We then choose the shortest and lexicographically smallest path $p'p_2$ and output it.
 \end{proof}

 \begin{lemma} 
 Let $\cR$ be a cuttable class of STEs. Then \enumnodespaths$(\cR)$ is in FPT delay with radix order if $|A_i|\leq c$ and $|A'_i|\leq c$ for a constant $c$.\label{lemma:0crit-shortest}
 \end{lemma}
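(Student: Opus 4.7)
The plan mirrors the proofs of Lemma~\ref{lemma:a-geqk-shortest} and Lemma~\ref{lemma:w?-shortest}: I will adapt the FPT decision procedure of Theorem~\ref{theo:fptdelay:3} into a subroutine that returns a shortest and, within that length, lexicographically smallest simple path matching a given STE, and then plug it into YenSimple from the proof of Theorem~\ref{theorem:BaganPolyDelay} so that the paths are enumerated in radix order with FPT delay.

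First I will handle the main case $r = A_1 \cdots A_{k_1} A^* A'_{k_2} \cdots A'_1$, since the other forms of $B_\text{pre}$ and $B_\text{suff}$ are analogous (the fully question-marked case is already downward closed and covered by Proposition~\ref{prop:prefixclosed}), and the subcase $A = \emptyset$ reduces to Bagan et al.'s algorithm for finite languages~\cite[Theorem 6]{bagan}, which is a dynamic program and can be modified to track a shortest lex-smallest witness at each table entry. Since $\cR$ is cuttable with constant $c$, the cut borders $c_1, c_2$ of $r$ satisfy $c_1, c_2 \le c$. As in the proof of Theorem~\ref{theo:fptdelay:3}, I enumerate all simple paths $p_{c_1}$ matching $A_1 \cdots A_{c_1}$ and $p_{c_2}$ matching $A'_{c_2} \cdots A'_1$; the hypothesis $|A_i|, |A'_i| \le c$ gives at most $c^{2c}$ possible label words, and there are at most $n^c$ paths per label, so this outer enumeration stays in FPT.

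For each choice of $(p_{c_1}, p_{c_2})$ with endpoints $v$ and $u$, I need the shortest lex-smallest simple path from $v$ to $u$ matching the $0$-bordered STE $r' = A_{c_1+1}\cdots A_{k_1} A^* A'_{k_2} \cdots A'_{c_2+1}$ in the graph obtained by deleting the interior nodes of $p_{c_1} \cup p_{c_2}$. For this I would adapt Algorithm~\ref{alg:block} to return a witness rather than a boolean. The representative sets $\hat{P}^{k+1}_{sv,r_1}$ and $\hat{P}^{k_2+1}_{ut,r_2}$ already carry ordered realizations (as noted at the end of the proof of Lemma~\ref{lemma:SizeBoundWithRE}) that spell out concrete paths matching $r_1$ and $r_2$; the middle $v$-to-$u$ segment over $A$-edges is produced as the shortest lex-smallest word accepted by the induced product NFA via Ackerman and Shallit's algorithm (Theorem~\ref{theo:ackermanAndShallit}). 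Iterating over all $v$, $u$, $X \in \hat{P}^{k+1}_{sv,r_1}$, $X' \in \hat{P}^{k_2+1}_{ut,r_2}$, and also over line~\ref{alg:block1}'s short-path case (handled by the adapted Bagan et al.\ dynamic program), I retain the overall shortest, then lex-smallest, concatenation $p_{c_1} \cdot p_\text{mid} \cdot p_{c_2}$.

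Finally, line~\ref{alg:yen:12} of YenSimple invokes this subroutine on left derivatives. By Lemma~\ref{lem:ste-derivatives}, every $w^{-1} L(r)$ is a union of at most $|r|$ STEs, each still $c'$-bordered with $c' \le c$ and inheriting the bound $|A_i|, |A'_i| \le c$, so I run the subroutine on each component and take the minimum. The main obstacle, as I see it, lies in the third paragraph: a lex-smallest overall path is \emph{not} obtained by taking the lex-smallest prefix, middle and suffix independently, because a lex-greater choice of representative set $X$ or $X'$ may unlock a strictly lex-smaller middle segment under the node-disjointness constraints enforced by the $\hat{P}$-families. I would therefore perform the shortness-and-radix comparison on the complete concatenated candidate, iterating over all FPT-many combinations $(p_{c_1}, p_{c_2}, v, u, X, X')$ rather than greedily optimizing each part; correctness of the resulting subroutine, once established, plugs into YenSimple exactly as in Lemmas~\ref{lemma:a-geqk-shortest} and~\ref{lemma:w?-shortest}.
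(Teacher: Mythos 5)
Your overall architecture matches the paper's: enumerate the cut-border prefixes and suffixes, reduce to the $0$-bordered remainder, adapt Algorithm~\ref{alg:block} to return witnesses, compare complete concatenated candidates, and handle derivatives via Lemma~\ref{lem:ste-derivatives} inside YenSimple. You also correctly identify that one must compare whole candidates rather than greedily optimizing prefix, middle and suffix separately. However, there is a genuine gap in the third paragraph that your final paragraph does not repair. The family $\hat{P}^{k+1}_{sv,r_1}$ is only a \emph{representative subfamily} of $P^{k+1}_{sv,r_1}$: its guarantee is purely about node-disjointness (for every small $Y$ avoiding some $X$ in the full family, some $\hat{X}$ in the subfamily also avoids $Y$), and it says nothing about the \emph{labels} of the retained realizations. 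The radix-smallest simple path $p^*$ matching $r'$ may have a prefix whose node set was pruned away; representativeness then only promises a substitute prefix of the same length, which can carry a strictly lex-larger label word. Iterating over all FPT-many combinations $(v,u,X,X')$ and taking the minimum therefore yields the radix-smallest path \emph{among the generated candidates}, which need not be the radix-smallest path overall. The same objection applies to $\hat{P}^{k_2+1}_{ut,r_2}$.

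The paper's fix — and the step your proposal is missing — is to compute a separate representative family $\hat{P}^{k+1}_{sv,w_1}$ for each \emph{concrete word} $w_1 \in L(r_1)$ (and symmetrically $\hat{P}^{k_2+1}_{ut,w_2}$ for each $w_2 \in L(r_2)$). Within such a family every set realizes a path labeled exactly $w_1$, so the exchange argument of Lemma~\ref{proofforfptalg} preserves not only length but also the label, and iterating over all words covers every possible prefix label of a radix-minimal solution. This is precisely where the hypothesis $|A_i| \leq c$, $|A'_i| \leq c$ is needed: it bounds the number of words in $L(r_1)$ by $c^{k}$, keeping the per-word enumeration in FPT (note that $|\Sigma|$ is not bounded by the parameter, so one cannot enumerate $|\Sigma|^{k}$ words). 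Your proposal uses the hypothesis only to count the label words of the length-$\leq c$ cut-border prefixes, where it is not actually needed (enumerating the at most $n^{c}$ simple paths suffices there), which suggests the role of the hypothesis — and hence the key step of the proof — has been missed.
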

 \begin{proof}
 We have seen in Theorem \ref{theo:fptdelay:3} that we can enumerate paths for all kinds of STEs. Our goal is now to show that it is also possible to enumerate them in radix order. 
 Let $r \in \cR$ with left cut border $c_1$ and right cut border $c_2$.
 We enumerate all possible paths $p_{c_1}$ that can match $A_1 \cdots A_{c_1}$ and $p_{c_2}$ that match $A'_{c_2} \cdots A'_{1}$. (These paths can also be empty if $c_1 = 0$ or $c_2 = 0$.)
 We know now that the remaining regular expression, that is $r' = B'_\text{pre} A^* B'_\text{suff}$, is 0-bordered. So we now search for a shortest and lexicographically smallest simple path matching $r'$ from the last node of $p_1$ to the first of $p_2$ in the graph without the other nodes of $p_1$ and $p_2$.
 
 We do a case distinctions depending on the form of $r'$.
 If $A = \emptyset$, we can use the algorithm of Bagan et
 al.~\cite[Theorem 6]{bagan}. 
 \tina{please write shorter:}
 This algorithm can also be used to obtain a smallest path in radix order. Since the witnessing path is computed from the back, we cannot directly determine the lexicographically smallest path, but we can compute the smallest path in reversed radix order, that is: $w_1 \leq^r w_2$ in reversed radix order if $|w_1| < |w_2|$ or $|w_1| = |w_2|$ and $w_1^R$ is lexicographically smaller than $w_2^R$, where $^R$ denotes the symbol-wise reverse order of a word. 
 So we use the algorithm on the graph with reversed edges and with the ``reversed'' regular expression, i.e., instead of $r' = A_{c_1+1} \cdots A_{k_1}  A'_{k_2} \cdots A_{c_2+1}$, we use $r'^R = A'_{c_2+1} \cdots A'_{k_2}  A'_{k_1} \cdots A_{c_1+1}$. In the end, we reverse the path to obtain the smallest path in radix order in the original graph that matches $r$ and is simple.
 We enumerate all possible color codings and compare for each color coding the smallest path in radix order to obtain the overall smallest path in radix order, which we call $p$.
 We then return $p_{c_1} p\, p_{c_2}$.
 
 Otherwise we know that $A \neq \emptyset$. If $r' = A_{1}? \cdots A_{k_1}? A^* A'_{k_2}? \cdots A'_{1}?$, its language $L(r')$ is downward closed, so we can find a simple path $p$ matching $r'$ that is smallest in radix order, see Proposition~\ref{prop:prefixclosed}. We then return $p_{c_1}p\, p_{c_2}$.
 
 We now explain how to change Algorithm~\ref{alg:block} for $r = A_1 \cdots A_{k_1} A^* A'_{k_2}\cdots A'_1$.
 As explained above, we can use the algorithm of Bagan et al.~\cite[Theorem 6]{bagan} to output smallest paths in radix order in line \ref{alg:block1} if $\nodesimpath(r')$ has a solution of length at most $2k$, where $k= k_1 -c_1 + k_2 -c_2$.
 
 The lines \ref{alg:block4} to \ref{alg:block12} highly resemble Algorithm~\ref{fptalg} and can therefore be changed to output shortest paths, see Lemma~\ref{lemma:a-geqk-shortest}. The same holds for lines \ref{alg:block12} to \ref{alg:block22}. So shortest paths are no problem.

 We can guarantee that the results are in radix order if there exists a constant $c$ with $|A_i|\leq c$ and $|A'_i|\leq c$ for all $i$. 
 This is because we can then enumerate in line \ref{alg:block5} all up to $c^{k_1}$ words $w_1 \in L(r_1)$ and compute $\hat{P}^{k+1}_{sv,w_1} \subseteq^{k+1}_\text{rep} P^{k+1}_{sv,w_1}$ for each such word. This way we can ensure that we really considered each lexicographically smallest word. We proceed analogous in line \ref{alg:block13} for all $w_2 \in L(r_2).$ 
 
 We will now show that we can indeed obtain the smallest path in radix order this way. We will therefore use a variant of Lemma~\ref{proofforfptalg}. 
 Let $p=(s,a_0,v_1)(v_1,a_1,v_2) \cdots (v_{\ell-1},a_{\ell-1},t)$ be a smallest path from $s$ to $t$ in radix order that is simple and matches $r'$. Then $w_1 = a_0 \cdots a_{k-1} \in L(r_1)$. 
 From line~\ref{alg:block1} we know that the solution now must have length longer than $2k$. We now define  
 \begin{multline*}
 P=(s,a_0,v_1)(v_1,a_1,v_2) \cdots  (v_{k-1},a_{k-1},v_k), \\
 R =  (v_{k+1},a_{k+1},v_{k+2}) \cdots  (v_{\ell-k-2},a_{\ell-k-2},v_{\ell-k-1}), \text{ and } \\ Q=(v_{\ell-k},a_{\ell-k},v_{\ell-k+1}) \cdots (v_{\ell-1},a_{\ell-1},t). 
 \end{multline*}
As usual we write $$p = P\cdot (v_k,a_k,v_{k+1}) \cdot R \cdot(v_{\ell-k-1},a_{\ell-k-1},v_{\ell-k})\cdot Q$$ 
If $k = \ell -k -1$, i.e., $\ell = 2k+1$, we write $p = P\cdot (v_k,a_k,v_{k+1}) \cdot Q$ instead with $R = \varepsilon$.

 Since $|V(Q)| =k+1$ and $V(P) \cap V(Q) = \emptyset$, we can find a simple path $P'$ from $s$ to $v_k$ that matches $w_1$ and consists of nodes in 
 $\hat{P}^{k+1}_{sv_k,w_1}$. Furthermore, $V(P') \cap V(Q) = \emptyset$. If $P'$ and $R$ intersect, the resulting path would contradict our choice of $p$ as smallest path from $s$ to $t$ that is simple and matches $r'$, since the resulting path is shorter. If $P'$ and $R$ do not intersect, the path $p' = P' (v_{k},a_{k},v_{k+1}) Q (v_{\ell-k-1},a_{\ell-k-1},v_{\ell-k}) R$ is still a smallest path from $s$ to $t$ in radix order that is simple and matches $r'$ and its first $k+1$ nodes indeed belong to a set of nodes in $\hat{P}^{k+1}_{sv_k,w_1}$.

 It remains to show that its last $k_2+1$ nodes belong to $\hat{P}^{k_2+1}_{v_{\ell-k_2}t,w_2}$, where $w_2 = a_{\ell-k_2} \cdots a_{\ell-1}$. We assume that our prefix $P'$ is fix, i.e., 
 let $p' = P'(v_{k},a_{k},v_{k+1}) Q (v_{\ell-k-1},a_{\ell-k-1},v_{\ell-k}) R$ be a smallest path in radix order from $s$ to $t$ that matches $r'$ and is simple.
 If the length of $Q (v_{\ell-k-1},a_{\ell-k-1},v_{\ell-k}) R$ is smaller than $2k_2$, i.e., $\ell \leq k + 2k_2 +1$,
 we have $$p' = P' \cdot Q_2 \cdot (v_{\ell-k_2-1},a_{\ell-k_2-1},v_{\ell-k_2}) \cdot  P_2$$ with  \begin{multline*}
 	Q_2=(v_k,a_{k},v_{k+1}) \cdots  (v_{\ell-k_2-2},a_{\ell-k_2-2},v_{\ell-k_2-1}) \text{ and }\\ P_2=(v_{\ell-k_2},a_{\ell-k_2},v_{\ell-k_2+1}) \cdots (v_{\ell-1},a_{\ell-1},t).	
 \end{multline*}
 Since $V(Q_2) \leq k_2+1$ and $V(Q_2) \cap V(P_2) = \emptyset$, we find a set $X' \in \hat{P}^{k_2+1}_{v_{\ell-k_2}t,w_2}$ that corresponds to a simple path from $v_{\ell-k_2}$ to $t$ that matches $w_2$ and does not intersect with $Q_2$. 
 
 Let us now assume that $\ell > k + 2k_2 +1$. In this case we have 
 $$p' = P' \cdot Q_2 \cdot (v_{k+k_2},a_{k+k_2},v_{k+k_2+1})
  \cdot R_2 \cdot (v_{\ell-k_2-1},a_{\ell-k_2-1},v_{\ell-k_2}) \cdot  P_2$$ with 
 \begin{multline*}
 	Q_2=(v_k,a_{k},v_{k+1}) \cdots  (v_{k+k_2-1},a_{k+k_2-1},v_{k+k_2}),\\
 	 R_2 =  (v_{k+k_2+1},a_{k+k_2+1},v_{k+k_2+2}) \cdots  (v_{\ell-k_2-2},a_{\ell-k_2-2},v_{\ell-k_2-1}) \text{ and }\\	 
 	  P_2=(v_{\ell-k_2},a_{\ell-k_2},v_{\ell-k_2+1}) \cdots (v_{\ell-1},a_{\ell-1},t).	
 \end{multline*}
 Since $V(Q_2) = k_2+1$ and $V(Q_2) \cap V(P_2) = \emptyset$, we find a set $X' \in \hat{P}^{k_2+1}_{v_{\ell-k_2}t,w_2}$ that corresponds to a simple path $P'_2$ from $v_{\ell-k_2}$ to $t$ that matches $w_2$ and does not intersect with $Q_2$. 
 If $P'_2$ and $R_2$ intersect, the resulting path would contradict our choice of $p'$ as smallest path from $s$ to $t$ that is simple and matches $r'$, since the resulting path is shorter. 
 If $P'_2$ and $R_2$ do not intersect, the path $p' = P' P' \cdot Q_2 \cdot (v_{k+k_2},a_{k+k_2},v_{k+k_2+1})
 \cdot R_2 \cdot (v_{\ell-k_2-1},a_{\ell-k_2-1},v_{\ell-k_2}) \cdot  P'_2$ is still a smallest path from $s$ to $t$ that is simple and matches $r'$ and its last $k_2+1$ nodes indeed belong to a set of nodes in $\hat{P}^{k_2+1}_{v_{\ell-k}t,w_2}$.
 
 In line \ref{alg:blockShortest2} we can use the algorithm of Ackermann and Shallit \cite{Ackerman-TCS09} to find a shortest and lexicographically smallest $v$-$u$-path matching $A^*$, see Theorem~\ref{theo:ackermanAndShallit}.
 
 Since, we can use an easier variant of Algorithm~\ref{alg:block} if $r' = A_{c_1+1} \cdots A_{k_1} A^* A'_{k_2}? \cdots A'_{1}?$ or $r' = A_{1}? \cdots A_{k_1}? A^* A'_{k_2} \cdots A'_{c_2+1}$, we can also output paths in radix order these cases. 
 Since for each expression $r \in \cR$, $w^{-1}L(r)$ is still a STE, we can also use this obtained algorithm as subroutine in Yen's algorithm in line~\ref{alg:yen:12}.

 For expressions of the form $A_1 \cdots A_{k_1} A^* A'_{k_2}? \cdots A'_1?$ and $A_1? \cdots A_{k_1}? A^* A'_{k_2} \cdots A'_1$ it follows analogous. 
 Since the derivative of an STE is again an STE with at most the same cut border (see Lemma~\ref{lem:ste-derivatives}), we can use the same methods for them.
 \end{proof}

\fptdelayedge*
\begin{proof}
	We will use the reduction from Lemma~\ref{lemma:edgeToNodes} to reduce $\edgespath(r)$ to $\nodesimpath(r)$. Since the paths have a one-to-one correspondence, we can use the path in the output of $\nodesimpath(r)$ to obtain a path for $\edgespath(r)$. So, if we can show that $\nodesimpath(r)$ in FPT delay for the graphs that can be constructed in the reduction, we can use this correspondence to also output the corresponding paths. Notice that we have already covered many cases of $r$ in Theorem~\ref{theo:fptdelay:3}, but since $\cR$ does not need to be cuttable, we still have to show that Algorithm~\ref{alg:block} can also output paths after the changes in Theorem~\ref{theo:edgedichotomy} part (a).
	Since we only relabeled some edges and labels in the regular expression, the adapted Algorithm~\ref{alg:block} can still output a witness. In this output, we reverse our label-changes again to obtain a path in the original graph without label-changes and therefore corresponding to a path in $\edgespath(r)$. 
	Since derivatives of $r$ have at most the same number of conflict labels and the graphs used in Yen's algorithm still have the property that each node corresponds to at most one edge, we can again use the same strategy in Yen's algorithm in line~\ref{alg:yen:12} and therefore solve $\enumspaths(r)$.
\end{proof}

 \begin{lemma} 
	Let $\cR$ be a class of STEs that is almost conflict-free. Then, \enumedgespaths$(\cR)$ is in FPT delay with radix order if $|A_i|\leq c$ and $|A'_i|\leq c$ for a constant $c$.
\end{lemma}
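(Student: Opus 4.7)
The plan is to combine the reduction from trails to simple paths (Lemma~\ref{lemma:edgeToNodes}) with the radix-order enumeration technique from Lemma~\ref{lemma:0crit-shortest}, using the adaptation of Algorithm~\ref{alg:block} developed in the proof of Theorem~\ref{theo:edgedichotomy}(a). That is, given $(G,s,t)$ and $r \in \cR$, I would first invoke Lemma~\ref{lemma:edgeToNodes} to produce graphs $(H_1,s'_1,t'),\ldots,(H_n,s'_n,t')$ where trails from $s$ to $t$ in $G$ matching $r$ correspond bijectively to simple paths from $s'_i$ to $t'$ in $H_i$ matching $a \cdot r$ (for a fixed symbol $a$), with the label of the simple path being the label of the trail prefixed by a single dummy symbol. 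Since each node of $H_i$ (apart from $s'_i$ and $t'$) corresponds to exactly one edge of $G$, the enforcement of simple-path disjointness in $H_i$ is exactly the enforcement of trail edge-disjointness in $G$.

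Next, I would run, in each $H_i$, a modified version of Yen's algorithm whose subroutine is the FPT algorithm from Theorem~\ref{theo:edgedichotomy}(a): namely, the adaptation of Algorithm~\ref{alg:block} that handles almost conflict-free classes by (i)~relabeling the at most $c$ conflict positions with fresh symbols and (ii)~enumerating the at most $O(|H_i|^c)$ subsets of size $\leq c$ of the $A$-labeled nodes so that the prefix/suffix paths use exactly these shared nodes. To obtain witnessing paths rather than mere yes/no answers, I would use the same observation as in Lemma~\ref{lemma:0crit-shortest}: representative families can be computed so that each set is ordered as a path, and the algorithm of Ackerman and Shallit (Theorem~\ref{theo:ackermanAndShallit}) provides shortest lexicographically smallest paths for the $A^*$ middle piece. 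The radix-smallest witness in each execution is then computed by iterating, as in Lemma~\ref{lemma:0crit-shortest}, over all possible words $w_1 \in L(r_1)$ and $w_2 \in L(r_2)$, of which there are at most $c^{k_1}$ and $c^{k_2}$ respectively by the assumption $|A_i|, |A'_i| \leq c$. The correctness argument here mirrors the representative-set argument in Lemma~\ref{lemma:0crit-shortest}, with the additional observation that derivatives of STEs are finite unions of STEs with no larger cut borders (Lemma~\ref{lem:ste-derivatives}) and, crucially, no more conflict positions, so the FPT radix-smallest subroutine also handles the derivative languages needed in line~\ref{alg:yen:12} of Yen's algorithm.

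Finally, to merge the $n$ outputs into a single enumeration in radix order, I would run the $n$ instances in parallel as in the proof of the corollary after Lemma~\ref{lemma:edgeToNodes}: at every step the next output is the radix-smallest element among the current heads of the $n$ streams, and the invariant that each stream is internally in radix order (and that the label of the simple path in $H_i$ equals $a$ concatenated with the label of the corresponding trail in $G$) guarantees that the merged stream is a radix-ordered enumeration of the trails of $G$ matching $r$. Since each per-stream step is FPT delay and the merge adds only an $O(n \log n)$ overhead per output, the overall delay is FPT in the parameter $k_r$.

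The main obstacle will be verifying that the radix-order argument from Lemma~\ref{lemma:0crit-shortest} still goes through after the conflict-position relabeling of Theorem~\ref{theo:edgedichotomy}(a): the relabeling changes the alphabet, so one must be careful that the radix comparison is done on the \emph{original} labels of $G$ (i.e., one reverses the relabeling before comparing output words), and that the representative-set analysis for the adapted algorithm still finds a smallest witness, which follows by the same cut-and-paste argument as in Lemma~\ref{lemma:0crit-shortest} because the shortened paths obtained by rerouting through $R$ strictly decrease length and still match $r$ under the almost-conflict-free assumption.
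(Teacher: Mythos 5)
Your proposal is correct and follows essentially the same route as the paper's proof: reduce trails to simple paths via the line-graph construction of Lemma~\ref{lemma:edgeToNodes}, run the conflict-relabeled adaptation of Algorithm~\ref{alg:block} from Theorem~\ref{theo:edgedichotomy}(a) as the subroutine of Yen's algorithm, obtain radix-smallest witnesses exactly as in Lemma~\ref{lemma:0crit-shortest} (undoing the relabeling before comparison), and handle derivatives via Lemma~\ref{lem:ste-derivatives} together with the observation that derivatives do not gain conflict positions. Your explicit treatment of the parallel merge over the $n$ instances and of the uniform dummy prefix symbol is a useful clarification but does not change the argument.
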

\begin{proof} 
	We will use the reduction from Lemma~\ref{lemma:edgeToNodes} to reduce $\edgespath(r)$ to $\nodesimpath(r)$. Since the paths have a one-to-one correspondence, we can use the path in the output of $\nodesimpath(r)$ to obtain a path for $\edgespath(r)$. In this correspondence, we even preserve the labels of the path.
	 So, if we can show that $\nodesimpath(r)$ is in FPT delay with radix order for the graphs that can be constructed in the reduction, we can use this correspondence to also output the corresponding paths in $\edgespath(r)$ with radix order. Notice that we have already covered many cases of the form of $r$ in Theorem~\ref{theo:fptdelay:3} and can in this cases find the smallest in radix order, see Lemma~\ref{lemma:0crit-shortest}.
	 But since $\cR$ does not need to be cuttable, we still have to show that Algorithm~\ref{alg:block} can also output smallest paths in radix order after the changes we did in Theorem~\ref{theo:edgedichotomy} part (a).
	 Recall that our changes were to enumerate sets of at most $c$ nodes, relabele the outgoing edges of those nodes and changed some labels in the regular expression.
	 Obviously, the adapted Algorithm~\ref{alg:block} can still
         output a witness and, since we obtain our original instance
         by removing all single quotes, i.e., we relabel $a'$ to our original symbol $a$, we can even compare each candidate and only output the smallest in radix order.
	 Since the reduction in Lemma~\ref{lemma:edgeToNodes} preserves the labels, we therefore have a smallest path in radix order in $\edgespath(r)$.
	  
	Since derivatives of $r$ have at most the same number of conflict labels and the graphs used in Yen's algorithm still have the property that each node corresponds to at most one edge, we can again use the same strategy in Yen's algorithm in line~\ref{alg:yen:12} and therefore solve $\enumspaths(r)$ by solving $\enumnodespaths(r)$ in FPT delay with radix order and output the corresponding paths.
\end{proof}

\end{document}